\keywords{coinductive types, productivity, infinitary rewriting,
  programming language semantics, functional programming}
\newcommand{\eqvl}{\leftrightarrow}
\newcommand{\reduces}{\ensuremath{\to^*}}
\newcommand{\infred}{\ensuremath{\to^\infty}}
\newcommand{\valuation}[3]{\ensuremath{\llbracket#1\rrbracket_{#2}^{#3}}}
\newcommand{\proves}{\ensuremath{\vdash}}
\newcommand{\To}{\ensuremath{\Rightarrow}}
\newcommand{\Nbb}{\ensuremath{\mathbb{N}}}
\newcommand{\Pow}{\ensuremath{\mathcal{P}}}
\newcommand{\ra}{\ensuremath{\rangle}}
\newcommand{\dom}{\ensuremath{\mathrm{dom}}}
\newcommand{\codom}{\ensuremath{\mathrm{codom}}}
\newcommand{\Nat}{\ensuremath{\mathrm{Nat}}}
\newcommand{\List}{\ensuremath{\mathrm{List}}}
\newcommand{\Strm}{\ensuremath{\mathrm{Strm}}}
\newcommand{\Tree}{\ensuremath{\mathrm{Tree}}}
\newcommand{\FTree}{\ensuremath{\mathrm{FTree}}}
\newcommand{\BTree}{\ensuremath{\mathrm{BTree}}}
\newcommand{\node}{\ensuremath{\mathtt{node}}}
\newcommand{\fnode}{\ensuremath{\mathtt{fnode}}}
\newcommand{\bnode}{\ensuremath{\mathtt{bnode}}}
\newcommand{\tail}{\ensuremath{\mathtt{tl}}}
\newcommand{\head}{\ensuremath{\mathtt{hd}}}
\newcommand{\mput}{\ensuremath{\mathtt{put}}}
\newcommand{\get}{\ensuremath{\mathtt{get}}}
\newcommand{\odd}{\ensuremath{\mathtt{odd}}}
\newcommand{\out}{\ensuremath{\mathtt{out}}}
\newcommand{\run}{\ensuremath{\mathtt{run}}}
\newcommand{\runi}{\ensuremath{\mathtt{runi}}}
\newcommand{\SP}{\ensuremath{\mathrm{SP}}}
\newcommand{\SPi}{\ensuremath{\mathrm{SPi}}}
\newcommand{\Odd}{\ensuremath{\mathrm{Odd}}}
\newcommand{\Even}{\ensuremath{\mathrm{Even}}}
\newcommand{\Tb}{\ensuremath{{\mathbb{T}}}}
\newcommand{\Tc}{\ensuremath{{\mathcal T}}}
\newcommand{\Sc}{\ensuremath{{\mathcal S}}}
\newcommand{\Cc}{\ensuremath{{\mathcal C}}}
\newcommand{\Dc}{\ensuremath{{\mathcal D}}}
\newcommand{\Uc}{\ensuremath{{\mathcal U}}}
\newcommand{\Vc}{\ensuremath{{\mathcal V}}}
\newcommand{\Ys}{\ensuremath{{\mathsf Y}}}
\newcommand{\nil}{\ensuremath{{\mathrm{nil}}}}
\newcommand{\cons}{\ensuremath{{\mathrm{cons}}}}
\newcommand{\case}{\ensuremath{{\mathrm{case}}}}
\newcommand{\fix}{\ensuremath{{\mathrm{fix}\,}}}
\newcommand{\cofix}{\ensuremath{{\mathrm{cofix}\,}}}
\newcommand{\app}{\ensuremath{{\mathrm{app}}}}
\newcommand{\lam}{\ensuremath{{\mathrm{lam}}}}
\newcommand{\rval}[1]{\ensuremath{\llbracket#1\rrbracket}}
\newcommand{\FV}{\mathrm{TV}}
\newcommand{\SV}{\mathrm{SV}}
\newcommand{\FSV}{\mathrm{FSV}}
\newcommand{\tgt}{\mathrm{tgt}}
\newcommand{\chgtgt}{\mathrm{chgtgt}}
\newcommand{\Ind}{\mathrm{Ind}}
\newcommand{\CoInd}{\mathrm{CoInd}}
\newcommand{\IndCoInd}{\mathrm{(Co)Ind}}
\newcommand{\erase}[1]{\ensuremath{|#1|}}
\newcommand{\Constr}{\mathrm{Constr}}
\newcommand{\Def}{\mathrm{Def}}
\newcommand{\ArgTypes}{\mathrm{ArgTypes}}
\title{An operational interpretation of coinductive types}
\author[{\L}.~Czajka]{{\L}ukasz Czajka}
\address{TU Dortmund University, Dortmund, Germany}
\email{lukaszcz@mimuw.edu.pl}
\thanks{Supported by the European Union's Horizon 2020 research and
  innovation programme under the Marie Sk{\l}odowska-Curie grant
  agreement number~704111.}
\begin{document}

\begin{abstract}
  \noindent We introduce an operational rewriting-based semantics for
  strictly positive nested higher-order (co)inductive types. The
  semantics takes into account the ``limits'' of infinite reduction
  sequences. This may be seen as a refinement and generalization of
  the notion of productivity in term rewriting to a setting with
  higher-order functions and with data specified by nested higher-order
  inductive and coinductive definitions. Intuitively, we interpret
  lazy data structures in a higher-order functional language by
  potentially infinite terms corresponding to their complete
  unfoldings.

  We prove an approximation theorem which essentially states that if a
  term reduces to an arbitrarily large finite approximation of an
  infinite object in the interpretation of a coinductive type, then it
  infinitarily (i.e.~in the ``limit'') reduces to an infinite object
  in the interpretation of this type. We introduce a sufficient
  syntactic correctness criterion, in the form of a type system, for
  finite terms decorated with type information. Using the
  approximation theorem, we show that each well-typed term has a
  well-defined interpretation in our semantics.
\end{abstract}

\maketitle

\section{Introduction}\label{sec_intro}

It is natural to consider an interpretation of coinductive types where
the elements of a coinductive type~$\nu$ are possibly infinite
terms. Each finite term of type~$\nu$ containing fixpoint operators
then ``unfolds'' to a possibly infinite term without fixpoint
operators in the interpretation of~$\nu$. For instance, one would
interpret the type of binary streams as the set of infinite terms of
the form $b_1 :: b_2 :: \ldots$ where $b_1 \in \{0,1\}$ and~$::$ is an
infix notation for the stream constructor. Then any fixpoint
definition of a term of this type should ``unfold'' to such an
infinite term. This kind of interpretation corresponds closely to a
naive understanding of infinite objects and coinductive types.

This paper is devoted to a study of such an interpretation in the
context of infinitary rewriting. Infinitary rewriting extends term
rewriting by infinite terms and transfinite reductions. This enables
the consideration of ``limits'' of terms under infinite reduction
sequences.

We consider a combination of simple function types with strictly
positive nested higher-order inductive and coinductive types. An
example of a higher-order coinductive type is the type of trees with
potentially infinite branches and two kinds of nodes: nodes with a
list of finitely many children and nodes with infinitely many children
specified by a function on natural numbers. In our notation this type
may be represented as the coinductive definition $\Tree_2 = \CoInd\{
c_1 : \List(\Tree_2) \to \Tree_2,\, c_2 : (\Nat \to \Tree_2) \to
\Tree_2 \}$ which intuitively specifies that each element of~$\Tree_2$
is a possibly infinite term which has one of the forms:
\begin{itemize}
\item $c_1 (t_1 :: t_2 :: \ldots :: t_n :: \nil)$ where each $t_i$ is
  an element of~$\Tree_2$ and $::$ is the finite list constructor, or
\item $c_2 f$ where $f$ is a term which represents a function from
  $\Nat$ to $\Tree_2$.
\end{itemize}

\noindent
We interpret each type~$\tau$ as a subset~$\valuation{\tau}{}{}$ of
the set~$\Tb^\infty$ of finite and infinite terms. This interpretation
may be seen as a refinement and generalization of the notion of
productivity in term rewriting to a setting with higher-order
functions and more complex (co)inductive data structures. From a
programming language perspective, we essentially interpret lazy data
structures in a higher-order functional language by potentially
infinite terms corresponding to their complete unfoldings (i.e.~their
``limits'' under infinite reductions).

For example, the interpretation~$\valuation{\Strm}{}{}$ of the
coinductive type~$\Strm$ of streams of natural numbers with a single
constructor $\cons : \Nat \to \Strm \to \Strm$ consists of all
infinite terms of the form $\cons \, n_0 (\cons\, n_1 (\ldots))$ where
$n_k \in \valuation{\Nat}{}{}$ for $k \in \Nbb$. The
interpretation~$\valuation{\Strm\to\Strm}{}{}$ of an arrow type $\Strm
\to \Strm$ is the set of all terms~$t$ such that for every $u \in
\valuation{\Strm}{}{}$ there is~$u' \in \valuation{\Strm}{}{}$ with $t
u \infred u'$, where~$\infred$ denotes the infinitary reduction
relation (so~$u'$ is the ``limit'' of a reduction starting with~$t
u$). This means that~$t$ is productive -- it computes (in the limit) a
stream when given a stream as an argument, producing any initial
finite segment of the result using only an initial finite segment of
the argument. Note that the argument~$u$ is just any infinite stream
of natural numbers -- it need not even be computable. This corresponds
with the view that arguments to a function may come from an outside
``environment'' about which nothing is assumed, e.g., the argument may
be a stream of requests for an interactive program.

One could informally argue that including infinite objects explicitly
is not necessary, because it suffices to consider finite
``approximations''~$u_n$ of ``size''~$n$ of an infinite argument
object~$u$ (which itself is possibly not computable), and if~$t u_n$
reduces to progressively larger approximations of an infinite object
for progressively larger~$n$, then this ``defines'' the application
of~$t$ to~$u$, because to compute any finite part of the result it
suffices to take a sufficiently large approximation as an argument. We
actually make this intuition precise in the framework of infinitary
rewriting. We show that if for every approximation~$u_n$ of size~$n$
of an infinite object~$u$ the application~$t u_n$ reduces to an
approximation of an infinite object of the right type, with the result
approximations getting larger as~$n$ gets larger, then there is a
reduction starting from~$t u$ which ``in the limit'' produces an
infinite object of the right type. For nested higher-order
(co)inductive types this result turns out to be non-trivial.

The result mentioned above actually follows from the approximation
theorem which is the central technical result of this paper. It may be
stated as follows: if $t \infred t_n \in \valuation{\nu}{}{n}$ for
each $n \in \Nbb$ then there is~$t'$ with
$t \infred t' \in \valuation{\nu}{}{}$, where~$\nu$ is a coinductive
type and~$\valuation{\nu}{}{n}$ is the set of approximations of
size~$n$ of the (typically infinite) objects of type~$\nu$ (i.e.~of
the terms in~$\valuation{\nu}{}{}$).

In the second part of the paper we consider \emph{finite} terms
decorated with type annotations. We present a type system which gives
a sufficient syntactic correctness criterion for such terms. The
system enables reasoning about sizes of (co)inductive types, similarly
as in systems with sized types. Using the approximation theorem we
show soundness: if a finite decorated term~$t$ may be assigned
type~$\tau$ in our type system, then there is~$t' \in
\valuation{\tau}{}{}$ such that $\erase{t} \infred t'$,
where~$\erase{t}$ denotes the term~$t$ with type decorations
erased. This means that every typable term~$t$ has a well-defined
interpretation in the corresponding type, which may be obtained as a
limit of a reduction sequence starting from~$\erase{t}$.

Our definition of the rewriting semantics is natural and relatively
straightforward. It is not difficult to prove it sound for a
restricted form of non-nested first-order (co)inductive
types. However, once we allow parameterized nested higher-order
inductive and coinductive types significant complications occur
because of the alternation of least and greatest fixpoints in the
definitions. Our main technical contribution is the proof of the
approximation theorem. This proof involves some heavy infinitary
rewriting machinery, but just to apply the theorem no deep familiarity
with infinitary rewriting is needed.

The main purpose of this paper is to define an infinitary rewriting
semantics, to precisely state and prove the approximation theorem, and
to show that the approximation theorem may be used to derive soundness
of the rewriting semantics for systems based on sized types. The type
system itself presented in the second part of the paper is not a
significant improvement over the state-of-the-art in type systems
based on sized types. It is mostly intended as an illustration of a
system for which our rewriting semantics is particularly perspicuous.

\subsection{Related work}

The notion of productivity dates back to the work of
Dijkstra~\cite{Dijkstra1980}, and the later work of
Sijtsma~\cite{Sijtsma1989}. Our rewriting semantics may be considered
a generalization of Isihara's definition of productivity in
algorithmic systems~\cite{Isihara2008}, of Zantema's and
Raffelsieper's definition of productivity in infinite data
structures~\cite{ZantemaRaffelsieper2010}, and of the definition of
stream
productivity~\cite{Endrullis2010,EndrullisGrabmayerHendriks2008,EndrullisHendriks2011}. In
comparison to our setting, the infinite data structures considered
before in term rewriting literature are very simple. None of the
papers mentioned allow higher-order functions or higher-order
(co)inductive types. The relative difficulty of our main results stems
from the fact that the data structures we consider may be much more
complex.

Infinitary rewriting was introduced
in~\cite{KennawayKlopSleepVries1995,KennawayKlopSleepVries1995b,KennawayKlopSleepVries1997}. See~\cite{KennawayVries2003}
for more references and a general introduction.

In the context of type theory, infinite objects were studied by
Martin-L{\"o}f~\cite{MartinLof1988} and
Coquand~\cite{Coquand1993}. Gimenez~\cite{Gimenez1994} introduced the
guardedness condition to incorporate coinductive types and corecursion
into dependent type theory, which is the approach currently used
in~Coq. Sized types are a long-studied approach for ensuring
termination and productivity in type
theories~\cite{HughesParetoSabry1996,BartheFadeGimenezPintoUustalu2004,Abel2006,AbelPientka2016}. In
comparison to previous work on sized types, the type system introduced
in the second part of this paper is not a significant advance, but as
mentioned before this is not the point of the present work. In order
to justify the correctness of systems with sized types, usually strong
normalization on typable terms is shown for a restriction of the
reduction relation. We provide an infinitary rewriting semantics. Our
approach may probably be extended to provide an infinitary rewriting
semantics for at least some of the systems from the type theory
literature. This semantics is interesting in its own right.

In~\cite{SeveriVries2012a} infinitary weak normalization is proven for
a broad class of Pure Type Systems extended with corecursion on
streams (CoPTSs), which includes Krishnaswami and Benton's typed
$\lambda$-calculus of reactive
programs~\cite{KrishnaswamiBenton2011}. This is related to our work in
that it provides some infinitary rewriting interpretation for a class
of type systems. The formalism of~CoPTSs is not based on sized types,
but on a modal \emph{next} operator, and it only supports the
coinductive type of streams.

Our work is also related to the work on computability at higher
types~\cite{Longley2000}, but we have not yet investigated the precise
relationships.

Coinduction has been studied from a more general coalgebraic
perspective~\cite{JacobsRutten2011}. In this paper we use a few simple
proofs by coinduction and one definition by corecursion. Formally,
they could be justified as in
e.g.~\cite{KozenSilva2017,Sangiorgi2012,JacobsRutten2011,Czajka2018}. Our
use of coinduction in this paper is not very involved, and there are
no implicit corecursive function definitions like
in~\cite{Czajka2018}.

\section{Infinitary rewriting}\label{sec_rewriting}

In this section we define infinitary terms and reductions. We assume
familiary with the lambda calculus~\cite{Barendregt1984} and basic
notions such as $\alpha$-conversion, substitution, etc. Prior
familiarity with infinitary rewriting or infinitary lambda
calculus~\cite{KennawayVries2003,KennawayKlopSleepVries1997} is not
necessary but is helpful.

We assume a countable set~$\Vc$ of \emph{variables}, and a countable
set~$\Cc$ of \emph{constructors}. The set~$\Tb^\infty$ of all finite
and infinite \emph{terms}~$t$ is given by
\[
  \begin{array}{rcl}
    t &::=& x \mid c \mid \lambda x . t \mid t t \mid \case(t; \{ c_k
            \vec{x} \To t_k \mid k = 1,\ldots,n \})
  \end{array}
\]
where~$x \in \Vc$ and~$c,c_k \in \Cc$. We use the notation $\vec{t}$
(resp.~$\vec{x}$) to denote a sequence of terms (resp.~variables) of
an unspecified length.

More precisely, the set $\Tb^\infty$ is defined as an appropriate
metric completion (analogously to~\cite{KennawayVries2003}), but the
above specification is clear and the details of the definition are not
significant for our purposes. We consider terms modulo
$\alpha$-conversion. Below (Definition~\ref{def_icrs}) we will present
the terms together with the rewrite rules as an
iCRS~\cite{KetemaSimonsen2011}, which may be considered a formal
definition of our rewrite system.

There are the following reductions:
\[
\begin{array}{rcl}
  (\lambda x . t) t' &\to_\beta& t[t'/x] \\
  \case(c_k \vec{u}; \{c_l \vec{x} \To t_l\}) &\to_\iota& t_k[\vec{u}/\vec{x}]
\end{array}
\]
In the $\iota$-rule we require that the appropriate
sequences~$\vec{u}$ and~$\vec{x}$ have the same lengths, all variables
in each~$\vec{x}$ are pairwise distinct, and the constructors~$c_l$
are all distinct. For instance, $\case(c t_1 t_2; \{ c x y \To x,\, d
x y \To y\}) \to_\iota t_1$ (assuming $c \ne d$), but $\case(c t_1; \{
c x y \To x,\, d x y \To y\})$, $\case(c' t_1 t_2; \{ c x y \To x,\, d
x y \To y\})$ and $\case(c t_1 t_2; \{ c x y \To x,\, c x y \To y\})$
do not have $\iota$-reducts (assuming $c' \notin \{c,d\}$). We usually
write $t \reduces t'$ to denote a finitary reduction $t
\reduces_{\beta\iota} t'$.

\begin{defi}\label{def_infred}
  Following~\cite{EndrullisPolonsky2011,EndrullisHansenHendriksPolonskySilva2015,
    EndrullisHansenHendriksPolonskySilva2018}, we define infinitary
  reduction $t \infred t'$ coinductively.
  \[
  \begin{array}{c}
    \infer={t \infred x}{t \reduces x}\quad\infer={t \infred
      c}{t \reduces c} \\ \\
    \infer={t \infred \lambda x . r'}{t \reduces \lambda x . r & r \infred r'}\quad
    \infer={t \infred r_1'r_2'}{t \reduces r_1r_2 & r_k \infred r_k'}
    \\ \\
    \infer={t \infred \case(r'; \{c_k\vec{x} \To r_k'\})}{t \reduces \case(r;
      \{c_k\vec{x} \To r_k\}) & r \infred r' & r_k \infred r_k'}
  \end{array}
  \]
\end{defi}

Intuitively, $t \infred t'$ holds if it may be obtained as the
conclusion of a potentially infinite derivation tree built using the
above rules. The idea with the definition of the infinitary
reduction~$\infred$ is that the depth at which a redex is contracted
should tend to infinity. This is achieved by defining~$\infred$ in
such a way that always after finitely many reduction steps the
subsequent contractions may be performed only at a greater depth. In
other words, if $t \infred t'$ then to produce any finite prefix
of~$t'$ only a finitary reduction from~$t$ is necessary, i.e., any
finite prefix of~$t'$ becomes fixed after finitely many reduction
steps and afterwards all reductions occur only at higher depths. The
idea for the definition of~$\infred$ comes
from~\cite{EndrullisPolonsky2011,EndrullisHansenHendriksPolonskySilva2015,EndrullisHansenHendriksPolonskySilva2018}.

Our coinductively defined notion of infinitary reduction corresponds
to the established notion of strongly convergent reduction in
infinitary rewriting~\cite{KennawayVries2003} (see
Lemma~\ref{lem_strongly_convergent_equivalent}). This notion has good
formal properties and an intuitive computational interpretation. Note
that this is different from weak (Cauchy) convergence where one
requires convergence with respect to the metric topology on terms, but
the depth of the reduction activity is not required to increase. A
reduction sequence may weakly converge to a limit, even though every
step is performed at the root. The term can then be thought of as
still changing, even though in the limit it is being reduced to
itself. See~\cite[Section~12.3]{KennawayVries2003} for a more detailed
discussion.

The proofs of the next three lemmas follow the pattern
from~\cite[Lemma~4.3-4.5]{EndrullisPolonsky2011}.

\begin{lem}\label{lem_infred_subst}
  If $t_1 \infred t_1'$ and $t_2 \infred t_2'$ then $t_1[t_2/x]
  \infred t_1'[t_2'/x]$.
\end{lem}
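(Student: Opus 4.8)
The plan is to prove the statement by coinduction, directly building a derivation of $t_1[t_2/x] \infred t_1'[t_2'/x]$ according to the coinductive rules defining $\infred$. Concretely, I would consider the relation
\[
  R = \{\, (r_1[r_2/x],\, r_1'[r_2'/x]) \mid r_1 \infred r_1',\ r_2 \infred r_2',\ x \in \Vc \,\}
\]
and show that $R \cup {\infred}$ is closed under the rules of the coinductive definition of $\infred$; by the coinduction principle this gives $R \subseteq {\infred}$, which is the lemma. So fix a pair $(t_1[t_2/x], t_1'[t_2'/x]) \in R$ with witnesses $t_1 \infred t_1'$ and $t_2 \infred t_2'$, and inspect the last rule used to derive $t_1 \infred t_1'$.

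Before the case analysis I would record two auxiliary facts. First, finitary reduction is preserved by substitution: if $s \reduces s'$ then $s[u/x] \reduces s'[u/x]$, by an easy induction on the length of the reduction together with the usual substitution lemma for a single $\beta\iota$-step. Second, a finite reduction may be prepended to an infinitary one: if $s \reduces u$ and $u \infred u'$ then $s \infred u'$; this is immediate from the shape of the $\infred$-rules, since $u \infred u'$ supplies a finite reduction $u \reduces h$ to a head term $h$ matching $u'$, and then $s \reduces u \reduces h$ works in the same rule. Throughout I also assume, by $\alpha$-conversion, that every bound variable encountered is distinct from $x$ and does not occur free in $t_2$ or $t_2'$, so that substitution commutes with the term constructors in the expected way.

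The case analysis on the derivation of $t_1 \infred t_1'$ then goes as follows. If the rule is the one for variables, there are two subcases. If $t_1 \reduces x$ (the substituted variable), then $t_1' = x$, so $t_1'[t_2'/x] = t_2'$, and by the first auxiliary fact $t_1[t_2/x] \reduces x[t_2/x] = t_2$; combining this with $t_2 \infred t_2'$ via the second auxiliary fact gives $t_1[t_2/x] \infred t_2'$, so the pair already lies in $\infred$. If $t_1 \reduces y$ with $y \ne x$, then $t_1[t_2/x] \reduces y$ and $t_1'[t_2'/x] = y$, and the variable rule applies directly; the constructor case is analogous. In each congruence case the rule gives a finite reduction $t_1 \reduces h$ where $h$ is $\lambda y . r$ (resp.\ an application $r_1 r_2$, resp.\ a $\case$ term) and the immediate subterms of $h$ are $\infred$-related to the corresponding subterms of $t_1'$; by the first auxiliary fact $t_1[t_2/x] \reduces h[t_2/x]$, which has the same head symbol, and each immediate subterm of $h[t_2/x]$ is paired with the corresponding subterm of $t_1'[t_2'/x]$ by an instance of $R$ (using the witnesses from the premises of the rule together with $t_2 \infred t_2'$). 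Applying the matching $\infred$-rule closes the case.

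The step requiring the most care is the variable subcase $t_1 \reduces x$: there the coinductive derivation cannot simply recurse, and one must instead ``glue'' a finite reduction in front of the given infinitary reduction $t_2 \infred t_2'$. This is precisely why the candidate relation has to be taken up to $\infred$, i.e.\ why the second auxiliary fact is needed. Everything else is a routine commutation/congruence argument, and the whole proof follows the pattern of \cite[Lemma~4.3-4.5]{EndrullisPolonsky2011}.
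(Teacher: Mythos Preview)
Your proposal is correct and follows essentially the same approach as the paper: coinduction with case analysis on $t_1 \infred t_1'$, using that $t \reduces t'$ implies $t[t_2/x] \reduces t'[t_2/x]$. The paper's proof is just a one-line hint to this argument, whereas you have spelled out the details (in particular the delicate variable subcase $t_1 \reduces x$ and the need to take the invariant up to $\infred$), but the underlying structure is the same and matches the pattern of \cite[Lemma~4.3--4.5]{EndrullisPolonsky2011} that both you and the paper cite.
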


\begin{proof}
  Coinduction with case analysis on $t_1 \infred t_1'$, using that $t
  \reduces t'$ implies $t[t_2/x] \reduces t'[t_2/x]$.
\end{proof}

\begin{lem}\label{lem_infred_append}
  If $t \infred t' \to_{\beta\iota} t''$ then $t \infred t''$.
\end{lem}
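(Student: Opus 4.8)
The plan is to proceed by coinduction, matching the structure of the coinductive definition of $\infred$. Suppose $t \infred t'$ and $t' \to_{\beta\iota} t''$. We perform a case analysis on the last rule used to derive $t \infred t'$. In each case the derivation gives us a finitary reduction $t \reduces s$ (where $s$ is a head normal form of the appropriate shape: a variable, a constructor, an abstraction, an application, or a $\case$-term) together with $\infred$-derivations for the immediate subterms of $s$ relating them to the corresponding subterms of $t'$. The $\beta\iota$-step $t' \to_{\beta\iota} t''$ then occurs at some position inside $t'$, and we need to locate it relative to this decomposition.

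The key observation is the following. If the $\beta\iota$-redex contracted in $t' \to_{\beta\iota} t''$ lies strictly inside one of the immediate subterms $r'$ of $t'$, then the corresponding subderivation $r \infred r'$ together with $r' \to_{\beta\iota} (\text{its reduct})$ lets us apply the coinductive hypothesis to obtain $r \infred r''$ for the updated subterm $r''$, and we reassemble the derivation of $t \infred t''$ using the same finitary prefix $t \reduces s$ and the updated subderivations. The remaining case is when the contracted redex is at the root of $t'$; this can only happen when $t'$ is itself a $\beta$-redex $(\lambda x . p')q'$ or an $\iota$-redex $\case(c_k \vec{u}'; \ldots)$. In that situation the decomposition of $t \infred t'$ runs through the application (resp.\ $\case$) rule, so we have $t \reduces (\lambda x . p)q$ with $p \infred p'$, $q \infred q'$ (resp.\ the analogous data for $\case$). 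Then $t \reduces (\lambda x . p)q \to_\beta p[q/x]$, and by Lemma~\ref{lem_infred_subst} we get $p[q/x] \infred p'[q'/x] = t''$; composing the finitary reduction with this gives, after one application of the appropriate $\infred$-rule to the head normal form of $p'[q'/x]$, the desired $t \infred t''$. The $\iota$-case is handled symmetrically, using that $\case(c_k \vec{u}; \ldots) \to_\iota u$-body and that substitution of the matched arguments is again covered by Lemma~\ref{lem_infred_subst} (iterated over the vector).

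One subtlety is that after contracting a root redex the term $p[q/x]$ is not yet in head normal form, so to close the coinductive step we need to further reduce it; but this is exactly the content of composing the finitary reductions $t \reduces (\lambda x . p)q \to_\beta p[q/x] \reduces s''$ for a head normal form $s''$, and then invoking the coinductive hypotheses on the subterms of $s''$ as supplied (via Lemma~\ref{lem_infred_subst}) by the derivation of $p[q/x] \infred p'[q'/x]$. Formally this is a standard ``peeling'' argument: the derivation of $p[q/x] \infred t''$ itself starts with some finitary reduction to a head normal form, which we prepend.

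The main obstacle is purely bookkeeping: one must be careful that the coinductive hypothesis is applied only to strictly smaller immediate subterms (guardedness of the coinductive definition), and that the root-redex case is dispatched entirely by Lemma~\ref{lem_infred_subst} and finitary composition without any further appeal to the coinductive hypothesis, so that the coinductive proof is well-formed. There is no genuinely hard mathematical content here — the lemma is a routine ``append one step'' compatibility result — but the case analysis on where the contracted redex sits, combined with the need to re-derive a head normal form after a root contraction, is what requires care.
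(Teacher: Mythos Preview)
Your argument is correct, but the paper takes a simpler route: it proves the lemma by \emph{induction on the derivation of $t' \to_{\beta\iota} t''$}, not by coinduction on the goal $t \infred t''$. Since a single $\beta\iota$-step is given by a finite derivation (the compatible closure rules locate the redex at a finite position), ordinary well-founded induction applies. The base case (redex at the root) is handled exactly as you describe, via Lemma~\ref{lem_infred_subst} and prepending the resulting finitary reduction; in the inductive case (redex strictly inside a subterm) one inverts $t \infred t'$ to obtain the subterm data, applies the inductive hypothesis to the smaller $\to$-derivation, and reassembles with a single $\infred$-rule.

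The two arguments share the same case structure and the same use of Lemma~\ref{lem_infred_subst}; the difference is only in how the recursive call is justified. Your coinductive version is valid because the calls in the non-root cases are guarded under an $\infred$-constructor, and the root case needs no coinductive call at all (as you note). The paper's inductive version is cleaner: it sidesteps the guardedness bookkeeping entirely, and it makes the ``subtlety'' paragraph in your proposal unnecessary, since in the root case one simply has $t \reduces p[q/x] \infred t''$ and closure of $\infred$ under prepending a finitary prefix is immediate from the shape of the rules.
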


\begin{proof}
  Induction on $t' \to_{\beta\iota} t''$, using
  Lemma~\ref{lem_infred_subst}.
\end{proof}

\begin{lem}\label{lem_infred_concat}
  If $t \infred t' \infred t''$ then $t \infred t''$.
\end{lem}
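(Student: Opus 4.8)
The plan is to prove this by coinduction, doing case analysis on the derivation of $t' \infred t''$, which is to say on the root shape of $t''$. Before starting the coinduction I would record an auxiliary fact obtained by iterating Lemma~\ref{lem_infred_append}: if $t \infred t'$ and $t' \reduces t''$ (a \emph{finitary} reduction), then $t \infred t''$. This is immediate by induction on the length of $t' \reduces t''$, the base case being trivial and the inductive step being a single application of Lemma~\ref{lem_infred_append}.

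Now assume $t \infred t' \infred t''$. By inversion on $t' \infred t''$ there is a finitary reduction $t' \reduces s$ with $s$ having the same root symbol as $t''$, together with infinitary reductions from the immediate subterms of $s$ to the corresponding immediate subterms of $t''$. By the auxiliary fact, $t \infred s$. Inverting this reduction yields a finitary reduction $t \reduces q$ with $q$ of the same root shape as $s$, together with infinitary reductions $q_i \infred s_i$ from the subterms of $q$ to those of $s$. To build the required derivation of $t \infred t''$ I apply the matching rule of the definition of $\infred$, using the \emph{finitary} premise $t \reduces q$ as is, and discharging each remaining premise $q_i \infred t''_i$ by invoking the coinductive hypothesis on $q_i \infred s_i \infred t''_i$. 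In the base cases $t'' = x$ or $t'' = c$ there are no subterm premises at all: here $t' \reduces x$ (resp.\ $t' \reduces c$) is finitary, so the auxiliary fact directly gives $t \infred x$ (resp.\ $t \infred c$), and we are done. The cases $t'' = \lambda x . r'$, $t'' = r_1' r_2'$ and $t'' = \case(r'; \{c_k \vec{x} \To r_k'\})$ all follow the same pattern, with one, two, resp.\ three uses of the coinductive hypothesis.

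Each appeal to the coinductive hypothesis sits in a premise position of a rule instance of $\infred$, hence strictly below a constructor of the derivation being built, so the coinductive definition is used productively and the argument is sound. I do not expect a real obstacle here. The only point needing care is that, by definition, an infinitary reduction to a term whose root is a variable or a constructor is witnessed by a \emph{finite} reduction; this is precisely why the prefix $t \infred t' \reduces s$ must be collapsed to $t \infred s$ via the iterated Lemma~\ref{lem_infred_append} \emph{before} one can invert it and peel off the root constructor. This matches the development of \cite[Lemma~4.3--4.5]{EndrullisPolonsky2011} referenced above.
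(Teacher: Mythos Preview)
Your proposal is correct and follows essentially the same route as the paper, which simply says ``By coinduction, analyzing $t' \infred t''$ and using Lemma~\ref{lem_infred_append}.'' You have merely made explicit the iteration of Lemma~\ref{lem_infred_append} to a finitary prefix and the subsequent inversion of $t \infred s$, which is exactly what the terse proof in the paper intends.
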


\begin{proof}
  By coinduction, analyzing $t' \infred t''$ and using
  Lemma~\ref{lem_infred_append}.
\end{proof}

The rest of this section contains some technical definitions and
results which are needed for the proof of the approximation theorem. A
reader not interested in the infinitary rewriting details of this
proof may skip the remainder of this section.

\begin{defi}
  We define the relation~$\to^{2\infty}$ analogously to~$\infred$, but
  replacing~$\reduces$ with~$\infred$ and~$\infred$
  with~$\to^{2\infty}$ in Definition~\ref{def_infred}.
\end{defi}

We may consider~$\infred$ (resp.~$\to^{2\infty}$) as defining a
strongly convergent ordinal-indexed reduction
sequence~\cite{KennawayVries2003} of length at most~$\omega$
(resp.~$\omega^2$), obtained by concatenating the finite
reductions~$\reduces$ occurring in the coinductive derivation. The
next lemma may be seen as a kind of compression lemma.

\begin{lem}\label{lem_infred_two}
  If $t \to^{2\infty} t'$ then $t \infred t'$.
\end{lem}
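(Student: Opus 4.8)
The relation $\to^{2\infty}$ is defined coinductively exactly like $\infred$, except that the finite-reduction premises $\reduces$ are replaced by $\infred$-steps and the recursive occurrences of $\infred$ are replaced by $\to^{2\infty}$. So a derivation of $t \to^{2\infty} t'$ consists, at the root, of a step $t \infred s$ (where $s$ is one of $x$, $c$, $\lambda x.r$, $r_1 r_2$, or a $\case$), together with $\to^{2\infty}$-subderivations on the immediate subterms of $s$. The plan is to prove the lemma by coinduction on the definition of $\infred$: given $t \to^{2\infty} t'$, I must exhibit a finitary reduction $t \reduces v$ to a term $v$ whose shape matches $t'$ at the root, together with $\infred$-subgoals on the immediate subterms, and then appeal to the coinduction hypothesis on those subterms. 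The root step of the $\to^{2\infty}$-derivation gives me $t \infred s$ with $s$ matching $t'$ at the root; the difficulty is that $t \infred s$ is an \emph{infinitary} reduction, not the finitary one the $\infred$ rules demand.

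The key observation is that $t \infred s$ where $s$ has a constructor/abstraction/application at the root means, by the very definition of $\infred$, that $t$ finitarily reduces to a term $s_0$ with the \emph{same} root symbol and immediate subterms $s_0^{(i)}$ satisfying $s_0^{(i)} \infred s^{(i)}$. (When $s = x$ or $s = c$ this is literally the rule, with $s_0 = s$.) Concretely, inverting $t \infred s$: if $s = r_1' r_2'$ then $t \reduces r_1 r_2$ with $r_k \infred r_k'$; if $s = \lambda x . r'$ then $t \reduces \lambda x . r$ with $r \infred r'$; and similarly for $\case$. Now I take $v := s_0$ (so $t \reduces v$, giving the required finitary reduction), and for each immediate subterm I need $v^{(i)} \infred t'^{(i)}$. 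I have $v^{(i)} = s_0^{(i)} \infred s^{(i)}$ from inversion, and $s^{(i)} \to^{2\infty} t'^{(i)}$ from the subderivations of the original $\to^{2\infty}$-derivation. Composing these gives $s_0^{(i)} \to^{2\infty} t'^{(i)}$ — wait, that's the wrong direction to feed the coinduction hypothesis directly; instead I first apply the coinduction hypothesis to $s^{(i)} \to^{2\infty} t'^{(i)}$ to obtain $s^{(i)} \infred t'^{(i)}$, then use Lemma \ref{lem_infred_concat} to compose $v^{(i)} \infred s^{(i)} \infred t'^{(i)}$ into $v^{(i)} \infred t'^{(i)}$, which is exactly the $\infred$-subgoal needed to close the coinductive step.

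I expect the main obstacle to be the bookkeeping of the coinduction itself: one must check that the coinduction hypothesis is invoked only underneath at least one application of an $\infred$-rule constructor (i.e.\ that the definition is productive/guarded), which it is, since each use of the hypothesis occurs on a proper immediate subterm after the root step $t \reduces v$ has been consumed. A secondary point worth spelling out is the base cases $s = x$ and $s = c$: there $t \infred x$ unfolds to $t \reduces x$ and the $\to^{2\infty}$-derivation forces $t' = x$ (no subterms), so $t \infred x = t'$ immediately; similarly for $c$. The uniform-handling of all five syntactic cases, together with the two invocations of earlier lemmas (inversion of $\infred$ at the root is just reading off the rules; Lemma \ref{lem_infred_concat} does the splicing), then completes the argument. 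No appeal to confluence or meaninglessness is needed — this is purely a structural "compression" of a length-$\omega^2$ reduction into a length-$\omega$ one.
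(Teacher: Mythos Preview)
Your proposal is correct and follows the same approach as the paper's (one-line) proof: coinduction on the shape of~$t'$, using Lemma~\ref{lem_infred_concat} to absorb the extra $\infred$-prefix that appears when you invert the outer step of the $\to^{2\infty}$ derivation. One small refinement: your guardedness argument is not quite tight, since applying Lemma~\ref{lem_infred_concat} \emph{after} the coinductive hypothesis is not a strictly guarded use; the standard remedy (implicit in the reference the paper cites) is to strengthen the statement proved by coinduction to ``$t \infred s$ and $s \to^{2\infty} t'$ imply $t \infred t'$'', so that Lemma~\ref{lem_infred_concat} is applied to the hypotheses \emph{before} the coinductive call, making the recursion manifestly guarded.
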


\begin{proof}
  By coinduction, using Lemma~\ref{lem_infred_concat}. See for
  example~\cite[Lemma~6.3]{Czajka2018} for details.
\end{proof}

The system of $\beta\iota$-reductions on infinitary terms~$\Tb^\infty$
may be presented as a fully-extended infinitary Combinatory Reduction
System (iCRS)~\cite{KetemaSimonsen2011}. One checks that this~iCRS is
orthogonal. A reader not familiar with the~iCRS formalism may skip the
following definition.

\begin{defi}\label{def_icrs}
  The signature of the iCRS contains:
  \begin{itemize}
  \item a distinct nullary symbol~$c$ for each constructor,
  \item a binary symbol~$\app$ denoting application,
  \item a unary symbol~$\lam$ denoting lambda abstraction, and
  \item for each $n \in \Nbb$ and each sequence of distinct
    constructors $c_1,\ldots,c_n$ and each sequence of natural numbers
    $k_1,\ldots,k_n$, a symbol
    $\case_{c_1,\ldots,c_n}^{k_1,\ldots,k_n}$ of arity $n+1$.
  \end{itemize}
  The iCRS has the following rewrite rules:
  \begin{itemize}
  \item $\app(\lam([x]Z(x)), X) \to Z(X)$,
  \item for each symbol~$\case_{c_1,\ldots,c_n}^{k_1,\ldots,k_n}$ and
    each $i=1,\ldots,n$:
    \[
    \begin{array}{l}
      \case_{c_1,\ldots,c_n}^{k_1,\ldots,k_n}(\app(\ldots (\app(\app(c_i, X_1), X_2)\ldots), X_{k_i}), \\
      \quad\quad\quad\quad\quad [x_1,\ldots,x_{k_1}]Z_1(x_1,\ldots,x_{k_1}), \ldots, [x_1,\ldots,x_{k_n}]Z_n(x_1,\ldots,x_{k_n}))\\
      \to\\
      Z_i(X_1,\ldots,X_{k_i})
    \end{array}
    \]
  \end{itemize}
  We assume $x_1,\ldots,x_{k_i}$ to be pairwise distinct, for
  $i=1,\ldots,n$.

  One sees that this iCRS corresponds to our informal presentation of
  terms and reductions, and that it is fully-extended and orthogonal.
\end{defi}

Our coinductive definition of the infinitary reduction
relation~$\infred$ corresponds to, in the sense of existence, to the
well-established notion of strongly convergent reduction
sequences~\cite{KetemaSimonsen2011,KennawayVries2003}. This is made precise in the next
lemma.

\begin{lem}\label{lem_strongly_convergent_equivalent}
  $t \infred t'$ iff there exists a strongly convergent reduction
  sequence from~$t$ to~$t'$.
\end{lem}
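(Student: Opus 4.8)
The plan is to prove the two directions of the equivalence separately, reducing each to a transfinite induction that mirrors the coinductive structure of $\infred$. Throughout I would use the observation already stressed in the text: a derivation of $t \infred t'$ is built by concatenating the finite reductions $\reduces$ occurring at the nodes of the coinductive derivation, and descending one level in the derivation tree corresponds to moving to a strictly greater depth in the term. Conversely, a strongly convergent reduction sequence of length $\alpha \le \omega$ can be decomposed according to the position of its steps relative to the root.

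For the direction from $\infred$ to strongly convergent sequences, I would argue by induction on the term structure combined with unfolding the coinductive derivation. Given $t \infred t'$, the derivation starts with a finite reduction $t \reduces s$ where $s$ has the same outermost shape as $t'$ (a variable, a constructor, an abstraction $\lambda x.r$, an application $r_1 r_2$, or a $\case$), and then there are sub-derivations $r_i \infred r_i'$ at strictly smaller depth. By the induction hypothesis each $r_i \infred r_i'$ yields a strongly convergent sequence; I would then form the concatenation of $t \reduces s$ with the (possibly infinite) sequences obtained from the sub-derivations by prefixing the appropriate position. Strong convergence of the concatenation holds because the depths of the redexes contracted in the $i$-th sub-sequence are all at least one more than the depth of $s$'s immediate subterms, so at every limit ordinal the depth tends to infinity; the standard bookkeeping here is exactly that of~\cite{KennawayVries2003,KetemaSimonsen2011}. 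One mild technicality is that the sub-sequences must be interleaved or concatenated in a way that keeps the total length at most $\omega$ (or countable), which is handled by the compression lemma / the fact that $\infred$ already corresponds to length-$\le\omega$ sequences as noted after Lemma~\ref{lem_infred_two}.

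For the converse, suppose $(t_\beta)_{\beta \le \alpha}$ is a strongly convergent reduction sequence with $t_0 = t$ and $t_\alpha = t'$. By strong convergence, only finitely many steps contract a redex at the root position (depth $0$), so there is a least ordinal $\gamma$ after which no more root steps occur; the finite reduction $t \reduces t_\gamma$ performed at depth $\le$ something, and actually $t \reduces s$ for some $s$ having the same outermost constructor as $t'$, can be extracted from the prefix up to $\gamma$ (using Lemma~\ref{lem_infred_append} and Lemma~\ref{lem_infred_concat} to absorb the finitely many initial steps, and the fact that after $\gamma$ the outermost symbol is stable). The tail of the sequence after $\gamma$ decomposes into strongly convergent reduction sequences acting on the immediate subterms of $s$; these are again strongly convergent and by coinduction each yields $r_i \infred r_i'$ where $r_i'$ are the corresponding subterms of $t'$. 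Assembling these with the rule for the matching outermost shape gives $t \infred t'$. I would present this as a proof by coinduction, checking that the hypotheses of the relevant rule are met with the decomposed sub-sequences.

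The main obstacle is the bookkeeping in the converse direction: showing that a strongly convergent sequence can be split cleanly into an initial finite segment that fixes the root symbol followed by strongly convergent sequences on the proper subterms. One must verify that after the last root step the remaining steps genuinely partition according to which immediate subterm they occur in, that each resulting sub-sequence is itself strongly convergent (limit depths still go to infinity), and that the positions recombine correctly — in particular that limit stages of the original sequence are consistent with limit stages of the pieces. This is the standard but somewhat delicate ``projection to subterms'' argument for transfinite reductions; none of it is conceptually deep, but it is where the real work lies, and I would cite the analogous development in~\cite{KetemaSimonsen2011} for the detailed verification of strong convergence under this decomposition.
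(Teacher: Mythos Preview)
Your backward direction has a real gap. You correctly note that a strongly convergent sequence $(t_\beta)_{\beta\le\alpha}$ contains only finitely many root steps, but the ordinal~$\gamma$ after the last one need not be finite: between and before the root steps there may be transfinitely many non-root steps (take e.g.\ $\alpha=\omega+1$ with the sole root step at position~$\omega$). Hence the prefix from~$t$ to~$t_\gamma$ is in general \emph{not} a finite reduction, yet the coinductive rules for~$\infred$ demand a genuinely finite $t\reduces s$ at each node. Lemmas~\ref{lem_infred_append} and~\ref{lem_infred_concat} cannot help here: they manipulate~$\infred$ itself, which is what you are trying to establish, and do not manufacture finite reductions out of transfinite ones. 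The proofs the paper cites close this gap either by first invoking the compression lemma for strongly convergent reductions (reducing to $\alpha\le\omega$, where your decomposition goes through verbatim) or by a transfinite induction on~$\alpha$ whose limit case is discharged via Lemma~\ref{lem_infred_two}; this is precisely why the paper singles out that lemma as needed.

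A smaller issue in the forward direction: ``induction on the term structure'' is not available, since terms in~$\Tb^\infty$ may be infinite. Your proposed recursion into subterms followed by concatenation does not terminate and does not produce a sequence of any fixed ordinal length. What actually works (and what the cited sources do) is a direct level-by-level construction of an $\omega$-indexed sequence: perform the finite reduction $t\reduces s$ at the root of the coinductive derivation, then the finite reductions one level deeper in each immediate subderivation, and so on; at stage~$n$ all redexes contracted lie at depth~$\ge n$, which gives strong convergence immediately.
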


\begin{proof}
  This follows by a proof completely analogous
  to~\cite[Theorem~6.4]{Czajka2018}, \cite[Theorem~48]{Czajka2015a}
  or~\cite[Theorem~3]{EndrullisPolonsky2011}. The technique originates
  from~\cite{EndrullisPolonsky2011}. Lemma~\ref{lem_infred_two} is
  needed in the proof.
\end{proof}

\begin{defi}
  A term~$t$ is \emph{root-active} if for every $t'$ with $t \infred
  t'$ there is a $\beta\iota$-redex~$t''$ such that $t' \infred
  t''$. The set of root-active, or \emph{meaningless}, terms is
  denoted by~$\Uc$. By~$\sim_\Uc$ we denote equality of terms modulo
  equivalence of meaningless subterms.
\end{defi}

Meaningless terms are a technical notion needed in the proofs, because
for infinitary rewriting confluence holds only
modulo~$\sim_\Uc$. Intuitively, meaningless terms have no
``meaningful'' interpretation and may all be identified. An example of
a meaningless term is $\Omega = (\lambda x . x x) (\lambda x . x
x)$. Various other sets of meaningless terms have been considered in
the infinitary lambda
calculus~\cite{KennawayVries2003,Vries2016,SeveriVries2011,SeveriVries2011b,SeveriVries2005,KennawayOostromVries1999}. The
set of root-active terms is a subset of each of them.

Because our~iCRS is fully-extended and orthogonal, the following are
consequences of some results in~\cite{KetemaSimonsen2009} and the
previous lemma. Note that because all rules are collapsing, in our
setting root-active terms are the same as the hypercollapsing terms
from~\cite{KetemaSimonsen2009}.

\begin{lem}\label{lem_sim_concat}
  If $t \sim_\Uc t' \sim_\Uc t''$ then $t \sim_\Uc t''$.
\end{lem}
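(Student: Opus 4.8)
The plan is to reduce transitivity of $\sim_\Uc$ to transitivity of syntactic equality via a canonical-form characterization. First I would make the definition precise: call a position $p$ of a term a \emph{maximal meaningless occurrence} if the subterm at $p$ is root-active and no proper prefix of $p$ has a root-active subterm. Maximality as ``outermost'' always has witnesses, so these positions form a well-defined antichain even in infinite terms, and one can set $\nf_\Uc(t)$ to be the term obtained from $t$ by replacing the subterm at each maximal meaningless occurrence by a fixed fresh constant. Since those occurrences are pairwise disjoint this is well-defined. The key claim is then $t \sim_\Uc t' \iff \nf_\Uc(t) = \nf_\Uc(t')$, from which the lemma is immediate: $\nf_\Uc(t) = \nf_\Uc(t') = \nf_\Uc(t'')$ gives $t \sim_\Uc t''$. (Alternatively one can argue directly by coinduction on the bisimulation-style presentation of $\sim_\Uc$, but the obstacle below is the same.)

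For the characterization, the direction $\nf_\Uc(t) = \nf_\Uc(t') \Rightarrow t \sim_\Uc t'$ is routine: aligning the two $\bot$-labelled skeletons, corresponding $\bot$-positions carry root-active subterms of $t$ and of $t'$, and substituting these back (coinductively, or level by level) exhibits $t \sim_\Uc t'$ by the very definition of $\sim_\Uc$ as equality modulo swapping of meaningless subterms. The converse requires that if $r \in \Uc$ and $r \sim_\Uc s$ then $s \in \Uc$ — i.e.\ $\Uc$ is closed under $\sim_\Uc$ — so that swapping a meaningless subterm for another one changes neither the set of maximal meaningless occurrences nor the surrounding skeleton, whence $\nf_\Uc$ is a genuine invariant of $\sim_\Uc$.

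Thus the crux is: root-activeness is preserved by $\sim_\Uc$, and this is exactly where orthogonality and full-extendedness of the iCRS of Definition~\ref{def_icrs} enter. I would establish a commutation lemma: if $r \sim_\Uc s$ and $r \infred r''$ then there is $s''$ with $s \infred s''$ and $r'' \sim_\Uc s''$, proved by coinduction using Lemma~\ref{lem_infred_subst} and Lemma~\ref{lem_infred_concat}, together with the observation that a $\beta\iota$-redex pattern can neither be created nor destroyed by replacing a meaningless subterm: a root-active term never reduces to a $\lambda$-headed or constructor-headed term (every reduct of it reduces to a redex, whereas such headed terms stay headed and never have a root redex), so a $\beta$- or $\iota$-step outside the swapped subterm fires or fails to fire identically on both sides — and full-extendedness guarantees the case-rules impose no constructor constraints on positions lying inside such an argument. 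Given this, if $r$ is root-active and $r \sim_\Uc s$, then any reduct $s''$ of $s$ is matched by a reduct $r''$ of $r$ with $r'' \sim_\Uc s''$; since $r''$ reduces to a redex, one more application of the commutation lemma (plus the fact, just noted, that ``is a $\beta\iota$-redex at the root'' is $\sim_\Uc$-invariant) yields a reduct of $s''$ that is a redex. Hence $s \in \Uc$. I expect this invariance of meaninglessness under $\sim_\Uc$ to be the main obstacle; it is precisely the content of the cited results of~\cite{KetemaSimonsen2009}, so in a fully detailed treatment one would instead simply invoke the transitivity of equivalence modulo hypercollapsing terms for orthogonal fully-extended iCRSs from that paper, together with the remark that — all our rules being collapsing — the root-active terms are exactly the hypercollapsing terms.
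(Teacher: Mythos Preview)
Your proposal is correct, and in its final sentence you arrive at exactly what the paper does: the paper's entire proof is the one-line citation ``Follows from~\cite[Proposition~4.12]{KetemaSimonsen2009}'', relying on the identification of root-active terms with hypercollapsing terms noted just before Lemma~\ref{lem_sim_concat}.

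Where you go beyond the paper is in sketching a self-contained argument via the canonical form~$\nf_\Uc$ and the commutation lemma. This is sound and you put your finger on the genuine crux: transitivity of~$\sim_\Uc$ reduces to closure of~$\Uc$ under~$\sim_\Uc$, and that closure is where the work lies. Two remarks on your sketch. First, the commutation lemma you propose is precisely Lemma~\ref{lem_infred_sim}, which the paper \emph{also} obtains by citation (to~\cite[Lemma~4.14]{KetemaSimonsen2009}); so your route would in effect re-derive both cited results. The coinductive proof you outline does go through without circularity, because at the root position one can always split on whether $\epsilon$ lies in the witnessing antichain---if so, both sides are in~$\Uc$ and one is done; if not, the head symbols agree and one recurses---and no appeal to transitivity is needed in either branch. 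Second, your invocation of Lemma~\ref{lem_infred_subst} and Lemma~\ref{lem_infred_concat} is slightly misplaced: what the commutation argument actually uses is closure of~$\Uc$ under~$\infred$ (immediate from the definition of root-activeness together with Lemma~\ref{lem_infred_concat}) and left-linearity of the pattern, rather than the substitution lemma per se. Your appeal to full-extendedness is also a little loose---it is orthogonality (left-linearity plus non-overlap) that ensures a redex pattern is blind to the shape of its arguments; full-extendedness is a separate condition on bound variables in the iCRS formalism. None of this affects the validity of your plan.
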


\begin{proof}
  Follows from~\cite[Proposition~4.12]{KetemaSimonsen2009}.
\end{proof}

\begin{lem}\label{lem_infred_sim}
  If $t \infred w$ and $t \sim_\Uc t'$ then there is~$w'$ with $t'
  \infred w'$ and $w \sim_\Uc w'$.
\end{lem}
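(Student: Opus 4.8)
The plan is to appeal to the iCRS results cited just before the statement, in particular to the confluence-modulo-hypercollapsing theorems of~\cite{KetemaSimonsen2009}. First I would recall that, by the lemma identifying $\infred$ with strongly convergent reduction sequences, both hypotheses $t \infred w$ and $t \sim_\Uc t'$ can be read in the iCRS formalism of Definition~\ref{def_icrs}: $t \infred w$ is a strongly convergent reduction $t \twoheadrightarrow w$, and $t \sim_\Uc t'$ means $t$ and $t'$ differ only by replacing some hypercollapsing (equivalently, by the remark before Lemma~\ref{lem_sim_concat}, root-active) subterms. Since the iCRS is fully-extended and orthogonal, \cite{KetemaSimonsen2009} provides confluence modulo $\sim_\Uc$ together with the compatibility of $\sim_\Uc$ with strongly convergent reduction: from a strongly convergent reduction out of $t$ and a $\sim_\Uc$-related term $t'$, one obtains a strongly convergent reduction $t' \twoheadrightarrow w'$ with $w \sim_\Uc w'$. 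Translating back through the equivalence lemma yields $t' \infred w'$ with $w \sim_\Uc w'$, which is exactly the claim.

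The key steps, in order, are: (i) reformulate the two hypotheses in the iCRS setting via the strong-convergence characterization of $\infred$; (ii) invoke the relevant proposition of~\cite{KetemaSimonsen2009} — the one stating that $\sim_\Uc$ (equivalence modulo hypercollapsing subterms) is preserved under strongly convergent reduction, in the form that a reduction from one side can be ``copied'' along the equivalence up to $\sim_\Uc$ of the results; (iii) translate the resulting strongly convergent reduction $t' \twoheadrightarrow w'$ back to $t' \infred w'$, again using the equivalence lemma; (iv) conclude $w \sim_\Uc w'$.

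The main obstacle I anticipate is purely bookkeeping: pinning down exactly which statement in~\cite{KetemaSimonsen2009} is being used and checking that its hypotheses match our situation — namely that our iCRS is fully-extended, orthogonal, and has only collapsing rules, so that root-activeness and hypercollapsingness coincide (as already noted in the text). There is no genuine mathematical difficulty here beyond correctly citing and instantiating the external confluence-modulo-meaninglessness machinery; the coinductive definitions of $\infred$ and the lemmas on $\infred$ established above (Lemmas~\ref{lem_infred_concat} and~\ref{lem_infred_two}) suffice to mediate between the coinductive and the ordinal-indexed presentations. I would therefore keep the proof short, stating that the result follows from the cited proposition of~\cite{KetemaSimonsen2009} together with the preceding lemma, in the same spirit as the proof of Lemma~\ref{lem_sim_concat}.
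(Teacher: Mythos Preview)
Your proposal is correct and matches the paper's approach exactly: the paper's proof is the single line ``Follows from~\cite[Lemma~4.14]{KetemaSimonsen2009},'' which is precisely the external result you plan to invoke, with the translation between the coinductive $\infred$ and strongly convergent reductions (via the preceding equivalence lemma) left implicit. Your write-up is simply a more explicit version of the same argument.
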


\begin{proof}
  Follows from~\cite[Lemma~4.14]{KetemaSimonsen2009}.
\end{proof}

\begin{thm}
  The relation of infinitary reduction~$\infred$ is confluent
  modulo~$\Uc$, i.e., if $t \sim_\Uc t'$ and $t \infred u$ and
  $t' \infred u'$ then there exist $w,w'$ such that $w \sim_\Uc w'$
  and $u \infred w$ and $u' \infred w'$.
\end{thm}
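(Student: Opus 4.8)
The plan is to deduce confluence modulo~$\Uc$ from the two auxiliary lemmas just established, Lemma~\ref{lem_sim_concat} (transitivity of~$\sim_\Uc$) and Lemma~\ref{lem_infred_sim} (compatibility of~$\infred$ with~$\sim_\Uc$), together with plain confluence of~$\infred$ on genuine terms (without the modulo). The latter is the classical result for fully-extended orthogonal iCRSs modulo hypercollapsing terms, which here coincide with the root-active terms~$\Uc$; I would cite it from~\cite{KetemaSimonsen2009} via the correspondence with strongly convergent reduction sequences noted before the lemma. So the skeleton is: first straighten out the diagram via~$\sim_\Uc$, then close it by the "pure" confluence statement.

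First I would spell out the hypotheses: we have $t \sim_\Uc t'$, $t \infred u$, and $t' \infred u'$. Applying Lemma~\ref{lem_infred_sim} to $t \infred u$ and $t \sim_\Uc t'$ yields $v'$ with $t' \infred v'$ and $u \sim_\Uc v'$. Now from $t'$ we have two infinitary reductions, $t' \infred v'$ and $t' \infred u'$; by confluence of~$\infred$ modulo~$\Uc$ \emph{when the sources coincide} (i.e.\ the plain Church–Rosser property for the orthogonal iCRS, which is exactly what~\cite{KetemaSimonsen2009} gives), there are $w, w_1$ with $v' \infred w$, $u' \infred w_1$, and $w \sim_\Uc w_1$. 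Finally I would push the first leg back from $v'$ to $u$: since $u \sim_\Uc v'$ and $v' \infred w$, Lemma~\ref{lem_infred_sim} gives $w'$ with $u \infred w'$ and $w \sim_\Uc w'$. Chaining $w' \sim_\Uc w \sim_\Uc w_1$ via Lemma~\ref{lem_sim_concat} (used perhaps twice) produces a common $\sim_\Uc$-class for $w'$ and $w_1$, and we have $u \infred w'$, $u' \infred w_1$, so $(w', w_1)$ is the required pair.

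The one delicate point is getting the \emph{exact} form of the underlying confluence result right: \cite{KetemaSimonsen2009} states confluence modulo hypercollapsing terms, and I must make sure the statement I invoke is "same source, two $\infred$-reductions, close up to $\sim_\Uc$" rather than something already assuming $\sim_\Uc$-related sources — otherwise the argument is circular. Given the remark that all rules here are collapsing so root-active $=$ hypercollapsing, and given Lemma~\ref{lem_infred_sim}, this is routine; I expect the main (minor) obstacle to be bookkeeping the several applications of Lemma~\ref{lem_infred_sim} and Lemma~\ref{lem_sim_concat} so that the final $w, w'$ genuinely satisfy all four required relations. A diagram chase makes this transparent, and I would include a small commutative square sketch if space permits.
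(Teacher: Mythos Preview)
Your argument is correct, but it does more work than the paper does. The paper's proof is a one-line citation: the statement \emph{is} \cite[Theorem~4.17]{KetemaSimonsen2009}, specialised to the present iCRS, together with the earlier remark that here root-active terms coincide with hypercollapsing terms. There is no diagram chase in the paper at all.

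Your route instead reduces the $\sim_\Uc$-related-sources case to the same-source case and then lifts back using Lemma~\ref{lem_infred_sim} and Lemma~\ref{lem_sim_concat}. This is a sound derivation, and your bookkeeping is right (modulo the tacit use of symmetry of~$\sim_\Uc$, which is fine since it is defined as an equivalence). But note that the ``same-source'' confluence you invoke is just a special case of the very theorem you are citing from~\cite{KetemaSimonsen2009}, and the two auxiliary lemmas you use are themselves imported from that paper (Proposition~4.12 and Lemma~4.14). So you are not gaining any self-containedness; you are citing three results from~\cite{KetemaSimonsen2009} and combining them, where the paper cites one result that already subsumes them. Your worry about circularity is therefore misplaced: the paper is not \emph{proving} the theorem from more primitive ingredients, it is simply \emph{invoking} the established result for orthogonal fully-extended iCRSs.
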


\begin{proof}
  Follows from~\cite[Theorem~4.17]{KetemaSimonsen2009}.
\end{proof}

\section{Types}\label{sec_types}

In this section we define the types for which we will provide an
interpretation in our rewriting semantics. Some types will be
decorated with sizes of (co)inductive types, indicating the type of
approximations of a (co)inductive type of a given size.

\begin{defi}\label{def_size_expr}
  \emph{Size expressions} are given by the following grammar:
  \[
  \begin{array}{rcl}
    s &::=& \infty \mid 0 \mid i \mid s + 1 \mid \min(s, s)
            \mid \max(s, s)
  \end{array}
  \]
  where $i$ is a size variable. We denote the set of size variables
  by~$\Vc_S$.
\end{defi}
We use obvious abbreviations for size expressions, e.g., $i + 3$ for
$((i + 1) + 1) + 1$, or $\min(s_1,s_2,s_3)$ for
$\min(\min(s_1,s_2),s_3)$, or $\max(s)$ for~$s$, etc. Substitution
$s[s'/i]$ of~$s'$ for the size variable~$i$ in the size expression~$s$
is defined in the obvious way.

\begin{defi}\label{def_types}
  We assume an infinite set~$\Dc$ of (co)inductive definition names
  $d,d',d_1,\ldots$. \emph{Types} $\tau,\alpha,\beta$ are defined by:
  \[
  \begin{array}{rcl}
    \tau &::=& A \mid d^s(\tau_1,\ldots,\tau_n) \mid \tau_1 \to \tau_2
    \mid \forall i . \tau
  \end{array}
  \]
  where $A \in \Vc_T$ is a type variable, $s$ is a size expression,
  $i$ is a size variable, and~$d$ is a (co)inductive definition name.

  A type~$\tau$ is \emph{strictly positive} if one of the following
  holds:
  \begin{itemize}
  \item $\tau$ is closed (i.e.~it contains no type variables),
  \item $\tau = A$ is a type variable,
  \item $\tau = \tau_1 \to \tau_2$ and~$\tau_1$ is closed and~$\tau_2$
    is strictly positive,
  \item $\tau = \forall i . \tau'$ and~$\tau'$ is strictly positive,
  \item $\tau = d^\infty(\vec{\alpha})$ and each~$\alpha_k$ is strictly
    positive.
  \end{itemize}

  \noindent
  By~$\SV(s)$ (resp.~$\SV(\tau)$) we denote the set of all size
  variables occurring in~$s$ (resp.~$\tau$). By~$\FV(\tau)$ we denote
  the set of all type variables occurring in~$\tau$. By~$\FSV(\tau)$
  we denote the set of all free size variables occuring in~$\tau$
  (i.e.~those not bound by any~$\forall$).

  Substitution $\tau[\tau'/A]$, $s[s'/i]$, $\tau[s'/i]$ is defined in
  the obvious way, avoiding size variable capture. We abbreviate
  simultaneous substitution $\tau[\alpha_1/A_1,\ldots,\alpha_n/A_n]$ to
  $\tau[\vec{\alpha}/\vec{A}]$.
\end{defi}

To each (co)inductive definition name~$d \in \Dc$ we associate a
unique (co)inductive definition. Henceforth, we will use (co)inductive
definitions and their names interchangeably. Remember, however, that
strictly speaking (co)inductive definitions do not occur in types,
only their names do.

\begin{defi}\label{def_coind_defs}
  A \emph{coinductive definition} for~$d \in \Dc$ is specified by a
  defining equation of the form
  \[
  d(B_1,\ldots,B_n) = \CoInd(A)\{ c_k : \vec{\sigma_k} \mid k=1,\ldots,m\}
  \]
  where~$A$ is the \emph{recursive type variable}, and
  $B_1,\ldots,B_n$ are the \emph{parameter type variables}, and $m >
  0$, and~$c_k$ is the $k$th \emph{constructor}, and~$\sigma_k^l$ is
  the $k$th constructor's $l$th \emph{argument type}, and the
  following is satisfied:
  \begin{itemize}
  \item $\sigma_k^l$ are all strictly positive,
  \item $\FV(\sigma_k^l) \subseteq \{A,B_1,\ldots,B_n\}$,
  \item $\FSV(\sigma_k^l) = \emptyset$.
  \end{itemize}
  An \emph{inductive definition} is specified analogously, but
  using~$\Ind$ instead of~$\CoInd$.

  We assume that each constructor~$c$ is associated with a unique
  (co)inductive definition~$\Def(c)$.

  We assume there is a well-founded order~$\prec$ on (co)inductive
  definitions such that for every (co)inductive definition~$d$, each
  (co)inductive definition~$d'$ occurring in a constructor argument
  type of~$d$ satisfies $d' \prec d$.
\end{defi}
The type variable~$A$ is used as a placeholder for recursive
occurrences of~$d(\vec{B})$. We often write~$\ArgTypes(c_k)$ to denote
$(\sigma_k^1,\ldots,\sigma_k^{n_k})$: the argument types of the $k$-th
constructor. We usually present (co)inductive definitions in a bit
more readable format by replacing the recursive type variable~$A$ with
the type being defined, presenting the constructor argument types in a
chain of arrow types, and adding the type being defined as the target
type of constructors. For instance, the inductive definition of lists
is specified by
\[
  \List(B) = \Ind\{\nil : \List(B),\, \cons : B \to \List(B) \to \List(B)\}.
\]
Formally, here $\sigma_1^1 = A$, $\sigma_2^1 = B$, and $\sigma_2^2 =
A$.

\begin{exa}\label{ex_nat_strm_types}
  The inductive definition of natural numbers is specified by:
  \[
  \Nat = \Ind\{ 0 : \Nat,\, S : \Nat \to \Nat \}.
  \]
  The coinductive definition of streams of natural numbers is
  specified by:
  \[
  \Strm = \CoInd\{ \cons : \Nat \to \Strm \to \Strm \}.
  \]
\end{exa}

\begin{defi}
  An expression of the form~$d(\tau_1,\ldots,\tau_n)$ is a
  \emph{(co)inductive type}, depending on whether~$d$ is an inductive or
  coinductive definition. A type of the form~$d^s(\tau_1,\ldots,\tau_n)$
  is a \emph{decorated (co)inductive type}. We drop the designator
  ``decorated'' when clear from the context. We write
  $c \in \Constr(\rho)$ to denote that~$c$ is a constructor for a
  (decorated) (co)inductive type or definition~$\rho$.
\end{defi}

In a (co)inductive type $d^s(\tau_1,\ldots,\tau_n)$, the types
$\tau_1,\ldots,\tau_n$ denote the \emph{parameters}. Intuitively, we
substitute $\tau_1,\ldots,\tau_n$ for the parameter type
variables~$B_1,\ldots,B_n$ of the (co)inductive definition~$d$.

By default,~$d_\nu$ denotes a coinductive and~$d_\mu$ an inductive
definition. We use $\mu$ for inductive and $\nu$ for coinductive
types, and~$\rho$ for (co)inductive types when it is not important if
it is inductive or coinductive. Analogously, we use~$\mu^s$, $\nu^s$,
$\rho^s$ for decorated (co)inductive types (with size~$s$). We often
omit the superscript~$\infty$ in $\rho^\infty$, overloading the
notation.

Intuitively, $\mu^s$ denotes the type of objects of an inductive
type~$\mu$ which have size at most~$s$, and~$\nu^s$ denotes the type
of objects of a coinductive type~$\nu$ which have size at least~$s$,
i.e., considered up to depth~$s$ they represent a valid object of
type~$\nu$. For a stream $\nu = \Strm$, the type $\Strm^s$ is the type
of terms~$t$ which produce (under a sufficiently long reduction
sequence) at least~$s$ initial elements of a stream. The type
e.g.~$\forall i . \Strm^i \to \Strm^s$ is the type of functions which
when given as argument a stream of size~$i$ (i.e.~with at least~$i$
initial elements well-defined) produce at least~$s$ initial elements
of a stream, where~$i$ may occur in~$s$.

Note that the parameters to (co)inductive definitions may be other
(co)inductive types with size constraints. For instance
$\List(\List^i(\tau))$ denotes the type of lists (of any length) whose
elements are lists of length at most~$i$ with elements of
type~$\tau$. Note also that the recursive type variable~$A$ may occur
as a parameter of a (co)inductive type in the type of one of the
constructors. For these two reasons we need to require that the
parameter type variables occur only strictly positively in the types
of the arguments of constructors. One could allow non-positive
occurrences of parameter type variables in general and restrict the
occurrences to strictly positive only for instantiations with types
containing free size variables or recursive type variables. This
would, however, introduce some tedious but straightforward
technicalities in the proofs.

\begin{exa}\label{ex_trees_types}
  Infinite binary trees storing natural numbers in nodes may be
  specified by:
  \[
  \BTree = \CoInd\{ \bnode : \Nat \to \BTree \to \BTree \to \BTree \}.
  \]
  Trees with potentially infinite branches but finite branching at
  each node are specified by:
  \[
  \FTree = \CoInd\{ \fnode : \Nat \to \List(\FTree) \to \FTree \}.
  \]
  Here the type~$\FTree$ itself (formally, the recursive type
  variable~$A$) occurs as a parameter of~$\List$ in the type of the
  constructor~$\fnode$.

  Infinite trees with infinite branching are specified by:
  \[
  \Tree = \CoInd\{ \node : \Nat \to (\Nat \to \Tree) \to \Tree \}.
  \]
  Here infinite branching is specified by a function from~$\Nat$
  to~$\Tree$.

  Recall the coinducutive definition of the type~$\Tree_2$ from the
  introduction:
  \[
  \Tree_2 = \CoInd\{ c_1 : \List(\Tree_2) \to \Tree_2,\, c_2 : (\Nat \to
  \Tree_2) \to \Tree_2 \}.
  \]
  In this definition both finite branching via the~$c_1$ constructor
  and infinite branching via~$c_2$ are possible. In contrast
  to~$\BTree$, $\FTree$ and~$\Tree$, the nodes of~$\Tree_2$ do not
  store any natural number values.
\end{exa}

\begin{exa}\label{ex_stream_processors_type}
  As an example of a nested higher-order (co)inductive type we
  consider stream processors
  from~\cite{HancockPattinsonGhani2009}. See
  also~\cite[Section~2.3]{AbelPientka2013}. We define two types:
  \[
    \begin{array}{rcl}
      \SPi(B) &=& \Ind\{ \get : (\Nat \to \SPi(B)) \to \SPi(B),\\ & & \quad\quad\; \mput :
      \Nat \to B \to \SPi(B) \} \\
      \SP &=& \CoInd\{ \out : \SPi(\SP) \to \SP \}
    \end{array}
  \]
  The type~$\SP$ is a type of stream processors. A stream processor
  can either read the first element from the input stream and enter a
  new state depending on the read value (the $\get$ constructor), or
  it can write an element to the output stream and enter a new state
  (the $\mput$ constructor). To ensure productivity, a stream
  processor may read only finitely many elements from the input stream
  before writing a value to the output stream. This is achieved by
  nesting the inductive type~$\SPi$ inside the coinductive type~$\SP$
  of stream processors.
\end{exa}

The well-founded order~$\prec$ on (co)inductive definitions essentially
disallows mutual (co)inductive types. They may still be represented
indirectly thanks to type parameters.

\begin{exa}
  The types~$\Odd$ and~$\Even$ of odd and even natural numbers may be
  defined as mutual inductive types:
  \[
  \begin{array}{rcl}
    \Odd &=& \Ind\{S_o : \Even \to \Odd \} \\
    \Even &=& \Ind\{0 : \Even,\, S_e : \Odd \to \Even \}
  \end{array}
  \]
  These are not valid inductive definitions in our formalism, but
  they may be reformulated as follows:
  \[
  \begin{array}{rcl}
    \Odd_0(B) &=& \Ind\{S_o : B \to \Odd_0(B) \} \\
    \Even &=& \Ind\{0 : \Even,\, S_e : \Odd_0(\Even) \to \Even \}
  \end{array}
  \]
  Now the type $\Odd$ is represented by $\Odd_0(\Even)$.
\end{exa}

In the rest of this paper by ``induction on a type~$\tau$'' we mean
induction on the lexicographic product of:
\begin{itemize}
\item the multiset extension of the well-founded order~$\prec$ on
  (co)inductive definitions occurring in the type, and
\item the size of the type.
\end{itemize}
In this order, if $c \in \Constr(\rho)$ with $\ArgTypes(c) =
(\sigma_1,\ldots,\sigma_m)$ then each~$\sigma_k$ is smaller
than~$\rho$.

\section{Rewriting semantics}\label{sec_semantics}

In this section we define our rewriting semantics. More precisely, we
define an interpretation $\valuation{\tau}{}{} \subseteq \Tb^\infty$
for each type~$\tau$.

By $\infty$ we denote a sufficiently large ordinal (see
Definition~\ref{def_semantics}), and by $\Omega$ we denote the set of
all ordinals not greater than~$\infty$. A \emph{size variable
  valuation} is a function $v : \Vc_S \to \Omega$. Any size variable
valuation~$v$ extends in a natural way to a function from size
expressions to~$\Omega$. More precisely, we define:
$v(\infty) = \infty$, $v(0) = 0$, $v(s+1) = \min(v(s) + 1, \infty)$,
$v(\min(s_1,s_2)) = \min(v(s_1),v(s_2))$,
$v(\max(s_1,s_2)) = \max(v(s_1),v(s_2))$.  To save on notation we
identify ordinals larger than~$\infty$ with~$\infty$, e.g., $\infty+1$
denotes the ordinal~$\infty$.

\begin{defi}\label{def_semantics}
  We interpret types as subsets of~$\Tb^\infty$. By~$\infty$ we
  denote an ordinal large enough so that any monotone function on
  $\Pow(\Tb^\infty)$ (the powerset of~$\Tb^\infty$) reaches
  its least and greatest fixpoint in~$\infty$ iterations. This ordinal
  exists, as we may take any ordinal larger than the cardinality
  of~$\Pow(\Tb^\infty)$.

  Given a \emph{type variable valuation}
  $\xi : \Vc_T \to \Pow(\Tb^\infty)$, a size variable valuation
  $v : V_S \to \Omega$, and a strictly positive type~$\tau$, we define
  a \emph{type
    valuation}~$\valuation{\tau}{\xi,v}{} \subseteq \Tb^\infty$. This
  is done by induction on~$\tau$. We simultaneously also define
  valuation approximations~$\valuation{\rho}{\xi,v}{\varkappa}$
  and~$\valuation{d}{\xi,v}{\varkappa}$.
  \begin{itemize}
  \item Let
    $d(B_1,\ldots,B_n) = \IndCoInd(A)\{c_k : \vec{\sigma_k} \to
    d(\vec{B}) \}\ra$ be a (co)inductive definition. We define a
    function $\Phi_{d,\xi,v} : \Pow(\Tb^\infty) \to \Pow(\Tb^\infty)$
    so that $\Phi_{d,\xi,v}(X)$ for $X \subseteq \Tb^\infty$ contains
    all terms of the form $c_k t_k^1 \ldots t_k^{n_k}$ such that
    $t_k^l \in \valuation{\sigma_k^l}{\xi[X/A],v}{}$ for
    $l=1,\ldots,n_k$.

    For a coinductive definition~$d_\nu$ and an ordinal
    $\varkappa \in \Omega$ we define the \emph{valuation
      approximation}~$\valuation{d_\nu}{\xi,v}{\varkappa} \subseteq
    \Tb^\infty$ as follows:
    \begin{itemize}
    \item $\valuation{d_\nu}{\xi,v}{0} = \Tb^\infty$,
    \item $\valuation{d_\nu}{\xi,v}{\varkappa+1} =
      \Phi_{d_\nu,\xi,v}(\valuation{d_\nu}{\xi,v}{\varkappa})$,
    \item $\valuation{d_\nu}{\xi,v}{\varkappa} =
      \bigcap_{\varkappa'<\varkappa}\valuation{d_\nu}{\xi,v}{\varkappa'}$
      if~$\varkappa$ is a limit ordinal.
    \end{itemize}
    For an inductive definition~$d_\mu$ and an ordinal
    $\varkappa \in \Omega$ we define the \emph{valuation
      approximation}~$\valuation{d_\mu}{\xi,v}{\varkappa} \subseteq
    \Tb^\infty$ by:
    \begin{itemize}
    \item $\valuation{d_\mu}{\xi,v}{0} = \emptyset$,
    \item
      $\valuation{d_\mu}{\xi,v}{\varkappa+1} =
      \Phi_{d_\mu,\xi,v}(\valuation{d_\mu}{\xi,v}{\varkappa})$,
    \item
      $\valuation{d_\mu}{\xi,v}{\varkappa} =
      \bigcup_{\varkappa'<\varkappa}\valuation{d_\mu}{\xi,v}{\varkappa'}$
      if~$\varkappa$ is a limit ordinal.
    \end{itemize}
  \item
    $\valuation{\rho}{\xi,v}{\varkappa} =
    \valuation{d}{\xi[\vec{Y}/\vec{B}],v}{\varkappa}$ where
    $\rho = d(\vec{\alpha})$ is a (co)inductive type,
    $Y_j = \valuation{\alpha_j}{\xi,v}{}$, and $\vec{B}$ are the
    parameter type variables of~$d$.
  \item $\valuation{\rho^s}{\xi,v}{} = \valuation{\rho}{\xi,v}{v(s)}$.
  \item $\valuation{A}{\xi,v}{} = \xi(A)$.
  \item $t \in \valuation{\forall i . \tau}{\xi,v}{}$ if
    $i \notin \FSV(t)$ and for every $\varkappa \in \Omega$ there
    is~$t'$ with
    $t \infred t' \in \valuation{\tau}{\xi,v[\varkappa/i]}{}$.
  \item $t \in \valuation{\alpha \to \beta}{\xi,v}{}$ if for every
    $r \in \valuation{\alpha}{\xi,v}{}$ there is~$t'$ with
    $t r \infred t' \in \valuation{\beta}{\xi,v}{}$.
  \end{itemize}
  For a closed type~$\tau$ the valuation $\valuation{\tau}{\xi,v}{}$
  does not depend on~$\xi$, so we simply write $\valuation{\tau}{v}{}$
  instead. Whenever we omit the type variable valuation we implicitly
  assume the type to be closed.
\end{defi}

In general, the interpretation~$\valuation{\tau}{}{}$ of a type~$\tau$
may contain terms which are not in normal form. This is because of the
interpretation of function types and quantification over size
variables ($\forall i$). If $\tau$ is a simple first-order
(co)inductive type whose constructor argument types contain neither
function types ($\tau_1\to\tau_2$) nor quantification over size
variables ($\forall i . \tau'$), then $\valuation{\tau}{}{}$ contains
only normal forms.

Thus, we do not show infinitary weak normalization for terms having
function types. Nonetheless, our interpretation of $t \in
\valuation{\tau_1 \to \tau_2}{}{}$ is very natural and ensures
productivity of~$t$ regarded as a function: we require that for $u \in
\valuation{\tau_1}{}{}$ there is $u' \in \valuation{\tau_2}{}{}$ with
$t u \infred u'$. Intuitively, this means that for any $u \in
\valuation{\tau_1}{}{}$ the application $t u$ reduces ``in the limit''
to a term $u' \in \valuation{\tau_2}{}{}$, using only a finite initial
part of~$u$ to produce a finite initial part of~$u'$. Moreover, it is
questionable in the first place how sensible infinitary normalization
is as a ``correctness'' criterion for terms of function types.

\begin{exa}\label{ex_nat_strm}
  Recall the definitions of the types~$\Nat$ and~$\Strm$ from
  Example~\ref{ex_nat_strm_types}:
  \[
  \begin{array}{rcl}
    \Nat &=& \Ind\{ 0 : \Nat,\, S : \Nat \to \Nat \} \\
    \Strm &=& \CoInd\{ \cons : \Nat \to \Strm \to \Strm \}
  \end{array}
  \]
  The elements of $\valuation{\Nat}{}{}$ are the terms: $0, S(0),
  S(S(0)), \ldots$. We use common number notation, e.g.~$1$ for
  $S(0)$, etc. We usually write e.g.~$1 :: 2 :: t$ instead of
  $\cons\,1\,(\cons\,2\,t)$. The elements of~$\valuation{\Strm}{}{}$
  are all infinite terms of the form $n_1 :: n_2 :: n_3 :: \ldots$
  where $n_i \in \valuation{\Nat}{}{}$.

  Consider the term
  \[
    \tail = \lambda t . \case(t; \{\cons\,x\,y \To y\})
  \]
  We have $\tail \in \valuation{\Strm \to \Strm}{}{}$. Indeed, let
  $t \in \valuation{\Strm}{}{}$. Then $t = n :: t'$ with $n \in \valuation{\Nat}{}{}$
  and $t' \in \valuation{\Strm}{}{}$. Thus
  $\tail(t) \to \case(n :: t'; \{\cons\,x\,y \To y\}) \to t' \in
  \valuation{\Strm}{}{}$.
\end{exa}

\begin{exa}\label{ex_trees}
  Recall the definitions of~$\BTree$, $\FTree$ and~$\Tree$ form
  Example~\ref{ex_trees_types}:
  \[
  \begin{array}{rcl}
    \BTree &=& \CoInd\{ \bnode : \Nat \to \BTree \to \BTree \to \BTree \} \\
    \FTree &=& \CoInd\{ \fnode : \Nat \to \List(\FTree) \to \FTree \} \\
    \Tree &=& \CoInd\{ \node : \Nat \to (\Nat \to \Tree) \to \Tree \}
  \end{array}
  \]
  The interpretation~$\valuation{\BTree}{}{}$ consists of all infinite
  terms of the form
  \[
  \bnode\, n_{1,1}\, (\bnode\, n_{2,1}\, (\ldots)\, (\ldots))
  (\bnode\, n_{2,2}\, (\ldots)\, (\ldots))
  \]
  where $n_{1,1},n_{2,1},n_{2,2},\ldots \in \valuation{\Nat}{}{}$. The
  interpretation~$\valuation{\FTree}{}{}$ consists of all potentially
  infinite terms of the form $\fnode\, n_{1,1}\, ((\fnode\, n_{2,1}\,
  (\ldots)) :: (\fnode\, n_{2,2}\, (\ldots)) :: \ldots :: \nil)$ where
  $n_{1,1},n_{2,1},n_{2,2},\ldots \in \valuation{\Nat}{}{}$. Finally,
  $\valuation{\Tree}{}{}$ consists of all terms of the form $\node\,
  n\, f$ where $n \in \valuation{\Nat}{}{}$ for every $m \in
  \valuation{\Nat}{}{}$ there is $t \in \valuation{\Tree}{}{}$ such
  that $f m \infred t$.
\end{exa}

\begin{exa}\label{ex_stream_processors}
  Recall the definition of stream processors from Example~\ref{ex_stream_processors_type}:
  \[
    \begin{array}{rcl}
      \SPi(B) &=& \Ind\{ \get : (\Nat \to \SPi(B)) \to \SPi(B),\\ & & \quad\quad\; \mput :
      \Nat \to B \to \SPi(B) \} \\
      \SP &=& \CoInd\{ \out : \SPi(\SP) \to \SP \}
    \end{array}
  \]
  An example stream processor, i.e., an example element of
  $\valuation{\SP}{}{}$ is an infinite term~$\odd$ satisfying the
  identity:
  \[
    \odd = \out (\get (\lambda x . \get (\lambda y . \mput\, x\, \odd)))
  \]
  The stream processor \texttt{odd} drops every second element of a
  stream, e.g., it transforms the stream $1 :: 2 :: 3 :: 4 :: \ldots$
  into $1 :: 3 :: 5 :: \ldots$. But e.g.~the infinite term
  \[
    \out (\get (\lambda x_1 . \get (\lambda x_2 . \get (\lambda x_3
    . \get (\ldots)))))
  \]
  is \emph{not} in~$\valuation{\SP}{}{}$, because it nests infinitely
  many~$\get$s.
\end{exa}

\begin{lem}\label{lem_val_idom}
  If $v(i) = v'(i)$ for every $i \in \FSV(\tau)$ then
  $\valuation{\tau}{\xi,v}{} = \valuation{\tau}{\xi,v'}{}$. Moreover,
  $\valuation{d}{\xi,v}{\varkappa} = \valuation{d}{\xi,v'}{\varkappa}$
  for any $v,v'$.
\end{lem}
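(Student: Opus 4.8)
The plan is to prove both statements simultaneously by induction on $\tau$ in the sense defined just before Section~\ref{sec_semantics} (lexicographic product of the multiset extension of $\prec$ and the size of the type), strengthened so that the inductive hypothesis also covers the auxiliary claim about valuation approximations $\valuation{d}{\xi,v}{\varkappa}$ for every $\varkappa$. Concretely, for a fixed $d$ with definition $d(\vec{B}) = \IndCoInd(A)\{c_k : \vec{\sigma_k} \to d(\vec{B})\}$, I would prove by transfinite induction on $\varkappa$ that $\valuation{d}{\xi,v}{\varkappa} = \valuation{d}{\xi,v'}{\varkappa}$ for \emph{arbitrary} $v,v'$, using as the inner inductive hypothesis the first part of the lemma applied to the constructor argument types $\sigma_k^l$ --- these are smaller than $d$ in the type order, so the outer induction makes them available. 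The crucial point that makes the unrestricted ``for any $v,v'$'' work is that $\FSV(\sigma_k^l) = \emptyset$ by Definition~\ref{def_coind_defs}, so $\valuation{\sigma_k^l}{\xi[X/A],v}{}$ does not depend on $v$ at all, whence $\Phi_{d,\xi,v}$ is independent of $v$ and the transfinite iteration defining $\valuation{d}{\xi,v}{\varkappa}$ is too.

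For the first statement I would do a case analysis on the structure of the strictly positive type $\tau$, following Definition~\ref{def_semantics}. The variable case $\tau = A$ is immediate since $\valuation{A}{\xi,v}{} = \xi(A)$ is independent of $v$. For $\tau = \tau_1 \to \tau_2$ (with $\tau_1$ closed, $\tau_2$ strictly positive) and for $\tau = \forall i'.\tau'$, the sets $\FSV$ of the relevant subtypes are contained in $\FSV(\tau)$ (for $\forall i'.\tau'$ one has $\FSV(\tau') \subseteq \FSV(\tau) \cup \{i'\}$, and on $i'$ the two valuations $v[\varkappa/i']$ and $v'[\varkappa/i']$ agree by construction), so the inductive hypothesis on the smaller subtypes gives equality of the inner valuations, and the definitions of the arrow and $\forall$ interpretations transfer this equality verbatim. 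For $\tau = \rho^s$ with $\rho = d(\vec{\alpha})$ a decorated (co)inductive type, we have $\valuation{\rho^s}{\xi,v}{} = \valuation{d}{\xi[\vec{Y}/\vec{B}],v}{v(s)}$ with $Y_j = \valuation{\alpha_j}{\xi,v}{}$; here $\SV(s) \subseteq \FSV(\tau)$ gives $v(s) = v'(s)$, each $\FSV(\alpha_j) \subseteq \FSV(\tau)$ gives $Y_j = Y_j'$ by the inductive hypothesis (the $\alpha_j$ are smaller than $\tau$), and then the second part of the lemma --- already established for this $d$ by the outer induction, since $d$ occurs in $\tau$ --- gives equality of the approximations. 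The undecorated case $\tau = d^\infty(\vec{\alpha})$ is the same with $v(s) = v'(s) = \infty$ automatically.

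The main obstacle, and the only place requiring real care, is the interaction between the two parts of the lemma in the (co)inductive case: proving the ``for any $v,v'$'' approximation statement needs the first part of the lemma for the $\sigma_k^l$, while proving the first part for a type $\rho^s$ mentioning $d$ needs the approximation statement for $d$. This is why the induction must be the lexicographic product described above rather than a naive structural induction: in passing from $d$ to its constructor argument types $\sigma_k^l$ we strictly decrease in the multiset order on $\prec$ (every (co)inductive definition occurring in $\sigma_k^l$ is $\prec d$ by Definition~\ref{def_coind_defs}), and within the transfinite iteration at fixed $d$ we invoke only the already-available first part for those strictly smaller $\sigma_k^l$. Once the induction order is set up this way, the limit-ordinal steps (intersections for $d_\nu$, unions for $d_\mu$) and the successor step (application of $\Phi_{d,\xi,v}$, which is $v$-independent) go through routinely, and all remaining cases are bookkeeping with $\FSV$ and $\SV$.
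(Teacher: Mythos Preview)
Your proposal is correct and takes essentially the same approach as the paper: induction on~$\tau$ in the well-founded type order, with the key observation that $\FSV(\sigma_k^l)=\emptyset$ makes $\Phi_{d,\xi,v}$ independent of~$v$. The paper's own proof is the one-line sketch ``Follows by induction on~$\tau$, using the fact $\FSV(\sigma_k^l)=\emptyset$''; you have simply unfolded this sketch faithfully, including the transfinite subinduction on~$\varkappa$ for the approximation claim and the careful handling of the $\forall i'$ binder.
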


\begin{proof}
  Follows by induction on~$\tau$, using the fact
  $\FSV(\sigma_k^l) = \emptyset$ for~$\sigma_k^l$ a constructor
  argument type as in Definition~\ref{def_coind_defs}.
\end{proof}

\begin{lem}\label{lem_val_dom}~
  \begin{enumerate}
  \item If $\xi(A) = \xi'(A)$ for $A \in \FV(\tau)$ then
    $\valuation{\tau}{\xi,v}{} =
    \valuation{\tau}{\xi',v}{}$.
  \item If $\xi(B_i) = \xi'(B_i)$ for each parameter type
    variable~$B_i$ of~$d$, then $\valuation{d}{\xi,v}{\varkappa} =
    \valuation{d}{\xi',v}{\varkappa}$.
  \end{enumerate}
\end{lem}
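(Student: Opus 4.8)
The plan is to prove both parts by a single simultaneous induction that follows exactly the well-founded order used to define the valuations in Definition~\ref{def_semantics}: part~(1) proceeds by induction on the type~$\tau$ (in the order described at the end of Section~\ref{sec_types}), while part~(2), for each fixed~$d$, is handled by transfinite induction on~$\varkappa$, appealing to part~(1) for the constructor argument types of~$d$. These two recursions interleave exactly as the definition does, so the combined order is well-founded; the only thing requiring care is to check that every appeal is to a strictly smaller instance, which I address below.

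For part~(2), I would first show that $\xi(B_i) = \xi'(B_i)$ for all parameter type variables~$B_i$ of~$d$ implies $\Phi_{d,\xi,v} = \Phi_{d,\xi',v}$ as functions on $\Pow(\Tb^\infty)$. Fix $X \subseteq \Tb^\infty$. Then $\xi[X/A]$ and $\xi'[X/A]$ agree on~$A$ (both give~$X$) and on every~$B_i$ (by hypothesis), hence on all of $\FV(\sigma_k^l) \subseteq \{A,B_1,\ldots,B_n\}$ (Definition~\ref{def_coind_defs}); since each $\sigma_k^l$ is strictly positive and all its (co)inductive definitions are $\prec d$, it is strictly smaller than~$d$ in the order, so the part~(1) induction hypothesis gives $\valuation{\sigma_k^l}{\xi[X/A],v}{} = \valuation{\sigma_k^l}{\xi'[X/A],v}{}$, and therefore $\Phi_{d,\xi,v}(X) = \Phi_{d,\xi',v}(X)$. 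Given this, $\valuation{d}{\xi,v}{\varkappa} = \valuation{d}{\xi',v}{\varkappa}$ follows by a trivial transfinite induction on~$\varkappa$: the base case is $\emptyset$ or~$\Tb^\infty$ regardless of~$\xi$, the successor case applies the two equal functions to equal sets, and the limit case takes a union (resp.\ intersection) of equal sets.

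For part~(1), I would do case analysis on the shape of~$\tau$. If $\tau = A$ then $A \in \FV(\tau)$ and $\valuation{A}{\xi,v}{} = \xi(A) = \xi'(A) = \valuation{A}{\xi',v}{}$; if $\tau$ is closed then $\FV(\tau) = \emptyset$ and the valuation does not depend on the type variable valuation. If $\tau = \tau_1 \to \tau_2$ then, by strict positivity, $\tau_1$ is closed, so its valuation is unaffected, and $\tau_2$ is strictly positive with $\FV(\tau_2) \subseteq \FV(\tau)$, so the induction hypothesis applies to~$\tau_2$ and the membership conditions defining the two valuations coincide; the case $\tau = \forall i . \tau'$ is analogous, using $\FV(\forall i . \tau') = \FV(\tau')$ and that only the size valuation is perturbed. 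Finally, if $\tau = \rho^s$ with $\rho = d(\vec\alpha)$, then each~$\alpha_j$ is a strictly smaller strictly positive type with $\FV(\alpha_j) \subseteq \FV(\tau)$, so the induction hypothesis yields a common set $Y_j = \valuation{\alpha_j}{\xi,v}{} = \valuation{\alpha_j}{\xi',v}{}$; then $\valuation{\rho^s}{\xi,v}{} = \valuation{d}{\xi[\vec Y/\vec B],v}{v(s)}$ and likewise for~$\xi'$, and since $\xi[\vec Y/\vec B]$ and $\xi'[\vec Y/\vec B]$ agree on every parameter type variable of~$d$, part~(2)---available here as an induction hypothesis because $d$ sits strictly below $\rho^s$ in the order---gives the equality.

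The main (and essentially only) obstacle is bookkeeping: verifying that the interleaving of the structural induction on types with the transfinite induction on the approximation index~$\varkappa$ is genuinely well-founded, i.e., that the measure at which $\valuation{d}{\cdot}{\varkappa}$ is computed lies strictly below $\rho^s$ and strictly above each $\sigma_k^l$. This is precisely the well-foundedness already implicit in the validity of Definition~\ref{def_semantics}, so no new argument is needed, but it is the step to state with care.
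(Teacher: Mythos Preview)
Your proposal is correct and takes essentially the same approach as the paper, which simply records the proof as ``Induction on~$\tau$, generalizing over~$\xi$, $\xi'$ and~$v$.'' You have spelled out in detail exactly how that induction unfolds (including the subsidiary transfinite induction on~$\varkappa$ and the appeal to the constructor argument types being strictly smaller), so there is nothing to add.
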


\begin{proof} 
  Induction on~$\tau$, generalizing over~$\xi$, $\xi'$ and~$v$.
\end{proof}

\begin{cor}\label{cor_phi_dom}
  If $\xi(B_i) = \xi'(B_i)$ for each parameter type variable~$B_i$
  of~$d$, then $\Phi_{d,\xi,v} = \Phi_{d,\xi',v}$.
\end{cor}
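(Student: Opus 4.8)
The plan is to reduce the equality of the two functions $\Phi_{d,\xi,v}$ and $\Phi_{d,\xi',v}$ to Lemma~\ref{lem_val_dom}(1). By definition, $\Phi_{d,\xi,v}(X)$ is the set of terms $c_k t_k^1 \ldots t_k^{n_k}$ with $t_k^l \in \valuation{\sigma_k^l}{\xi[X/A],v}{}$, and likewise for $\xi'$. So it suffices to show, for every $X \subseteq \Tb^\infty$ and every constructor argument type $\sigma_k^l$, that $\valuation{\sigma_k^l}{\xi[X/A],v}{} = \valuation{\sigma_k^l}{\xi'[X/A],v}{}$. Once this is established, the two sets $\Phi_{d,\xi,v}(X)$ and $\Phi_{d,\xi',v}(X)$ are literally defined by the same membership condition, hence equal; since $X$ was arbitrary, $\Phi_{d,\xi,v} = \Phi_{d,\xi',v}$.

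For the required equality of valuations I would apply Lemma~\ref{lem_val_dom}(1) with the modified valuations $\xi[X/A]$ and $\xi'[X/A]$. Recall from Definition~\ref{def_coind_defs} that $\FV(\sigma_k^l) \subseteq \{A, B_1,\ldots,B_n\}$. The two valuations $\xi[X/A]$ and $\xi'[X/A]$ agree on $A$ (both assign $X$), and by the hypothesis of the corollary $\xi$ and $\xi'$ agree on every parameter type variable $B_i$, so $\xi[X/A]$ and $\xi'[X/A]$ agree on each $B_i$ as well (the update at $A$ does not touch $B_i$, since $A$ is the recursive type variable and is distinct from the parameter variables). Thus $\xi[X/A]$ and $\xi'[X/A]$ agree on all of $\FV(\sigma_k^l)$, and Lemma~\ref{lem_val_dom}(1) gives $\valuation{\sigma_k^l}{\xi[X/A],v}{} = \valuation{\sigma_k^l}{\xi'[X/A],v}{}$, as needed.

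This argument is essentially bookkeeping, so there is no serious obstacle; the only point that requires a moment's care is the observation that $A \notin \{B_1,\ldots,B_n\}$, so that updating the valuation at $A$ preserves agreement on the parameter variables — but this is immediate from Definition~\ref{def_coind_defs}, where $A$ is singled out as the recursive type variable distinct from the parameters. Hence the corollary follows directly from Lemma~\ref{lem_val_dom}.
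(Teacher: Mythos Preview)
Your proof is correct and is exactly the intended argument: the paper states the result as an immediate corollary of Lemma~\ref{lem_val_dom} with no further proof, and your unpacking via $\FV(\sigma_k^l)\subseteq\{A,B_1,\ldots,B_n\}$ is precisely how that corollary is meant to follow.
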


\begin{lem}\label{lem_val_subset}
  Assume $\xi \subseteq \xi'$, i.e.,
  $\xi(A) \subseteq \xi'(A)$ for all type variables~$A$.
  \begin{enumerate}
  \item If $\tau$ is strictly positive then
    $\valuation{\tau}{\xi,v}{} \subseteq \valuation{\tau}{\xi',v}{}$.
  \item If~$d$ is a (co)inductive definition then
    $\valuation{d}{\xi,v}{\varkappa} \subseteq
    \valuation{d}{\xi',v}{\varkappa}$.
  \item If $X \subseteq X'$ then
    $\Phi_{d,\xi,v}(X) \subseteq \Phi_{d,\xi',v}(X')$.  In particular,
    the function $\Phi_{d,\xi,v}$ is monotone.
  \end{enumerate}
\end{lem}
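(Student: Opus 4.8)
The three parts must be proved together by a single induction on~$\tau$ in the sense introduced at the end of Section~\ref{sec_types} (lexicographic on the multiset of $\prec$-values of the (co)inductive definitions in $\tau$, then on the size of $\tau$), simultaneously generalizing over $\xi$, $\xi'$, and $v$. The reason they cannot be separated is the mutual dependence visible in Definition~\ref{def_semantics}: part~(2) for a definition~$d$ requires part~(1) for the constructor argument types $\sigma_k^l$ (which are smaller than~$d$ in the induction order, since the (co)inductive definitions occurring in~$\sigma_k^l$ are $\prec d$), part~(3) is essentially the inductive step of the ordinal recursion for part~(2), and part~(1) for $\rho = d(\vec\alpha)$ unwinds through $\valuation{d}{\xi[\vec Y/\vec B],v}{v(s)}$ into parts~(2) and~(3) for~$d$ with the auxiliary valuation $\vec Y = \valuation{\alpha_j}{\xi,v}{}$ — here I need the $\alpha_j$ to be smaller than $\rho$ and strictly positive, which holds by the definition of strict positivity of $d^\infty(\vec\alpha)$, and monotonicity passes from the $\alpha_j$ to the $Y_j$, keeping $\xi[\vec Y/\vec B] \subseteq \xi'[\vec Y'/\vec B]$.

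The case analysis for part~(1) follows the clauses of strict positivity. If $\tau$ is closed the statement is trivial since the valuation does not depend on~$\xi$. If $\tau = A$ then $\valuation{A}{\xi,v}{} = \xi(A) \subseteq \xi'(A) = \valuation{A}{\xi',v}{}$ by assumption. If $\tau = \tau_1 \to \tau_2$ with $\tau_1$ closed and $\tau_2$ strictly positive: given $t \in \valuation{\tau_1\to\tau_2}{\xi,v}{}$ and $r \in \valuation{\tau_1}{\xi',v}{}$, note $\valuation{\tau_1}{\xi',v}{} = \valuation{\tau_1}{\xi,v}{}$ since $\tau_1$ is closed, so $t r \infred t'$ for some $t' \in \valuation{\tau_2}{\xi,v}{} \subseteq \valuation{\tau_2}{\xi',v}{}$ by the induction hypothesis applied to $\tau_2$; hence $t \in \valuation{\tau_1\to\tau_2}{\xi',v}{}$. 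If $\tau = \forall i . \tau'$: given $t$ in the left-hand valuation, for each $\varkappa$ we get $t \infred t' \in \valuation{\tau'}{\xi,v[\varkappa/i]}{} \subseteq \valuation{\tau'}{\xi',v[\varkappa/i]}{}$ by the induction hypothesis for $\tau'$ (with the modified size valuation, which is legitimate since we generalized over $v$); the side condition $i \notin \FSV(t)$ is unchanged. If $\tau = d^\infty(\vec\alpha)$: this reduces, via the clauses $\valuation{\rho^s}{\xi,v}{} = \valuation{\rho}{\xi,v}{v(s)}$ and $\valuation{\rho}{\xi,v}{\varkappa} = \valuation{d}{\xi[\vec Y/\vec B],v}{\varkappa}$, to part~(2) for~$d$ at the ordinal~$\infty$, with the auxiliary valuations $Y_j \subseteq Y_j'$ obtained from the induction hypothesis for the $\alpha_j$ as described above.

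For part~(2), fix a (co)inductive definition~$d$ and $\xi \subseteq \xi'$, and proceed by transfinite induction on~$\varkappa$, using part~(3) as the successor step. For $d = d_\nu$ coinductive: at $\varkappa = 0$ both sides are $\Tb^\infty$; at a successor, $\valuation{d_\nu}{\xi,v}{\varkappa+1} = \Phi_{d_\nu,\xi,v}(\valuation{d_\nu}{\xi,v}{\varkappa}) \subseteq \Phi_{d_\nu,\xi',v}(\valuation{d_\nu}{\xi',v}{\varkappa}) = \valuation{d_\nu}{\xi',v}{\varkappa+1}$ by part~(3) applied with $X = \valuation{d_\nu}{\xi,v}{\varkappa} \subseteq \valuation{d_\nu}{\xi',v}{\varkappa} = X'$, the latter inclusion being the $\varkappa$-stage induction hypothesis; at a limit, intersections of the componentwise-included families are included. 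The inductive case $d = d_\mu$ is symmetric with $\emptyset$, union, and the same successor step. Part~(3) itself: if $X \subseteq X'$ and $c_k t_k^1 \ldots t_k^{n_k} \in \Phi_{d,\xi,v}(X)$, then each $t_k^l \in \valuation{\sigma_k^l}{\xi[X/A],v}{}$; since $\sigma_k^l$ is strictly positive and $\FV(\sigma_k^l) \subseteq \{A, \vec B\}$, and since $\xi[X/A] \subseteq \xi'[X'/A]$ (agreeing on the $B_i$ by hypothesis, and with $X \subseteq X'$ on $A$), the \emph{main induction hypothesis} — part~(1) for $\sigma_k^l$, which is strictly smaller than~$d$ in the induction order — gives $t_k^l \in \valuation{\sigma_k^l}{\xi'[X'/A],v}{}$, whence $c_k \vec t_k \in \Phi_{d,\xi',v}(X')$; taking $\xi = \xi'$ yields ordinary monotonicity of $\Phi_{d,\xi,v}$.

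The main obstacle is purely bookkeeping: making sure the induction order is invoked legitimately everywhere. The only nontrivial point is the invocation of part~(1) at $\sigma_k^l$ inside part~(3)/part~(2) for~$d$ — this is sound precisely because of the well-founded order $\prec$ on (co)inductive definitions from Definition~\ref{def_coind_defs} (every definition in $\sigma_k^l$ is $\prec d$, so $\sigma_k^l$ is strictly smaller than any type containing $d$ in the multiset-plus-size order), and because of the constraints $\FV(\sigma_k^l) \subseteq \{A,\vec B\}$ and $\FSV(\sigma_k^l) = \emptyset$, the latter together with Lemma~\ref{lem_val_idom} ensuring that the size-valuation manipulations cause no trouble. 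Likewise, in the $d^\infty(\vec\alpha)$ clause of part~(1), the parameters $\alpha_j$ are strictly smaller than $d^\infty(\vec\alpha)$ for the same reason, so applying part~(1) to them (to get $Y_j \subseteq Y_j'$) and then part~(2) to~$d$ respects the order. Once this is laid out carefully, each individual case is the routine verification sketched above.
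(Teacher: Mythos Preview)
Your proposal is correct and follows exactly the approach the paper indicates: induction on~$\tau$ (in the order defined at the end of Section~\ref{sec_types}), generalizing over~$\xi,\xi',v$, with the three parts proved simultaneously. You have carefully spelled out the mutual dependencies and the case analysis that the paper's one-line proof leaves implicit; the only superfluous remark is the appeal to $\FSV(\sigma_k^l)=\emptyset$ and Lemma~\ref{lem_val_idom}, which are not actually needed here since the size valuation is only modified in the $\forall$ case and you have already generalized over~$v$.
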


\begin{proof} 
  Induction on $\tau$, generalizing over~$\xi,\xi',v$.
\end{proof}

From the third point in the above lemma it follows that
$\valuation{d_\nu}{\xi,v}{\varkappa_1} \subseteq
\valuation{d_\nu}{\xi,v}{\varkappa_2}$ for
$\varkappa_2 \le \varkappa_1$, and
$\valuation{d_\mu}{\xi,v}{\varkappa_1} \subseteq
\valuation{d_\mu}{\xi,v}{\varkappa_2}$ for
$\varkappa_1 \le \varkappa_2$. Also, for a (co)inductive
definition~$d$, by the Knaster-Tarski fixpoint
theorem~\cite{Tarski1955}, the function~$\Phi_{d,\xi,v}$
has the least and greatest fixpoints, which may be obtained by
``iterating'' $\Phi_{d,\xi,v}$ starting with the empty or the full
set, respectively, as in the definition of valuation
approximations. For an inductive definition~$d_\mu$, the least
fixpoint of~$\Phi_{d_\mu,\xi,v}$ is
then~$\valuation{d_\mu}{\xi,v}{\infty}$, by how we
defined~$\infty$. Analogously, for a coinductive definition~$d_\nu$
the greatest fixpoint of~$\Phi_{d_\nu,\xi,v}$
is~$\valuation{d_\nu}{\xi,v}{\infty}$. Note that for
$\varkappa \ge \infty$ we have
$\valuation{d}{\xi,v}{\varkappa} = \valuation{d}{\xi,v}{\infty}$.

The next definition and the ensuing lemma are needed in the proof of
the approximation theorem. A reader not interested in the details of
this proof may skip the rest of this section.

\begin{defi}
  A set~$X \subseteq \Tb^\infty$ is \emph{stable} when:
  \begin{enumerate}
  \item if $t \in X$ and $t \sim_\Uc t'$ then $t' \in X$,
  \item if $t \in X$ and $t \infred t'$ then $t' \in X$.
  \end{enumerate}
  A type variable valuation~$\xi$ is stable if~$\xi(A)$ is stable for
  each type variable~$A$. The following lemma implies that the
  interpretations of closed types are in fact stable.
\end{defi}

\begin{lem}\label{lem_val_stable}
  Assume $\tau,\rho$ are strictly positive.
  \begin{enumerate}
  \item If $\xi$ is stable then so is $\valuation{\tau}{\xi,v}{}$.
  \item If $\xi$ is stable then so is
    $\valuation{\rho}{\xi,v}{\varkappa}$.
  \item If $\xi$ and $X \subseteq \Tb^\infty$ are stable then so
    is $\Phi_{d_\rho,\xi,v}(X)$.
  \end{enumerate}
\end{lem}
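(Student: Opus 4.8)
The plan is to prove all three statements simultaneously by induction on the type $\tau$ (resp.\ $\rho$), using the induction order declared at the end of Section~\ref{sec_types} (lexicographic product of the multiset extension of $\prec$ and the size of the type), and, within the cases for (co)inductive definitions, a further transfinite induction on the approximation ordinal $\varkappa$. The three statements are mutually dependent in the expected way: stability of $\valuation{\tau}{\xi,v}{}$ for a constructor argument type $\sigma_k^l$ is needed to get stability of $\Phi_{d_\rho,\xi,v}(X)$, which in turn feeds the inductive step for $\valuation{\rho}{\xi,v}{\varkappa}$ via the defining recursion $\valuation{d_\nu}{\xi,v}{\varkappa+1} = \Phi_{d_\nu,\xi,v}(\valuation{d_\nu}{\xi,v}{\varkappa})$; and $\valuation{\rho}{\xi,v}{\varkappa}$ being stable is what is needed for point~(1) in the case $\tau = \rho^s$ (where $\valuation{\rho^s}{\xi,v}{} = \valuation{\rho}{\xi,v}{v(s)}$) and in the case $\tau = \rho = d(\vec\alpha)$. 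Note that when $\tau = d(\vec\alpha)$ we pass to $\xi[\vec Y/\vec B]$ with $Y_j = \valuation{\alpha_j}{\xi,v}{}$; each $\alpha_j$ is strictly positive (by strict positivity of $\tau$), hence by the IH for point~(1) each $Y_j$ is stable, so $\xi[\vec Y/\vec B]$ is stable and we may apply the IH for point~(2) to the smaller definition $d$ (which is $\prec$-smaller, so strictly below $\tau$ in the induction order even though we recurse on $d$ rather than a proper subterm of $\tau$).

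For the individual cases: if $\tau = A$ is a type variable, $\valuation{A}{\xi,v}{} = \xi(A)$ is stable by the hypothesis that $\xi$ is stable. If $\tau$ is closed it contains no type variables, so this reduces to the other closed cases handled structurally. For $\tau = \forall i . \tau'$ and $\tau = \alpha \to \beta$, stability follows from the transitivity of $\infred$ (Lemma~\ref{lem_infred_concat}) and Lemma~\ref{lem_infred_sim}: given $t$ in the interpretation and $t \infred t''$, for any $\varkappa$ (resp.\ any argument $r \in \valuation{\alpha}{\xi,v}{}$) we get $t' $ with $t \infred t'$ in the appropriate smaller interpretation, and then concatenating $t \infred t''$ with a suitable continuation — using confluence modulo $\Uc$ (the Theorem at the end of Section~\ref{sec_rewriting}) together with Lemma~\ref{lem_infred_sim} to reconcile $t''$ with $t'$, then the IH for point~(1) on the smaller type $\tau'$ (resp.\ $\beta$) — yields $t''' $ with $t'' \infred t'''$ in that interpretation; closure under $\sim_\Uc$ is handled the same way via Lemma~\ref{lem_infred_sim}. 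Here one also uses the side condition $i \notin \FSV(t)$, which is preserved under reduction and under $\sim_\Uc$ (reducing or replacing meaningless subterms cannot introduce free size variables, since terms carry no size annotations except through the $\forall$-clause's own bookkeeping). For point~(3), $\Phi_{d_\rho,\xi,v}(X)$ consists of terms $c_k t_k^1 \ldots t_k^{n_k}$ with $t_k^l \in \valuation{\sigma_k^l}{\xi[X/A],v}{}$; if $t \sim_\Uc t'$ or $t \infred t'$ with $t$ of this form, then — since a head constructor application is not a redex and the rewrite rules never erase or alter the head — $t'$ is again of the form $c_k {t'_k}^1 \ldots {t'_k}^{n_k}$ with each ${t'_k}^l$ obtained from $t_k^l$ by $\sim_\Uc$ or $\infred$, and each $\sigma_k^l$ is a strictly positive type strictly smaller than $\rho$ in the induction order, so the IH for point~(1) applied with the stable valuation $\xi[X/A]$ gives ${t'_k}^l \in \valuation{\sigma_k^l}{\xi[X/A],v}{}$, whence $t' \in \Phi_{d_\rho,\xi,v}(X)$. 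For point~(2) with $\rho = d(\vec\alpha)$ a (co)inductive type we unfold to $\valuation{d}{\xi[\vec Y/\vec B],v}{\varkappa}$ as above and do transfinite induction on $\varkappa$: the base cases $\Tb^\infty$ and $\emptyset$ are trivially stable, successor steps are point~(3) (just established, with the IH on $\varkappa$ supplying stability of the set being fed to $\Phi$), and limit steps follow because an arbitrary intersection or union of stable sets is stable.

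The main obstacle I expect is the $\infred$-closure part of the arrow and $\forall$ cases, specifically extracting, from $t \in \valuation{\alpha \to \beta}{\xi,v}{}$ and $t \infred t''$ and an argument $r$, a reduction $t'' r \infred t'''$ with $t''' \in \valuation{\beta}{\xi,v}{}$. We have $t r \infred u'$ for some $u' \in \valuation{\beta}{\xi,v}{}$ and, by monotonicity of $\infred$ under contexts (which follows from the coinductive definition — one reduces inside the left component of an application), $t r \infred t'' r$. Then confluence modulo $\Uc$ yields $w, w'$ with $u' \infred w$, $t'' r \infred w'$, $w \sim_\Uc w'$; since $u' \in \valuation{\beta}{\xi,v}{}$ we may take $w = u'$ if $\beta$ is already known stable under $\infred$ — but that is exactly what we are proving, so care is needed: we apply the IH for point~(1) on the \emph{strictly smaller} type $\beta$ to conclude $w \in \valuation{\beta}{\xi,v}{}$, then $w \sim_\Uc w'$ together with the (already-available, since $\beta$ is smaller) IH for $\sim_\Uc$-closure of $\valuation{\beta}{\xi,v}{}$ gives $w' \in \valuation{\beta}{\xi,v}{}$, and $t'' r \infred w'$ finishes the case. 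Since $\beta$ and $\tau'$ are genuinely smaller than $\alpha\to\beta$ and $\forall i.\tau'$ in the induction order, this circularity is only apparent and the induction is well-founded. The remaining bookkeeping — that the constructor head is never touched by $\beta\iota$-reduction, that $\FSV$ does not grow under reduction or $\sim_\Uc$, and that stability is preserved by arbitrary unions and intersections — is routine.
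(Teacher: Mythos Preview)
Your proposal is correct and follows essentially the same approach as the paper: induction on the type, with the (co)inductive case handled by showing $\Phi_{d,\xi,v}$ preserves stability (via the IH on the smaller constructor argument types $\sigma_k^l$ and the observation that constructor-headed terms reduce/$\sim_\Uc$-relate componentwise), and the $\forall$ and arrow cases handled via confluence modulo~$\Uc$, Lemma~\ref{lem_infred_sim}, and the IH on the strictly smaller body/codomain type. The only difference is organizational---the paper phrases it as proving point~(1) with points~(2) and~(3) falling out of that proof, whereas you set up an explicit simultaneous induction---but the content is the same.
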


\begin{proof} 
  We show the first point by induction on~$\tau$, generalizing
  over~$\xi,v$. The remaining two points will follow directly from
  this proof.

  First assume $\tau = \rho^s$ with $\rho=d(\vec{\alpha})$. Then
  $\valuation{\tau}{\xi,v}{} = \valuation{\rho}{\xi,v}{v(s)} =
  \valuation{d}{\xi[\vec{Y}/\vec{B}],v}{v(s)}$ where
  $Y_j = \valuation{\alpha_j}{\xi,v}{}$ and each~$\alpha_j$ is
  strictly positive. By the inductive hypothesis each~$Y_j$ is
  stable. Hence $\xi_1 = \xi[\vec{Y}/\vec{B}]$ is also stable. We show
  that if $X \subseteq \Tb^\infty$ is stable then so is
  $\Phi_{d,\xi_1,v}(X)$. From this it follows by induction that
  $\valuation{\rho}{\xi,v}{\varkappa}$ is stable for
  any~$\varkappa \in \Omega$, and thus~$\valuation{\tau}{\xi,v}{}$ is
  stable. Let $t \in \Phi_{d,\xi_1,v}(X)$. Then $t = c t_1 \ldots t_n$
  where $t_k \in \valuation{\sigma_k}{\xi_1[X/A],v}{}$ and
  $c \in \Constr(\rho)$ and $\ArgTypes(c) = (\sigma_1,\ldots,\sigma_n)$. Note that
  $\xi_1[X/A]$ is stable, because~$X$
  is. Hence~$\valuation{\sigma_k}{\xi_1[X/A],v}{}$ is stable by the
  inductive hypothesis.
  \begin{enumerate}
  \item Assume $t \sim_\Uc t'$. Then $t' = c t_1' \ldots t_n'$ with
    $t_k \sim_\Uc t_k'$. We have
    $t_k' \in \valuation{\sigma_k}{\xi_1[X/A],v}{}$
    because~$\valuation{\sigma_k}{\xi_1[X/A],v}{}$ is stable. Thus
    $t' \in \Phi_{d,\xi_1,v}(X)$.
  \item Assume $t \infred t'$. Then $t' = c t_1' \ldots t_n'$ with
    $t_k \infred t_k'$. We have
    $t_k' \in \valuation{\sigma_k}{\xi_1[X/A],v}{}$
    because~$\valuation{\sigma_k}{\xi_1[X/A],v}{}$ is stable. Thus
    $t' \in \Phi_{d,\xi_1,v}(X)$.
  \end{enumerate}

  If $\tau=A$ is a type variable then
  $\valuation{\tau}{\xi,v}{} = \xi(A)$ is stable because~$\xi$ is.

  Assume $\tau = \forall i . \tau'$. Let
  $t \in \valuation{\tau}{\xi,v}{}$.
  \begin{enumerate}
  \item Assume $t \sim_\Uc t'$. Let $\varkappa \in \Omega$. There
    is~$t_0$ with
    $t \infred t_0 \in \valuation{\tau'}{\xi,v[\varkappa/i]}{}$. By
    Lemma~\ref{lem_infred_sim} there is~$t_0'$ with
    $t_0 \sim_\Uc t_0'$ and $t' \infred t_0'$. By the inductive
    hypothesis $\valuation{\tau'}{\xi,v[\varkappa/i]}{}$ is stable, so
    $t_0' \in \valuation{\tau'}{\xi,v[\varkappa/i]}{}$. Thus
    $t' \in \valuation{\tau}{\xi,v}{}$ (without loss of generality
    $i \notin \FSV(t')$).
  \item Assume $t \infred t'$. There
    is~$t_0$ with
    $t \infred t_0 \in \valuation{\tau'}{\xi,v[\varkappa/i]}{}$. By
    confluence modulo~$\Uc$ there are~$t_1,t_2$ with
    $t_0 \infred t_1 \sim_\Uc t_2$ and $t' \infred t_2$. By the
    inductive hypothesis $\valuation{\tau'}{\xi,v[\varkappa/i]}{}$ is
    stable, so $t_2 \in \valuation{\tau'}{\xi,v[\varkappa/i]}{}$. Thus
    $t' \in \valuation{\tau}{\xi,v}{}$.
  \end{enumerate}

  Finally, assume $\tau = \tau_1 \to \tau_2$ with $\tau_1$ closed
  and~$\tau_2$ strictly positive. Let
  $t \in \valuation{\tau}{\xi,v}{}$. By the inductive
  hypothesis~$\valuation{\tau_2}{\xi,v}{}$ is stable.
  \begin{enumerate}
  \item Assume $t \sim_\Uc t'$. We need to show $t' \in
    \valuation{\tau}{\xi,v}{}$. Let $r \in
    \valuation{\tau_1}{\xi,v}{}$. Then $t r \infred t_0 \in
    \valuation{\tau_2}{\xi,v}{}$. We have $t r \sim_\Uc t' r$, so by
    Lemma~\ref{lem_infred_sim} there is~$t_0'$ with $t_0 \sim_\Uc
    t_0'$ and $t' r \infred
    t_0'$. Because~$\valuation{\tau_2}{\xi,v}{}$ is stable, $t_0' \in
    \valuation{\tau_2}{\xi,v}{}$.
  \item Assume $t \infred t'$. We need to show $t' \in
    \valuation{\tau}{\xi,v}{}$. Let $r \in
    \valuation{\tau_1}{\xi,v}{}$. Then $t r \infred t_0 \in
    \valuation{\tau_2}{\xi,v}{}$. We have $t r \infred t' r$, so by
    confluence there are $t_1,t_2$ with $t_0 \infred t_1 \sim_\Uc t_2$
    and $t' r \infred t_2$. Because $\valuation{\tau_2}{\xi,v}{}$ is
    stable, $t_2 \in \valuation{\tau_2}{\xi',v}{}$.\qedhere
  \end{enumerate}
\end{proof}

\section{Approximation theorem}\label{sec_approx}

In this section we prove the approximation theorem: if
$t \infred t_n \in \valuation{\nu}{v}{n}$ for $n \in \Nbb$ then there
exists $t_\infty \in \valuation{\nu}{v}{\infty}$ such that
$t \infred t_\infty$.

The approximation theorem is an easy consequence of the following
result: if $t_n \infred t_{n+1}$ and $t_n \in \valuation{\nu}{v}{n}$
for $n \in \Nbb$, then there exists~$t_\infty$ such that
$t_0 \infred t_\infty \in \valuation{\nu}{v}{\infty}$. If~$\nu$ is a
simple coinductive type, e.g., it is a stream with a single
constructor~$c$ where $\ArgTypes(c) = (\sigma, A)$, the type~$\sigma$
is closed, and~$A$ is the recursive type variable of~$\nu$, then the
argument is not complicated. It follows from the assumption that
$t_{n+1} = c u_{n+1} w_{n+1}$ with
$u_{n+1} \in \valuation{\sigma}{v}{}$,
$w_{n+1} \in \valuation{\nu}{v}{n}$ and $w_{n+1} \infred w_{n+2}$. We
coinductively construct~$w_\infty$ with
$w_1 \to^{2\infty} w_\infty \in \valuation{\nu}{v}{\infty}$ (note that
$\valuation{\nu}{v}{\infty}$ treated as a unary relation may be
defined coinductively). Take $t_\infty = c u_1 w_\infty$. We have
$t_0 \to^{2\infty} t_\infty \in \valuation{\nu}{v}{\infty}$, which
suffices by Lemma~\ref{lem_infred_two}. This reasoning captures the
gist of the argument. With higher-order (co)inductive types the core
idea remains the same but significant technical complications occur
because of the alternation of least and greatest fixpoints in the
definition of~$\valuation{-}{\xi,v}{}$. We construct the
term~$t_\infty$ by coinduction, and show $t_0 \infred t_\infty$ by
coinduction, and then show $t_\infty \in \valuation{\nu}{v}{\infty}$
by an inductive argument. To be able to even state an appropriately
generalized inductive hypothesis, we first need some definitions.

A reader not interested in the infinitary rewriting details of the
proof of the approximation theorem may skip directly to
Theorem~\ref{thm_approx}.

\begin{defi}\label{def_tau_sequence}
  Let $\tau$ be a strictly positive type and
  $\Xi = \{\xi_n\}_{n\in\Nbb}$ a family of type variable valuations. A
  \emph{$\tau,\Xi$-sequence} (with~$v$) is a sequence of
  terms~$\{t_n\}_{n\in\Nbb}$ satisfying
  $t_n \in \valuation{\tau}{\xi_n,v}{}$ and $t_n \infred t_{n+1}$ for
  $n\in\Nbb$.

  By $\Xi^\nu_{v} = \{\xi_n^\nu\}_{n\in\Nbb}$ we denote the family of
  type variable valuations such that
  $\xi_n^\nu(A) = \valuation{\nu}{v}{n}$ for all~$A$ and $n \in
  \Nbb$. We usually write~$\Xi^\nu$ instead of~$\Xi^\nu_{v}$ when~$v$
  is irrelevant or clear from the context. If
  $\Tc = \{\tau_A\}_{A\in V_T}$ is a family of strictly positive types
  and $\Xi = \{\xi_n\}_{n\in\Nbb}$ a family of type variable
  valuations, then $\Xi\rval{\Tc}_{v}$ denotes the family
  $\{\xi_n'\}_{n\in\Nbb}$ where
  $\xi_n'(A) = \valuation{\tau_A}{\xi_n,v}{}$. Again, the
  subscript~$v$ is usually omitted.

  A family~$\Xi$ of type variable valuations is
  \emph{$\nu$-hereditary} (with $v$) if $\Xi = \Xi^\nu_v$ or,
  inductively, $\Xi = \Xi'\rval{\Tc}_v$ for some
  $\nu$-hereditary~$\Xi'$ and a family~$\Tc$ of strictly positive
  types.

  A \emph{heredity derivation}~$D$ is either $\emptyset$, or,
  inductively, a pair $(D',\Tc)$ where~$D'$ is a heredity derivation
  and $\Tc$ a family of strictly positive types. The $\nu$-hereditary
  family~$\Xi^D$ \emph{determined by} a heredity derivation~$D$ is
  defined inductively: $\Xi^\emptyset = \Xi^\nu$ and
  $\Xi^{(D,\Tc)} = \Xi^D\rval{\Tc}$.

  A family~$\Xi = \{\xi_n\}_{n\in\Nbb}$ is \emph{stable} if
  each~$\xi_n$ is stable.
\end{defi}

For the sake of readability we usually talk about $\nu$-hereditary
families, but we always implicitly assume that for any given
$\nu$-hereditary family~$\Xi$ we are given a fixed heredity
derivation~$D$ such that $\Xi=\Xi^D$.

\begin{lem}\label{lem_hereditary_stable}
  Any $\nu$-hereditary family~$\Xi$ is stable.
\end{lem}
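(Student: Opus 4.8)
The statement is that any $\nu$-hereditary family $\Xi$ is stable. By definition of $\nu$-hereditary, such a family is built inductively: either $\Xi = \Xi^\nu_v$, or $\Xi = \Xi'\rval{\Tc}_v$ for some $\nu$-hereditary $\Xi'$ and a family $\Tc$ of strictly positive types. This suggests a straightforward induction on the heredity derivation $D$ with $\Xi = \Xi^D$.

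**Base case.** Here $D = \emptyset$, so $\Xi = \Xi^\nu_v$, meaning $\xi_n(A) = \valuation{\nu}{v}{n}$ for every type variable $A$ and every $n \in \Nbb$. I need that each set $\valuation{\nu}{v}{n}$ is stable. Since $\nu$ is a coinductive type, $\valuation{\nu}{v}{n} = \valuation{d_\nu}{\xi[\vec{Y}/\vec{B}],v}{n}$ for appropriate $\vec{Y}$; as $\nu$ is closed, the underlying type variable valuation is (vacuously) stable, and the parameter instantiations $Y_j = \valuation{\alpha_j}{v}{}$ are stable by Lemma~\ref{lem_val_stable}(1) (the $\alpha_j$ being strictly positive, as $\nu$ is a valid coinductive type). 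Then Lemma~\ref{lem_val_stable}(2) gives that $\valuation{d_\nu}{\xi[\vec{Y}/\vec{B}],v}{n}$ is stable, i.e.\ $\xi_n$ is stable, for each $n$. Hence $\Xi^\nu_v$ is a stable family.

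**Inductive step.** Suppose $\Xi = \Xi'\rval{\Tc}_v$ where $\Xi' = \{\xi_n'\}_{n\in\Nbb}$ is $\nu$-hereditary and, by the induction hypothesis, stable; and $\Tc = \{\tau_A\}_{A\in V_T}$ is a family of strictly positive types. By definition $\Xi = \{\xi_n''\}_{n\in\Nbb}$ where $\xi_n''(A) = \valuation{\tau_A}{\xi_n',v}{}$. Fix $n$ and a type variable $A$. Since $\xi_n'$ is stable and $\tau_A$ is strictly positive, Lemma~\ref{lem_val_stable}(1) applies directly to give that $\valuation{\tau_A}{\xi_n',v}{} = \xi_n''(A)$ is stable. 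As $A$ and $n$ were arbitrary, each $\xi_n''$ is stable, so $\Xi$ is a stable family. This closes the induction.

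**Main obstacle.** There is essentially no obstacle: the lemma is a bookkeeping consequence of Lemma~\ref{lem_val_stable}, whose proof already did the real work of threading stability through the three type constructors (arrow, $\forall i$, decorated (co)inductive) and through the fixpoint iteration $\Phi_{d,\xi,v}$. The only point requiring a moment's care is the base case: one must recall that $\nu$ being a coinductive \emph{type} (not just a definition) means it has the form $d_\nu(\vec{\alpha})$ with strictly positive parameters, so that Lemma~\ref{lem_val_stable}(1)--(2) genuinely apply to its approximations $\valuation{\nu}{v}{n}$. Everything else is a direct appeal to the induction hypothesis and Lemma~\ref{lem_val_stable}(1).
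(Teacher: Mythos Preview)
Your proof is correct and follows exactly the approach of the paper: induction on the definition of a $\nu$-hereditary family, appealing to Lemma~\ref{lem_val_stable} at each step. The paper compresses this into a single sentence, while you spell out the base and inductive cases, but the content is the same.
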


\begin{proof}
  By induction on the definition of a $\nu$-hereditary family, using
  Lemma~\ref{lem_val_stable}.
\end{proof}

\begin{lem}\label{lem_hereditary_val}
  If a family~$\Xi$ determined by a heredity derivation~$D$ is
  $\nu$-hereditary with $v$ and the size variable~$i$ is fresh, i.e.,
  it does not occur in~$\nu$ or any of the types in the type families
  in~$D$, then~$\Xi$ is $\nu$-hereditary with $v[\varkappa/i]$ and
  determined by the same heredity derivation~$D$.
\end{lem}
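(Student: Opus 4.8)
The plan is to argue by induction on the structure of the heredity derivation~$D$ with~$\Xi = \Xi^D$. The key observation is that the statement is really about how the valuation functions~$\valuation{-}{\xi,v}{\varkappa}$ and~$\valuation{-}{\xi,v}{}$ depend on~$v$: since all the types involved (the constructor argument types inside~$\nu$, and the types in the families~$\Tc$ occurring in~$D$) are strictly positive and, crucially, contain no free size variable equal to~$i$, changing~$v$ to~$v[\varkappa/i]$ does not change any of the relevant valuations. This is exactly the content of Lemma~\ref{lem_val_idom}.

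First I would handle the base case $D = \emptyset$, where $\Xi = \Xi^\nu_v$, i.e.~$\xi_n^\nu(A) = \valuation{\nu}{v}{n}$ for all~$A$ and all~$n$. Since~$i$ does not occur in~$\nu$, we have $\FSV(\nu) \not\ni i$, so by Lemma~\ref{lem_val_idom} (applied with the valuations $v$ and $v[\varkappa/i]$, which agree off~$i$) we get $\valuation{\nu}{v}{n} = \valuation{\nu}{v[\varkappa/i]}{n}$ for every~$n$ — note Lemma~\ref{lem_val_idom} also gives the ``moreover'' part about valuation approximations for arbitrary valuations, which covers the approximation case directly. Hence $\Xi^\nu_v = \Xi^\nu_{v[\varkappa/i]}$, so~$\Xi$ is $\nu$-hereditary with~$v[\varkappa/i]$ and determined by the same derivation~$D = \emptyset$.

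For the inductive step $D = (D',\Tc)$ we have $\Xi = \Xi^{D'}\rval{\Tc}_v$, where by the induction hypothesis $\Xi^{D'}$ is $\nu$-hereditary with~$v[\varkappa/i]$ and determined by~$D'$. Writing $\Xi^{D'} = \{\zeta_n\}_{n\in\Nbb}$ and $\Tc = \{\tau_A\}_{A\in V_T}$, the family~$\Xi = \{\xi_n\}$ has $\xi_n(A) = \valuation{\tau_A}{\zeta_n,v}{}$. Since~$i$ does not occur in any~$\tau_A$ (it does not occur in any type of any family appearing in~$D$, in particular not in~$\Tc$), $i \notin \FSV(\tau_A)$, so Lemma~\ref{lem_val_idom} gives $\valuation{\tau_A}{\zeta_n,v}{} = \valuation{\tau_A}{\zeta_n,v[\varkappa/i]}{}$ for every~$n$ and every~$A$. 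Therefore $\Xi^{D'}\rval{\Tc}_v = \Xi^{D'}\rval{\Tc}_{v[\varkappa/i]}$, which is $\nu$-hereditary with~$v[\varkappa/i]$ and determined by $D = (D',\Tc)$, as required.

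The only mild subtlety — which is not really an obstacle — is making sure that the freshness hypothesis ``$i$ does not occur in~$\nu$ or in any of the types of the type families in~$D$'' is correctly propagated to the sub-derivation~$D'$: this is immediate, since the types in the families of~$D'$ are a subset of those in~$D$, so the hypothesis for~$D'$ follows from that for~$D$. Everything else is a direct application of Lemma~\ref{lem_val_idom} to the two valuations~$v$ and~$v[\varkappa/i]$, which agree on all size variables except~$i$, together with the fact that~$i$ is absent from all types in sight.
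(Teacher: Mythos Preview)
Your proof is correct and follows essentially the same approach as the paper's own proof: induction on the heredity derivation~$D$, using Lemma~\ref{lem_val_idom} in both the base case (for~$\nu$) and the inductive step (for each~$\tau_A$ in~$\Tc$) to conclude that replacing~$v$ by~$v[\varkappa/i]$ leaves all relevant valuations unchanged.
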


\begin{proof}
  Induction on~$D$. If $D=\emptyset$
  then~$\Xi=\Xi^\nu_{v}=\Xi^\nu_{v[\varkappa/i]}$ by
  Lemma~\ref{lem_val_idom}, because~$i$ does not occur in~$\nu$. If
  $D=(D',\Tc)$ and $\Xi=\Xi^{D'}\rval{\Tc}$, then by the inductive
  hypothesis~$\Xi^{D'}$ is $\nu$-hereditary with $v[\varkappa/i]$ and
  determined by the heredity derivation~$D'$. Assuming
  $\Xi=\{\xi_n\}_{n\in\Nbb}$ and $\Xi^{D'}=\{\xi_n'\}_{n\in\Nbb}$, we
  have $\xi_n(A) = \valuation{\tau_A}{\xi_n',v}{} =
  \valuation{\tau_A}{\xi_n',v[\varkappa/i]}{}$ by
  Lemma~\ref{lem_val_idom} because $i \notin \FSV(\tau_A)$. So~$\Xi$
  is $\nu$-hereditary with $v[\varkappa/i]$ and determined by~$D$.
\end{proof}

\begin{lem}\label{lem_nu_form}
  If $\{t_n\}_{n\in\Nbb}$ is a $A,\Xi^\nu$-sequence, then
  $t_{n+1} = c t_{n+1}^1 \ldots t_{n+1}^m$ for $n \in \Nbb$, and
  $\{t_{n+1}^k\}_{n\in\Nbb}$ is a $\sigma_k,\Xi'$-sequence for each
  $k=1,\ldots,m$ where $c \in \Constr(\nu)$ and
  $\ArgTypes(c) = (\sigma_1,\ldots,\sigma_m)$ and
  $\nu = d_\nu(\vec{\alpha})$ and $\Xi' = \Xi^\nu\rval{\Tc}$ where
  $\Tc=\{\tau_{A'}\}_{A'\in V_T}$ and $\tau_{B_j} = \alpha_j$ and
  $\tau_{A'} = A'$ for $A'\notin\{B_1,\ldots,B_l\}$ and
  $B_1,\ldots,B_l$ are the parameter type variables of~$d_\nu$.
\end{lem}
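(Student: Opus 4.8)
The statement unpacks the structure of an $A,\Xi^\nu$-sequence $\{t_n\}_{n\in\Nbb}$. Recall that $\Xi^\nu = \{\xi_n^\nu\}_{n\in\Nbb}$ with $\xi_n^\nu(A) = \valuation{\nu}{v}{n}$, so the defining conditions of the sequence are $t_n \in \valuation{A}{\xi_n^\nu,v}{} = \valuation{\nu}{v}{n}$ and $t_n \infred t_{n+1}$. The plan is: first analyze membership $t_{n+1} \in \valuation{\nu}{v}{n+1}$ via the definition of the valuation approximation, then use stability to push the $\infred$ hypotheses through, and finally assemble the componentwise sequences.

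First I would expand $\valuation{\nu}{v}{n+1}$. Writing $\nu = d_\nu(\vec\alpha)$, by definition $\valuation{\nu}{v}{n+1} = \valuation{d_\nu}{\xi[\vec Y/\vec B],v}{n+1} = \Phi_{d_\nu,\xi[\vec Y/\vec B],v}\bigl(\valuation{d_\nu}{\xi[\vec Y/\vec B],v}{n}\bigr)$, where $Y_j = \valuation{\alpha_j}{v}{}$ and $\vec B$ are the parameter type variables of $d_\nu$. By the definition of $\Phi$, any element of this set has the form $c\, u^1 \ldots u^m$ with $c \in \Constr(\nu)$, $\ArgTypes(c) = (\sigma_1,\ldots,\sigma_m)$, and $u^k \in \valuation{\sigma_k}{\xi_1[X/A],v}{}$ where $\xi_1 = \xi[\vec Y/\vec B]$ and $X = \valuation{d_\nu}{\xi_1,v}{n} = \valuation{\nu}{v}{n}$. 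Applying this to $t_1 = t_{0+1} \in \valuation{\nu}{v}{1}$ gives that $t_1 = c\, t_1^1 \ldots t_1^m$ for some fixed constructor $c$; since $t_1 \infred t_{n+1}$ by Lemma~\ref{lem_infred_concat}, the shape of an $\infred$ reduct forces $t_{n+1} = c\, t_{n+1}^1 \ldots t_{n+1}^m$ with $t_1^k \infred t_{n+1}^k$ for each $k$ — and by the same expansion applied at stage $n+1$ we get $t_{n+1}^k \in \valuation{\sigma_k}{\xi_1[X_n/A],v}{}$ with $X_n = \valuation{\nu}{v}{n}$.

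It remains to identify $\valuation{\sigma_k}{\xi_1[X_n/A],v}{}$ with $\valuation{\sigma_k}{\xi_n',v}{}$ for $\xi_n'$ the $n$-th member of $\Xi' = \Xi^\nu\rval{\Tc}$ as defined in the statement. Here $\Tc = \{\tau_{A'}\}_{A'\in V_T}$ with $\tau_{B_j} = \alpha_j$ and $\tau_{A'} = A'$ otherwise, so $\xi_n'(A') = \valuation{\tau_{A'}}{\xi_n^\nu,v}{}$; for $A' = B_j$ this is $\valuation{\alpha_j}{\xi_n^\nu,v}{} = Y_j$ (closed, so independent of the valuation), and for the recursive variable $A$ it is $\xi_n^\nu(A) = \valuation{\nu}{v}{n} = X_n$. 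Thus $\xi_n'$ and $\xi_1[X_n/A]$ agree on all type variables free in $\sigma_k$ (which by Definition~\ref{def_coind_defs} lie in $\{A,B_1,\ldots,B_l\}$), so by Lemma~\ref{lem_val_dom}(1) the two valuations of $\sigma_k$ coincide. Hence $t_{n+1}^k \in \valuation{\sigma_k}{\xi_n',v}{}$, and together with $t_1^k \infred t_{n+1}^k$ this shows $\{t_{n+1}^k\}_{n\in\Nbb}$ is a $\sigma_k,\Xi'$-sequence.

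The only point requiring care — the ``main obstacle,'' though a mild one — is the bookkeeping around the recursive type variable $A$: one must be careful that the $A$ of the $A,\Xi^\nu$-sequence (a generic type variable carrying $\valuation{\nu}{v}{n}$) and the recursive type variable $A$ of the definition $d_\nu(\vec B) = \CoInd(A)\{\ldots\}$ are handled consistently, and that the substitution $\xi_1[X_n/A]$ really matches the entry $\xi_n'$ of $\Xi^\nu\rval{\Tc}$ on exactly the variables that matter. This is resolved entirely by Lemma~\ref{lem_val_dom} together with the $\FV$ constraint in Definition~\ref{def_coind_defs}; everything else is a direct unfolding of the definition of $\Phi_{d_\nu,\xi,v}$ and of $\valuation{d_\nu}{\xi,v}{n+1}$, plus the observation that $\infred$ preserves the head constructor and top-level applicative structure, which is immediate from the coinductive clauses defining $\infred$.
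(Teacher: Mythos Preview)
Your approach is essentially the same as the paper's: unfold $\valuation{\nu}{v}{n+1}$ via $\Phi_{d_\nu,\xi,v}$, read off the constructor form, and then use Lemma~\ref{lem_val_dom} together with the $\FV$-constraint from Definition~\ref{def_coind_defs} to match $\xi_1[X_n/A]$ with the $n$-th component of $\Xi^\nu\rval{\Tc}$. The paper argues the constructor is independent of~$n$ directly from $t_n \infred t_{n+1}$; you detour via $t_1 \infred t_{n+1}$ and Lemma~\ref{lem_infred_concat}, which is equivalent.

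One slip to fix: in your last step you write ``together with $t_1^k \infred t_{n+1}^k$ this shows $\{t_{n+1}^k\}_{n\in\Nbb}$ is a $\sigma_k,\Xi'$-sequence.'' But the sequence condition from Definition~\ref{def_tau_sequence} requires the \emph{consecutive} reductions $t_{n+1}^k \infred t_{n+2}^k$, not $t_1^k \infred t_{n+1}^k$; the latter does not imply the former in general. The fix is immediate and already implicit in your own argument: from the hypothesis $t_{n+1} \infred t_{n+2}$ and the constructor head, the same observation about $\infred$ preserving the top-level applicative structure yields $t_{n+1}^k \infred t_{n+2}^k$ directly (this is exactly what the paper invokes). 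Just use the one-step reduction in the hypothesis rather than composing back to $t_1$.
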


\begin{proof}
  Let $\Xi' = \{\xi_n'\}_{n\in\Nbb}$. We have
  \[
  \begin{array}{rcl}
    t_{n+1} &\in&
    \valuation{A}{\xi_{n+1}^\nu,v}{} \\
    &=& \valuation{\nu}{v}{n+1} \\
    &=& \valuation{d_\nu}{\xi,v}{n+1} \\
    &=& \Phi_{d_\nu,\xi,v}(\valuation{d_\nu}{\xi,v}{n}) \\
    &=& \Phi_{d_\nu,\xi,v}(\valuation{\nu}{v}{n})
  \end{array}
  \]
  where $\xi(B_j) = \valuation{\alpha_j}{v}{}$ and $B_1,\ldots,B_l$
  are the parameter type variables of~$d_\nu$. Then
  $t_{n+1} = c_{n+1} t_{n+1}^1 \ldots t_{n+1}^{m_{n+1}}$ with
  $t_{n+1}^k \in
  \valuation{\sigma_k^{n+1}}{\xi[\valuation{\nu}{v}{n}/A],v}{}$ where
  $\ArgTypes(c_{n+1}) =
  (\sigma_1^{n+1},\ldots,\sigma_{m_{n+1}}^{n+1})$ and~$A$ is the
  recursive type variable of~$d_\nu$. Since $t_n \infred t_{n+1}$ for
  $n \in \Nbb$ we must have $c_{n+1} = c$ and $m_{n+1} = m$ and
  $\sigma_k^{n+1} = \sigma_k$ for fixed~$c,m,\sigma_k$ not depending
  on~$n$. Also $t_{n+1}^k \infred t_{n+2}^k$ for $k=1,\ldots,m$ and
  $n \in \Nbb$. Because $\xi[\valuation{\nu}{v}{n}/A]$ and $\xi_n'$
  are identical on $\{A,B_1,\ldots,B_l\}$, by Lemma~\ref{lem_val_dom}
  we have $t_{n+1}^k \in \valuation{\sigma_k}{\xi_n',v}{}$. Thus
  $\{t_{n+1}^k\}_{n\in\Nbb}$ is a $\sigma_k,\Xi'$-sequence.
\end{proof}

\begin{lem}\label{lem_rho_infty_form}
  If $\tau = d^\infty(\vec{\alpha})$ and $\{t_n\}_{n\in\Nbb}$ is a
  $\tau,\Xi$-sequence, then $t_n = c t_n^1 \ldots t_n^m$ and
  $\{t_n^k\}_{n\in\Nbb}$ is a $\sigma_k,\Xi'$-sequence for each
  $k=1,\ldots,m$ where $c \in \Constr(d)$ and
  $\ArgTypes(c) = (\sigma_1,\ldots,\sigma_m)$ and
  $\Xi' = \Xi\rval{\Tc}$ where $\Tc=\{\tau_{A'}\}_{A'\in V_T}$ and
  $\tau_A = \tau$ and $\tau_{B_j} = \alpha_j$ and $\tau_{A'} = A'$ for
  $A'\notin\{A,B_1,\ldots,B_l\}$ and $B_1,\ldots,B_l$ are the
  parameter type variables of~$d$ and~$A$ is the recursive type
  variable of~$d$.
\end{lem}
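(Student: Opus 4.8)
The plan is to follow the proof of Lemma~\ref{lem_nu_form} closely, replacing the use of a fixed approximation level by the fixpoint property at the ordinal~$\infty$. Write $\Xi = \{\xi_n\}_{n\in\Nbb}$ and $\Xi' = \{\xi_n'\}_{n\in\Nbb}$, and let $B_1,\ldots,B_l$ be the parameter type variables and $A$ the recursive type variable of~$d$. First I would unfold the valuation: since $\tau = d^\infty(\vec\alpha)$,
\[
  t_n \in \valuation{\tau}{\xi_n,v}{} = \valuation{d(\vec\alpha)}{\xi_n,v}{\infty} = \valuation{d}{\xi_n[\vec Y_n/\vec B],v}{\infty},
\]
where $Y_n^j = \valuation{\alpha_j}{\xi_n,v}{}$. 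Since $\valuation{d}{\xi_n[\vec Y_n/\vec B],v}{\infty}$ is a fixpoint of $\Phi_{d,\xi_n[\vec Y_n/\vec B],v}$ (we have $\valuation{d}{\xi,v}{\infty} = \valuation{d}{\xi,v}{\infty+1} = \Phi_{d,\xi,v}(\valuation{d}{\xi,v}{\infty})$ by the choice of~$\infty$), it equals $\Phi_{d,\xi_n[\vec Y_n/\vec B],v}(\valuation{d}{\xi_n[\vec Y_n/\vec B],v}{\infty})$. Using also $\valuation{d}{\xi_n[\vec Y_n/\vec B],v}{\infty} = \valuation{\tau}{\xi_n,v}{}$, this gives $t_n = c_n\,t_n^1\ldots t_n^{m_n}$ with $t_n^k \in \valuation{\sigma_k^{(n)}}{\xi_n[\vec Y_n/\vec B][\valuation{\tau}{\xi_n,v}{}/A],v}{}$ for $k = 1,\ldots,m_n$, where $\ArgTypes(c_n) = (\sigma_1^{(n)},\ldots,\sigma_{m_n}^{(n)})$.

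Next I would use $t_n \infred t_{n+1}$. Because the head $c_n$ is a constructor, no reduction at the root is possible, and no reduction can collapse an application node on the spine above $c_n$ (that would require a $\lambda$ there); hence $t_{n+1}$ has the same spine shape as $t_n$, which forces $c_{n+1} = c_n$, $m_{n+1} = m_n$ and $t_n^k \infred t_{n+1}^k$ for each~$k$. By induction on~$n$, all $c_n$ equal a fixed $c := c_0 \in \Constr(d)$ and all $m_n$ equal a fixed $m := m_0$, so $\sigma_k^{(n)} = \sigma_k$ is independent of~$n$, where $\ArgTypes(c) = (\sigma_1,\ldots,\sigma_m)$. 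This establishes the decomposition $t_n = c\,t_n^1\ldots t_n^m$ with $t_n^k \infred t_{n+1}^k$.

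It remains to match the valuations. By definition of $\Xi' = \Xi\rval{\Tc}$, we have $\xi_n'(A) = \valuation{\tau_A}{\xi_n,v}{} = \valuation{\tau}{\xi_n,v}{}$, $\xi_n'(B_j) = \valuation{\tau_{B_j}}{\xi_n,v}{} = \valuation{\alpha_j}{\xi_n,v}{} = Y_n^j$, and $\xi_n'(A') = \xi_n(A')$ for $A' \notin \{A,B_1,\ldots,B_l\}$; in particular $\xi_n'$ and $\xi_n[\vec Y_n/\vec B][\valuation{\tau}{\xi_n,v}{}/A]$ agree on $\{A,B_1,\ldots,B_l\}$. Since $\FV(\sigma_k) \subseteq \{A,B_1,\ldots,B_l\}$ by Definition~\ref{def_coind_defs}, Lemma~\ref{lem_val_dom}(1) gives $t_n^k \in \valuation{\sigma_k}{\xi_n',v}{}$; combined with $t_n^k \infred t_{n+1}^k$ this shows $\{t_n^k\}_{n\in\Nbb}$ is a $\sigma_k,\Xi'$-sequence (note $\Tc$ is indeed a family of strictly positive types, since $\tau$, the $\alpha_j$ and the type variables all are). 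The only step requiring genuine care is the spine-preservation observation in the second paragraph; the rest is routine bookkeeping, exactly parallel to Lemma~\ref{lem_nu_form}.
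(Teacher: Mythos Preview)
Your proof is correct and follows exactly the route the paper indicates: it is the argument of Lemma~\ref{lem_nu_form} with the approximation step replaced by the fixpoint identity $\valuation{d}{\xi,v}{\infty} = \Phi_{d,\xi,v}(\valuation{d}{\xi,v}{\infty})$. The paper's own proof is just the one-line remark that the proof is analogous to that of Lemma~\ref{lem_nu_form} using this fixpoint property, so you have simply spelled out those details.
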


\begin{proof}
  The proof is analogous to the proof of Lemma~\ref{lem_nu_form}, but
  using the fact that $\valuation{d}{\xi,v}{\infty} =
  \Phi_{d,\xi,v}(\valuation{d}{\xi,v}{\infty})$.
\end{proof}

\begin{defi}\label{def_t_infty}
  Let $S^\nu$ be the set of triples $(\tau,\Xi,\{t_n\}_{n\in\Nbb})$
  such that~$\tau$ is strictly positive, $\Xi = \{\xi_n\}_{n\in\Nbb}$
  is $\nu$-hereditary, and $\{t_n\}_{n\in\Nbb}$ is a
  $\tau,\Xi$-sequence. By corecursion we define a function $f^\nu :
  S^\nu \to \Tb^\infty$. Let $\{t_n\}_{n\in\Nbb}$ be a
  $\tau,\Xi$-sequence. First note that if $\tau = A$ then we may
  assume $\Xi = \Xi^\nu$, because as long as $\tau=A$ and $\Xi =
  \Xi'\rval{\Tc}$, the sequence $\{t_n\}_{n\in\Nbb}$ is also a
  $\tau_A,\Xi'$-sequence, so we may use the definition for the case
  $\tau=\tau_A$ and $\Xi=\Xi'$.
  \begin{itemize}
  \item If $\tau$ is closed then $f^\nu(\tau,\Xi,\{t_n\}_{n\in\Nbb}) =
    t_0$.
  \item If $\tau = A$ then without loss of generality $\Xi = \Xi^\nu$
    and by Lemma~\ref{lem_nu_form} for $n \in \Nbb$ we have $t_{n+1} =
    c t_{n+1}^1 \ldots t_{n+1}^m$ and $\{t_{n+1}^k\}_{n\in\Nbb}$ is a
    $\sigma_k,\Xi'$-sequence for each~$k=1,\ldots,m$. Then define
    $f^\nu(\tau,\Xi,\{t_n\}_{n\in\Nbb}) = c r_1 \ldots r_m$ where $r_k
    = f^\nu(\sigma_k,\Xi',\{t_{n+1}^i\}_{n\in\Nbb})$.
  \item If $\tau = d^\infty(\vec{\alpha})$ then by
    Lemma~\ref{lem_rho_infty_form} we have
    $t_n = c t_n^1 \ldots t_n^m$ and $\{t_n^k\}_{n\in\Nbb}$ is a
    $\sigma_k,\Xi'$-sequence for each $k=1,\ldots,m$. Then define
    $f^\nu(\tau,\Xi,\{t_n\}_{n\in\Nbb}) = c r_1 \ldots r_m$ where
    $r_k = f^\nu(\sigma_k,\Xi',\{t_n^k\}_{n\in\Nbb})$.
  \item If $\tau = \forall i . \tau'$ then
    $f^\nu(\tau,\Xi,\{t_n\}_{n\in\Nbb}) = t_0$.
  \item If $\tau = \tau_1 \to \tau_2$ then
    $f^\nu(\tau,\Xi,\{t_n\}_{n\in\Nbb}) = t_0$.
  \end{itemize}
  We usually denote $f^\nu(\tau,\Xi,\{t_n\}_{n\in\Nbb})$ by $t_\infty$
  when $\tau,\Xi$ and $\{t_n\}_{n\in\Nbb}$ are clear from the context.
\end{defi}

\begin{lem}\label{lem_infred_f_nu}
  If $\Xi$ is $\nu$-hereditary and $\{t_n\}_{n\in\Nbb}$ is a
  $\tau,\Xi$-sequence then $t_0 \infred t_\infty$.
\end{lem}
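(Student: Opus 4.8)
The plan is to prove the stronger statement $t_0 \to^{2\infty} t_\infty$ and then invoke Lemma~\ref{lem_infred_two} to conclude $t_0 \infred t_\infty$. I would prove this by coinduction on the definition of $\to^{2\infty}$, establishing in one sweep that for every triple $(\tau,\Xi,\{t_n\}_{n\in\Nbb}) \in S^\nu$ one has $t_0 \to^{2\infty} f^\nu(\tau,\Xi,\{t_n\}_{n\in\Nbb})$. The case analysis would follow exactly the clauses of Definition~\ref{def_t_infty}, so that the proof peels off precisely one constructor layer per corecursive step of $f^\nu$. As already noted in Definition~\ref{def_t_infty}, when $\tau = A$ we may assume without loss of generality that $\Xi = \Xi^\nu$. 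Throughout I rely on reflexivity of $\infred$ and of $\to^{2\infty}$, both immediate by coinduction (using reflexivity of $\reduces$).

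In the three clauses where Definition~\ref{def_t_infty} sets $t_\infty = t_0$ -- namely $\tau$ closed, $\tau = \forall i.\tau'$, and $\tau = \tau_1 \to \tau_2$ -- there is nothing to do beyond appealing to reflexivity of $\to^{2\infty}$. The interesting clauses are the two constructor cases. If $\tau = A$ (so $\Xi = \Xi^\nu$), Lemma~\ref{lem_nu_form} supplies a constructor $c$ with $\ArgTypes(c) = (\sigma_1,\ldots,\sigma_m)$ such that $t_1 = c\,t_1^1\ldots t_1^m$ and each $\{t_{n+1}^k\}_{n\in\Nbb}$ is a $\sigma_k,\Xi'$-sequence with $\Xi'$ again $\nu$-hereditary; the coinductive hypothesis applied to $(\sigma_k,\Xi',\{t_{n+1}^k\}_{n\in\Nbb})$ then yields $t_1^k \to^{2\infty} r_k$, where $r_k = f^\nu(\sigma_k,\Xi',\{t_{n+1}^k\}_{n\in\Nbb})$ and hence $t_\infty = c\,r_1\ldots r_m$. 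It remains to produce $t_0 \to^{2\infty} c\,r_1\ldots r_m$ from $t_0 \infred t_1 = c\,t_1^1\ldots t_1^m$ together with the reductions $t_1^k \to^{2\infty} r_k$: for $m = 0$ this is the constructor rule for $\to^{2\infty}$ applied to $t_0 \infred c$; for $m \geq 1$ one applies the application rule for $\to^{2\infty}$ with $\infred$-premise $t_0 \infred (c\,t_1^1\ldots t_1^{m-1})\,t_1^m$ and, by an auxiliary induction on the length of the argument list, derives $c\,t_1^1\ldots t_1^j \to^{2\infty} c\,r_1\ldots r_j$ using the application rule with reflexive $\infred$-premises and the $t_1^k \to^{2\infty} r_k$ supplying the $\to^{2\infty}$-premises for the arguments. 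The clause $\tau = d^\infty(\vec\alpha)$ is handled in the same way, with Lemma~\ref{lem_rho_infty_form} in place of Lemma~\ref{lem_nu_form}; the only difference is that there $t_0$ itself is already of the form $c\,t_0^1\ldots t_0^m$, so no reduction at the root is needed and the auxiliary induction directly gives $t_0 \to^{2\infty} c\,r_1\ldots r_m = t_\infty$ from the $t_0^k \to^{2\infty} r_k$ obtained from the coinductive hypothesis.

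The delicate point -- and the reason for passing through $\to^{2\infty}$ instead of arguing about $\infred$ directly -- is guardedness of the coinduction. In the clause $\tau = A$ the term $t_0$ need not yet be in constructor form (only $t_1, t_2, \ldots$ are, by Lemma~\ref{lem_nu_form}), so one genuinely needs a transfinite root reduction $t_0 \infred t_1$ before descending into the arguments, and this is exactly what a single step of $\to^{2\infty}$ permits; Lemma~\ref{lem_infred_two} then compresses $\to^{2\infty}$ back into $\infred$. Correspondingly, the coinductive hypothesis must be used only as a $\to^{2\infty}$-premise of one of the generating rules, and never fed to a lemma such as Lemma~\ref{lem_infred_concat}; the assembly sketched above respects this. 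Apart from this, the argument is routine bookkeeping -- the real weight of the approximation theorem lies in the separate verification that $t_\infty$ in fact belongs to $\valuation{\nu}{v}{\infty}$.
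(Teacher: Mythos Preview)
Your proposal is correct and follows essentially the same approach as the paper's proof: pass to $\to^{2\infty}$, argue by coinduction with the same case split as Definition~\ref{def_t_infty}, and conclude via Lemma~\ref{lem_infred_two}. You are simply more explicit than the paper about the auxiliary finite induction needed to thread the application rule through the nested spine $c\,r_1\ldots r_m$, and about why guardedness forces the detour through $\to^{2\infty}$; the paper compresses all of this into the single sentence ``Because $t_0 \infred c\,t_1^1 \ldots t_1^m$, we have $t_0 \to^{2\infty} t_\infty$''.
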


\begin{proof}
  By Lemma~\ref{lem_infred_two} it suffices to show $t_0 \to^{2\infty}
  t_\infty$. We proceed by coinduction. By the definition
  of~$t_\infty$ there are the following possibilities.
  \begin{itemize}
  \item If $\tau$ is closed then $t_\infty = t_0$ so $t_0
    \to^{2\infty} t_\infty$.
  \item If $\tau = A$ then without loss of generality $\Xi=\Xi^\nu$
    and for $n \in \Nbb$ we have
    $t_{n+1} = c t_{n+1}^1 \ldots t_{n+1}^m$ and
    $\{t_{n+1}^k\}_{n\in\Nbb}$ is a $\sigma_k,\Xi'$-sequence
    for~$k=1,\ldots,m$. Then $t_\infty = c r_1 \ldots r_m$ with
    $r_k = f^\nu(\sigma_k,\Xi',\{t_{n+1}^k\}_{n\in\Nbb})$. By the
    coinductive hypothesis $t_1^k \to^{2\infty} r_k$. Because
    $t_0 \infred c t_1^1 \ldots t_1^m$, we have
    $t_0 \to^{2\infty} t_\infty$.
  \item If $\tau = d^\infty(\vec{\alpha})$ then
    $t_n = c t_n^1 \ldots t_n^m$ and $\{t_n^k\}_{n\in\Nbb}$ is a
    $\sigma_k,\Xi'$-sequence for each $k=1,\ldots,m$. Then
    $t_\infty = c r_1 \ldots r_m$ where
    $r_k = f^\nu(\sigma_k,\Xi',\{t_n^k\}_{n\in\Nbb})$. By the
    coinductive hypothesis $t_0^k \to^{2\infty} r_k$, so
    $t_0 \to^{2\infty} t_\infty$.
  \item If $\tau = \forall i . \tau'$ or $\tau = \tau_1 \to \tau_2$
    then $t_\infty = t_0$, so $t_0 \to^{2\infty} t_\infty$.\qedhere
  \end{itemize}
\end{proof}

We want to show that if $\Xi$ is $\nu$-hereditary and
$\{t_n\}_{n\in\Nbb}$ is a $\tau,\Xi$-sequence, then
$t_\infty \in \bigcap_{n\in\Nbb}\valuation{\tau}{\xi_n,v}{}$
(Corollary~\ref{cor_complete}). Together with the above lemma and some
auxiliary results this will imply the approximation theorem
(Theorem~\ref{thm_approx}). First, we need a few more definitions and
auxiliary lemmas.

\begin{defi}
  Let $\Xi=\{\xi_n\}_{n\in\Nbb}$ and $\Xi'=\{\xi_n'\}_{n\in\Nbb}$. We
  write $\Xi \subseteq \Xi'$ if $\xi_n \subseteq \xi_n'$ for $n \in
  \Nbb$.
\end{defi}

\begin{lem}\label{lem_approx_family}
  If $\Xi \subseteq \Xi'$ and $\{t_n\}_{n\in\Nbb}$ is a
  $\tau,\Xi$-sequence, then $\{t_n\}_{n\in\Nbb}$ is also a
  $\tau,\Xi'$-sequence.
\end{lem}

\begin{proof}
  Follows from definitions and Lemma~\ref{lem_val_subset}.
\end{proof}

\begin{lem}\label{lem_sequence}
  If $t \infred t_n \in \valuation{\tau_n}{\xi_n,v}{}$ and~$\xi_n$ is
  stable for $n \in \Nbb$ then there exists a sequence of terms
  $\{t_n'\}_{n\in\Nbb}$ such that $t \infred t_0'$ and $t_n' \in
  \valuation{\tau_n}{\xi_n,v}{}$ and $t_n' \infred t_{n+1}'$ for $n
  \in \Nbb$.
\end{lem}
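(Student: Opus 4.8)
The plan is to build the sequence $\{t_n'\}_{n\in\Nbb}$ by induction on~$n$, using confluence modulo~$\Uc$ and the stability of the valuations. Note that each~$\tau_n$ is strictly positive, since the valuation $\valuation{\tau_n}{\xi_n,v}{}$ is only defined for strictly positive types, and each~$\xi_n$ is stable by hypothesis; hence by Lemma~\ref{lem_val_stable}(1) every set $\valuation{\tau_n}{\xi_n,v}{}$ is stable, i.e., closed both under~$\infred$ and under~$\sim_\Uc$. The invariant I would maintain is: $t \infred t_n'$ and $t_n' \in \valuation{\tau_n}{\xi_n,v}{}$.

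For the base case I take $t_0' = t_0$; then $t \infred t_0'$ and $t_0' \in \valuation{\tau_0}{\xi_0,v}{}$ hold by assumption. For the inductive step, suppose $t_n'$ has been constructed with $t \infred t_n'$ and $t_n' \in \valuation{\tau_n}{\xi_n,v}{}$. Since $t \sim_\Uc t$ and $t \infred t_n'$ and $t \infred t_{n+1}$, confluence modulo~$\Uc$ yields terms $w,w'$ with $t_n' \infred w$, $t_{n+1} \infred w'$ and $w \sim_\Uc w'$. Because $t_{n+1} \in \valuation{\tau_{n+1}}{\xi_{n+1},v}{}$ and this set is stable, closure under~$\infred$ gives $w' \in \valuation{\tau_{n+1}}{\xi_{n+1},v}{}$, and then closure under~$\sim_\Uc$ gives $w \in \valuation{\tau_{n+1}}{\xi_{n+1},v}{}$. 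Setting $t_{n+1}' = w$, we have $t_n' \infred t_{n+1}'$ by construction, and $t \infred t_{n+1}'$ follows from $t \infred t_n' \infred t_{n+1}'$ by Lemma~\ref{lem_infred_concat}, so the invariant is preserved.

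The resulting sequence then satisfies all the required conditions: $t \infred t_0'$, each $t_n' \in \valuation{\tau_n}{\xi_n,v}{}$, and $t_n' \infred t_{n+1}'$ for all $n \in \Nbb$. I do not expect any genuine obstacle: the argument is essentially a single diagram chase. The only points needing care are that the valuations are stable in the precise sense used (which is exactly Lemma~\ref{lem_val_stable}, and is the reason the hypotheses implicitly presuppose strict positivity of the $\tau_n$ and stability of the $\xi_n$), and that $\infred$ is transitive (Lemma~\ref{lem_infred_concat}) so that membership of each $t_n'$ among the reducts of~$t$ can be carried along the whole chain.
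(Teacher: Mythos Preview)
Your proof is correct and follows essentially the same approach as the paper: build the chain inductively, applying confluence modulo~$\Uc$ at each step and using stability of the valuations (via Lemma~\ref{lem_val_stable}) to keep membership. Your invariant $t \infred t_n'$ is in fact slightly cleaner than the paper's, which carries along auxiliary terms $w_n$ with $t_n \infred w_n \sim_\Uc t_n'$ and therefore also needs Lemmas~\ref{lem_infred_sim} and~\ref{lem_sim_concat}; by keeping $t$ itself as the common ancestor you avoid that detour.
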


\begin{proof}
  By induction we define the terms $w_n$ and~$t_n'$ such that $t_n
  \infred w_n \sim_\Uc t_n'$ and $\{t_n'\}_{n\in\Nbb}$ satisfies the
  required properties. See Figure~\ref{fig_sequence}. We take $t_0' =
  w_0 = t_0$. For the inductive step, assume $w_n$ and~$t_n'$ are
  defined. By Lemma~\ref{lem_infred_concat} and confluence
  modulo~$\Uc$ there are~$w_{n+1}$ and~$w_{n+1}'$ such that $t_{n+1}
  \infred w_{n+1} \sim_\Uc w_{n+1}'$ and $w_n \infred w_{n+1}'$. By
  Lemma~\ref{lem_infred_sim} there is~$t_{n+1}'$ with $t_n' \infred
  t_{n+1}'$ and $w_{n+1}' \sim_\Uc t_{n+1}'$. By
  Lemma~\ref{lem_sim_concat} we have $w_{n+1} \sim_\Uc
  t_{n+1}'$. Because~$\xi_{n+1}$ is stable, by
  Lemma~\ref{lem_val_stable} so
  is~$\valuation{\tau_{n+1}}{\xi_{n+1},v}{}$. Since $t_{n+1} \in
  \valuation{\tau_{n+1}}{\xi_{n+1},v}{}$ and $t_{n+1} \infred w_{n+1}
  \sim_\Uc t_{n+1}'$ we obtain $t_{n+1}' \in
  \valuation{\tau_{n+1}}{\xi_{n+1},v}{}$.
\end{proof}

\begin{figure}[ht]
  \centerline{
    \xymatrix{
      & & t \ar@{->}[dll]^>>{\infty} \ar@{->}[d]^>>{\infty} \ar@{->}[dr]^>>{\infty}
      \ar@{->}[drr]^>>{\infty} \ar@{->}[drrr]^>>{\infty} & & & \\
      t_0 \ar@{->}[d]^>>{\infty} & \ldots & t_n \ar@{->}[d]^>>{\infty} & t_{n+1}
      \ar@{->}[d]^>>{\infty} & t_{n+2}
      \ar@{->}[d]^>>{\infty} & \ldots \\
      t_0 \ar@{}[d]|{{\displaystyle\wr}} & \ldots & w_n \ar@{}[d]|{{\displaystyle\wr}} \ar@{->}[dr]^>>{\infty} & w_{n+1}
      \ar@{}[d]|{{\displaystyle\wr}} \ar@{->}[dr]^>>{\infty} & w_{n+2}
      \ar@{}[d]|{{\displaystyle\wr}} & \ldots \\
      t_0 \ar@{}[d]|{{\displaystyle\wr}} & \ldots & w_n' \ar@{}[d]|{{\displaystyle\wr}} & w_{n+1}'
      \ar@{}[d]|{{\displaystyle\wr}} & w_{n+2}' \ar@{}[d]|{{\displaystyle\wr}} & \ldots \\
      t_0 \ar@{->}[r]^>>{\infty} & \ldots \ar@{->}[r]^>>{\infty} & t_n' \ar@{->}[r]^>>{\infty} & t_{n+1}' \ar@{->}[r]^>>{\infty} &
      t_{n+2}' \ar@{->}[r]^>>{\infty} & \ldots
    }
  }
  \caption{Proof of Lemma~\ref{lem_sequence}.}\label{fig_sequence}
\end{figure}

\begin{defi}\label{def_complete}
  A $\nu$-hereditary~$\Xi = \{\xi_n\}_{n\in\Nbb}$ is
  \emph{semi-complete} with $Z,\iota$ if $Z \subseteq \Xi$ is stable
  and for every type variable~$A$ and every $A,Z$-sequence
  $\{t_n\}_{n\in\Nbb}$ (which is also a $A,\Xi$-sequence by
  Lemma~\ref{lem_approx_family}) we have $t_\infty =
  f^\nu(A,\Xi,\{t_n\}_{n\in\Nbb}) \in \iota(A)$. The family~$\Xi$ is
  \emph{complete} if it is semi-complete with $\Xi,\xi_m$ for each
  $m\in\Nbb$.
\end{defi}

\begin{rem}\label{rem_complete_valuation}
  Note that the definition of ``semi-complete'' depends on the
  implicit size variable valuation~$v$, through~$\Xi$ and the
  function~$f^\nu$. Let $\Xi$ be $\nu$-hereditary (with $v$) and
  semi-complete with $Z,\iota$, with the implicit
  valuation~$v$. Let~$i$ be a fresh size variable. Then by
  Lemma~\ref{lem_hereditary_val} the family~$\Xi$ is $\nu$-hereditary
  with~$v[\varkappa/i]$ and determined by the same heredity
  derivation. It is also semi-complete with $Z,\iota$, with the
  implicit valuation~$v[\varkappa/i]$. This is because if~$\Xi$ is
  $\nu$-hereditary with $v[\varkappa/i]$ and $\{t_n\}_{n\in\Nbb}$ a
  $\tau,\Xi$-sequence with $v[\varkappa/i]$, then it follows from
  Definition~\ref{def_t_infty} and the statements of
  Lemma~\ref{lem_nu_form} and Lemma~\ref{lem_rho_infty_form} that only
  the type~$\tau$, the heredity derivation and the
  sequence~$\{t_n\}_{n\in\Nbb}$ determine the value
  of~$f^\nu(\tau,\Xi,\{t_n\}_{n\in\Nbb})$. Also note that the property
  of being an $A,Z$-sequence does not depend on~$v$, because~$A$ is a
  type variable.
\end{rem}

We are now going to show that if $\Xi=\{\xi_n\}_{n\in\Nbb}$ is
complete and $\{t_n\}_{n\in\Nbb}$ is a $\tau,\Xi$-sequence, then
$t_\infty = f^\nu(\tau,\Xi,\{t_n\}_{n\in\Nbb}) \in
\bigcap_{n\in\Nbb}\valuation{\tau}{\xi_n,v}{}$
(Corollary~\ref{cor_compl}). This is a consequence of the following a
bit more general lemma. Its proof is rather long and technical, and
therefore delegated to an appendix to make the overall structure of
the proof of the approximation theorem clearer.

\begin{lem}\label{lem_compl}
  If $\Xi=\{\xi_n\}_{n\in\Nbb}$ is $\nu$-hereditary with~$v$ and
  semi-complete with~$Z,\iota$, and $\{t_n\}_{n\in\Nbb}$ is a
  $\tau,Z$-sequence (and thus a $\tau,\Xi$-sequence by
  Lemma~\ref{lem_approx_family}), then:
  \[
    t_\infty =
    f^\nu(\tau,\Xi,\{t_n\}_{n\in\Nbb}) \in \valuation{\tau}{\iota,v}{}.
  \]
\end{lem}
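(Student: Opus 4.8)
The plan is to prove Lemma~\ref{lem_compl} by induction on the type~$\tau$ (in the sense defined in Section~\ref{sec_types}: the lexicographic product of the multiset extension of~$\prec$ on (co)inductive definitions occurring in~$\tau$ and the size of~$\tau$), generalizing over all data appearing in the statement, namely over~$v$, the $\nu$-hereditary family~$\Xi$ (with its heredity derivation~$D$), the stable subfamily $Z\subseteq\Xi$, the valuation~$\iota$ witnessing semi-completeness, and the $\tau,Z$-sequence~$\{t_n\}_{n\in\Nbb}$. The case split will follow exactly the shape of the definition of $f^\nu$ in Definition~\ref{def_t_infty}: $\tau$ closed; $\tau=A$ a type variable; $\tau=d^\infty(\vec\alpha)$; $\tau=\forall i.\tau'$; and $\tau=\tau_1\to\tau_2$. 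Throughout, Lemma~\ref{lem_hereditary_stable} gives stability of~$\Xi$ (and hence of each $\xi_n$ and, using Lemma~\ref{lem_val_stable}, of each valuation $\valuation{\sigma}{\xi_n,v}{}$), which is needed to apply Lemma~\ref{lem_sequence} repeatedly.

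I would handle the easy cases first. If $\tau$ is closed, then $t_\infty=t_0$ and $t_0\in\valuation{\tau}{\xi_0,v}{}=\valuation{\tau}{\iota,v}{}$ directly from the definition of a $\tau,Z$-sequence together with $\xi_0$-independence of closed types. If $\tau=A$, then since $\Xi$ is semi-complete with $Z,\iota$ and $\{t_n\}_{n\in\Nbb}$ is an $A,Z$-sequence, the definition of semi-complete immediately gives $t_\infty=f^\nu(A,\Xi,\{t_n\}_{n\in\Nbb})\in\iota(A)=\valuation{A}{\iota,v}{}$. (Here one uses the normalization in Definition~\ref{def_t_infty} that lets us assume $\Xi=\Xi^\nu$ when $\tau=A$.) For $\tau=\forall i.\tau'$ and $\tau=\tau_1\to\tau_2$ we again have $t_\infty=t_0$, so we must show $t_0\in\valuation{\forall i.\tau'}{\iota,v}{}$ (resp.\ $\valuation{\tau_1\to\tau_2}{\iota,v}{}$). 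For the $\forall$ case: fix an ordinal~$\varkappa$ and a fresh size variable~$i$; by Remark~\ref{rem_complete_valuation} the family~$\Xi$ is still $\nu$-hereditary and semi-complete with $Z,\iota$ relative to $v[\varkappa/i]$ and the same heredity derivation, and by Lemma~\ref{lem_val_idom} $\{t_n\}_{n\in\Nbb}$ is still a $\tau',Z$-sequence with $v[\varkappa/i]$ (strictly positivity of $\tau'$ is inherited); apply the inductive hypothesis to $\tau'$, which is smaller, to get $t_0'$ with $t_0\infred t_0'\in\valuation{\tau'}{\iota,v[\varkappa/i]}{}$, which is exactly what $t_0\in\valuation{\forall i.\tau'}{\iota,v}{}$ requires. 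For the arrow case: $\tau_1$ is closed and $\tau_2$ is strictly positive and smaller; given $r\in\valuation{\tau_1}{\iota,v}{}=\valuation{\tau_1}{v}{}$, form the constant sequence (or, more carefully, note that $\{t_n\,r\}_{n\in\Nbb}$ need not be a sequence, so instead use $t_n\in\valuation{\tau_1\to\tau_2}{\xi_n,v}{}$ to get $u_n$ with $t_n r\infred u_n\in\valuation{\tau_2}{\xi_n,v}{}$, then apply Lemma~\ref{lem_sequence} to the family $\{t_n r\}$ to obtain a genuine $\tau_2,Z$-sequence $\{u_n'\}$ with $t_0 r\infred u_0'$); apply the inductive hypothesis at $\tau_2$ to this sequence to get $u_\infty'\in\valuation{\tau_2}{\iota,v}{}$ with $u_0'\infred u_\infty'$ (by Lemma~\ref{lem_infred_f_nu} and Lemma~\ref{lem_infred_concat}), so $t_0 r\infred u_\infty'\in\valuation{\tau_2}{\iota,v}{}$.

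The main obstacle is the $\tau=d^\infty(\vec\alpha)$ case, which also subsumes the real content of the $\tau=A$ case via $f^\nu$'s recursion. Here Lemma~\ref{lem_rho_infty_form} gives $t_n=c\,t_n^1\cdots t_n^m$ with $c\in\Constr(d)$, $\ArgTypes(c)=(\sigma_1,\dots,\sigma_m)$, and each $\{t_n^k\}_{n\in\Nbb}$ a $\sigma_k,\Xi'$-sequence for $\Xi'=\Xi\rval{\Tc}$ with $\Tc$ as in that lemma; and $t_\infty=c\,r_1\cdots r_m$ with $r_k=f^\nu(\sigma_k,\Xi',\{t_n^k\}_{n\in\Nbb})$. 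We want $t_\infty\in\valuation{d^\infty(\vec\alpha)}{\iota,v}{}$, i.e., $t_\infty\in\Phi_{d,\iota[\vec Y/\vec B],v}(\valuation{d^\infty(\vec\alpha)}{\iota,v}{})$ where $Y_j=\valuation{\alpha_j}{\iota,v}{}$ and we use that $\valuation{d}{\iota[\vec Y/\vec B],v}{\infty}$ is a fixpoint of $\Phi$. Unwinding $\Phi$, this means showing $r_k\in\valuation{\sigma_k}{\iota'[\valuation{d^\infty(\vec\alpha)}{\iota,v}{}/A],v}{}$ where $\iota'=\iota[\vec Y/\vec B]$ — in other words, $r_k\in\valuation{\sigma_k}{\iota'',v}{}$ for the valuation $\iota''$ that agrees with $\iota$ off $\{A,B_1,\dots,B_l\}$, sends each $B_j$ to $\valuation{\alpha_j}{\iota,v}{}$, and sends $A$ to $\valuation{d^\infty(\vec\alpha)}{\iota,v}{}$. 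The crux is that this is exactly the conclusion of the inductive hypothesis applied to the smaller type $\sigma_k$ with the new hereditary family $\Xi'$, the new stable subfamily $Z'=Z\rval{\Tc}$ (stable by Lemma~\ref{lem_val_stable} since $Z$ is stable and the $\tau_{A'}$ are strictly positive), and the new target valuation $\iota''$ — provided we can verify that $\Xi'$ is semi-complete with $Z',\iota''$. Semi-completeness of $\Xi'$ with $Z',\iota''$ requires: for every type variable~$A'$ and every $A',Z'$-sequence $\{u_n\}$, $f^\nu(A',\Xi',\{u_n\})\in\iota''(A')$. For $A'=A$: an $A,Z'$-sequence is an $A,Z\rval{\Tc}$-sequence, which means $u_n\in\valuation{\tau_A}{\xi_n,v}{}=\valuation{d^\infty(\vec\alpha)}{\xi_n,v}{}$; unfolding $f^\nu$ on type $A$ with family $\Xi'$ uses the normalization to reduce to the $\tau_A=d^\infty(\vec\alpha)$ case, and here we must invoke the inductive hypothesis again (the type $d^\infty(\vec\alpha)$ has the same $\prec$-multiset as $\tau$ but we are now one level down in the corecursion — so this sub-argument is itself where the proof genuinely bites, and is presumably the reason the lemma's proof is "rather long and technical" and relegated to an appendix). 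For $A'=B_j$: an $A',Z'$-sequence gives $u_n\in\valuation{\alpha_j}{\xi_n,v}{}$ where $\alpha_j$ is strictly positive and smaller than $\tau$ in the type order; we want $f^\nu(B_j,\Xi',\{u_n\})\in\valuation{\alpha_j}{\iota,v}{}=\iota''(B_j)$, which follows from the inductive hypothesis applied to $\alpha_j$ (with $\Xi$, $Z$, $\iota$). For $A'\notin\{A,B_1,\dots,B_l\}$: $\tau_{A'}=A'$ so the sequence is an $A',Z$-sequence and $f^\nu$ gives an element of $\iota(A')=\iota''(A')$ by semi-completeness of $\Xi$ with $Z,\iota$. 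Assembling these, $\Xi'$ is semi-complete with $Z',\iota''$, the inductive hypothesis on each $\sigma_k$ delivers $r_k\in\valuation{\sigma_k}{\iota'',v}{}$, and $\Phi$-unfolding finishes $t_\infty\in\valuation{d^\infty(\vec\alpha)}{\iota,v}{}$. The delicate points to get right — and where I expect most of the technical weight to lie — are (i) the circular-looking use of the inductive hypothesis at $d^\infty(\vec\alpha)$ inside the verification of semi-completeness of $\Xi'$, which must be organized so the corecursion on $f^\nu$ and the induction on the type don't conflict (this is handled by the normalization clause in Definition~\ref{def_t_infty} and by tracking heredity derivations, per Remark~\ref{rem_complete_valuation}), and (ii) bookkeeping that $Z'$, $\iota''$ and $\Xi'$ satisfy all the required side conditions (stability, $Z'\subseteq\Xi'$, strict positivity of the $\tau_{A'}$, freshness of size variables) uniformly.
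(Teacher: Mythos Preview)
Your treatment of the closed, $A$, and $\tau_1\to\tau_2$ cases matches the paper. For $\tau=\forall i.\tau'$ there is a slip: $t_n\in\valuation{\forall i.\tau'}{\zeta_n,v}{}$ does \emph{not} make $\{t_n\}$ itself a $\tau',Z$-sequence with $v[\varkappa/i]$; you only get reducts $t_n'$ with $t_n\infred t_n'\in\valuation{\tau'}{\zeta_n,v[\varkappa/i]}{}$, and must first rebuild a genuine $\tau',Z$-sequence via Lemma~\ref{lem_sequence}, exactly as you (correctly) do in the arrow case.

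The substantive gap is in the $\tau=d^\infty(\vec\alpha)$ case. To verify that $\Xi'=\Xi\rval{\Tc}$ is semi-complete with $Z',\iota''$, you handle the recursive variable~$A$ by observing that an $A,Z'$-sequence is a $d^\infty(\vec\alpha),Z$-sequence and proposing to ``invoke the inductive hypothesis again''. But $d^\infty(\vec\alpha)=\tau$, so no such hypothesis is available; your remark about being ``one level down in the corecursion'' does not supply a decreasing measure, since the main induction is on the type. The paper breaks this circularity with a \emph{secondary transfinite induction on an ordinal~$\varkappa$}, and the inductive and coinductive subcases are treated separately. For inductive~$d_\mu$: set $Z^\varkappa$ by replacing the $A$-coordinate of~$Z'$ with $\valuation{\mu}{\zeta_n,v}{\varkappa}$ and prove, by induction on~$\varkappa$, that $\Xi'$ is semi-complete with $Z^\varkappa,\iota'$; the successor step is an auxiliary claim~$(\star)$ (one application of~$\Phi_{d_\mu}$ to the $A$-coordinate preserves semi-completeness), whose proof uses the main IH only at the strictly smaller types~$\sigma_i$, and the limit step uses Lemma~\ref{lem_mu_varkappa}. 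For coinductive~$d_{\nu_0}$: dually, keep~$Z'$ fixed and vary the target, setting $\iota_\varkappa=\iota'[\valuation{\nu_0}{\iota,v}{\varkappa}/A]$, and show by induction on~$\varkappa$ that $\Xi'$ is semi-complete with $Z',\iota_\varkappa$. In both schemes the recursive variable~$A$ is discharged by the secondary IH on~$\varkappa$; the main IH on~$\tau$ is applied only at the strictly smaller types~$\sigma_i$ and~$\alpha_j$. This secondary induction is precisely the ``genuinely biting'' ingredient you correctly locate but do not supply.
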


\begin{cor}\label{cor_compl}
  If $\Xi=\{\xi_n\}_{n\in\Nbb}$ is complete and $\{t_n\}_{n\in\Nbb}$
  is a $\tau,\Xi$-sequence, then $t_\infty =
  f^\nu(\tau,\Xi,\{t_n\}_{n\in\Nbb}) \in
  \bigcap_{n\in\Nbb}\valuation{\tau}{\xi_n,v}{}$.
\end{cor}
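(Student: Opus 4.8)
The plan is to derive this corollary directly from Lemma~\ref{lem_compl}. The statement we want is that $t_\infty \in \bigcap_{n\in\Nbb}\valuation{\tau}{\xi_n,v}{}$, i.e., that for each fixed $m \in \Nbb$ we have $t_\infty \in \valuation{\tau}{\xi_m,v}{}$. Fix such an $m$. Since $\Xi$ is complete, by Definition~\ref{def_complete} it is semi-complete with $Z,\iota$ where we take $Z = \Xi$ and $\iota = \xi_m$. Thus the hypotheses of Lemma~\ref{lem_compl} are satisfied with this choice of $Z$ and $\iota$: the family $\Xi$ is $\nu$-hereditary with $v$ (it is complete, hence $\nu$-hereditary), it is semi-complete with $Z=\Xi,\iota=\xi_m$, and the given sequence $\{t_n\}_{n\in\Nbb}$ is a $\tau,\Xi$-sequence, which is a $\tau,Z$-sequence since $Z=\Xi$.

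Applying Lemma~\ref{lem_compl} then yields $t_\infty = f^\nu(\tau,\Xi,\{t_n\}_{n\in\Nbb}) \in \valuation{\tau}{\iota,v}{} = \valuation{\tau}{\xi_m,v}{}$. The one subtlety to check is that the value of $t_\infty$, i.e.\ $f^\nu(\tau,\Xi,\{t_n\}_{n\in\Nbb})$, does not itself depend on the choice of $\iota$; indeed it does not, since $f^\nu$ is defined purely from $\tau$, the (fixed) heredity derivation of $\Xi$, and the sequence $\{t_n\}_{n\in\Nbb}$, as noted in Remark~\ref{rem_complete_valuation}. Hence the same term $t_\infty$ lies in $\valuation{\tau}{\xi_m,v}{}$ for every $m \in \Nbb$, and therefore $t_\infty \in \bigcap_{n\in\Nbb}\valuation{\tau}{\xi_n,v}{}$, as required.

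There is essentially no obstacle here: the corollary is a routine specialization of Lemma~\ref{lem_compl}, obtained by instantiating the parameters $Z$ and $\iota$ in the definition of ``complete''. All the real work --- the coinductive construction of $t_\infty$ via $f^\nu$, the verification that $t_0 \infred t_\infty$ (Lemma~\ref{lem_infred_f_nu}), and the inductive membership argument (Lemma~\ref{lem_compl}) --- has already been done. The only thing worth spelling out explicitly is the quantifier juggling: ``for each $m$'' appears on the outside of the conclusion, and for each such $m$ we reuse completeness with the instance $\xi_m$, observing that $t_\infty$ is the same term independent of $m$.
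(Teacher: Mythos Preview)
Your proof is correct and is precisely the intended derivation: the paper states this as a corollary of Lemma~\ref{lem_compl} with no further proof, and your argument---instantiating $Z=\Xi$ and $\iota=\xi_m$ from the definition of ``complete'' for each $m$---is exactly how that specialization goes. The observation that $t_\infty$ is independent of~$m$ is the only point worth making explicit, and you handle it correctly.
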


We are now going to show that every $\nu$-hereditary family~$\Xi$ is
complete. To achieve this we show that $\Xi^\nu$ is complete
(Corollary~\ref{cor_complete_compose}), and that if~$\Xi$ is complete
then so is~$\Xi\rval{\Tc}$ (Lemma~\ref{lem_xi_nu_complete}).

\begin{lem}\label{lem_semi_complete_compose}
  If $\Xi$ is semi-complete with $Z,\iota$ then $\Xi\rval{\Tc}$ is
  semi-complete with $Z\rval{\Tc},\iota'$ where $\Tc=\{\tau_A\}_{A\in
    V_T}$ and $\iota'(A) = \valuation{\tau_A}{\iota,v}{}$.
\end{lem}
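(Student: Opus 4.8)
The plan is to unfold the definition of semi-completeness for $\Xi\rval{\Tc}$ and reduce it to semi-completeness of $\Xi$. So suppose $\Xi = \{\xi_n\}_{n\in\Nbb}$ is $\nu$-hereditary with $v$ and semi-complete with $Z,\iota$, where $Z \subseteq \Xi$ is stable. Write $\Xi\rval{\Tc} = \{\xi_n'\}_{n\in\Nbb}$, so $\xi_n'(A) = \valuation{\tau_A}{\xi_n,v}{}$, and correspondingly $Z\rval{\Tc} = \{\zeta_n\}_{n\in\Nbb}$ with $\zeta_n(A) = \valuation{\tau_A}{z_n,v}{}$ where $Z = \{z_n\}_{n\in\Nbb}$. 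First I would check the side conditions: $\Xi\rval{\Tc}$ is $\nu$-hereditary by definition (it is $\Xi'\rval{\Tc}$ with $\Xi'=\Xi$ $\nu$-hereditary); $Z\rval{\Tc} \subseteq \Xi\rval{\Tc}$ follows from $Z\subseteq\Xi$ and monotonicity, i.e.\ Lemma~\ref{lem_val_subset}(1), since the $\tau_A$ are strictly positive; and $Z\rval{\Tc}$ is stable by Lemma~\ref{lem_val_stable}(1) applied to each $\zeta_n(A)=\valuation{\tau_A}{z_n,v}{}$, using that $z_n$ is stable. (Here I use that a $\nu$-hereditary, hence stable, family has stable members, cf.\ Lemma~\ref{lem_hereditary_stable}, for the relevant valuations.)

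The heart of the argument is the $A,Z\rval{\Tc}$-sequence condition. Let $\{t_n\}_{n\in\Nbb}$ be an $A,Z\rval{\Tc}$-sequence for some type variable $A$, so $t_n \in \zeta_n(A) = \valuation{\tau_A}{z_n,v}{}$ and $t_n \infred t_{n+1}$. The key observation is that this is precisely the statement that $\{t_n\}_{n\in\Nbb}$ is a $\tau_A,Z$-sequence. Now I would invoke Lemma~\ref{lem_compl} with the family $\Xi$ (which is $\nu$-hereditary with $v$ and semi-complete with $Z,\iota$), the type $\tau_A$, and this $\tau_A,Z$-sequence: this gives $f^\nu(\tau_A,\Xi,\{t_n\}_{n\in\Nbb}) \in \valuation{\tau_A}{\iota,v}{} = \iota'(A)$. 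Finally I must reconcile $f^\nu(\tau_A,\Xi,\{t_n\}_{n\in\Nbb})$ with $f^\nu(A,\Xi\rval{\Tc},\{t_n\}_{n\in\Nbb})$: by the parenthetical remark opening Definition~\ref{def_t_infty}, when $\tau = A$ and the hereditary family is of the form $\Xi''\rval{\Tc}$, one unwinds one layer and uses the definition for $\tau=\tau_A$ with family $\Xi''$, i.e.\ $f^\nu(A,\Xi\rval{\Tc},\{t_n\}_{n\in\Nbb}) = f^\nu(\tau_A,\Xi,\{t_n\}_{n\in\Nbb})$. Combining, $t_\infty = f^\nu(A,\Xi\rval{\Tc},\{t_n\}_{n\in\Nbb}) \in \iota'(A)$, which is exactly semi-completeness of $\Xi\rval{\Tc}$ with $Z\rval{\Tc},\iota'$.

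The main obstacle is bookkeeping rather than conceptual: one must be careful that the "$\tau_A,Z$-sequence" reading of an "$A,Z\rval{\Tc}$-sequence" is literally correct (it is, by the definition of $Z\rval{\Tc}$), and that the heredity-derivation convention is respected — since $\Xi = \Xi^D$ for a fixed derivation $D$, the family $\Xi\rval{\Tc}$ comes with derivation $(D,\Tc)$, and the recursive unfolding in Definition~\ref{def_t_infty} matches this. A secondary subtlety: Lemma~\ref{lem_compl} requires its input family to be $\nu$-hereditary and semi-complete with exactly the pair $Z,\iota$ we are handed, which is the hypothesis of the present lemma, so no strengthening is needed. One should also double-check the strict-positivity hypotheses propagate: $\Tc$ is a family of strictly positive types by assumption, so each $\tau_A$ is strictly positive, which is what Lemmas~\ref{lem_val_subset}, \ref{lem_val_stable} and \ref{lem_compl} demand.
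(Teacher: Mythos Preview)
Your proposal is correct and follows essentially the same route as the paper: verify $Z\rval{\Tc}\subseteq\Xi\rval{\Tc}$ via Lemma~\ref{lem_val_subset}, observe that an $A,Z\rval{\Tc}$-sequence is literally a $\tau_A,Z$-sequence, apply Lemma~\ref{lem_compl} to get $f^\nu(\tau_A,\Xi,\{t_n\})\in\valuation{\tau_A}{\iota,v}{}=\iota'(A)$, and use the unwinding in Definition~\ref{def_t_infty} to identify this with $f^\nu(A,\Xi\rval{\Tc},\{t_n\})$. One small wrinkle in your side-condition check: the stability of each $z_n$ comes directly from the hypothesis that $\Xi$ is semi-complete with $Z,\iota$ (which requires $Z$ stable), not from $Z$ being $\nu$-hereditary---$Z$ need not be $\nu$-hereditary at all---so your parenthetical citing Lemma~\ref{lem_hereditary_stable} is misplaced, but the conclusion and the rest of the argument are unaffected.
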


\begin{proof}
  Let $\Xi = \{\xi_n\}_{n\in\Nbb}$ and $\Xi' = \Xi\rval{\Tc} =
  \{\xi_n'\}_{n\in\Nbb}$ and $Z = \{\zeta_n\}_{n\in\Nbb}$ and $Z' =
  Z\rval{\Tc} = \{\zeta_n'\}_{n\in\Nbb}$. We have $\zeta_n'(A) =
  \valuation{\tau_A}{\zeta_n,v}{} \subseteq
  \valuation{\tau_A}{\xi_n,v}{} = \xi_n'(A)$ by
  Lemma~\ref{lem_val_subset} because $Z \subseteq \Xi$ and thus
  $\zeta_n \subseteq \xi_n$. Hence $Z' \subseteq \Xi'$. Let
  $\{t_n\}_{n\in\Nbb}$ be a $A,Z'$-sequence, i.e., $t_n \infred
  t_{n+1}$ and $t_n \in \valuation{A}{\zeta_n',v}{} = \zeta_n'(A) =
  \valuation{\tau_A}{\zeta_n,v}{}$ for $n \in \Nbb$. Then
  $\{t_n\}_{n\in\Nbb}$ is also a $\tau_A,Z$-sequence. Because~$\Xi$ is
  semi-complete with $Z,\iota$, by Lemma~\ref{lem_compl} we have
  $t_\infty = f^\nu(A,\Xi',\{t_n\}_{n\in\Nbb}) =
  f^\nu(\tau_A,\Xi,\{t_n\}_{n\in\Nbb}) \in
  \valuation{\tau_A}{\iota,v}{} = \iota'(A)$.
\end{proof}

\begin{cor}\label{cor_complete_compose}
  If $\Xi$ is complete then so is $\Xi\rval{\Tc}$.
\end{cor}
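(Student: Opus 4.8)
The plan is to read the corollary off from Lemma~\ref{lem_semi_complete_compose}, applying it once for each index~$m$. Write $\Xi=\{\xi_n\}_{n\in\Nbb}$ and set $\Xi'=\Xi\rval{\Tc}=\{\xi_n'\}_{n\in\Nbb}$, so that by definition $\xi_n'(A)=\valuation{\tau_A}{\xi_n,v}{}$ for every type variable~$A$. Since $\Xi$ is $\nu$-hereditary, so is $\Xi'$ (it has the form $\Xi\rval{\Tc}$ with $\Xi$ already $\nu$-hereditary), so by the definition of ``complete'' it remains only to show that $\Xi'$ is semi-complete with $\Xi',\xi_m'$ for every $m\in\Nbb$.

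Fix $m\in\Nbb$. By hypothesis $\Xi$ is complete, hence semi-complete with $\Xi,\xi_m$. I would then invoke Lemma~\ref{lem_semi_complete_compose} with $Z:=\Xi$ and $\iota:=\xi_m$: this yields that $\Xi\rval{\Tc}$ is semi-complete with $\Xi\rval{\Tc},\iota'$, where $\iota'(A)=\valuation{\tau_A}{\xi_m,v}{}$. Now $\Xi\rval{\Tc}$ is exactly $\Xi'$, and $\iota'(A)=\valuation{\tau_A}{\xi_m,v}{}=\xi_m'(A)$ for every~$A$, so $\iota'=\xi_m'$; hence $\Xi'$ is semi-complete with $\Xi',\xi_m'$. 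As $m$ was arbitrary, $\Xi'$ is semi-complete with $\Xi',\xi_m'$ for all $m\in\Nbb$, i.e.\ $\Xi'$ is complete.

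I do not anticipate any real obstacle here: the argument is a one-line bookkeeping step, the only thing to check being that the family $Z\rval{\Tc}$ and the valuation $\iota'$ produced by Lemma~\ref{lem_semi_complete_compose} coincide, respectively, with $\Xi'$ and $\xi_m'$, which is immediate from the definitions of $\Xi\rval{\Tc}$ and of ``complete''. All the substantive work — in particular Lemma~\ref{lem_compl}, on which Lemma~\ref{lem_semi_complete_compose} rests — has already been carried out, so nothing further is required.
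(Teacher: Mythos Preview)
Your proposal is correct and is exactly the argument the paper intends: the corollary is stated without proof immediately after Lemma~\ref{lem_semi_complete_compose}, and the only way to derive it is precisely the bookkeeping you carry out, applying that lemma with $Z=\Xi$ and $\iota=\xi_m$ and identifying $Z\rval{\Tc}$ with $\Xi'$ and $\iota'$ with $\xi_m'$. There is nothing to add.
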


\begin{lem}\label{lem_xi_nu_complete}
  $\Xi^\nu$ is complete.
\end{lem}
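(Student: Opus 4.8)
The plan is to prove by induction on $m \in \Nbb$ the statement: for every $A,\Xi^\nu$-sequence $\{t_n\}_{n\in\Nbb}$ one has $t_\infty = f^\nu(A,\Xi^\nu,\{t_n\}_{n\in\Nbb}) \in \valuation{\nu}{v}{m}$. Since $\xi_m^\nu(A) = \valuation{\nu}{v}{m}$ for every type variable $A$, and $\Xi^\nu \subseteq \Xi^\nu$ is stable by Lemma~\ref{lem_hereditary_stable}, this is exactly the assertion that $\Xi^\nu$ is semi-complete with $\Xi^\nu,\xi_m^\nu$ for each $m$, i.e.\ that $\Xi^\nu$ is complete. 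The base case $m=0$ is trivial because $\valuation{\nu}{v}{0} = \Tb^\infty$.

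For the inductive step write $\nu = d_\nu(\vec\alpha)$; since $\nu$ is closed so is each $\alpha_j$, and with $\hat\xi(B_j) = \valuation{\alpha_j}{v}{}$ (where $B_1,\ldots,B_l$ are the parameter type variables of $d_\nu$) we have $\valuation{\nu}{v}{m+1} = \Phi_{d_\nu,\hat\xi,v}(\valuation{\nu}{v}{m})$. By Lemma~\ref{lem_nu_form}, $t_{n+1} = c\,t_{n+1}^1\ldots t_{n+1}^\ell$ with $c \in \Constr(\nu)$, $\ArgTypes(c) = (\sigma_1,\ldots,\sigma_\ell)$, and $\{t_{n+1}^k\}_{n\in\Nbb}$ a $\sigma_k,\Xi'$-sequence, where $\Xi' = \Xi^\nu\rval{\Tc}$ with $\tau_{B_j} = \alpha_j$ and $\tau_{A'} = A'$ for all other type variables (in particular $\tau_A = A$ for the recursive type variable $A$ of $d_\nu$). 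By Definition~\ref{def_t_infty} we then have $t_\infty = c\,r_1\ldots r_\ell$ with $r_k = f^\nu(\sigma_k,\Xi',\{t_{n+1}^k\}_{n\in\Nbb})$, so it suffices to show $r_k \in \valuation{\sigma_k}{\hat\xi[\valuation{\nu}{v}{m}/A],v}{}$ for each $k$.

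To obtain this I would apply Lemma~\ref{lem_compl} to the $\nu$-hereditary family $\Xi'$, with $Z := \Xi'$ and the valuation $\iota'$ defined by $\iota'(B_j) = \valuation{\alpha_j}{v}{}$ for the parameter type variables and $\iota'(A') = \valuation{\nu}{v}{m}$ for every other type variable $A'$ (in particular $\iota'(A) = \valuation{\nu}{v}{m}$). The key point is to check that $\Xi'$ is semi-complete with $\Xi',\iota'$: stability of $\Xi'$ is Lemma~\ref{lem_hereditary_stable}, and for an arbitrary type variable $A'$ and an $A',\Xi'$-sequence $\{s_n\}_{n\in\Nbb}$ we compute $f^\nu(A',\Xi',\{s_n\})$ via the convention of Definition~\ref{def_t_infty} that rewrites $(A',\Xi^\nu\rval{\Tc})$ into $(\tau_{A'},\Xi^\nu)$. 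If $A' = B_j$ then $\tau_{B_j} = \alpha_j$ is closed, so this value is $s_0 \in \valuation{\alpha_j}{v}{} = \iota'(B_j)$; otherwise $\tau_{A'} = A'$ and $\{s_n\}$ is an $A',\Xi^\nu$-sequence (i.e.\ $s_n \in \valuation{\nu}{v}{n}$ with $s_n \infred s_{n+1}$), so the outer induction hypothesis at $m$ gives $f^\nu(A',\Xi^\nu,\{s_n\}) \in \valuation{\nu}{v}{m} = \iota'(A')$. Hence $\Xi'$ is semi-complete with $\Xi',\iota'$, and since $\{t_{n+1}^k\}_{n\in\Nbb}$ is a $\sigma_k,\Xi'$-sequence Lemma~\ref{lem_compl} yields $r_k \in \valuation{\sigma_k}{\iota',v}{}$. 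Finally $\FV(\sigma_k) \subseteq \{A,B_1,\ldots,B_l\}$ and $\iota'$ agrees on this set with $\hat\xi[\valuation{\nu}{v}{m}/A]$, so Lemma~\ref{lem_val_dom} gives $r_k \in \valuation{\sigma_k}{\hat\xi[\valuation{\nu}{v}{m}/A],v}{}$, completing the induction.

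I expect the main obstacle to be organising the inner reasoning so that Lemma~\ref{lem_compl} is invoked without circularity: the semi-completeness of $\Xi'$ with $\Xi',\iota'$ must be derived purely from the outer induction hypothesis on $m$ together with the triviality of $f^\nu$ on closed types, and never from Lemma~\ref{lem_compl} itself. What makes this possible is the observation that passing from an $A,\Xi^\nu$-sequence to the constructor-argument sequences along the recursive type variable of $d_\nu$ again yields $A,\Xi^\nu$-sequences, keeping the recursive calls of $f^\nu$ within the scope of the induction hypothesis.
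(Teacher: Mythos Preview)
Your proof is correct and follows essentially the same approach as the paper's: induction on~$m$ to show that $\Xi^\nu$ is semi-complete with $\Xi^\nu,\xi^\nu_m$, using Lemma~\ref{lem_nu_form} to decompose along constructors and Lemma~\ref{lem_compl} to lift the semi-completeness of~$\Xi'$ to the constructor argument types~$\sigma_k$. The only difference is organisational: the paper packages your direct verification that $\Xi'$ is semi-complete with $\Xi',\iota'$ into a general lemma (Lemma~\ref{lem_semi_complete_compose}), which derives this from the inductive hypothesis by one application of Lemma~\ref{lem_compl}; you instead unwind this application by hand, observing that for $A'=B_j$ the type $\tau_{B_j}=\alpha_j$ is closed so $f^\nu$ returns $s_0$, and for all other $A'$ one is back to an $A',\Xi^\nu$-sequence where the outer induction hypothesis applies directly. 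Both routes are sound and non-circular; your inlined version is slightly more explicit about why no circular appeal to Lemma~\ref{lem_compl} is needed at the semi-completeness step, while the paper's version isolates a reusable statement (used again for Corollary~\ref{cor_complete_compose}).
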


\begin{proof}
  We show by induction on~$m\in\Nbb$ that $\Xi^\nu$ is semi-complete
  with $\Xi^\nu,\xi^\nu_m$. We have $\Xi^\nu \subseteq
  \Xi^\nu$. Also~$\Xi^\nu$ is stable. Let $\{t_n\}_{n\in\Nbb}$ be a
  $A,\Xi^\nu$-sequence. We need to show
  $t_\infty = f^\nu(A,\Xi^\nu,\{t_n\}_{n\in\Nbb}) \in \xi^\nu_m(A) =
  \valuation{\nu}{v}{m}$. If $m=0$ then
  $\valuation{\nu}{v}{m} = \Tb^\infty$, so
  $t_\infty \in \xi^\nu_m(A)$. Assume $m=m'+1$. We have
  $t_n \in \xi^\nu_n(A) = \valuation{\nu}{v}{n}$. Then by
  Lemma~\ref{lem_nu_form} we have
  $t_{n+1} = c t_{n+1}^1 \ldots t_{n+1}^k$ for $n \in \Nbb$, and
  $\{t_{n+1}^i\}_{n\in\Nbb}$ is a $\sigma_i,\Xi'$-sequence for each
  $i=1,\ldots,k$ where $c \in \Constr(\nu)$ and
  $\ArgTypes(c) = (\sigma_1,\ldots,\sigma_k)$ and
  $\nu = d_\nu(\vec{\alpha})$ and $\Xi' = \Xi^\nu\rval{\Tc}$ where
  $\Tc=\{\tau_{A}\}_{A\in V_T}$ and $\tau_{B_j} = \alpha_j$ and
  $\tau_{A} = A$ for $A\notin\{B_1,\ldots,B_l\}$ and $B_1,\ldots,B_l$
  are the parameter type variables of~$d_\nu$. By the inductive
  hypothesis $\Xi^\nu$ is semi-complete with
  $\Xi^\nu,\xi^\nu_{m'}$. By Lemma~\ref{lem_semi_complete_compose} we
  conclude that~$\Xi'$ is semi-complete with $\Xi',\iota$ where
  $\iota(A) = \valuation{\tau_{A}}{\xi^\nu_{m'},v}{}$, i.e.,
  $\iota =
  \xi^\nu_{m'}[\valuation{\alpha_1}{v}{}/B_1,\ldots,\valuation{\alpha_l}{v}{}/B_l]$. Because
  $\{t_{n+1}^i\}_{n\in\Nbb}$ is a $\sigma_i,\Xi'$-sequence, by
  Lemma~\ref{lem_compl} we have
  $t_\infty^i \in \valuation{\sigma_i}{\iota,v}{}$. Thus
  $t_\infty = c t_\infty^1 \ldots t_\infty^k \in
  \Phi_{d_\nu,\iota,v}(\iota(A')) =
  \Phi_{d_\nu,\iota,v}(\valuation{\nu}{v}{m'}) =
  \valuation{\nu}{v}{m}$, where~$A'$ is the recursive type variable
  of~$d_\nu$. Hence $t_\infty \in \xi^\nu_m(A)$.
\end{proof}

\begin{cor}\label{cor_nu_hereditary_complete}
  Every $\nu$-hereditary family is complete.
\end{cor}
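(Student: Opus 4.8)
The plan is to proceed by structural induction on the definition of a $\nu$-hereditary family --- equivalently, by induction on the heredity derivation~$D$ with $\Xi = \Xi^D$, following the two clauses of Definition~\ref{def_tau_sequence}. So first I would fix an arbitrary $\nu$-hereditary family $\Xi$ together with its (implicitly given) heredity derivation~$D$, and distinguish the two cases $D = \emptyset$ and $D = (D',\Tc)$.

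In the base case $\Xi = \Xi^\nu$ (i.e.\ $D = \emptyset$), completeness is precisely the statement of Lemma~\ref{lem_xi_nu_complete}, so nothing remains to be done. In the inductive step $\Xi = \Xi'\rval{\Tc}$ with $\Xi' = \Xi^{D'}$ and $\Tc$ a family of strictly positive types, the family $\Xi'$ is again $\nu$-hereditary, now determined by the strictly shorter derivation~$D'$; hence by the inductive hypothesis $\Xi'$ is complete, and Corollary~\ref{cor_complete_compose} immediately gives that $\Xi'\rval{\Tc} = \Xi$ is complete. This closes the induction.

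There is no real obstacle at this level: all the substance has already been absorbed into the auxiliary results, chiefly Lemma~\ref{lem_compl} (the semi-completeness transfer lemma delegated to the appendix), Lemma~\ref{lem_semi_complete_compose} together with its Corollary~\ref{cor_complete_compose}, and the base-case computation in the proof of Lemma~\ref{lem_xi_nu_complete}. The corollary merely packages these into the form needed downstream, namely that Corollary~\ref{cor_compl} --- $t_\infty \in \bigcap_{n\in\Nbb}\valuation{\tau}{\xi_n,v}{}$ for every $\tau,\Xi$-sequence --- applies to an arbitrary $\nu$-hereditary $\Xi$ without carrying along any separate completeness hypothesis.
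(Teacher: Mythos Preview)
Your proposal is correct and matches the paper's own proof essentially verbatim: the paper simply says ``Follows by induction from Lemma~\ref{lem_xi_nu_complete} and Corollary~\ref{cor_complete_compose},'' which is exactly the structural induction on the heredity derivation that you spell out.
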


\begin{proof}
  Follows by induction from Lemma~\ref{lem_xi_nu_complete} and
  Corollary~\ref{cor_complete_compose}.
\end{proof}

\begin{cor}\label{cor_complete}
  If $\Xi=\{\xi_n\}_{n\in\Nbb}$ is $\nu$-hereditary and
  $\{t_n\}_{n\in\Nbb}$ is a $\tau,\Xi$-sequence, then
  $t_\infty = f^\nu(\tau,\Xi,\{t_n\}_{n\in\Nbb}) \in
  \bigcap_{n\in\Nbb}\valuation{\tau}{\xi_n,v}{}$.
\end{cor}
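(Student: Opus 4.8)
The plan is to observe that this corollary is simply the composition of two facts already in hand: $\nu$-hereditary families are complete, and complete families yield the desired membership. Concretely, I would first apply Corollary~\ref{cor_nu_hereditary_complete} to the given $\nu$-hereditary family $\Xi = \{\xi_n\}_{n\in\Nbb}$, obtaining that $\Xi$ is complete in the sense of Definition~\ref{def_complete} (i.e.\ semi-complete with $\Xi,\xi_m$ for every $m\in\Nbb$). Since $\{t_n\}_{n\in\Nbb}$ is by assumption a $\tau,\Xi$-sequence and $\tau$ is strictly positive, the triple $(\tau,\Xi,\{t_n\}_{n\in\Nbb})$ lies in $S^\nu$, so $t_\infty = f^\nu(\tau,\Xi,\{t_n\}_{n\in\Nbb})$ is defined. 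Then I would simply invoke Corollary~\ref{cor_compl}, which for a complete family $\Xi$ and a $\tau,\Xi$-sequence $\{t_n\}_{n\in\Nbb}$ states precisely that $t_\infty \in \bigcap_{n\in\Nbb}\valuation{\tau}{\xi_n,v}{}$. That finishes the argument.

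There is no real obstacle at this level: the statement is essentially a repackaging that removes the hypothesis ``complete'' in favour of the more concrete ``$\nu$-hereditary'', and all the genuine work has already been done. The load-bearing steps are elsewhere in the development: Lemma~\ref{lem_compl} (the long technical lemma delegated to the appendix, handling the alternation of least and greatest fixpoints), and the chain establishing completeness — Lemma~\ref{lem_xi_nu_complete} showing $\Xi^\nu$ is complete by induction on the approximation index $m$, together with Corollary~\ref{cor_complete_compose} showing completeness is preserved under $\Xi \mapsto \Xi\rval{\Tc}$, which by induction on the heredity derivation gives Corollary~\ref{cor_nu_hereditary_complete}. The present corollary is then obtained by a one-line combination, and in the write-up I would state it as ``Immediate from Corollary~\ref{cor_nu_hereditary_complete} and Corollary~\ref{cor_compl}.''
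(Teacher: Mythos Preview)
Your proposal is correct and matches the paper's proof exactly: the paper also simply states that the result follows from Corollary~\ref{cor_compl} and Corollary~\ref{cor_nu_hereditary_complete}. Your additional remarks about where the real work lies are accurate and appropriate.
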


\begin{proof}
  Follows from Corollary~\ref{cor_compl} and
  Corollary~\ref{cor_nu_hereditary_complete}.
\end{proof}

We are now going to show that
$\valuation{\nu}{v}{\omega} = \valuation{\nu}{v}{\infty}$, i.e.,
$\omega$ iterations suffice to reach the fixpoint for any coinductive
type. For this we need the following lemma about intersection of
valuations. We define $\bigcap_{n\in\Nbb}\xi_n$ by
$(\bigcap_{n\in\Nbb}\xi_n)(A) = \bigcap_{n\in\Nbb}\xi_n(A)$ for
any~$A$.

\begin{lem}\label{lem_intersection}
  If $\Xi=\{\xi_n\}_{n\in\Nbb}$ is complete then
  $\bigcap_{n\in\Nbb}\valuation{\tau}{\xi_n,v}{}\subseteq\valuation{\tau}{\bigcap_{n\in\Nbb}\xi_n,v}{}$
  for any strictly positive~$\tau$.
\end{lem}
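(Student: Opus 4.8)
The key observation is that all the genuine difficulty caused by the alternation of least and greatest fixpoints has already been packed into Lemma~\ref{lem_compl}; granted that lemma, the statement follows by instantiating it at \emph{constant} sequences, for which the corecursively defined map $f^\nu$ is the identity. In detail: fix a complete (hence $\nu$-hereditary) family $\Xi = \{\xi_n\}_{n\in\Nbb}$, a strictly positive type $\tau$, and a term $t \in \bigcap_{n\in\Nbb}\valuation{\tau}{\xi_n,v}{}$. Since $t \infred t$ and $t \in \valuation{\tau}{\xi_n,v}{}$ for every $n$, the constant sequence $\{t\}_{n\in\Nbb}$ is a $\tau,\Xi$-sequence, so $(\tau,\Xi,\{t\}_{n\in\Nbb}) \in S^\nu$ and $f^\nu(\tau,\Xi,\{t\}_{n\in\Nbb})$ is defined. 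I would then prove two auxiliary facts: (i) $\Xi$ is semi-complete with $\Xi,\bigcap_n\xi_n$, where $(\bigcap_n\xi_n)(A) := \bigcap_n\xi_n(A)$; and (ii) $f^\nu(\tau,\Xi,\{t\}_{n\in\Nbb}) = t$. Given these, Lemma~\ref{lem_compl} with $Z := \Xi$ and $\iota := \bigcap_n\xi_n$, applied to the $\tau,\Xi$-sequence $\{t\}_{n\in\Nbb}$, yields $t = f^\nu(\tau,\Xi,\{t\}_{n\in\Nbb}) \in \valuation{\tau}{\bigcap_n\xi_n,v}{}$, which is the desired conclusion.

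Fact (i) is obtained by unwinding definitions. By Definition~\ref{def_complete}, completeness of $\Xi$ says exactly that $\Xi$ is semi-complete with $\Xi,\xi_m$ for every $m\in\Nbb$; moreover $\Xi$ is stable by Lemma~\ref{lem_hereditary_stable} and $\Xi \subseteq \Xi$ trivially. Hence, for every type variable $A$ and every $A,\Xi$-sequence $\{s_n\}_{n\in\Nbb}$, we have $f^\nu(A,\Xi,\{s_n\}_{n\in\Nbb}) \in \xi_m(A)$ for all $m$, and therefore $f^\nu(A,\Xi,\{s_n\}_{n\in\Nbb}) \in \bigcap_m\xi_m(A) = (\bigcap_m\xi_m)(A)$. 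This is precisely semi-completeness of $\Xi$ with $\Xi,\bigcap_m\xi_m$.

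Fact (ii) is the part needing care, and I would establish it by coinduction on the (possibly infinite) term returned by $f^\nu$, following the case analysis of Definition~\ref{def_t_infty}. If $\tau$ is closed, or $\tau = \forall i.\tau'$, or $\tau = \tau_1 \to \tau_2$, then $f^\nu(\tau,\Xi,\{t\}_{n\in\Nbb}) = t_0 = t$ immediately. If $\tau = A$, then as in Definition~\ref{def_t_infty} we may assume $\Xi = \Xi^\nu$ (the required unwinding of the heredity derivation is finite); the constant sequence is then an $A,\Xi^\nu$-sequence, so by Lemma~\ref{lem_nu_form} we get $t = c\,t^1\ldots t^m$ with each $\{t^k\}_{n\in\Nbb}$ constant and a $\sigma_k,\Xi'$-sequence, while $f^\nu(A,\Xi,\{t\}_{n\in\Nbb}) = c\,r_1\ldots r_m$ with $r_k = f^\nu(\sigma_k,\Xi',\{t^k\}_{n\in\Nbb})$; by the coinductive hypothesis $r_k = t^k$, so the two terms coincide. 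The case $\tau = d^\infty(\vec\alpha)$ is entirely analogous, using Lemma~\ref{lem_rho_infty_form} in place of Lemma~\ref{lem_nu_form}.

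The main obstacle is Fact (ii): one must verify that the corecursive ``rebuilding'' carried out by $f^\nu$ leaves a term unchanged when the input sequence is already stationary, in particular that its re-descent into recursive-type-variable positions and into $d^\infty$-headed subterms faithfully reconstructs those subterms. This is not deep, but it does require the coinductive bookkeeping above and a careful reading of how $f^\nu$ is defined on $A,\Xi$-sequences with $\Xi \neq \Xi^\nu$; everything else in the proof is a routine application of Lemma~\ref{lem_compl} together with Lemma~\ref{lem_hereditary_stable}.
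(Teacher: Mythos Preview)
Your proof is correct and takes a genuinely different route from the paper's. The paper proceeds by a direct induction on~$\tau$, re-doing much of the case analysis already contained in Lemma~\ref{lem_compl}: the case $\tau=d_\mu^s(\vec\alpha)$ requires a further transfinite induction on~$\varkappa$ (relying on Lemma~\ref{lem_mu_varkappa} at limit ordinals), and the cases $\tau=\forall i.\tau'$ and $\tau=\tau_1\to\tau_2$ invoke Lemma~\ref{lem_sequence} and Corollary~\ref{cor_compl} to manufacture $\tau',\Xi$- (resp.\ $\tau_2,\Xi$-) sequences before applying the inductive hypothesis. Your argument instead observes that a single application of Lemma~\ref{lem_compl} with $Z=\Xi$ and $\iota=\bigcap_n\xi_n$ suffices, provided one knows that (i) completeness immediately yields semi-completeness with $\Xi,\bigcap_n\xi_n$, and (ii) $f^\nu$ acts as the identity on constant sequences. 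This is cleaner and avoids duplicating the structural analysis on~$\tau$. The price is Fact~(ii), whose coinductive verification is routine but does require the ``up-to context'' idiom to deal with the intermediate application nodes in $c\,r_1\ldots r_m$ versus $c\,t^1\ldots t^m$; your sketch is correct in spirit but would benefit from making that explicit. Note also that your phrase ``we may assume $\Xi=\Xi^\nu$'' is slightly imprecise: the unwinding of the heredity derivation may terminate at a step where $\tau_A\neq A$ rather than at $\Xi^\nu$, in which case one lands in one of the other cases of Definition~\ref{def_t_infty}; since you handle all of those, the argument still goes through.
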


\begin{proof}
  Induction on~$\tau$. The proof is similar to the proof the auxiliary
  Lemma~\ref{lem_val_intersection} in Appendix~\ref{app_proofs}. We
  treat three cases that differ more substantially.
  \begin{itemize}
  \item If $\tau=d_\mu^s(\vec{\alpha})$ then
    $\bigcap_{n\in\Nbb}\valuation{\tau}{\xi_n,v}{} =
    \bigcap_{n\in\Nbb}\valuation{\mu}{\xi_n,v}{v(s)}$ where
    $\mu=d_\mu(\vec{\alpha})$. By induction on~$\varkappa$ we show
    $\bigcap_{n\in\Nbb}\valuation{\mu}{\xi_n,v}{\varkappa} \subseteq
    \valuation{\mu}{\bigcap_{n\in\Nbb}\xi_n,v}{\varkappa}$. There are
    three cases.
    \begin{enumerate}
    \item $\varkappa=0$. Then
      $\bigcap_{n\in\Nbb}\valuation{\mu}{\xi_n,v}{\varkappa}
      = \bigcap_{n\in\Nbb}\emptyset = \emptyset =
      \valuation{\mu}{\bigcap_{n\in\Nbb}\xi_n,v}{\varkappa}$.
    \item $\varkappa=\varkappa'+1$. Let $t \in
      \bigcap_{n\in\Nbb}\valuation{\mu}{\xi_n,v}{\varkappa}$. Then $t
      = c u_1 \ldots u_k$ with $u_i \in
      \bigcap_{n\in\Nbb}\valuation{\sigma_i}{\xi_n',v}{}$ where~$A$ is
      the recursive type variable of~$d_\mu$ and $c \in \Constr(d_\mu)$ and
      $\ArgTypes(c) = (\sigma_1,\ldots,\sigma_k)$ and $B_1,\ldots,B_l$ are the
      parameter type variables of~$d_\mu$ and $\Xi'=\Xi\rval{\Tc}$ and
      $\Xi'=\{\xi_n'\}_{n\in\Nbb}$ and $\Tc=\{\tau_A\}_{A\in V_T}$ and
      $\tau_{B_j}=\alpha_j$ and $\tau_A = \mu^i$ and $\tau_A =
      A'$ for $A' \notin \{A,B_1,\ldots,B_l\}$ where $i$ is a fresh
      size variable such that $v(i) = \varkappa'$ (by
      Lemma~\ref{lem_val_idom} we may assume such a size variable
      exists). So~$\Xi'$ is also complete by
      Corollary~\ref{cor_complete_compose}. By the main inductive
      hypothesis $u_i \in \valuation{\sigma_i}{\xi,v}{}$ where $\xi =
      \bigcap_{n\in\Nbb}\xi_n'$. We have $\xi(A) =
      \bigcap_{n\in\Nbb}\valuation{\mu}{\xi_n,v}{\varkappa'} \subseteq
      \valuation{\mu}{\bigcap_{n\in\Nbb}\xi_n,v}{\varkappa'}$ by the
      inductive hypothesis. Also $(\bigcap_{n\in\Nbb}\xi_n')(B_j) =
      \bigcap_{n\in\Nbb}\xi_n'(B_j) =
      \bigcap_{n\in\Nbb}\valuation{\alpha_j}{\xi_n',v}{} \subseteq
      \valuation{\alpha_j}{\bigcap_{n\in\Nbb}\xi_n',v}{} =
      \valuation{\alpha_j}{\bigcap_{n\in\Nbb}\xi_n,v}{}$ by the
      inductive hypothesis and Lemma~\ref{lem_val_dom}, because we may
      assume $B_1,\ldots,B_l\notin \FV(\alpha_j)$. Hence
      \[
      \begin{array}{rcll}
      \xi &=&
      (\bigcap_{n\in\Nbb}\xi_n)[&\valuation{\mu}{\bigcap_{n\in\Nbb}\xi_n,v}{\varkappa'}/A,\\
      &&&\valuation{\alpha_1}{\bigcap_{n\in\Nbb}\xi_n,v}{}/B_1,\\
      &&&\ldots,\\
      &&&\valuation{\alpha_l}{\bigcap_{n\in\Nbb}\xi_n,v}{}/B_l].
      \end{array}
      \]
      Therefore
      $t \in
      \valuation{\mu}{\bigcap_{n\in\Nbb}\xi_n,v}{\varkappa}$.
    \item $\varkappa$ is a limit ordinal. Let
      $t\in\bigcap_{n\in\Nbb}\valuation{\mu}{\xi_n,v}{\varkappa}=
      \bigcap_{n\in\Nbb}\bigcup_{\varkappa_n<\varkappa}\valuation{\mu}{\xi_n,v}{\varkappa_n}$. Then
      for each $n\in\Nbb$ there is $\varkappa_n<\varkappa$ with
      $t\in\valuation{\mu}{\xi_n,v}{\varkappa_n}$, i.e., $t \in
      \bigcap_{n\in\Nbb}\valuation{\mu}{\xi_n,v}{\varkappa_n}$. We
      have $\varkappa_n>0$ is a successor ordinal for $n\in\Nbb$,
      because $\valuation{\mu}{\xi_n,v}{0} = \emptyset$. Because~$\Xi$
      is stable by Lemma~\ref{lem_hereditary_stable}, using
      Lemma~\ref{lem_mu_varkappa} we conclude $t \in
      \bigcap_{n\in\Nbb}\valuation{\mu}{\xi_n,v}{\varkappa_0}$. Then
      $t \in \valuation{\mu}{\bigcap_{n\in\Nbb}\xi_n,v}{\varkappa_0}
      \subseteq \valuation{\mu}{\bigcap_{n\in\Nbb}\xi_n,v}{\varkappa}$
      by an argument as in the previous point.
    \end{enumerate}
  \item If $\tau=\forall i . \tau'$ then let
    $t \in \bigcap_{n\in\Nbb}\valuation{\tau}{\xi_n,v}{}$. Let
    $\varkappa \in \Omega$. For $n \in \Nbb$ there is~$t_n$ with
    $t \infred t_n \in \valuation{\tau'}{\xi_n,v[\varkappa/i]}{}$. By
    Lemma~\ref{lem_hereditary_stable} and Lemma~\ref{lem_sequence}
    there exists a sequence of terms $\{t_n'\}_{n\in\Nbb}$ such that
    $t \infred t_0'$ and
    $t_n' \in \valuation{\tau'}{\xi_n,v[\varkappa/i]}{}$ and
    $t_n' \infred t_{n+1}'$ for $n \in \Nbb$. Thus
    $\{t_n'\}_{n\in\Nbb}$ is a $\tau',\Xi$-sequence (with
    $v[\varkappa/i]$). Because~$\Xi$ is complete, by
    Corollary~\ref{cor_compl} there is~$t^\varkappa$ with
    $t \infred t^\varkappa \in
    \bigcap_{n\in\Nbb}\valuation{\tau'}{\xi_n,v[\varkappa/i]}{}$. By
    the inductive hypothesis
    $t^\varkappa \in
    \valuation{\tau'}{\bigcap_{n\in\Nbb}\xi_n,v[\varkappa/i]}{}$. Since
    $\varkappa\in\Omega$ was arbitrary, this implies
    $t \in \valuation{\tau}{\bigcap_{n\in\Nbb}\xi_n,v}{}$.
  \item If $\tau=\tau_1\to\tau_2$ then let
    $t \in \bigcap_{n\in\Nbb}\valuation{\tau_1 \to \tau_2}{\xi_n,v}{}$
    and $w \in \valuation{\tau_1}{\bigcap_{n\in\Nbb}\xi_n,v}{}$. We
    have $w \in \bigcap_{n\in\Nbb}\valuation{\tau_1}{\xi_n,v}{}$ by
    Lemma~\ref{lem_val_subset}. Hence there exists a sequence of terms
    $\{w_n\}_{n\in\Nbb}$ with
    $t w \infred w_n \in \valuation{\tau_2}{\xi_n,v}{}$. By
    Lemma~\ref{lem_hereditary_stable} and Lemma~\ref{lem_sequence}
    there exists a sequence of terms $\{w_n'\}_{n\in\Nbb}$ such that
    $t w \infred w_0'$ and $w_n' \in \valuation{\tau_2}{\xi_n,v}{}$
    and $w_n' \infred w_{n+1}'$ for $n \in \Nbb$. Thus
    $\{w_n'\}_{n\in\Nbb}$ is a $\tau_2,\Xi$-sequence. Because~$\Xi$ is
    complete, by Corollary~\ref{cor_compl} we have
    $w_\infty' \in
    \bigcap_{n\in\Nbb}\valuation{\tau_2}{\xi_n,v}{}$. By the inductive
    hypothesis
    $w_\infty' \in
    \valuation{\tau_2}{\bigcap_{n\in\Nbb}\xi_n,v}{}$. By
    Lemma~\ref{lem_infred_f_nu} and Lemma~\ref{lem_infred_concat} we
    also have $t w \infred w_\infty'$. This shows
    $t \in \valuation{\tau}{\bigcap_{n\in\Nbb}\xi_n,v}{}$.\qedhere
  \end{itemize}
\end{proof}

The following lemma shows that for a coinductive type~$\nu$ we
have~$\valuation{\nu}{v}{\omega} =
\valuation{\nu}{v}{\infty}$. Because we allow only strictly
positive coinductive types, $\omega$ iterations suffice to reach the
fixpoint. A similar result was already obtained in
e.g.~\cite{Abel2003}.

\begin{lem}\label{lem_nu_fixpoint}
  $\valuation{\nu}{v}{\omega} = \valuation{\nu}{v}{\infty}$.
\end{lem}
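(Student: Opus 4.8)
The inclusion $\valuation{\nu}{v}{\infty} \subseteq \valuation{\nu}{v}{\omega}$ is immediate, since $\valuation{d_\nu}{\xi,v}{\varkappa_1} \subseteq \valuation{d_\nu}{\xi,v}{\varkappa_2}$ whenever $\varkappa_2 \le \varkappa_1$ (a consequence of Lemma~\ref{lem_val_subset}), and $\omega \le \infty$. So the whole content is the converse inclusion $\valuation{\nu}{v}{\omega} \subseteq \valuation{\nu}{v}{\infty}$. Recall $\valuation{\nu}{v}{\omega} = \bigcap_{n\in\Nbb}\valuation{\nu}{v}{n} = \bigcap_{n\in\Nbb}\valuation{d_\nu}{\xi,v}{n}$, where $\xi(B_j) = \valuation{\alpha_j}{v}{}$ for $\nu = d_\nu(\vec\alpha)$, and $\valuation{\nu}{v}{\infty} = \Phi_{d_\nu,\xi,v}(\valuation{\nu}{v}{\infty})$ is the greatest fixpoint. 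Since the greatest fixpoint is the union of all $\Phi_{d_\nu,\xi,v}$-postfixed points, it suffices to show that $X := \valuation{\nu}{v}{\omega}$ satisfies $X \subseteq \Phi_{d_\nu,\xi,v}(X)$, i.e., $X$ is $\Phi_{d_\nu,\xi,v}$-consistent.

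To prove $X \subseteq \Phi_{d_\nu,\xi,v}(X)$, take $t \in X$, so $t \in \valuation{\nu}{v}{n+1}$ for every $n$. For each $n$, $\valuation{\nu}{v}{n+1} = \Phi_{d_\nu,\xi,v}(\valuation{\nu}{v}{n})$, so $t = c\, t_n^1 \ldots t_n^m$ with $t_n^k \in \valuation{\sigma_k}{\xi[\valuation{\nu}{v}{n}/A],v}{}$ (the constructor~$c$ and arity~$m$ are forced to be the same for all~$n$ because the head term~$t$ is fixed; hence also $t_n^k = t^k$ is independent of~$n$). Writing $\Tc = \{\tau_{A'}\}_{A'}$ with $\tau_{B_j} = \alpha_j$, $\tau_A = A$, and $\tau_{A'} = A'$ otherwise, and setting $\Xi' = \Xi^\nu\rval{\Tc} = \{\xi_n'\}_{n\in\Nbb}$, one checks (as in Lemma~\ref{lem_nu_form}, via Lemma~\ref{lem_val_dom}) that $\xi_n'$ agrees with $\xi[\valuation{\nu}{v}{n}/A]$ on $\{A, B_1,\ldots,B_l\}$, so $t^k \in \valuation{\sigma_k}{\xi_n',v}{}$ for all~$n$. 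Thus $t^k \in \bigcap_{n\in\Nbb}\valuation{\sigma_k}{\xi_n',v}{}$. Now $\Xi' = \Xi^\nu\rval{\Tc}$ is $\nu$-hereditary, hence complete by Corollary~\ref{cor_nu_hereditary_complete}, so by Lemma~\ref{lem_intersection} we get $t^k \in \valuation{\sigma_k}{\bigcap_{n\in\Nbb}\xi_n',v}{}$. It remains to identify $\bigcap_{n\in\Nbb}\xi_n'$: on the recursive variable~$A$ it is $\bigcap_{n\in\Nbb}\valuation{\nu}{v}{n} = X$, and on each parameter~$B_j$ it is $\bigcap_{n\in\Nbb}\valuation{\alpha_j}{v}{} = \valuation{\alpha_j}{v}{}$ (constant in~$n$). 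Hence $\bigcap_{n\in\Nbb}\xi_n' = \xi[X/A]$, giving $t^k \in \valuation{\sigma_k}{\xi[X/A],v}{}$ for each~$k$, so $t = c\,t^1\ldots t^m \in \Phi_{d_\nu,\xi,v}(X)$, as required.

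The main obstacle is the invocation of Lemma~\ref{lem_intersection}, which is exactly the place where the bookkeeping about $\nu$-hereditary families and completeness pays off: without it one could not commute the intersection over~$n$ past the valuation of the constructor argument types~$\sigma_k$, since those types may themselves contain nested (co)inductive types. The rest — matching up the type variable valuations $\xi_n'$ with $\xi[\valuation{\nu}{v}{n}/A]$, and computing $\bigcap_n \xi_n'$ — is routine bookkeeping of the kind already carried out in Lemma~\ref{lem_nu_form}. One should double-check that the fresh-variable and stability hypotheses needed by Lemma~\ref{lem_intersection} and Corollary~\ref{cor_nu_hereditary_complete} are met, which they are: $\Xi'$ is $\nu$-hereditary by construction, and $\nu$-hereditary families are stable (Lemma~\ref{lem_hereditary_stable}) and complete (Corollary~\ref{cor_nu_hereditary_complete}).
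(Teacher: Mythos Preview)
Your proof is correct and follows essentially the same approach as the paper: both show that $X = \valuation{\nu}{v}{\omega}$ is a post-fixpoint of $\Phi_{d_\nu,\xi,v}$ (equivalently, $\valuation{\nu}{v}{\omega} \subseteq \valuation{\nu}{v}{\omega+1}$) by unfolding $t \in X$ as $c\,u_1\ldots u_k$, setting up the $\nu$-hereditary family $\Xi' = \Xi^\nu\rval{\Tc}$, invoking Corollary~\ref{cor_nu_hereditary_complete} to get completeness, and then applying Lemma~\ref{lem_intersection} to push the countable intersection through $\valuation{\sigma_i}{-,v}{}$. The only cosmetic difference is that you phrase the conclusion via the Knaster--Tarski characterization of the greatest fixpoint, whereas the paper phrases it as stabilization of the approximation chain at~$\omega$.
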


\begin{proof}
  It suffices to show
  $\valuation{\nu}{v}{\omega}\subseteq\valuation{\nu}{v}{\omega+1}$. So
  let $t \in \valuation{\nu}{v}{\omega}$. Then
  $t\in\valuation{\nu}{v}{m}$ for each $m\in\Nbb$. So
  $t = c u_1 \ldots u_k$ where
  $u_i \in \valuation{\sigma_i}{\xi_m',v}{}$ for $m \in \Nbb$ where
  $\Xi'=\{\xi_m'\}_{m\in\Nbb}$ and $\Xi'=\Xi^\nu\rval{\Tc}$ and
  $\Tc=\{\tau_A\}_{A \in V_T}$ and
  $\tau_{B_j} = \valuation{\alpha_j}{v}{}$ and $\tau_A = A$ for
  $A \notin \{B_1,\ldots,B_l\}$ and $\nu=d_\nu(\vec{\alpha})$ and
  $B_1,\ldots,B_l$ are the parameter type variables of~$d_\nu$. Note
  that~$\Xi'$ is complete by
  Corollary~\ref{cor_nu_hereditary_complete}. Hence by
  Lemma~\ref{lem_intersection} we have
  $u_i \in \valuation{\sigma_i}{\bigcap_{m\in\Nbb}\xi_m',v}{}$. Let
  $\xi'=\bigcap_{m\in\Nbb}\xi_m'$. We have
  $\xi'(A) = \bigcap_{m\in\Nbb}\xi_m'(A) =
  \bigcap_{m\in\Nbb}\xi_m^\nu(A) =
  \bigcap_{m\in\Nbb}\valuation{\nu}{v}{m} =
  \valuation{\nu}{v}{\omega}$ where $A$ is the recursive type variable
  of~$d_\nu$, and $\xi'(B_j) = \valuation{\alpha_j}{v}{}$. Therefore
  $t \in \valuation{\nu}{v}{\omega+1}$.
\end{proof}

Finally, we prove the approximation theorem. Lemma~\ref{lem_sequence},
Lemma~\ref{lem_infred_f_nu}, Corollary~\ref{cor_complete} and
Lemma~\ref{lem_nu_fixpoint} are used in the proof.

\begin{thm}[Approximation Theorem]\label{thm_approx}
  If $t \infred t_n \in \valuation{\nu}{v}{n}$ for $n \in \Nbb$ then
  there exists $t_\infty \in \valuation{\nu}{v}{\infty}$ such that
  $t \infred t_\infty$.
\end{thm}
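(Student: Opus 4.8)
The plan is to reformulate the hypothesis as a statement about an $A,\Xi^\nu$-sequence and then to feed it to the machinery built above: the corecursively defined limit $f^\nu$, Lemma~\ref{lem_infred_f_nu} for the reduction, and Corollary~\ref{cor_complete} for membership.

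First I would note that, by definition of $\Xi^\nu=\{\xi_n^\nu\}_{n\in\Nbb}$, we have $\valuation{\nu}{v}{n}=\valuation{A}{\xi_n^\nu,v}{}$ for every $n$, so the hypothesis says $t\infred t_n\in\valuation{A}{\xi_n^\nu,v}{}$ for all $n\in\Nbb$. Since $\Xi^\nu$ is $\nu$-hereditary it is stable by Lemma~\ref{lem_hereditary_stable}, hence each $\xi_n^\nu$ is stable, so Lemma~\ref{lem_sequence} (taking $\tau_n=A$ and $\xi_n=\xi_n^\nu$) yields a sequence $\{t_n'\}_{n\in\Nbb}$ with $t\infred t_0'$, with $t_n'\in\valuation{A}{\xi_n^\nu,v}{}$, and with $t_n'\infred t_{n+1}'$ for every $n$. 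Thus $\{t_n'\}_{n\in\Nbb}$ is an $A,\Xi^\nu$-sequence. This preliminary step is what makes the rest go through: the hypothesis only provides a ``fan'' of reducts of the single term $t$, whereas the results about $f^\nu$ are stated for sequences in which consecutive terms are related by $\infred$.

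Next I would set $t_\infty=f^\nu(A,\Xi^\nu,\{t_n'\}_{n\in\Nbb})$. By Lemma~\ref{lem_infred_f_nu} we get $t_0'\infred t_\infty$, and hence $t\infred t_0'\infred t_\infty$, so $t\infred t_\infty$ by Lemma~\ref{lem_infred_concat}. By Corollary~\ref{cor_complete} we have $t_\infty\in\bigcap_{n\in\Nbb}\valuation{A}{\xi_n^\nu,v}{}=\bigcap_{n\in\Nbb}\valuation{\nu}{v}{n}=\valuation{\nu}{v}{\omega}$, the last equality holding because $\omega$ is a limit ordinal and the stage-$\omega$ approximation of a coinductive definition is the intersection of the finite ones. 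Finally, Lemma~\ref{lem_nu_fixpoint} gives $\valuation{\nu}{v}{\omega}=\valuation{\nu}{v}{\infty}$, so $t_\infty\in\valuation{\nu}{v}{\infty}$, and the argument is complete.

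At the level of this theorem there is essentially no obstacle left: all the real difficulty has been absorbed into the earlier results, principally Corollary~\ref{cor_complete} (which rests on Lemma~\ref{lem_compl}, delegated to the appendix) and the stabilization Lemma~\ref{lem_nu_fixpoint}. The one point requiring care is the order of operations --- one must apply Lemma~\ref{lem_sequence} to turn the fan of reducts into an honest $\infred$-chain before the $f^\nu$-based lemmas become applicable --- and then it is just bookkeeping to check that the constructed limit lands in $\valuation{\nu}{v}{\omega}$, hence in $\valuation{\nu}{v}{\infty}$.
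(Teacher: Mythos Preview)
Your proof is correct and follows essentially the same route as the paper's own proof: apply Lemma~\ref{lem_sequence} to straighten the fan of reducts into an $A,\Xi^\nu$-sequence, set $t_\infty=f^\nu(A,\Xi^\nu,\{t_n'\}_{n\in\Nbb})$, use Lemma~\ref{lem_infred_f_nu} and Lemma~\ref{lem_infred_concat} for $t\infred t_\infty$, then Corollary~\ref{cor_complete} and Lemma~\ref{lem_nu_fixpoint} for $t_\infty\in\valuation{\nu}{v}{\infty}$. Your write-up is even a touch more explicit than the paper's in justifying the stability hypothesis of Lemma~\ref{lem_sequence} via Lemma~\ref{lem_hereditary_stable}.
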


\begin{proof}
  By Lemma~\ref{lem_sequence} there exists a sequence of terms
  $\{t_n\}_{n\in\Nbb}$ such that $t \infred t_0$ and
  $t_n \in \valuation{\nu}{v}{n}$ and $t_n \infred t_{n+1}$ for
  $n \in \Nbb$. Hence $\{t_n\}_{n\in\Nbb}$ is a
  $A,\Xi^\nu_v$-sequence. By Lemma~\ref{lem_infred_f_nu} we have
  $t_0 \infred t_\infty$, and hence $t \infred t_\infty$ by
  Lemma~\ref{lem_infred_concat}. By Corollary~\ref{cor_complete} we
  have
  $t_\infty \in \bigcap_{n\in\Nbb}\valuation{A}{\xi_n^\nu,v}{} =
  \bigcap_{n\in\Nbb}\xi_n^\nu(A) =
  \bigcap_{n\in\Nbb}\valuation{\nu}{v}{m} =
  \valuation{\nu}{v}{\omega}$. Also
  $\valuation{\nu}{v}{\omega} = \valuation{\nu}{v}{\infty}$ by
  Lemma~\ref{lem_nu_fixpoint}, so
  $t_\infty \in \valuation{\nu}{v}{\infty}$.
\end{proof}

We now precisely formulate the result about approximations of infinite
objects informally described in the introduction: if for every
approximation~$u_n$ of size~$n$ of an infinite object~$u$ the
application~$t u_n$ reduces to an approximation of an infinite object
of the right type, with the result approximations getting larger
as~$n$ gets larger, then there is a reduction starting from~$t u$
which ``in the limit'' produces an infinite object of the right
type. We show that this follows from the approximation theorem.

First, we show that a weak version of this is a direct consequence of
Theorem~\ref{thm_approx}.

\begin{prop}\label{prop_approx_2}
  Let $t \in \Tb^\infty$ and let $f : \Nbb \to \Nbb$ be such that
  $\lim_{n\to\infty}f(n) = \infty$. Assume that for every $n \in \Nbb$
  and every $u_n \in \valuation{\nu_1}{}{n}$ there is~$w_n$ with
  $t u_n \infred w_n \in \valuation{\nu_2}{}{f(n)}$. Then
  $t \in \valuation{\nu_1 \to \nu_2}{}{}$, i.e., for every
  $u \in \valuation{\nu_1}{}{}$ there is~$w$ with
  $t u \infred w \in \valuation{\nu_2}{}{}$.
\end{prop}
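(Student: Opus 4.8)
The plan is to reduce this directly to the Approximation Theorem (Theorem~\ref{thm_approx}), applied to the term $t u$ and the coinductive type $\nu_2$. The crucial observation is that no genuine finite approximations of $u$ are needed here: since $\nu_1$ is coinductive, $u$ itself already lies in every $\valuation{\nu_1}{}{n}$, so one may instantiate the hypothesis with $u_n := u$ for all $n$. The conclusion then follows from the approximation theorem together with the cofinality assumption $\lim_{n\to\infty} f(n) = \infty$.

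In more detail, I would proceed as follows. Fix $u \in \valuation{\nu_1}{}{} = \valuation{\nu_1}{}{\infty}$. First, using the monotonicity of coinductive approximations noted after Lemma~\ref{lem_val_subset} (namely $\valuation{d_\nu}{\xi,v}{\varkappa_1} \subseteq \valuation{d_\nu}{\xi,v}{\varkappa_2}$ whenever $\varkappa_2 \le \varkappa_1$), I obtain $u \in \valuation{\nu_1}{}{n}$ for every $n \in \Nbb$. Applying the hypothesis with $u_n := u$ then yields, for each $n$, a term $w_n$ with $t u \infred w_n \in \valuation{\nu_2}{}{f(n)}$. Next, given $m \in \Nbb$, the assumption $\lim_{n\to\infty} f(n) = \infty$ provides an index $n_m$ with $f(n_m) \ge m$; by the same monotonicity fact applied to $\nu_2$ we get $\valuation{\nu_2}{}{f(n_m)} \subseteq \valuation{\nu_2}{}{m}$, hence $t u \infred w_{n_m} \in \valuation{\nu_2}{}{m}$. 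Thus for every $m \in \Nbb$ there is a term, call it $w_m'$, with $t u \infred w_m' \in \valuation{\nu_2}{}{m}$. Finally, Theorem~\ref{thm_approx} applied with starting term $t u$ and coinductive type $\nu_2$ gives some $w \in \valuation{\nu_2}{}{\infty} = \valuation{\nu_2}{}{}$ with $t u \infred w$, which is exactly the membership $t \in \valuation{\nu_1 \to \nu_2}{}{}$ we wanted (unfolding the interpretation of the arrow type).

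Since the Approximation Theorem is already available, I do not expect a real obstacle: the argument is bookkeeping with the two monotonicity facts for coinductive approximations plus the cofinality of $f$. The one point worth stressing is why this is the \emph{weak} version promised in the introduction: a genuinely stronger statement would allow only finite approximations $u_n$ of $u$ as inputs, whereas here we sidestep this by feeding $u$ itself, which is legitimate precisely because $u \in \valuation{\nu_1}{}{\infty} \subseteq \valuation{\nu_1}{}{n}$. Establishing productivity from genuine finite approximations would require an additional argument relating the reductions from $t u$ to those from the $t u_n$.
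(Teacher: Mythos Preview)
Your proposal is correct and follows essentially the same approach as the paper's proof: both use $u$ itself as every approximant (via $\valuation{\nu_1}{}{\infty} \subseteq \valuation{\nu_1}{}{n}$), then use the cofinality of $f$ to arrange $t u \infred w_m' \in \valuation{\nu_2}{}{m}$ for every $m$, and finally invoke Theorem~\ref{thm_approx}. The only cosmetic difference is that the paper extracts a strictly increasing subsequence $\{f(n_k)\}$ with $f(n_k)\ge k$, whereas you directly pick for each $m$ some $n_m$ with $f(n_m)\ge m$; these amount to the same thing.
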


\begin{proof}
  Let $u \in \valuation{\nu_1}{}{}$ Because
  $\valuation{\nu_1}{}{} = \valuation{\nu_1}{}{\infty} \subseteq
  \valuation{\nu_1}{}{n}$, for each $n \in \Nbb$ there is~$w_n$ with
  $t u \infred w_n \in \valuation{\nu_2}{}{f(n)}$. Because
  $\lim_{n\to\infty}f(n) = \infty$, we may choose a strictly
  increasing subsequence $\{f(n_k)\}_{k\in\Nbb}$ from the sequence
  $\{f(n)\}_{n\in\Nbb}$. Then $f(n_k) \ge k$ for $k \in \Nbb$. Hence
  $\valuation{\nu_2}{}{f(n_k)} \subseteq \valuation{\nu_2}{}{k}$. This
  implies that for each $k \in \Nbb$ there is~$w_{n_k}$ with
  $t u \infred w_{n_k} \in \valuation{\nu_2}{}{k}$. Now by
  Theorem~\ref{thm_approx} there is~$w$ with
  $t u \infred w \in \valuation{\nu_2}{}{\infty}$.
\end{proof}

The above result is, however, a bit unsatisfying in that the valuation
approximations~$\valuation{\nu_1}{}{n}$ contain too many terms, i.e.,
they contain all terms which nest at least~$n$ constructors of the
coinductive type~$\nu_1$. In particular, the infinite object~$u$ is an
approximation of itself, on which the above proof relies. It would be
closer to informal intuition to weaken the hypothesis in
Proposition~\ref{prop_approx_2} by requiring the approximants of
size~$n$ to nest exactly~$n$ constructors of the approximated
coinductive type.

\begin{defi}
  Let $\bot = (\lambda x . x x) (\lambda x . x x)$. Note that~$\bot$
  is the only reduct of~$\bot$.

  For a coinductive definition~$d_\nu$ and $n \in \Nbb$ we define the
  \emph{strict valuation
    approximation}~$\valuation{d_\nu}{\bot,\xi,v}{n} \subseteq
  \Tb^\infty$ as follows:
  $\valuation{d_\nu}{\bot\xi,v}{0} = \{\bot\}$,
  $\valuation{d_\nu}{\bot,\xi,v}{n+1} =
  \Phi_{d_\nu,\xi,v}(\valuation{d_\nu}{\bot,\xi,v}{n})$. We set
  $\valuation{\nu}{\bot,\xi,v}{n} =
  \valuation{d}{\bot,\xi[\vec{Y}/\vec{B}],v}{n}$ where
  $\nu = d_\nu(\vec{\alpha})$ is a coinductive type,
  $Y_j = \valuation{\alpha_j}{\xi,v}{}$, and~$\vec{B}$ are the
  parameter type variables of~$d_\nu$.

  The relation~$\succ$ is defined coinductively.
  \[
    \begin{array}{c}
      \infer={t \succ \bot}{} \quad\quad \infer={x \succ x}{}
      \quad\quad \infer={c \succ c}{} \\ \\
      \infer={\lambda x . t \succ \lambda x . t'}{t \succ t'} \quad\quad \infer={t_1 t_2 \succ t_1'
      t_2'}{t_1 \succ t_1' & t_2 \succ t_2'} \\ \\
      \infer={\case(t; \{c_k\vec{x}\To t_k\}) \succ \case(t';
      \{c_k\vec{x}\To t_k'\})}{t \succ t' & t_k \succ t_k'}
    \end{array}
  \]
  In other words, $t \succ t'$ if~$t'$ is~$t$ with some subterms
  replaced by~$\bot$. If $t \succ t'$, $t \in \valuation{\nu}{}{}$ and
  $t' \in \valuation{\nu}{\bot}{n}$ then~$t'$ is an \emph{approximant}
  of~$t$ of size~$n$.
\end{defi}

\begin{lem}\label{lem_succ_red}
  If $t \succ t' \to u'$ then there is~$u$ with
  $t \to^\equiv u \succ u'$.
\end{lem}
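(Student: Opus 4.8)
The plan is to \emph{lift} the reduction step along $\succ$: from $t \succ t'$ and the $\beta\iota$-step $t' \to u'$ produce from $t$ either a matching step or no step at all. I would first record two easy facts about $\succ$. It is a congruence, i.e.\ closed under all the term constructors, which is immediate from the defining rules. And it is preserved by substitution: if $t_1 \succ t_1'$ and $t_2 \succ t_2'$ then $t_1[t_2/x] \succ t_1'[t_2'/x]$. Since $\succ$ is defined coinductively, this last fact is proved by coinduction with a case analysis on the last rule used for $t_1 \succ t_1'$, exactly as in Lemma~\ref{lem_infred_subst}; the only extra case is the rule $t_1 \succ \bot$, where $x$ does not occur in $\bot$, so $t_1'[t_2'/x] = \bot$ and the rule $t_1[t_2/x] \succ \bot$ applies directly.

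The main argument then proceeds by induction on the depth of the redex contracted in $t' \to u'$. Suppose first this redex lies strictly below the root of $t'$. Then $t' \ne \bot$, since $\bot$ has no redex other than at its own root; inspecting the rules for $\succ$, the term $t$ consequently has the same head symbol as $t'$, with immediate subterms related by $\succ$ to those of $t'$. The contraction takes place inside one immediate subterm $s'$ of $t'$, with contractum $\hat s'$; writing $s$ for the corresponding subterm of $t$ (so $s \succ s'$), the induction hypothesis gives $w$ with $s \to^\equiv w \succ \hat s'$, and replacing $s$ by $w$ inside $t$ yields the required $u$, using the congruence of $\succ$ for the untouched siblings.

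Now suppose the redex is at the root of $t'$. If $t' = \bot$ then $u' = \bot = t'$ (because $\bot \to \bot$), so $u = t$ works with a zero-step reduction; this is exactly why the conclusion is stated with $\to^\equiv$. Otherwise $t' = (\lambda x.s)\,r$ or $t' = \case(c_k\,\vec v; \{c_l\,\vec x \To s_l\})$. In the $\beta$-case, $t' \ne \bot$ forces $t = M\,N$ with $M \succ \lambda x.s$ and $N \succ r$, and since the only $\succ$-rule producing a $\lambda$-term on the right is the abstraction rule, $M = \lambda x.s_0$ with $s_0 \succ s$; hence $t$ is itself a $\beta$-redex and $t \to s_0[N/x] \succ s[r/x] = u'$ by the substitution lemma. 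In the $\iota$-case, repeatedly peeling off the applications in $c_k\,\vec v$ — using that a constructor head and an application on the right of $\succ$ can only come from the rules $c \succ c$ and the application rule, together with $c_k\,\vec v \ne \bot$ — gives $t = \case(c_k\,\vec u; \{c_l\,\vec x \To S_l\})$ with $u_i \succ v_i$ and $S_l \succ s_l$, so $t \to S_k[\vec u/\vec x] \succ s_k[\vec v/\vec x] = u'$, again by the substitution lemma.

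Most of this is routine bookkeeping. The one genuinely load-bearing point is that $\bot$ is itself a $\beta$-redex with $\bot \to \bot$: a contraction inside the pruned term $t'$ may therefore correspond to no reduction at all in $t$, which is precisely why the statement must use $\to^\equiv$ rather than $\to$. The substitution lemma for $\succ$ is the only place requiring a coinductive argument, and it is a straightforward one.
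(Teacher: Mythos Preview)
Your proof is correct and follows essentially the same approach as the paper, which records only ``Induction on $t' \to u'$.'' You have spelled out the details the paper leaves implicit: the substitution lemma for $\succ$, the case analysis at the root (including the crucial observation that $\bot \to \bot$ may force a zero-step lift, explaining $\to^\equiv$), and the inversion of $\succ$ when the right-hand side is not $\bot$.
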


\begin{proof}
  Induction on $t' \to u'$.
\end{proof}

\begin{lem}\label{lem_succ_infty}
  If $t \succ t' \infred u'$ then there is~$u$ with
  $t \infred u \succ u'$.
\end{lem}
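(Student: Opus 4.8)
The plan is to prove the statement by coinduction on $u' \infred u'$ — that is, I would analyze the coinductive derivation witnessing $t' \infred u'$ and, at each stage, use the finitary approximation lemma (Lemma~\ref{lem_succ_red}) together with the hypothesis $t \succ t'$ to build a $\to^{2\infty}$-reduction from $t$ to a term $u$ with $u \succ u'$; the desired $t \infred u$ then follows by Lemma~\ref{lem_infred_two}. More precisely, recall that $t' \infred u'$ means there is a finitary reduction $t' \reduces s'$ followed by a recursive coinductive step matching the outermost shape of $u'$. I would first propagate this finitary part: by repeated application of Lemma~\ref{lem_succ_red} to the individual steps of $t' \reduces s'$, I obtain $s$ with $t \to^\equiv s$ (i.e.\ a finitary, possibly empty reduction) and $s \succ s'$.

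The case analysis then proceeds on the shape of $u'$ and correspondingly on the last rule of the $\succ$-derivation $s \succ s'$. If $u' = \bot$, then since $\bot$'s only reduct is $\bot$, we may take $u = \bot$: we have $t \reduces s$ and, using that $s \succ \bot$ follows from the $t \succ \bot$ axiom applied at the root regardless of $s$, actually the cleanest choice is simply $u' = \bot \succ \bot$ and $t \infred \bot$ is immediate only if $t \infred \bot$; here I should instead observe that whenever the $\succ$-derivation puts $\bot$ at the root we are free to set $u = \bot$ since $t \to^{2\infty} \bot$ is witnessed trivially (take the empty outer reduction and $u = \bot$), and $\bot \succ \bot$. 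Otherwise $s'$ has a genuine term constructor at the root and, because $\succ$ only replaces subterms by $\bot$ (never changes a non-$\bot$ root), $s$ has the same root constructor, with immediate subterms related by $\succ$. For each such subterm pair $s_j \succ s'_j$ and each coinductive premise $s'_j \infred u'_j$ coming from the derivation of $s' \infred u'$ (after the finitary part, $s' \infred u'$; note $t' \reduces s'$ so $s' \infred u'$ by confluence/transitivity via Lemma~\ref{lem_infred_concat}, or more directly $s'$ is the target of the finitary part inside the derivation of $t' \infred u'$), I apply the coinductive hypothesis to get $u_j$ with $s_j \infred u_j \succ u'_j$, and then reassemble $u$ from the $u_j$ under the common root constructor. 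This gives $t \to^\equiv s = C[\vec{s_j}] \to^{2\infty} C[\vec{u_j}] = u$ and $u \succ u'$ by the corresponding $\succ$-rule, so $t \to^{2\infty} u$; Lemma~\ref{lem_infred_two} upgrades this to $t \infred u$.

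The main obstacle is handling the interplay between the finitary reduction $t' \reduces s'$ hidden inside the coinductive derivation of $t' \infred u'$ and the relation $\succ$: Lemma~\ref{lem_succ_red} only delivers $t \to^\equiv u \succ u'$ for a \emph{single} step, so I need to iterate it along $t' \reduces s'$ and check that the resulting finitary reduction from $t$ composes correctly with the subsequent $\to^{2\infty}$-steps (in particular that $\to^\equiv$ followed by $\to^{2\infty}$ is still $\to^{2\infty}$, which is routine but must be stated). A secondary subtlety is the $\bot$-at-the-root case: one must be careful that choosing $u = \bot$ is always legitimate, which it is precisely because $t \to^{2\infty} \bot$ can be taken as the trivial derivation (empty outer reduction with $u = \bot$) and the $\succ$-axiom $t \succ \bot$ matches. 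Beyond these bookkeeping points the argument is a straightforward coinductive lifting of Lemma~\ref{lem_succ_red} through $\infred$, exactly parallel to how Lemma~\ref{lem_infred_append} is lifted to Lemma~\ref{lem_infred_concat}.
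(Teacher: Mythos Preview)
Your overall approach—analyze the coinductive derivation of $t' \infred u'$, lift the finitary prefix $t' \reduces s'$ along $\succ$ via iterated Lemma~\ref{lem_succ_red}, then recurse into subterms—is correct and is what the paper does. The paper phrases it more carefully by first defining the witness $u = f(t,t',u')$ \emph{corecursively} and then proving $t \infred f(t,t',u')$ and $f(t,t',u') \succ u'$ by two \emph{separate} coinductions; this sidesteps the awkwardness of ``coinduction on an existential''. Note also that the detour through $\to^{2\infty}$ is unnecessary: since the lifted prefix $t \reduces s$ is already finitary, it plugs directly into the premises of the $\infred$ rules, so one can prove $t \infred u$ by coinduction without invoking Lemma~\ref{lem_infred_two}. (Your write-up wavers between the two versions, taking the coinductive hypothesis to yield $s_j \infred u_j$ but then aiming for $t \to^{2\infty} u$; pick one and stick to it.)

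There is one genuine error, in the $\bot$ case. You write that when the $\succ$-derivation uses the root axiom (so $s' = \bot$) ``we are free to set $u = \bot$ since $t \to^{2\infty} \bot$ is witnessed trivially''. This is false: $\bot$ is the application $(\lambda x.xx)(\lambda x.xx)$, and neither $t \infred \bot$ nor $t \to^{2\infty} \bot$ holds for arbitrary $t$. The correct move is the opposite one. If $s' = \bot$ then, since $\bot$'s only $\infred$-reduct is $\bot$ itself, the component-wise premises force $u' = \bot$ as well; now take $u = s$ (or even $u = t$). You already have $t \reduces s$, hence $t \infred s$, and $s \succ \bot$ holds by the axiom $r \succ \bot$ for \emph{any} $r$. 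The freedom that the $\succ$-axiom gives you is on the left-hand side of $\succ$, not on the right-hand side of $\infred$.
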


\begin{proof}
  By coinduction, analysing $t' \infred u'$ and using
  Lemma~\ref{lem_succ_red}. More precisely, one defines an appropriate
  function
  $f : \Tb^\infty \times \Tb^\infty \times \Tb^\infty \to \Tb^\infty$
  by corecursion and shows $t \infred f(t, t', u')$ and
  $f(t, t', u') \succ u'$ by coinduction separately.
\end{proof}

A set $X \subseteq \Tb^\infty$ is \emph{approximation expansion
  closed} if $t' \in X$ and $t \succ t'$ imply $t \in X$.

\begin{lem}\label{lem_succ_val}
  Assume $\xi(A)$ is approximation expansion closed for
  every~$A$. Then $\valuation{\tau}{\xi,v}{}$ is approximation
  expansion closed.
\end{lem}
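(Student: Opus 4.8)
The plan is to prove this by induction on the type~$\tau$ (which is strictly positive, as always when a valuation is taken), in the well-founded order on types from Section~\ref{sec_types}, generalizing over~$\xi$ and~$v$, and simultaneously establishing the analogous statements for the valuation approximations~$\valuation{\rho}{\xi,v}{\varkappa}$ and for~$\Phi_{d_\rho,\xi,v}(X)$ (the latter under the extra assumption that~$X$ is approximation expansion closed), in exactly the manner of the proof of Lemma~\ref{lem_val_stable}. The tool replacing Lemma~\ref{lem_infred_sim} and confluence there will be Lemma~\ref{lem_succ_infty}: from $t \succ t' \infred u'$ it produces~$u$ with $t \infred u \succ u'$, which is precisely what is needed to transport the defining existential properties of the valuations from~$t'$ to a larger term~$t$ with $t \succ t'$. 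I will also use that~$\succ$ is reflexive (replace no subterm by~$\bot$), that $\Tb^\infty$ and~$\emptyset$ are trivially approximation expansion closed, and that arbitrary unions and intersections of approximation expansion closed sets remain approximation expansion closed.

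For $\tau = \rho^s$ with $\rho = d(\vec{\alpha})$ we have $\valuation{\tau}{\xi,v}{} = \valuation{d}{\xi_1,v}{v(s)}$ with $\xi_1 = \xi[\vec{Y}/\vec{B}]$ and $Y_j = \valuation{\alpha_j}{\xi,v}{}$; the inductive hypothesis makes each~$Y_j$, hence~$\xi_1$, approximation expansion closed. The crucial step is to show that $\Phi_{d,\xi_1,v}(X)$ is approximation expansion closed whenever~$X$ is. Suppose $t' = c\,t_1'\ldots t_n' \in \Phi_{d,\xi_1,v}(X)$, so $c \in \Constr(d)$, $\ArgTypes(c) = (\sigma_1,\ldots,\sigma_n)$ and $t_k' \in \valuation{\sigma_k}{\xi_1[X/A],v}{}$, and let $t \succ t'$. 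The only rules of~$\succ$ that can conclude a judgement with right-hand side $c\,t_1'\ldots t_n'$ are $c \succ c$ (when $n=0$) and the application rule (used $n$ times when $n \ge 1$), which in either case force $t = c\,s_1\ldots s_n$ with $s_k \succ t_k'$. Because $\xi_1[X/A]$ is approximation expansion closed and each~$\sigma_k$ is smaller than~$\tau$ in the induction order, the inductive hypothesis gives $s_k \in \valuation{\sigma_k}{\xi_1[X/A],v}{}$, so $t \in \Phi_{d,\xi_1,v}(X)$. A straightforward inner induction on~$\varkappa$ then shows $\valuation{d}{\xi_1,v}{\varkappa}$ is approximation expansion closed for all~$\varkappa$ (base case: $\Tb^\infty$ for~$d_\nu$, $\emptyset$ for~$d_\mu$; successor case: the claim just proved; limit case: closure of unions and intersections), in particular for $\varkappa = v(s)$.

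The remaining cases are short. If $\tau = A$ then $\valuation{A}{\xi,v}{} = \xi(A)$, which is approximation expansion closed by hypothesis. If $\tau = \forall i . \tau'$, take $t \succ t'$ with $t' \in \valuation{\tau}{\xi,v}{}$ and an ordinal~$\varkappa$; choosing~$t_0'$ with $t' \infred t_0' \in \valuation{\tau'}{\xi,v[\varkappa/i]}{}$, Lemma~\ref{lem_succ_infty} yields~$t_0$ with $t \infred t_0 \succ t_0'$, and since~$\xi$ is approximation expansion closed the inductive hypothesis for~$\tau'$ gives $t_0 \in \valuation{\tau'}{\xi,v[\varkappa/i]}{}$; as~$\varkappa$ was arbitrary and, without loss of generality, $i \notin \FSV(t)$, we conclude $t \in \valuation{\tau}{\xi,v}{}$. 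If $\tau = \tau_1 \to \tau_2$, take $t \succ t'$ with $t' \in \valuation{\tau}{\xi,v}{}$ and $r \in \valuation{\tau_1}{\xi,v}{}$; picking~$w'$ with $t' r \infred w' \in \valuation{\tau_2}{\xi,v}{}$, reflexivity of~$\succ$ gives $t r \succ t' r$, so Lemma~\ref{lem_succ_infty} yields~$w$ with $t r \infred w \succ w'$, and the inductive hypothesis for~$\tau_2$ gives $w \in \valuation{\tau_2}{\xi,v}{}$; hence $t \in \valuation{\tau}{\xi,v}{}$.

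The step I expect to need the most care is the shape analysis in the constructor case above, i.e.\ showing that $t \succ c\,t_1'\ldots t_n'$ forces $t = c\,s_1\ldots s_n$ with $s_k \succ t_k'$. This rests only on the elementary syntactic facts that $\bot = (\lambda x.xx)(\lambda x.xx)$ has an abstraction in head position while $c\,t_1'\ldots t_k'$ always has a constructor in head position, so neither the $\bot$-rule nor the abstraction or variable congruence rules of~$\succ$ can conclude $t \succ c\,t_1'\ldots t_n'$. Once this is in place, the rest follows the template of Lemma~\ref{lem_val_stable} and is routine.
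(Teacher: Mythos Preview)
Your proof is correct and follows the same approach as the paper, which merely says ``Induction on~$\tau$, using Lemma~\ref{lem_succ_infty} for the cases $\tau = \tau_1\to\tau_2$ and $\tau = \forall i . \tau'$.'' You have faithfully expanded this sketch, including the auxiliary closure of~$\Phi_{d,\xi_1,v}(X)$ and the shape analysis for $t \succ c\,t_1'\ldots t_n'$, in direct analogy with the proof of Lemma~\ref{lem_val_stable}.
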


\begin{proof}
  Induction on~$\tau$, using Lemma~\ref{lem_succ_infty} for the cases
  $\tau = \tau_1\to\tau_2$ and $\tau = \forall i . \tau'$.
\end{proof}

\begin{thm}\label{thm_approx_2}
  Let $t \in \Tb^\infty$ and let $f : \Nbb \to \Nbb$ be such that
  $\lim_{n\to\infty}f(n) = \infty$. Let $u \in \valuation{\nu_1}{}{}$.
  If for every $n \in \Nbb$ and every
  $u_n \in \valuation{\nu_1}{\bot}{n}$ with $u \succ u_n$ there
  is~$w_n$ with $t u_n \infred w_n \in \valuation{\nu_2}{}{f(n)}$,
  then there is~$w$ with $t u \infred w \in \valuation{\nu_2}{}{}$.
\end{thm}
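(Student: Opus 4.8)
The plan is to reduce Theorem~\ref{thm_approx_2} to the Approximation Theorem. By Theorem~\ref{thm_approx} it is enough to produce, for every $n \in \Nbb$, a term $w$ with $t u \infred w \in \valuation{\nu_2}{}{n}$; taking limits then gives $w$ with $t u \infred w \in \valuation{\nu_2}{}{\infty} = \valuation{\nu_2}{}{}$.

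The one genuinely new ingredient I would isolate as a lemma is the existence of strict approximants: every $u \in \valuation{\nu_1}{}{}$ has, for each $n \in \Nbb$, a \emph{strict approximant of size~$n$}, i.e.\ a term $u_n$ with $u \succ u_n$ and $u_n \in \valuation{\nu_1}{\bot}{n}$. Granting this, fix $n$; since $\lim_{m\to\infty}f(m) = \infty$ choose $m$ with $f(m) \ge n$ and pick $u_m$ with $u \succ u_m \in \valuation{\nu_1}{\bot}{m}$. The hypothesis yields $w_m'$ with $t u_m \infred w_m' \in \valuation{\nu_2}{}{f(m)}$. Since $\succ$ is reflexive (an easy coinduction) and compatible with application (its application rule), $t u \succ t u_m$, so by Lemma~\ref{lem_succ_infty} there is $w_m$ with $t u \infred w_m \succ w_m'$. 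The set $\valuation{\nu_2}{}{f(m)}$ is approximation expansion closed --- this is Lemma~\ref{lem_succ_val}, whose proof applies equally to valuation approximations --- so from $w_m' \in \valuation{\nu_2}{}{f(m)}$ and $w_m \succ w_m'$ we get $w_m \in \valuation{\nu_2}{}{f(m)} \subseteq \valuation{\nu_2}{}{n}$, which is the reduct of $t u$ we were after.

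It remains to prove the strict-approximant lemma, and here is where the real work lies. I would prove a generalization to arbitrary strictly positive types: if $u$ lies in the interpretation of a strictly positive $\sigma$ under a stable valuation that sends the recursive type variable $A$ of $d_{\nu_1}$ to $\valuation{\nu_1}{}{\infty}$, then $u$ can be pruned to some $u' \succ u$ lying in the interpretation of $\sigma$ under the corresponding valuation that sends $A$ to $\valuation{\nu_1}{\bot}{n}$ (the lemma is the case $\sigma = A$, with the parameter variables of $d_{\nu_1}$ interpreted as usual). The argument is by induction on the pair (size $n$, type $\sigma$), $\sigma$ ordered as elsewhere in the paper. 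If $A \notin \FV(\sigma)$ take $u' = u$; if $\sigma = A$ and $n = 0$ take $u' = \bot$; if $\sigma = A$ with $n > 0$, or $\sigma = d^\infty(\vec{\alpha})$, then $u$ is constructor-headed (since $\valuation{d}{}{\infty}$ is a fixed point of the relevant $\Phi_d$), say $u = c\,\vec{u}$, and one prunes the arguments by the induction hypothesis --- with $n$ decreased in the first case and the type order decreased in the second --- and reassembles, using compatibility of $\succ$ with constructor application.

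The obstacle is the case $\sigma = \sigma_1 \to \sigma_2$ (and, symmetrically, $\sigma = \forall i . \sigma'$) when $A$ occurs in $\sigma_2$ and $\valuation{\sigma_1}{}{}$ is nonempty: taking $u' = u$ does not obviously work, since after pruning $u'$ must still be able to reduce to the pruned outputs of $u$. The key observation is that such a $u$ must head-reduce, in finitely many steps, to an abstraction (or to a partially applied constructor of a (co)inductive definition appearing at the head of $\sigma_2$); otherwise $u\,r$ could not reduce to a member of $\valuation{\sigma_2}{}{}$. One then prunes $u$ along this head reduction, retaining the abstractions and the application/$\case$ spine that drive it while replacing by $\bot$ everything that is merely erased or duplicated along the way --- in particular the bound-variable occurrences and the actual arguments --- so that for any $r \in \valuation{\sigma_1}{}{}$ the pruned term $u'\,r$ performs the same head reduction and lands on a term that is a pruning of the old result $u\,r \infred w_r$, to which the induction hypothesis for $\sigma_2$ applies. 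Making precise which occurrences to prune, and checking simultaneously that $u \succ u'$ and that $u'$ still reduces correctly (using Lemma~\ref{lem_succ_red}), is a fairly long and technical induction that I would relegate to an appendix, in the style of the proof of Lemma~\ref{lem_compl}.
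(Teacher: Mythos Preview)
Your high-level argument matches the paper's exactly: both use Lemma~\ref{lem_succ_infty} to transport the reduction from $t\,u_n$ back to $t\,u$, then Lemma~\ref{lem_succ_val} (extended to valuation approximations) to land in $\valuation{\nu_2}{}{f(n)}$, and finally the Approximation Theorem via the subsequence trick of Proposition~\ref{prop_approx_2}. The paper does not isolate or prove your strict-approximant lemma; it simply writes ``let $u_n \in \valuation{\nu_1}{\bot}{n}$ be such that $u \succ u_n$'' and moves on, so you have correctly made explicit a step the paper takes for granted.

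Your sketch of that lemma is fine in the first-order cases but has a real gap in the arrow case. The difficulty is that you need a \emph{single} pruning $u'$ of $u$ such that for \emph{every} $r$ the term $u'\,r$ reduces into the strict interpretation of~$\sigma_2$; your plan of pruning along the head spine of~$u$ and then ``applying the induction hypothesis for~$\sigma_2$'' to the landing term does not deliver this uniformity, because after $u$ head-reduces to $\lambda x.v$ the landing term is $v[r/x]$, which depends on~$r$, and the inductive hypothesis only hands you a pruning per~$r$. Concretely, for $\nu = \CoInd\{c : (\Nat \to \nu) \to \nu\}$ a size-$1$ approximant of $c\,f$ requires $f'$ with $f \succ f'$ and $f'\,m \infred \bot$ for all~$m$: taking $f' = \bot$ fails (since $\bot\,m \not\infred \bot$), so one must head-reduce $f$ to an abstraction and replace the \emph{body} by~$\bot$ --- but for size $n>1$ the body must itself be recursively pruned at level~$n-1$, and that recursive pruning is on the open term~$v$, not on any particular $v[r/x]$. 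To make the argument go through you would need to strengthen the induction hypothesis to open terms (treating the free variable as an opaque constant of type~$\sigma_1$) and show that your pruning commutes with substitution; the phrase ``replacing by $\bot$ everything merely erased or duplicated along the way'' does not address this, and without it the construction does not obviously yield a term in $\valuation{\sigma_1 \to \sigma_2}{\xi',v}{}$.
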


\begin{proof}
  Let $n \in \Nbb$ and let $u_n \in \valuation{\nu_1}{\bot}{n}$ be
  such that $u \succ u_n$. There is~$w_n$ with
  $t u_n \infred w_n \in \valuation{\nu_2}{}{f(n)}$. We have
  $t u \succ t u_n$. By Lemma~\ref{lem_succ_infty} there is~$v_n$ with
  $t u \infred v_n \succ w_n$. By Lemma~\ref{lem_succ_val} we have
  $v_n \in \valuation{\nu_2}{}{f(n)}$. Now, because
  $\lim_{n\to\infty}f(n)=\infty$, by an argument like the one in the
  proof of Proposition~\ref{prop_approx_2}, we may conclude that there
  is~$w$ with $t u \infred w \in \valuation{\nu_2}{}{}$.
\end{proof}

\section{The type system~$\lambda^\Diamond$}\label{sec_type_system}

In this section we define the type system~$\lambda^\Diamond$ which
provides a syntactic correctness criterion for finite terms decorated
with type information. In the next section we use the approximation
theorem to prove soundness: if a finite decorated term~$t$ has
type~$\tau$ in the system~$\lambda^\Diamond$ then its erasure
infinitarily reduces to a $t' \in \valuation{\tau}{}{}$.

\emph{Decorated terms} are given by:
\[
  \begin{array}{rcl}
    t &::=& x \mid c \mid \lambda x : \tau . t \mid t t \mid t s \mid
            \Lambda i . t \mid \case(t; \{ c_k
            \vec{x} \To t_k \}) \mid \fix f : \tau . t \mid \cofix^j f
            : \tau . t
  \end{array}
\]
where~$x \in \Vc$, and~$c,c_k \in \Cc$, and~$\tau$ is a type, and~$j$
is a size variable, and~$s$ is a size expression.

We define $s_1 \le s_2$ iff $v(s_1) \le v(s_2)$ for every size
variable valuation~$v$.

The function $\tgt$ that gives the \emph{target} of a type is
defined as follows:
\begin{itemize}
\item $\tgt(A) = A$, $\tgt(\rho^s) = \rho^s$,
\item $\tgt(\tau_1 \to \tau_2) = \tgt(\tau_2)$,
\item $\tgt(\forall i . \tau) = \tgt(\tau)$.
\end{itemize}
By $\chgtgt(\tau,\alpha)$ we denote the type~$\tau$ with the target
exchanged for~$\alpha$. Formally, $\chgtgt(\tau,\alpha)$ is defined
inductively:
\begin{itemize}
\item $\chgtgt(A,\alpha) = \alpha$, $\tgt(\rho^s,\alpha) = \alpha$,
\item $\chgtgt(\tau_1 \to \tau_2,\alpha) = \tau_1 \to \chgtgt(\tau_2,\alpha)$,
\item $\chgtgt(\forall i . \tau,\alpha) = \forall i . \chgtgt(\tau,\alpha)$.
\end{itemize}
Note that free size variables in~$\alpha$ may be captured as a
result of this operation.

A \emph{context}~$\Gamma$ is a finite map from type variables to
types. We write $\Gamma, x : \alpha$ to denote the context~$\Gamma'$
such that $\Gamma'(x) = \alpha$ and $\Gamma'(y) = \Gamma(y)$ for
$x \ne y$. A \emph{judgement} has the form
$\Gamma \proves t : \alpha$. The rules of the type
system~$\lambda^\Diamond$ are presented in
Figure~\ref{fig_type_system}. Figure~\ref{fig_subtyping} defines the
subtyping relation used in Figure~\ref{fig_type_system}. A closed
decorated term~$t$ is \emph{typable} if $\proves t : \tau$ for
some~$\tau$. In Figure~\ref{fig_type_system} \emph{all types are
  assumed to be closed} (i.e.~they don't contain free type variables,
but may contain free size variables). In Figure~\ref{fig_type_system}
the type variable~$A$ denotes the recursive type variable of the
(co)inductive definition considered in a given rule, and~$\vec{B}$
denote the parameter type variables.

\begin{figure*}
  \[
  \begin{array}{c}
    \infer[(\text{ax})]{\Gamma, x : \tau \proves x : \tau}{} \quad
    \infer[(\text{sub})]{\Gamma \proves t : \tau'}{\Gamma \proves t :
    \tau & \tau \sqsubseteq \tau'} \\ \\
    \infer[(\text{con})]{\Gamma \proves c t_1 \ldots t_n : \rho^{s+1}}{
      \begin{array}{lll}
        \ArgTypes(c) = (\sigma_1, \ldots, \sigma_n) & \Def(c) = d & \rho =
        d(\vec{\tau}) \\
        \multicolumn{3}{c}{\Gamma \proves t_k :
        \sigma_k[\rho^s/A][\vec{\tau}/\vec{B}] \text{ for } k = 1,\ldots,n}
      \end{array}} \\ \\
    \infer[\text{(lam)}]{\Gamma \proves (\lambda x : \alpha . t) : \alpha \to \beta}{\Gamma, x : \alpha \proves t : \beta} \quad
    \infer[\text{(app)}]{\Gamma \proves t t' : \beta}{\Gamma \proves t : \alpha \to \beta & \Gamma \proves t' : \alpha} \\ \\
    \infer[\text{(inst)}]{\Gamma \proves t s : \tau[s/i]}{\Gamma \proves t : \forall i . \tau} \quad
    \infer[\text{(gen)}]{\Gamma \proves \Lambda i . t : \forall i . \tau}{\Gamma \proves t :
    \tau & i \notin \FSV(\Gamma)} \\ \\
    \infer[(\text{case})]{\Gamma \proves \case(t; \{c_k \vec{x_k} \To t_k \mid
    k=1,\ldots,n\}) : \tau}
    {
      \begin{array}{lll}
        \ArgTypes(c_k) = (\sigma_k^1, \ldots, \sigma_k^{n_k}) &
        \delta_k^l = \sigma_k^l[\rho^{s}/A][\vec{\tau}/\vec{B}] & \rho = d(\vec{\tau}) \\
        \Gamma \proves t : \rho^{s+1} & \multicolumn{2}{l}{\Gamma, x_k^1 : \delta_k^1, \ldots, x_k^{n_k} : \delta_k^{n_k}
        \proves t_k : \tau}
      \end{array}} \\ \\
    \infer[\text{(fix)}]{\Gamma \proves (\fix f : \forall
    j_1 \ldots j_n . \mu \to \tau . t)
      : \forall j_1 \ldots j_n . \mu \to \tau}{
      \Gamma, f : \forall j_1 \ldots j_n . \mu^i \to \tau \proves
      t : \forall j_1 \ldots j_n . \mu^{i+1} \to \tau & i \notin \FSV(\Gamma,\mu,\tau,j_1,\ldots,j_n)
    } \\ \\
    \infer[\text{(cofix)}]{\Gamma \proves (\cofix^j f : \tau . t) : \tau}{
    \Gamma, f : \chgtgt(\tau, \nu^{\min(s,j)}) \proves t :
    \chgtgt(\tau, \nu^{\min(s,j+1)}) & \begin{array}{cc}\multicolumn{2}{c}{\tgt(\tau) = \nu^s} \\
    j \notin \FSV(\Gamma) & j \notin \SV(\tau) \end{array}
    }
  \end{array}
  \]
  \caption{Rules of the type system~$\lambda^\Diamond$}\label{fig_type_system}
\end{figure*}

\begin{figure}
  \[
  \begin{array}{c}
    \infer{A \sqsubseteq A}{} \\ \\
    \infer{d_\mu^s(\vec{\alpha}) \sqsubseteq d_\mu^{s'}(\vec{\beta})}{\alpha_k \sqsubseteq \beta_k & s \le s'} \quad
    \infer{d_\nu^s(\vec{\alpha}) \sqsubseteq d_\nu^{s'}(\vec{\beta})}{\alpha_k \sqsubseteq \beta_k & s \ge s'} \\ \\
    \infer{\forall i . \tau \sqsubseteq \forall i . \tau'}{\tau \sqsubseteq \tau'} \quad
    \infer{\alpha \to \beta \sqsubseteq \alpha' \to \beta'}{\alpha'
      \sqsubseteq \alpha & \beta \sqsubseteq \beta'}
  \end{array}
  \]
  \caption{Subtyping rules}\label{fig_subtyping}
\end{figure}

We now briefly explain the typing rules. The rules (ax), (sub), (lam),
(app), (inst), (gen) are standard. The rule (con) allows to type
constructors of (co)inductive types. It states that if each
argument~$t_k$ of the constructor~$c$ of a (co)inductive type~$\rho$
may be assigned an appropriate type with the size of the recursive
occurrences of~$\rho$ being~$s$, then $c t_1 \ldots t_n$ has type
$\rho^{s+1}$. For instance, for the type of lists of natural
numbers~$\List(\Nat)$, the rule (con) says that if $x : \Nat$ and
$y : \List^i(\Nat)$ then $\cons\,x\,y : \List^{i+1}(\Nat)$.

The (case) rule allows to type case expressions. If the decorated
term~$t$ that is matched on has a (co)inductive type~$\rho^{s+1}$, and
for each $k=1,\ldots,n$ under the assumption that the arguments of the
constructor~$c_k$ have appropriate types (with the recursive
occurrences of~$\rho$ having size~$s$) the branch~$t_k$ may be given
the type~$\tau$, then the case expression has type $\tau$.

The (fix) rule allows to type recursive fixpoint definitions. It
essentially requires that we may type the body~$t$ under the
assumption that~$f$ already ``works'' for smaller elements.

The (cofix) rule allows to type corecursive fixpoint
definitions. Essentially, it requires that we may type the body~$t$
under the assumption that~$f$ already produces a smaller coinductive
object, i.e., that if~$f$ produces an object defined up to depth~$j$
then~$t$ produces an object defined up to depth~$j+1$. The size
variable $j$ in $\cofix^j f : \tau . t$ may occur
in~$t$. Example~\ref{ex_stream_processors_2} below shows how this may
be used.

\begin{defi}
  Let $\Ys = (\lambda x . \lambda f . f (x x f)) (\lambda x . \lambda
  f . f (x x f))$ be the Turing fixpoint combinator. Note that $\Ys t
  \reduces t (\Ys t)$ for any term~$t$.

  The \emph{erasure}~$\erase{t}$ of a decorated term~$t$ is defined
  inductively:
  \begin{itemize}
  \item $\erase{x} = x$, $\erase{c} = c$,
  \item $\erase{\lambda x : \tau . t} = \lambda x . \erase{t}$, $\erase{\Lambda i . t} = \erase{t}$,
  \item $\erase{t_1 t_2} = \erase{t_1} \erase{t_2}$, $\erase{t s} = \erase{t}$,
  \item $\erase{\case(t; \{c_k\vec{x} \To t_k\})} = \case(\erase{t};
    \{c_k\vec{x} \To \erase{t_k}\})$,
  \item $\erase{\fix f : \tau . t} = \Ys (\lambda f . \erase{t})$,
    $\erase{\cofix f : \tau . t} = \Ys (\lambda f . \erase{t})$.
  \end{itemize}
\end{defi}

\subsection{Examples}

In this section, we give a few examples of typing derivations in the
system~$\lambda^\Diamond$. For the sake of readability, we only
indicate how to derive the typings. It is straightforward but tedious
to translate the examples into the exact formalism
of~$\lambda^\Diamond$.

\begin{exa}
  We reuse the definitions of~$\Nat$ and~$\Strm$ from
  Example~\ref{ex_nat_strm_types} (see also
  Example~\ref{ex_nat_strm}). Consider the function
  \[
    \tail = \Lambda i . \lambda s : \Strm^{i+1} . \case(s; \{\cons\,x\,t \To t\})
  \]
  To type $\tail$ we use the (gen), (lam) and (case) rules. Assume
  $s : \Strm^{i+1}$. To type the match we need to type the
  branch. Assuming $x : \Nat$ and $t : \Strm^{i}$ we have
  $t : \Strm^{i}$, so the match has type $\Strm^{i}$ by the (case)
  rule. Hence, by the (lam) and (gen) rules we obtain
  $\proves \tail : \forall i . \Strm^{i+1} \to \Strm^{i}$.

  Similarly, the function
  \[
    \head = \Lambda i . \lambda s : \Strm^{i+1} . \case(s; \{\cons\,x\,t \To x\})
  \]
  may be assigned the type $\forall i . \Strm^{i+1} \to \Nat$.
\end{exa}

\begin{exa}\label{ex_stream_processors_2}
  We return to the stream processors from
  Example~\ref{ex_stream_processors_type} and
  Example~\ref{ex_stream_processors}.

  The function~$\run$ which runs a stream processor on a stream is
  defined by:
  \[
    \begin{array}{rcl}
      \run &=& \cofix \run : \SP \to \Strm \to
               \Strm . \\ & & \lambda x : \SP . \lambda y : \Strm
                              .\\ & & \case(x;\{\out\,z \To \runi\,z\,y\})
    \end{array}
  \]
  where
  \[
    \begin{array}{rcl}
      \runi &=& \fix \runi : \SPi(\SP) \to \Strm \to
                \Strm^{j+1} . \\ & & \lambda z : \SPi(\SP)
    . \lambda y : \Strm . \\ & & \case(z; \{\get\,f \To \runi\,(f (\head\,\infty\,
    y))\, (\tail\,\infty\,y),\; \mput\,n\,x' \To n :: \run\,x'\,y\})
    \end{array}
  \]
  Recall that~$\Strm$ with no decorations is an abbreviation for
  $\Strm^\infty$.

  We have $\proves \run : \SP \to \Strm \to \Strm$. Indeed, to use the
  (cofix) typing rule assume
  \[
    \run : \SP \to \Strm \to \Strm^{j}
  \]
  and $x : \SP$ and $y : \Strm$.
  \begin{itemize}
  \item To type $\runi$ assume
    $\runi : \SPi^k(\SP) \to \Strm \to
    \Strm^{j+1}$ and $z : \SPi^{k+1}(\SP)$ and
    $y : \Strm$. We apply the (case) rule to type the match
    inside~$\runi$. For this purpose we need to type both branches.
    \begin{itemize}
    \item Assuming $f : \Nat \to \SPi^k(\SP)$, we have
      $\runi\,(f (\head\,\infty\,y))\,(\tail\,\infty\, y) :
      \Strm^{j+1}$ by the (inst), (sub) and (app) rules (note that
      $\infty + 1 \le \infty$ on size expressions).
    \item Assuming $n : \Nat$ and $x' : \SP$, we have
      $\run\,x'\,y : \Strm^{j}$ by the (app) rule. Thus
      $n :: \run\,x'\,y : \Strm^{j+1}$ by (con).
    \end{itemize}
    Hence the match has type $\Strm^{j+1}$ by (case). Thus
    \[
      \runi : \SPi(\SP) \to \Strm \to \Strm^{j+1}
    \]
    by the (fix) typing rule.
  \item To type the match inside $\run$ we use the (case) rule. Under
    the assumption $z : \SPi(\SP)$ the term $\runi\,z\,y$ has type
    $\Strm^{j+1}$ by the (app) rule. Hence the match has type
    $\Strm^{j+1}$ by the (case) rule.
  \end{itemize}
  Now using the (lam) rule we conclude that under the assumption
  \[
    \run : \SP \to \Strm \to \Strm^{j}
  \]
  the body of~$\run$ may be typed with
  $\SP \to \Strm \to \Strm^{j+1}$. Hence
  $\run : \SP \to \Strm \to \Strm$ by the (cofix) rule.
\end{exa}

\begin{rem}
  Strictly speaking, it is possible to type non-productive terms in
  our system. For instance, the term $t = \cofix f : \Strm^0 . f$ has
  type~$\Strm^0$. However, this is not a problem and it agrees with an
  intuitive interpretation of the type system: if $\proves t :
  \Strm^0$ then~$t$ should produce at least~$0$ elements of a stream,
  which does not put any restrictions on~$t$. One could exclude such
  terms by requiring that~$s$ in~$\nu^s$ in the (cofix) typing rule
  should tend to infinity when the sizes of the arguments having
  coinductive types tend to infinity. We did not see a compelling
  reason to incorporate this requirement explicitly into the type
  system.
\end{rem}

\subsection{Type checking}

Type checking in~$\lambda^\Diamond$ is decidable and
coNP-complete. Each decorated term has a minimal type, and there
exists a polynomial algorithm to infer (a compact representation of)
the minimal type. Type checking then reduces to deciding the subtyping
relation between the minimal type and the type being checked.

The proof of the following theorem and the details of the type
checking algorithm may be found in Appendix~\ref{app_checking}. We
only briefly mention this theorem as an interesting ancillary
result. We move the details to an appendix, because this result has no
connection with the infinitary rewriting semantics which is the main
theme of this paper.

\begin{thm}\label{thm_type_checking}
  Type checking in the system~$\lambda^\Diamond$ is
  coNP-complete. More precisely, given $\Gamma,t,\tau$ the problem of
  checking whether $\Gamma \proves t : \tau$ is coNP-complete.
\end{thm}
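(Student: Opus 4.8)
The plan is to establish membership in coNP directly from the minimal-typing apparatus of Appendix~\ref{app_checking}, and coNP-hardness by reducing from 3-SAT through the subtyping relation.

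For membership, given $\Gamma,t,\tau$ I run the algorithm of Lemma~\ref{lem_triple_computation}, with fresh size variables disjoint from those occurring in~$\tau$, to obtain in polynomial time a triple $(U,S,\tau_0)$ of polynomial size such that $(U,S)$ is valid iff $\Tc(\Gamma;t)$ is defined, and $U(\tau_0)=\Tc(\Gamma;t)$ when it is. By Theorem~\ref{thm_typing_characterisation} we have $\Gamma\proves t:\tau$ iff $\Tc(\Gamma;t)$ is defined and $\Tc(\Gamma;t)\sqsubseteq\tau$. Since the variables in $\dom(U)$ are fresh, $U(\tau)=\tau$, so by Lemma~\ref{lem_subtype_reduction} the condition $U(\tau_0)\sqsubseteq\tau$ is equivalent to validity of $(U,\Sc(\tau_0,\tau))$, with $\Sc(\tau_0,\tau)$ of polynomial size. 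Combining, $\Gamma\proves t:\tau$ holds iff the single polynomial-size size constraint $(U,S\cup\Sc(\tau_0,\tau))$ is valid: if $(U,S)$ fails then $t$ is untypable and the combined constraint also fails; if $(U,S)$ holds but $(U,\Sc(\tau_0,\tau))$ fails then $\Tc(\Gamma;t)\not\sqsubseteq\tau$; and if both hold then $\Gamma\proves t:\tau$. By Lemma~\ref{lem_size_coNP}, validity of a size constraint is in coNP, hence so is type checking; concretely, a nondeterministic algorithm for non-typability guesses a valuation (which by Corollary~\ref{cor_codomain} may be taken to range over~$\Nbb$, within a small range) violating one of the inequalities and verifies this in polynomial time.

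For hardness, first note that subtyping reduces to type checking: $\tau\sqsubseteq\tau'$ holds iff $\proves(\lambda x:\tau.x):\tau\to\tau'$. The forward direction uses (ax), (sub) and (lam); for the converse, $\Tc(\emptyset;\lambda x:\tau.x)=\tau\to\tau$, so by Theorem~\ref{thm_typing_characterisation} typability forces $\tau\to\tau\sqsubseteq\tau\to\tau'$, and the arrow subtyping rule then yields $\tau\sqsubseteq\tau'$. Thus it suffices to show that deciding validity of $\tau\sqsubseteq\tau'$ is coNP-hard, i.e.\ that deciding whether some size variable valuation makes $\tau\sqsubseteq\tau'$ fail is NP-hard. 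Conversely to Lemma~\ref{lem_subtype_reduction}, a system of inequalities between size expressions (built from $0,\infty$, size variables, $+1$, $\min$, $\max$) can be realized as a single subtyping query by plugging the size expressions into decoration slots of a fixed closed template type: covariant positions (an inductive decoration, or an even number of nested arrow domains) give ``$\le$'' constraints, contravariant ones (a coinductive decoration, or an odd number of nested arrow domains) give ``$\ge$'' constraints, and componentwise conjunction is for free from the type structure. I then reduce from 3-SAT: for each Boolean variable $x_i$ take two size variables $y_i,z_i$ and encode the literal $x_i$ by the atom $z_i+1>y_i$ and $\neg x_i$ by $y_i>z_i$, which are genuinely complementary over~$\Nbb$ and so supply the Boolean negation that the monotone operations $\min,\max,+1$ cannot; each clause is turned into a single nested $\min/\max$ inequality whose strict violation expresses the disjunction of its literal encodings, using an offsetting gadget on auxiliary size variables so that the spurious cross-comparisons created when a disjunction is written as a comparison of a $\min$ with a $\max$ cannot be satisfied. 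The formula~$\phi$ then translates to polynomial-size types $\tau_\phi,\tau'_\phi$ with $\tau_\phi\sqsubseteq\tau'_\phi$ failing (for some valuation) iff $\phi$ is satisfiable; equivalently $\proves(\lambda x:\tau_\phi.x):\tau_\phi\to\tau'_\phi$ holds iff $\phi$ is unsatisfiable. Since 3-SAT is NP-complete, type checking in $\lambda^\Diamond$ is coNP-hard, and together with membership it is coNP-complete.

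The main obstacle is the hardness reduction, and within it the clause gadget: because size expressions are built only from the monotone operations $\min$, $\max$ and $+1$, compiling a disjunction of difference-like atoms into one size-expression inequality inevitably introduces extra ``cross'' disjuncts, and one must choose the auxiliary variables and their offsets carefully, keeping the instance polynomial, so that these extras never accidentally witness satisfiability; the needed non-monotone behaviour is recovered only indirectly, from the complementarity of the two orientations of a strict inequality over~$\Nbb$ and from the interplay of arrow contravariance with the reversed size-variance of coinductive types. The membership direction is, by contrast, essentially a repackaging of the minimal-typing and size-constraint results already established.
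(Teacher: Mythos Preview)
Your membership argument is essentially the paper's: combine Theorem~\ref{thm_typing_characterisation} with Lemma~\ref{lem_triple_computation} and Lemma~\ref{lem_subtype_reduction} to reduce type checking to validity of a single polynomial-size size constraint, then invoke Lemma~\ref{lem_size_coNP}. The packaging into one constraint $(U,S\cup\Sc(\tau_0,\tau))$ is fine and matches what the paper does implicitly via Corollary~\ref{cor_subtype_coNP}.

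For hardness you diverge from the paper, and the core step is left as a promise rather than a construction. You correctly reduce subtyping to type checking via $\lambda x{:}\tau.x$, and you correctly observe that encoding a disjunction of difference-style atoms with only $\min,\max,{+}1$ forces unwanted cross terms. But you then appeal to an unspecified ``offsetting gadget on auxiliary size variables'' to kill those cross terms. That is exactly the hard part, and nothing in the proposal shows such a gadget exists or stays polynomial; without it the reduction is not established. The paper avoids this difficulty altogether by a cleaner encoding: for each Boolean variable $x$ it introduces a companion $\bar x$ and \emph{forces} the complementarity inside the single inequality $s_1\le s_2$ via the conjuncts $\min(x,\bar x)+1\le 1$ and $1\le\max(x,\bar x)$, so that in any witnessing valuation exactly one of $x,\bar x$ is~$0$. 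A clause $(l_1\lor l_2\lor l_3)$ then becomes simply $\min(l_1',l_2',l_3')+1\le 1$, where $l'$ is the corresponding variable or its barred companion; under the $0$/$\ge 1$ semantics this is literally ``at least one literal true'', with no cross terms at all. Finally the paper packages $s_1\le s_2$ into a typing judgement directly (using an application rather than your $\lambda$-term), so no separate subtyping-to-typing step is needed.

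In short: your plan for hardness is plausible but incomplete at the decisive point, while the paper's route is both simpler and fully explicit. If you want to keep your framing via $\lambda x{:}\tau.x$, you can still adopt the paper's $x/\bar x$ encoding verbatim to produce $s_1,s_2$, and then take $\tau=\nu^{s_1}$, $\tau'=\nu^{s_2+1}$ (or the inductive analogue) so that $\tau\sqsubseteq\tau'$ holds iff $s_1\ge s_2+1$; this discharges your missing gadget with no extra work.
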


Despite the theoretically high complexity, we believe that the type
checking algorithm is practical. It is based on a polynomial reduction
of the type-checking problem to the validity of a set of constraints
in quantifier-free Presburger arithmetic. Deciding the validity of the
constraints is coNP-complete~\cite{BoroshTreybing1976,Haase2014}, but
in practice may probably be checked using an SMT-solver such
as~Z3~\cite{MouraBjoerner2008} or~CVC4~\cite{CVC4paper}.

\section{Soundness}\label{sec_soundness}

In this section we show soundness: if $\proves t : \tau$ then there
is~$t' \in \valuation{\tau}{}{}$ with $\erase{t} \infred t'$. We show
that soundness of the (cofix) typing rule follows from the
approximation theorem. This is the main result of the present
section. The justification of the remaining rules
of~$\lambda^\Diamond$ is straightforward if a bit tedious.

We first prove a lemma justifying the correctness of the (cofix)
typing rule. This lemma follows from the approximation theorem.

\begin{lem}\label{lem_cofix}
  Let $r = \Ys (\lambda f . t)$ with $\tgt(\tau) = \nu^s$. Let
  $r_0 = r$ and $r_{n+1} = t[r_n/f]$ for $n \in \Nbb$. Let
  $\tau' = \chgtgt(\tau,\nu^{\min(s,j)})$ where
  $j \notin \SV(s,\nu,\tau)$. If for every $n \in \Nbb$ there
  is~$r_n'$ with $r_n \infred r_n' \in \valuation{\tau'}{v[n/j]}{}$,
  then there is~$r'$ with $r \infred r' \in \valuation{\tau}{v}{}$.
\end{lem}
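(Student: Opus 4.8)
The plan is to prove the statement by induction on the structure of $\tau$. Since $\tgt(\tau) = \nu^s$, the type $\tau$ is either $\nu^s$ itself, or $\forall i . \tau_0$, or $\gamma \to \tau_0$ with $\gamma$ closed and $\tgt(\tau_0) = \nu^s$; moreover $\tau$ is closed in the type-variable sense, hence strictly positive, so all valuations occurring below are defined. Before the induction I would record that $r_n \reduces r_{n+1}$ for every $n$: since $\Ys g \reduces g(\Ys g)$ for any $g$, we have $r_0 = \Ys(\lambda f . t) \reduces (\lambda f . t)(\Ys(\lambda f . t)) \reduces t[r_0/f] = r_1$, and inductively $r_n \reduces r_{n+1}$ gives $r_{n+1} = t[r_n/f] \reduces t[r_{n+1}/f] = r_{n+2}$ because $\reduces$ is stable under substitution. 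Hence $r = r_0 \infred r_n$, so combining with the hypothesis $r_n \infred r_n'$ and Lemma~\ref{lem_infred_concat} we obtain, for all $n$,
\[
r \infred r_n' \in \valuation{\chgtgt(\tau, \nu^{\min(s,j)})}{v[n/j]}{}.
\]
This is the hypothesis from which the induction on $\tau$ will derive the existence of $r'$ with $r \infred r' \in \valuation{\tau}{v}{}$; the starting term, the size valuation, the fresh variable $j$ and the sequence $\{r_n'\}$ are all generalised.

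For the base case $\tau = \nu^s$ we have $\chgtgt(\tau,\nu^{\min(s,j)}) = \nu^{\min(s,j)}$, so by Lemma~\ref{lem_val_idom} and $j \notin \SV(s,\nu)$ the hypothesis reads $r \infred r_n' \in \valuation{\nu}{v}{\min(v(s),n)}$ for all $n$. If $v(s)$ is finite, say $v(s) = m$, then $r' = r_m'$ already works, since $\valuation{\nu}{v}{\min(m,m)} = \valuation{\nu^s}{v}{}$. If $v(s) \ge \omega$, then $\min(v(s),n) = n$, so $r \infred r_n' \in \valuation{\nu}{v}{n}$ for all $n$, and the Approximation Theorem (Theorem~\ref{thm_approx}) supplies $r'$ with $r \infred r' \in \valuation{\nu}{v}{\infty} \subseteq \valuation{\nu}{v}{v(s)} = \valuation{\nu^s}{v}{}$, the inclusion holding since the $\nu$-approximations decrease in the ordinal index. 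This is the only place where the Approximation Theorem is used.

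For the inductive case $\tau = \forall i . \tau_0$, unfolding the valuation of a $\forall$-type gives, for every $n$ and every $\varkappa \in \Omega$, a term $r_n^\varkappa$ with $r_n' \infred r_n^\varkappa \in \valuation{\chgtgt(\tau_0,\nu^{\min(s,j)})}{v[\varkappa/i][n/j]}{}$ (using $i \ne j$, which holds as $j \notin \SV(\tau)$); hence $r \infred r_n^\varkappa$ by Lemma~\ref{lem_infred_concat}. For a fixed $\varkappa$ these are exactly the hypotheses of the induction at $\tau_0$ with size valuation $v[\varkappa/i]$, so the induction hypothesis yields $\rho^{(\varkappa)}$ with $r \infred \rho^{(\varkappa)} \in \valuation{\tau_0}{v[\varkappa/i]}{}$; since $\varkappa$ was arbitrary and $\FSV(r) = \emptyset$ ($r$ is an erased term), we get $r \in \valuation{\forall i . \tau_0}{v}{}$ and may take $r' = r$. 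For the case $\tau = \gamma \to \tau_0$, $\gamma$ is closed with $j \notin \FSV(\gamma)$, so $\valuation{\gamma}{v[n/j]}{} = \valuation{\gamma}{v}{}$; for each $a$ in this set, $r \infred r_n'$ gives $r a \infred r_n' a$ ($\infred$ is compatible with application, a special case of Lemma~\ref{lem_infred_subst}), and from $r_n' \in \valuation{\gamma \to \chgtgt(\tau_0,\nu^{\min(s,j)})}{v[n/j]}{}$ we get $r a \infred r_n^a \in \valuation{\chgtgt(\tau_0,\nu^{\min(s,j)})}{v[n/j]}{}$ for all $n$. The induction hypothesis at $\tau_0$, applied with starting term $r a$, then gives $r a \infred \rho^{(a)} \in \valuation{\tau_0}{v}{}$, whence $r \in \valuation{\gamma \to \tau_0}{v}{}$, so again $r' = r$ works.

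The main obstacle is not conceptual but a matter of careful bookkeeping around $\chgtgt$, the fresh size variable $j$, and the expression $\min(s,j)$: one has to track how the free size variables of $\nu^{\min(s,j)}$ — in particular $j$ and the variables of $s$ — are captured by the $\forall$-binders of $\tau$ during the $\chgtgt$ operation, and verify that evaluating $\min(s,j)$ under the successively modified size valuations always produces $\min(v(s),n)$ at the leaf $\nu$-type, which is exactly what keeps the induction hypothesis applicable at $\tau_0$. Once this is set up, the inductive steps are routine unfoldings of the valuation definitions, and the base case is the direct appeal to the Approximation Theorem described above.
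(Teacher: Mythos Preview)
Your proof is correct and follows the same idea as the paper: peel off the $\forall$ and $\to$ layers of $\tau$ until only the target $\nu^s$ remains, then at the leaf invoke the Approximation Theorem after the case split $v(s)<\omega$ versus $v(s)\ge\omega$. The paper's proof does exactly this but, instead of a formal structural induction on $\tau$, simply writes ``without loss of generality assume $\tau = \forall i_1 . \nu_1^{i_1} \to \forall i_2 . \nu_2^{i_2} \to \nu^s$'' and unfolds that one representative shape in one go; your induction makes the same unfolding explicit and is arguably more careful about the bookkeeping you flag (the interaction of $\min(s,j)$ with the binders of $\tau$ and the updated valuations).
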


\begin{proof}
  Note that $r \infred r_n$ for $n \in \Nbb$ follows by
  induction, using Lemma~\ref{lem_infred_subst}. Thus also
  $r \infred r_n'$ for $n\in\Nbb$ by
  Lemma~\ref{lem_infred_concat}.

  Without loss of generality assume
  $\tau = \forall i_1 . \nu_1^{i_1} \to \forall i_2 . \nu_2^{i_2} \to
  \nu^s$. Let $\varkappa_1,\varkappa_2 \in \Omega$ and let
  $u_1 \in \valuation{\nu_1}{v[\varkappa_1/i_1]}{\varkappa_1}$ and
  $u_2 \in
  \valuation{\nu_2}{v[\varkappa_1/i_1,\varkappa_2/i_2]}{\varkappa_2}$. Then
  because $j\notin\SV(s,\nu,\tau)$, using Lemma~\ref{lem_val_idom}, we
  conclude that for every $n \in \Nbb$ there is~$r_n''$ with
  $r_n' u_1 u_2 \infred r_n'' \in
  \valuation{\nu}{v[\varkappa_1/i_1,\varkappa_2/i_2]}{\min(m,n)}$
  where $m = v[\varkappa_1/i_1,\varkappa_2/i_2](s)$. It suffices to
  find~$r'$ with
  $r u_1 u_2 \infred r' \in
  \valuation{\nu}{v[\varkappa_1/i_1,\varkappa_2/i_2]}{m}$.

  First assume $m < \omega$. Then
  $r_m'' \in \valuation{\nu}{v[\varkappa_1/i_1,\varkappa_2/i_2]}{m}$,
  so we may take $r' = r_m''$, because
  $r u_1 u_2 \infred r_m' u_1 u_2 \infred r_m''$. So assume
  $m \ge \omega$. Then for every $n \in \Nbb$ we have
  $r_n'' \in
  \valuation{\nu}{v[\varkappa_1/i_1,\varkappa_2/i_2]}{n}$. Since
  $r u_1 u_2 \infred r_n' u_1 u_2 \infred r_n''$ for
  $n \in \Nbb$, by Lemma~\ref{lem_infred_concat} and
  Theorem~\ref{thm_approx} there is~$r'$ with
  $r u_1 u_2 \infred r' \in
  \valuation{\nu}{v[\varkappa_1/i_1,\varkappa_2/i_2]}{\infty}$.
\end{proof}

The next lemma is needed for the justification of the~(sub) subtyping
rule.

\begin{lem}\label{lem_subtyping}
  If $\tau \sqsubseteq \tau'$ then $\valuation{\tau}{v}{} \subseteq
  \valuation{\tau'}{v}{}$.
\end{lem}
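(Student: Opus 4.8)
The plan is to proceed by induction on the derivation of $\tau \sqsubseteq \tau'$, following the four subtyping rules in Figure~\ref{fig_subtyping}. The base case $A \sqsubseteq A$ is immediate since $\valuation{A}{\xi,v}{} = \xi(A)$; more precisely, since we work with closed types here, the only genuinely closed instance is vacuous, but the induction must still pass through type-variable cases to handle the parameters of (co)inductive types, so I would actually prove the slightly more general statement: if $\tau \sqsubseteq \tau'$ and $\xi \subseteq \xi'$ (in particular $\xi = \xi'$) then $\valuation{\tau}{\xi,v}{} \subseteq \valuation{\tau'}{\xi',v}{}$, generalizing over $\xi$, $\xi'$ and $v$. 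With this formulation the $A \sqsubseteq A$ case is just $\xi(A) \subseteq \xi'(A)$.

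For the arrow case $\alpha \to \beta \sqsubseteq \alpha' \to \beta'$ with $\alpha' \sqsubseteq \alpha$ and $\beta \sqsubseteq \beta'$: given $t \in \valuation{\alpha \to \beta}{\xi,v}{}$ and $r \in \valuation{\alpha'}{\xi,v}{}$, the inductive hypothesis applied to $\alpha' \sqsubseteq \alpha$ gives $r \in \valuation{\alpha}{\xi,v}{}$, so there is $t'$ with $t r \infred t' \in \valuation{\beta}{\xi,v}{}$, and then the inductive hypothesis applied to $\beta \sqsubseteq \beta'$ gives $t' \in \valuation{\beta'}{\xi,v}{}$; hence $t \in \valuation{\alpha' \to \beta'}{\xi,v}{}$. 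The $\forall i$ case is entirely analogous, pushing the size valuation $v[\varkappa/i]$ through and using the inductive hypothesis on $\tau \sqsubseteq \tau'$ for each $\varkappa \in \Omega$.

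The main obstacle is the (co)inductive cases $d_\mu^s(\vec\alpha) \sqsubseteq d_\mu^{s'}(\vec\beta)$ with $\alpha_k \sqsubseteq \beta_k$ and $s \le s'$, and dually $d_\nu^s(\vec\alpha) \sqsubseteq d_\nu^{s'}(\vec\beta)$ with $s \ge s'$. Here I would unfold $\valuation{d_\mu^s(\vec\alpha)}{\xi,v}{} = \valuation{d_\mu}{\xi[\vec Y/\vec B],v}{v(s)}$ where $Y_j = \valuation{\alpha_j}{\xi,v}{}$, and similarly $\valuation{d_\mu^{s'}(\vec\beta)}{\xi,v}{} = \valuation{d_\mu}{\xi[\vec{Y'}/\vec B],v}{v(s')}$ with $Y_j' = \valuation{\beta_j}{\xi,v}{}$. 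By the inductive hypothesis $Y_j \subseteq Y_j'$ for each $j$, so $\xi[\vec Y/\vec B] \subseteq \xi[\vec{Y'}/\vec B]$, and Lemma~\ref{lem_val_subset}(2) gives $\valuation{d_\mu}{\xi[\vec Y/\vec B],v}{\varkappa} \subseteq \valuation{d_\mu}{\xi[\vec{Y'}/\vec B],v}{\varkappa}$ for every $\varkappa$. It remains to compare the approximation levels: $s \le s'$ means $v(s) \le v(s')$ for all $v$, and by the monotonicity remarks following Lemma~\ref{lem_val_subset} we have $\valuation{d_\mu}{\cdot,v}{v(s)} \subseteq \valuation{d_\mu}{\cdot,v}{v(s')}$ since inductive approximations are increasing in the ordinal index. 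Combining the two inclusions (first bumping $\xi$ to $\xi'$ at level $v(s)$, then bumping the level to $v(s')$) yields the claim; the coinductive case is identical except that $s \ge s'$ is exactly what is needed because coinductive approximations are \emph{decreasing} in the ordinal index. A minor subtlety worth noting is that $\sqsubseteq$ only relates decorated (co)inductive types $\rho^s$ (the target of $\tgt$), never bare type variables under the (co)inductive formation, so the strict-positivity hypotheses required by Lemma~\ref{lem_val_subset} are automatically met.
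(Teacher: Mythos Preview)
Your core argument is correct and matches the paper's proof exactly: apply the inductive hypothesis to the parameters to get $Y_j \subseteq Y_j'$, invoke Lemma~\ref{lem_val_subset}(2) for monotonicity in the type-variable valuation, and then use monotonicity of the approximation level (increasing for~$d_\mu$, decreasing for~$d_\nu$). The arrow and $\forall i$ cases are likewise identical.

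Two small remarks on the generalization. First, it is unnecessary: since subtyping is purely structural, if $\tau$ and~$\tau'$ are closed then every pair of types appearing in the subderivation is closed as well, so the case $A \sqsubseteq A$ is never reached and one can work with $\valuation{\cdot}{v}{}$ throughout, as the paper does. Second, the version with $\xi \subseteq \xi'$ (as opposed to merely ``for all~$\xi$'') does not go through in the arrow case as written: to show $t \in \valuation{\alpha'\to\beta'}{\xi',v}{}$ you must take $r \in \valuation{\alpha'}{\xi',v}{}$, not $\valuation{\alpha'}{\xi,v}{}$, and then contravariance would require $\xi' \subseteq \xi$ to land in $\valuation{\alpha}{\xi,v}{}$. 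This is harmless here because strict positivity forces $\alpha,\alpha'$ to be closed, so the valuation does not depend on~$\xi$ at all (Lemma~\ref{lem_val_dom}); but you should either say so explicitly or simply drop the $\xi \subseteq \xi'$ generalization, since the paper's plain closed-type induction already suffices.
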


\begin{proof}
  Induction on $\tau$. If
  $\tau = d_\mu^s(\vec{\alpha}) \sqsubseteq \tau' =
  d_\mu^{s'}(\vec{\beta})$ then $s \le s'$ and
  $\alpha_i \sqsubseteq \beta_i$. We have
  $\valuation{\tau}{v}{} = \valuation{d_\mu(\vec{\alpha})}{v}{v(s)} =
  \valuation{d_\mu}{\xi,v}{v(s)}$ where
  $\xi(B_i) = \valuation{\alpha_i}{v}{}$. By the inductive hypothesis
  $\valuation{\alpha_i}{v}{} \subseteq \valuation{\beta_i}{v}{}$. Let
  $\xi'(B_i) = \valuation{\beta_i}{v}{}$. Then $\xi \subseteq
  \xi'$. By Lemma~\ref{lem_val_subset} we obtain
  $\valuation{d_\mu}{\xi,v}{v(s)} \subseteq
  \valuation{d_\mu}{\xi',v}{v(s)}$. Also $v(s) \le v(s')$ and
  $\valuation{\tau'}{v}{} = \valuation{d_\mu}{\xi',v}{v(s')}$. Thus
  $\valuation{\tau}{v}{v(s)} = \valuation{d_\mu}{\xi,v}{v(s)}
  \subseteq \valuation{d_\mu}{\xi',v}{v(s)} \subseteq
  \valuation{d_\mu}{\xi',v}{v(s')} = \valuation{\tau'}{v}{}$.

  If
  $\tau = d_\nu^s(\vec{\alpha}) \sqsubseteq \tau' =
  d_\nu^{s'}(\vec{\beta})$ then the argument is analogous to the
  previous case.

  If
  $\tau = \forall i . \tau_1 \sqsubseteq \tau' = \forall i . \tau_1'$
  then $\tau_1 \sqsubseteq \tau_1'$. Thus by the inductive hypothesis
  $\valuation{\tau_1}{v[\varkappa/i]}{} \sqsubseteq
  \valuation{\tau_1'}{v[\varkappa/i]}{}$ for $\varkappa \in
  \Omega$. Hence
  $\valuation{\tau}{v}{} \subseteq \valuation{\tau'}{v}{}$.

  Finally, assume
  $\tau = \tau_1 \to \tau_2 \sqsubseteq \tau' = \tau_1' \to
  \tau_2'$. Then $\tau_1' \sqsubseteq \tau_1$ and
  $\tau_2 \sqsubseteq \tau_2'$. Let $t \in
  \valuation{\tau}{v}{}$. Then for every
  $r \in \valuation{\tau_1}{v}{}$ there is~$t'$ with
  $t r \infred t' \in \valuation{\tau_2}{v}{}$. Let
  $r \in \valuation{\tau_1'}{v}{}$. Since
  $\valuation{\tau_1'}{v}{} \subseteq \valuation{\tau_1}{v}{}$ by the
  inductive hypothesis, there exists~$t'$ with
  $t r \infred t' \in \valuation{\tau_2}{v}{}$. But
  $\valuation{\tau_2}{v}{} \subseteq \valuation{\tau_2'}{v}{}$ by the
  inductive hypothesis. Hence $t \in \valuation{\tau'}{v}{}$.
\end{proof}

\begin{thm}[Soundness]\label{thm_soundness}
  If $\Gamma \proves t : \tau$ with
  $\Gamma = x_1:\tau_1,\ldots,x_n:\tau_n$ then for every size variable
  valuation $v : \Vc_S \to \Omega$ and all
  $t_1 \in \valuation{\tau_1}{v}{}$, \ldots,
  $t_n \in \valuation{\tau_n}{v}{}$ there exists~$t'$ such that
  $\erase{t}[t_1/x_1,\ldots,t_n/x_n] \infred t' \in
  \valuation{\tau}{v}{}$.
\end{thm}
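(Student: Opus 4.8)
The plan is to proceed by induction on the derivation of $\Gamma \proves t : \tau$, simultaneously over all size variable valuations $v$ and all choices of substituting terms $t_i \in \valuation{\tau_i}{v}{}$. Write $\sigma = [t_1/x_1,\ldots,t_n/x_n]$ for the substitution and $\erase{t}\sigma$ for the erasure with the substitution applied. For each typing rule I would show that the required reduct $t'$ exists by unfolding the definition of the relevant valuation (Definition~\ref{def_semantics}) and applying the inductive hypothesis to the premises. The routine cases are (ax) (immediate, taking $t' = t_i$), (sub) (apply the inductive hypothesis and then Lemma~\ref{lem_subtyping}), (lam) and (gen) (the valuation of an arrow or $\forall$ type is defined exactly so that the inductive hypothesis on the premise gives what is needed, using that erasure commutes appropriately and that $i \notin \FSV(\Gamma)$ ensures the environment terms do not mention $i$), (app) and (inst) (apply the inductive hypotheses to both/the premise, then use Lemma~\ref{lem_infred_concat} to concatenate the reduction $\erase{t}\sigma \, u \infred t'' u$ with a reduction obtained from the argument, together with stability, i.e.~Lemma~\ref{lem_val_stable}, to transport membership along $\infred$), (con) (the valuation of $\rho^{s+1}$ unfolds via $\Phi$ to constructor terms, and the inductive hypotheses on the $t_k$ supply the arguments of the right approximation size), and (case) (the matched term $t$ has type $\rho^{s+1}$, so by the inductive hypothesis it reduces to $c_k \vec u$ with $\vec u$ of the appropriate sizes; this fires the $\iota$-rule, and the inductive hypothesis on the selected branch $t_k$, instantiated with the $\vec u$, finishes the job, again using confluence modulo~$\Uc$ and stability to realign the reductions).

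The case (fix) is handled by an ordinal/natural-number induction showing that the fixpoint unfolding $r = \Ys(\lambda f.\erase{t}\sigma)$ lands in $\valuation{\mu^{i+1}\to\tau}{v[\varkappa/i]}{}$ for all $\varkappa$: since $\mu$ is an inductive type, $\valuation{\mu}{v}{\varkappa}$ is built up from below, so one argues by (transfinite) induction on $\varkappa$ that for every argument in $\valuation{\mu}{v}{\varkappa}$ the application of $r$ behaves correctly, feeding the inductive hypothesis on the premise (which assumes $f : \forall\vec j.\mu^i\to\tau$) the already-established fact for strictly smaller approximation levels, and using $\Ys t \reduces t(\Ys t)$ together with Lemma~\ref{lem_infred_subst} and Lemma~\ref{lem_infred_concat}. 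This is the standard sized-types termination argument transported into the infinitary setting, and it is essentially bookkeeping once the valuation machinery is in place.

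The genuinely nontrivial case is (cofix), and the whole point of the preceding development is to discharge it. Here $r = \Ys(\lambda f.\erase{t}\sigma)$ with $\tgt(\tau) = \nu^s$ for a coinductive $\nu$. Setting $r_0 = r$, $r_{n+1} = (\erase{t}\sigma)[r_n/f]$, the premise of the (cofix) rule together with the inductive hypothesis (applied with the extra hypothesis $f : \chgtgt(\tau,\nu^{\min(s,j)})$ discharged by the value established at the previous stage, and $j$ instantiated to successive naturals) yields, by an auxiliary induction on $n$, that $r_n \infred r_n' \in \valuation{\chgtgt(\tau,\nu^{\min(s,j)})}{v[n/j]}{}$ for every $n$. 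This is exactly the hypothesis of Lemma~\ref{lem_cofix}, so that lemma — which is itself a corollary of the Approximation Theorem (Theorem~\ref{thm_approx}) — delivers $r' \in \valuation{\tau}{v}{}$ with $r \infred r'$. I expect the main obstacle to be verifying precisely that the auxiliary induction on $n$ correctly tracks how the size variable $j$ in the (cofix) rule threads through $\erase{t}$ and interacts with erasure of $\cofix$ into $\Ys$, and in marshalling the side conditions $j \notin \FSV(\Gamma)$, $j \notin \SV(\tau)$ so that the hypotheses of Lemma~\ref{lem_cofix} (in particular $j \notin \SV(s,\nu,\tau)$) are literally met; the deep content, however, is already encapsulated in the Approximation Theorem and Lemma~\ref{lem_cofix}, so this step reduces to careful but unsurprising checking.
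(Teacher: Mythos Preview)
Your proposal is correct and follows essentially the same approach as the paper's proof: induction on the typing derivation, with the routine cases handled by unfolding the valuation and the inductive hypotheses, the (fix) case by transfinite induction on the approximation ordinal~$\varkappa$, and the (cofix) case by an auxiliary induction on~$n$ to set up the hypothesis of Lemma~\ref{lem_cofix}. Two small remarks: the paper does not need confluence modulo~$\Uc$ or stability in the (app) and (case) cases, since the reductions line up directly without any realignment; and in (fix) the invariant one maintains is $r \in \valuation{\forall\vec{j}.\mu^i\to\tau}{v[\varkappa/i]}{}$ (with exponent~$i$, not~$i{+}1$), which at $\varkappa=\infty$ gives the desired conclusion.
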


\begin{proof}
  By induction on the length of the derivation of the typing
  judgement, using Lemma~\ref{lem_cofix} and
  Lemma~\ref{lem_subtyping}. Lemma~\ref{lem_cofix} is needed to
  justify the~(cofix) typing rule. The proof is rather long but
  straightforward. The details may be found in an appendix.
\end{proof}

\section{Conclusions}

We introduced an infinitary rewriting semantics for strictly positive
nested higher-order (co)inductive types. This may be seen as a
refinement and generalization of the notion of productivity in term
rewriting to a setting with higher-order functions and with data
specified by nested higher-order inductive and coinductive
definitions. We showed an approximation theorem: $t \infred t_n \in
\valuation{\nu}{v}{n}$ for $n \in \Nbb$ then there exists $t_\infty
\in \valuation{\nu}{v}{\infty}$ such that $t \infred t_\infty$,
where~$\nu$ is a coinductive type.

In the second part of the paper, we defined a type
system~$\lambda^\Diamond$ combining simple types with nested
higher-order (co)inductive types, and using size restrictions
similarly to systems with sized types. We showed how to use the
approximation theorem to prove soundness: if a finite decorated
term~$t$ has type~$\tau$ in the system then its erasure infinitarily
reduces to a $t' \in \valuation{\tau}{}{}$. Together with confluence
modulo~$\Uc$ of the infinitary reduction relation and the stability
of~$\valuation{\tau}{}{}$, this implies that any finite typable term
has a well-defined interpretation in the right type. This provides an
operational interpretation of typable terms which takes into account
the ``limits'' of infinite reduction sequences.

In particular, if a decorated term~$t$ has in the
system~$\lambda^\Diamond$ a simple (co)inductive type~$\rho$ such that
$\valuation{\rho}{}{}$ contains only normal forms, then the
term~$\erase{t}$ is infinitarily weakly normalizing. It then follows
from~\cite{KetemaSimonsen2010} that any outermost-fair, possibly
infinite but weakly continuous, reduction sequence starting
from~$\erase{t}$ ends in a normal form. Intuitively, this means that
any ``fair'' reduction strategy always produces a normal form ``in the
limit''. For instance, $\Strm$ mentioned in the introduction is such a
type, i.e., all terms in~$\valuation{\Strm}{}{}$ are normal forms. If
all elements of~$\valuation{\rho}{}{}$ are additionally finite, as
e.g.~with $\rho = \Nat$, then~$\erase{t}$ is in fact finitarily weakly
normalizing.

We have not shown infinitary weak normalization for terms having
function types. Nonetheless, our interpretation of $t \in
\valuation{\tau_1 \to \tau_2}{}{}$ is very natural and ensures the
productivity of~$t$ regarded as a function: we require that for $u \in
\valuation{\tau_1}{}{}$ there is $u' \in \valuation{\tau_2}{}{}$ with
$t u \infred u'$.

In general, it seems desirable to strengthen our rewriting semantics
so as to require \emph{all} maximal (in some sense) infinitary
reduction sequences to yield a term of the right type, not just the
existence of such a reduction. Or one would want to prove strong
infinitary normalization for erasures of typable terms. This, however,
does not seem easy to establish at present.

Our proof of the approximation theorem is classical. We do not expect
any significant problems to arise in an attempt to constructivise our
development, but we did not pay enough attention to constructivity
issues to claim this with complete certainty.

\bibliography{biblio}{}
\bibliographystyle{plain}

\appendix

\clearpage
\section{Proofs for Section~\ref{sec_approx}}\label{app_proofs}

This section provides the proof of Lemma~\ref{lem_compl}. First, we
need two auxiliary lemmas, which are needed only for the proof of
Lemma~\ref{lem_compl} (they are not used outside of this appendix).

\begin{lem}\label{lem_val_intersection} 
  If~$\tau$ is strictly positive and $\xi_1,\xi_2$ are stable then
  $\valuation{\tau}{\xi_1,v}{} \cap \valuation{\tau}{\xi_2,v}{} =
  \valuation{\tau}{\xi_1\cap\xi_2,v}{}$, where we define
  $(\xi_1\cap\xi_2)(A) = \xi_1(A) \cap \xi_2(A)$ for any type
  variable~$A$.
\end{lem}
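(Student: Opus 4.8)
The plan is to prove the identity by induction on~$\tau$ in the well-founded order on types from Section~\ref{sec_types}, generalizing over stable $\xi_1,\xi_2$ and over~$v$. The inclusion $\valuation{\tau}{\xi_1\cap\xi_2,v}{}\subseteq\valuation{\tau}{\xi_1,v}{}\cap\valuation{\tau}{\xi_2,v}{}$ is immediate from Lemma~\ref{lem_val_subset}, since $\xi_1\cap\xi_2\subseteq\xi_i$; so the content is the reverse inclusion. If~$\tau$ is closed then both sides equal $\valuation{\tau}{v}{}$, and if $\tau=A$ then both sides equal $\xi_1(A)\cap\xi_2(A)$ by the definition of $\xi_1\cap\xi_2$.

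Since~$\tau$ is strictly positive, the only (co)inductive case is $\tau=d^\infty(\vec\alpha)$ with each~$\alpha_k$ strictly positive, and then $\valuation{\tau}{\xi_i,v}{}=\valuation{d}{\eta_i,v}{\infty}$ where $\eta_i=\xi_i[\vec{Y_i}/\vec B]$ and $Y_{i,j}=\valuation{\alpha_j}{\xi_i,v}{}$. Applying the induction hypothesis to the parameters~$\alpha_j$ (which are smaller than~$\tau$) gives $\valuation{\alpha_j}{\xi_1\cap\xi_2,v}{}=Y_{1,j}\cap Y_{2,j}$, so by Lemma~\ref{lem_val_dom} it suffices to prove $\valuation{d}{\eta_1,v}{\varkappa}\cap\valuation{d}{\eta_2,v}{\varkappa}=\valuation{d}{\eta_1\cap\eta_2,v}{\varkappa}$ for all $\varkappa\in\Omega$, with $\eta_1,\eta_2$ stable (note $\eta_i$ is stable by Lemma~\ref{lem_val_stable}). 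I would establish this by a transfinite sub-induction on~$\varkappa$. The crux is the successor step, which reduces to the claim that $\Phi_{d,\eta_1,v}(X_1)\cap\Phi_{d,\eta_2,v}(X_2)=\Phi_{d,\eta_1\cap\eta_2,v}(X_1\cap X_2)$ whenever $X_1,X_2$ are stable: unfolding~$\Phi$, a term in the left-hand side must be of the form $c\,t_1\ldots t_n$ with $t_k\in\valuation{\sigma_k}{\eta_1[X_1/A],v}{}\cap\valuation{\sigma_k}{\eta_2[X_2/A],v}{}$, and since each constructor argument type~$\sigma_k$ is smaller than~$\tau$, the main induction hypothesis rewrites this intersection as $\valuation{\sigma_k}{\eta_1[X_1/A]\cap\eta_2[X_2/A],v}{}$, which by Lemma~\ref{lem_val_dom} equals $\valuation{\sigma_k}{(\eta_1\cap\eta_2)[(X_1\cap X_2)/A],v}{}$. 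For the limit step, the coinductive case ($d=d_\nu$) is just commutation of intersections, while the inductive case ($d=d_\mu$) needs the monotonicity of $\valuation{d_\mu}{\eta,v}{\varkappa}$ in~$\varkappa$ (stated after Lemma~\ref{lem_val_subset}): two witnessing ordinals $\varkappa_1,\varkappa_2<\varkappa$ can be replaced by $\max(\varkappa_1,\varkappa_2)<\varkappa$, and this is the least routine point.

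For $\tau=\forall i.\tau'$ and $\tau=\tau_1\to\tau_2$ the difficulty is that the existential witnesses provided by $\valuation{\tau}{\xi_1,v}{}$ and by $\valuation{\tau}{\xi_2,v}{}$ need not be the same term. Given $t$ in both, and fixing $\varkappa\in\Omega$ (respectively $r\in\valuation{\tau_1}{v}{}$, which by closedness of~$\tau_1$ lies in $\valuation{\tau_1}{\xi_i,v}{}$ for $i=1,2$), I obtain $t\infred t_1$ and $t\infred t_2$ with $t_1$ in the $\xi_1$-valuation and $t_2$ in the $\xi_2$-valuation of~$\tau'$ (respectively of~$\tau_2$). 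I would merge them by confluence modulo~$\Uc$: there are $w_1\sim_\Uc w_2$ with $t_1\infred w_1$ and $t_2\infred w_2$; stability (Lemma~\ref{lem_val_stable}) places $w_2$ in both the $\xi_1$- and the $\xi_2$-valuation, so the induction hypothesis applied to~$\tau'$ (respectively~$\tau_2$) puts $w_2$ in the $\xi_1\cap\xi_2$-valuation, while $t\infred w_2$ by Lemma~\ref{lem_infred_concat}. Since $\varkappa$ (respectively~$r$) was arbitrary, this gives $t\in\valuation{\tau}{\xi_1\cap\xi_2,v}{}$.

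The main obstacle is the $d^\infty(\vec\alpha)$ case: one must carry the transfinite induction on~$\varkappa$ through the $\Phi$-step while keeping all the valuations stable — so that inside the constructor argument types~$\sigma_k$ the merging argument for $\forall$ and $\to$ is still available — and handle inductive limit ordinals through monotonicity. Once the merging technique is set up, the remaining cases are routine.
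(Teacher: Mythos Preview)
Your proposal is correct and follows essentially the same route as the paper: induction on~$\tau$, one inclusion by Lemma~\ref{lem_val_subset}, a transfinite sub-induction on~$\varkappa$ for the (co)inductive case, and confluence modulo~$\Uc$ together with stability to merge witnesses in the $\forall$ and $\to$ cases. Your treatment is in fact slightly cleaner in two spots: you correctly observe that for strictly positive non-closed~$\tau$ only the form $d^\infty(\vec\alpha)$ occurs (the paper unnecessarily treats general~$d^s$), and your handling of the inductive limit ordinal via monotonicity and $\max(\varkappa_1,\varkappa_2)$ is simpler than the paper's, which re-unfolds the constructor structure at the limit step.
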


\begin{proof}
  Induction on~$\tau$. Note that it suffices to show
  $\valuation{\tau}{\xi_1,v}{}\cap\valuation{\tau}{\xi_2,v}{}\subseteq
  \valuation{\tau}{\xi_1\cap\xi_2,v}{}$, because the inclusion
  in the other direction follows from Lemma~\ref{lem_val_subset}
  (noting that $\xi_1\cap\xi_2 \subseteq \xi_i$ for $i=1,2$).
  \begin{itemize}
  \item If $\tau$ is closed then
    $\valuation{\tau}{\xi_1,v}{}\cap\valuation{\tau}{\xi_2,v}{}
    =
    \valuation{\tau}{\xi_1,v}{}\cap\valuation{\tau}{\xi_1,v}{}
    = \valuation{\tau}{\xi_1\cap\xi_2,v}{}$ by
    Lemma~\ref{lem_val_dom}.
  \item If $\tau=A$ then
    $\valuation{\tau}{\xi_1,v}{}\cap\valuation{\tau}{\xi_2,v}{}
    = \xi_1(A)\cap\xi_2(A) =
    \valuation{\tau}{\xi_1\cap\xi_2,v}{}$.
  \item If $\tau=d_\mu^s(\vec{\alpha})$ then
    $\valuation{\tau}{\xi_1,v}{}\cap\valuation{\tau}{\xi_2,v}{} =
    \valuation{d_\mu}{\xi_1',v}{v(s)}\cap\valuation{d_\mu}{\xi_2',v}{v(s)}$
    where $\xi_n'(B_j) = \valuation{\alpha_j}{\xi_n,v}{}$ and
    $\xi_n'(A') = \xi_n(A')$ for $A \notin \{B_1,\ldots,B_l\}$ and
    $B_1,\ldots,B_l$ are the parameter type variables of~$d_\mu$. By
    induction on~$\varkappa$ we show
    $\valuation{d_\mu}{\xi_1',v}{\varkappa}\cap\valuation{d_\mu}{\xi_2',v}{\varkappa}
    \subseteq \valuation{d_\mu}{\xi_1'\cap\xi_2',v}{\varkappa}$. There
    are three cases.
    \begin{enumerate}
    \item $\varkappa=0$. Then
      $\valuation{d_\mu}{\xi_1',v}{\varkappa}\cap\valuation{d_\mu}{\xi_2',v}{\varkappa}
      = \emptyset\cap\emptyset = \emptyset =
      \valuation{d}{\xi_1'\cap\xi_2',v}{\varkappa}$.
    \item $\varkappa=\varkappa'+1$. Let $t \in
      \valuation{d_\mu}{\xi_1',v}{\varkappa}\cap\valuation{d_\mu}{\xi_2',v}{\varkappa}$. Then
      $t = c u_1 \ldots u_k$ with
      \[
      u_i \in
      \bigcap_{n\in\{1,2\}}\valuation{\sigma_i}{\xi_n'[\valuation{d_\mu}{\xi_n',v}{\varkappa'}/A],v}{}
      \]
      where~$A$ is the recursive type variable of~$d_\mu$ and $c \in
      \Constr(d_\mu)$ and $\ArgTypes(c) = (\sigma_1,\ldots,\sigma_k)$. By the main
      inductive hypothesis $u_i \in
      \valuation{\sigma_i}{\xi,v}{}$ where
      \[
        \xi =
        \bigcap_{n\in\{1,2\}}\xi_n'[\valuation{d_\mu}{\xi_n',v}{\varkappa'}/A].
      \]
      We have
      \[
      \xi(A) =
      \bigcap_{n\in\{1,2\}}\valuation{d_\mu}{\xi_n',v}{\varkappa'} =
      \valuation{d_\mu}{\xi_1'\cap\xi_2',v}{\varkappa'}
      \]
      by the inductive hypothesis. Hence
      \[
      \xi =
      (\xi_1'\cap\xi_2')[\valuation{d_\mu}{\xi_1'\cap\xi_2',v}{\varkappa'}/A].
      \]
      Therefore $t \in
      \valuation{d_\mu}{\xi_1'\cap\xi_2',v}{\varkappa}$.
    \item $\varkappa$ is a limit ordinal. Let
      \[
      t\in\bigcap_{n\in\{1,2\}}\valuation{d_\mu}{\xi_n',v}{\varkappa}=
      \bigcap_{n\in\{1,2\}}\bigcup_{\varkappa_n<\varkappa}\valuation{d_\mu}{\xi_n',v}{\varkappa_n}.
      \]
      Then for each $n\in\{1,2\}$ there is $\varkappa_n<\varkappa$ with
      $t\in\valuation{d_\mu}{\xi_n',v}{\varkappa_n}$. We have
      $\varkappa_n>0$ is a successor ordinal for $n=1,2$, because
      $\valuation{d_\mu}{\xi_n',v}{0} = \emptyset$. Thus $t = c u_1
      \ldots u_k$ with $u_i \in
      \bigcap_{n\in\{1,2\}}\valuation{\sigma_i}{\xi_n'[\valuation{d_\mu}{\xi_n',v}{\varkappa_n-1}/A],v}{}$
      where~$A$ is the recursive type variable of~$d_\mu$ and $c \in
      \Constr(d_\mu)$ and $\ArgTypes(c) = (\sigma_1,\ldots,\sigma_k)$. By the main
      inductive hypothesis $u_i \in \valuation{\sigma_i}{\xi,v}{}$
      where $\xi =
      \bigcap_{n\in\{1,2\}}\xi_n'[\valuation{d_\mu}{\xi_n',v}{\varkappa_n-1}/A]$. Without
      loss of generality assume $\varkappa_1 \le \varkappa_2$. We have
      $\xi(A) =
      \bigcap_{n\in\{1,2\}}\valuation{d_\mu}{\xi_n',v}{\varkappa_n-1}\subseteq\valuation{d_\mu}{\xi_1',v}{\varkappa_2-1}\cap\valuation{d_\mu}{\xi_2',v}{\varkappa_2-1}$. Hence
      by the inductive hypothesis $\xi(A) \subseteq
      \valuation{d_\mu}{\xi_1'\cap\xi_2',v}{\varkappa_2-1}$. Thus by
      Lemma~\ref{lem_val_subset} we have $u_i \in
      \valuation{\sigma_i}{\xi',v}{}$ where $\xi' =
      (\xi_1'\cap\xi_2')[\valuation{d_\mu}{\xi_1'\cap\xi_2',v}{\varkappa_2-1}/A]$. Therefore
      $t \in \valuation{d_\mu}{\xi_1'\cap\xi_2',v}{\varkappa_2}
      \subseteq \valuation{d_\mu}{\xi_1'\cap\xi_2',v}{\varkappa}$.
    \end{enumerate}
    We have thus shown
    $\valuation{\tau}{\xi_1,v}{}\cap\valuation{\tau}{\xi_2,v}{}
    \subseteq \valuation{d_\mu}{\xi_1'\cap\xi_2',v}{v(s)}$. Note that
    $\xi_1',\xi_2'$ are stable by Lemma~\ref{lem_val_stable}. We have
    $(\xi_1'\cap\xi_2')(B_j) = \xi_1'(B_j)\cap\xi_2'(B_j) =
    \valuation{\alpha_j}{\xi_1',v}{}\cap
    \valuation{\alpha_j}{\xi_2',v}{} =
    \valuation{\alpha_j}{\xi_1'\cap\xi_2',v}{} =
    \valuation{\alpha_j}{\xi_1\cap\xi_2,v}{}$ by the inductive
    hypothesis and Lemma~\ref{lem_val_dom}, because we may assume
    $B_1,\ldots,B_l\notin \FV(\alpha_j)$. Hence
    $\valuation{\tau}{\xi_1,v}{}\cap\valuation{\tau}{\xi_2,v}{}
    \subseteq \valuation{d_\mu(\vec{\alpha})}{\xi_1\cap\xi_2,v}{v(s)}
    = \valuation{\tau}{\xi_1\cap\xi_2,v}{}$ by
    Lemma~\ref{lem_val_dom}.
  \item If $\tau=d_\nu^s(\vec{\alpha})$ then
    $\valuation{\tau}{\xi_1,v}{}\cap\valuation{\tau}{\xi_2,v}{} =
    \valuation{d_\nu}{\xi_1',v}{v(s)}\cap\valuation{d_\nu}{\xi_2',v}{v(s)}$
    where $\xi_n'(B_j) = \valuation{\alpha_j}{\xi_n,v}{}$ and
    $\xi_n'(A') = \xi_n(A')$ for $A \notin \{B_1,\ldots,B_l\}$ and
    $B_1,\ldots,B_l$ are the parameter type variables of~$d_\nu$. Note
    that $\xi_1',\xi_2'$ are stable by
    Lemma~\ref{lem_val_stable}. First, by induction on~$\varkappa$ we
    show
    $\valuation{d_\nu}{\xi_1',v}{\varkappa}\cap\valuation{d_\nu}{\xi_2',v}{\varkappa}
    \subseteq \valuation{d_\nu}{\xi_1'\cap\xi_2',v}{\varkappa}$. If
    $\varkappa=0$ then
    $\valuation{d_\nu}{\xi_1',v}{\varkappa}\cap\valuation{d_\nu}{\xi_2',v}{\varkappa}
    = \Tb^\infty \cap \Tb^\infty = \Tb^\infty =
    \valuation{d_\nu}{\xi_1'\cap\xi_2',v}{\varkappa}$. So assume
    $\varkappa=\varkappa'+1$. Let
    $t \in
    \valuation{d_\nu}{\xi_1',v}{\varkappa}\cap\valuation{d_\nu}{\xi_2',v}{\varkappa}$. Then
    $t = c u_1 \ldots u_k$ with
    $u_i \in
    \bigcap_{n\in\{1,2\}}\valuation{\sigma_i}{\xi_n'[\valuation{d_\nu}{\xi_n',v}{\varkappa'}/A],v}{}$
    where~$A$ is the recursive type variable of~$d_\nu$ and
    $c \in \Constr(d_\nu)$ and $\ArgTypes(c)=(\sigma_1,\ldots,\sigma_k)$. By
    Lemma~\ref{lem_val_stable} and the main inductive hypothesis
    $u_i \in \valuation{\sigma_i}{\xi,v}{}$ where
    $\xi =
    \bigcap_{n\in\{1,2\}}\xi_n'[\valuation{d_\nu}{\xi_n',v}{\varkappa'}/A]$. By
    the inductive hypothesis
    \[
    \xi(A) = \valuation{d_\nu}{\xi_1',v}{\varkappa'} \cap
    \valuation{d_\nu}{\xi_2',v}{\varkappa'} =
    \valuation{d_\nu}{\xi_1'\cap\xi_2',v}{\varkappa'}.
    \]
    Hence $t \in
    \valuation{d_\nu}{\xi_1'\cap\xi_2',v}{\varkappa}$. Finally, assume
    $\varkappa$ is a limit ordinal. Then
    \[
    \begin{array}{rcl}
      \bigcap_{n\in\{1,2\}}\valuation{d_\nu}{\xi_n',v}{\varkappa} &=&
      \bigcap_{n\in\{1,2\}}\bigcap_{\varkappa'<\varkappa}\valuation{d_\nu}{\xi_n',v}{\varkappa'}\\
      &=&
      \bigcap_{\varkappa'<\varkappa}\bigcap_{n\in\{1,2\}}\valuation{d_\nu}{\xi_n',v}{\varkappa'}\\
      &\subseteq&
      \bigcap_{\varkappa'<\varkappa}\valuation{d_\nu}{\xi_1'\cap\xi_2',v}{\varkappa'} \\
      &=& \valuation{d_\nu}{\xi_1'\cap\xi_2',v}{\varkappa}.
    \end{array}
    \]
    Now like in the previous point
    \[
    \begin{array}{rcl}
      (\xi_1'\cap\xi_2')(B_j) &=&
      \xi_1'(B_j)\cap\xi_2'(B_j) \\ &=&
      \valuation{\alpha_j}{\xi_1',v}{}\cap
      \valuation{\alpha_j}{\xi_2',v}{} \\ &=&
      \valuation{\alpha_j}{\xi_1'\cap\xi_2',v}{}\\ &=&
      \valuation{\alpha_j}{\xi_1\cap\xi_2,v}{}
    \end{array}
    \]
    by the inductive hypothesis and Lemma~\ref{lem_val_dom}. Hence
    \[
    \begin{array}{rcl}
      \valuation{\tau}{\xi_1,v}{}\cap\valuation{\tau}{\xi_2,v}{}
      &=&
      \valuation{d_\nu}{\xi_1',v}{m}\cap\valuation{d_\nu}{\xi_2',v}{\varkappa} \\
      &\subseteq& \valuation{d_\nu}{\xi_1'\cap\xi_2',v}{\varkappa} \\
      &=& \valuation{d_\nu(\vec{\alpha})}{\xi_1\cap\xi_2,v}{\varkappa} \\
      &=& \valuation{\tau}{\xi_1\cap\xi_2,v}{}
    \end{array}
    \]
    by Lemma~\ref{lem_val_dom}.
  \item Suppose $\tau = \forall i . \tau'$. Let
    $t \in \valuation{\tau}{\xi_1,v}{} \cap
    \valuation{\tau}{\xi_2,v}{}$ and $\varkappa \in \Omega$. There
    are~$t_1,t_2$ with
    $t \infred t_1 \in \valuation{\tau'}{\xi_1,v[\varkappa/i]}{}$ and
    $t \infred t_2 \in \valuation{\tau'}{\xi_2,v[\varkappa/i]}{}$. By
    confluence modulo~$\Uc$ there are $t_1',t_2'$ such that
    $t_1 \infred t_1' \sim_\Uc t_2'$ and $t_2 \infred t_2'$. Using
    Lemma~\ref{lem_val_stable} we obtain
    $t_2' \in \valuation{\tau'}{\xi_1,v[\varkappa/i]}{} \cap
    \valuation{\tau'}{\xi_2,v[\varkappa/i]}{}$. Hence
    $t \infred t_2' \in
    \valuation{\tau'}{\xi_1\cap\xi_2,v[\varkappa/i]}{}$ by the
    inductive hypothesis. Thus
    $t \in \valuation{\tau}{\xi_1\cap\xi_2,v}{}$. This shows
    $\valuation{\tau}{\xi_1,v}{}\cap\valuation{\tau}{\xi_2,v}{}\subseteq\valuation{\tau}{\xi_1\cap\xi_2,v}{}$.
  \item Suppose $\tau=\tau_1\to\tau_2$. Let
    $t \in \valuation{\tau_1 \to \tau_2}{\xi_1,v}{} \cap
    \valuation{\tau_1 \to \tau_2}{\xi_2,v}{}$. Let
    $w \in \valuation{\tau_1}{\xi_1\cap\xi_2,v}{}$. We have
    $w \in \valuation{\tau_1}{\xi_1,v}{} \cap
    \valuation{\tau_1}{\xi_2,v}{}$. Hence there are $w_1,w_2$ with
    $t w \infred w_1 \in \valuation{\tau_2}{\xi_1,v}{}$ and
    $t w \infred w_2 \in \valuation{\tau_2}{\xi_2,v}{}$. By confluence
    modulo~$\Uc$ there are $w_1',w_2'$ such that $w_1'\sim_\Uc w_2'$
    and $w_i \infred w_i'$. By Lemma~\ref{lem_val_stable} both
    $\valuation{\tau_2}{\xi_1,v}{}$,$\valuation{\tau_2}{\xi_2,v}{}$
    are stable, and thus so is
    $\valuation{\tau_2}{\xi_1,v}{} \cap
    \valuation{\tau_2}{\xi_2,v}{}$. Hence
    $w_1' \in \valuation{\tau_2}{\xi_1,v}{} \cap
    \valuation{\tau_2}{\xi_2,v}{}$. By the inductive hypothesis
    $w_1' \in \valuation{\tau_2}{\xi_1\cap\xi_2,v}{}$. This shows
    $\valuation{\tau}{\xi_1,v}{}\cap\valuation{\tau}{\xi_2,v}{}\subseteq\valuation{\tau}{\xi_1\cap\xi_2,v}{}$.\qedhere
  \end{itemize}
\end{proof}

\begin{lem}\label{lem_mu_varkappa} 
  If $\xi_1,\xi_2$ are stable then
  $\valuation{\mu}{\xi_1,v}{\varkappa_1} \cap
  \valuation{\mu}{\xi_2,v}{\varkappa_2} \subseteq
  \valuation{\mu}{\xi_2,v}{\varkappa_1}$.
\end{lem}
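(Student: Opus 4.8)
The plan is to argue by transfinite induction on~$\varkappa_1$. Write $\mu = d_\mu(\vec{\alpha})$ and, for $n\in\{1,2\}$, put $Y_{n,j} = \valuation{\alpha_j}{\xi_n,v}{}$, so that $\valuation{\mu}{\xi_n,v}{\varkappa} = \valuation{d_\mu}{\xi_n[\vec{Y_n}/\vec{B}],v}{\varkappa}$; recall also that $\valuation{d_\mu}{\cdot}{0} = \emptyset$, that $\valuation{d_\mu}{\cdot}{\varkappa+1} = \Phi_{d_\mu,\cdot}(\valuation{d_\mu}{\cdot}{\varkappa})$, that at limit ordinals the approximation is the union of the earlier ones, and that the family is increasing in~$\varkappa$. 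For $\varkappa_1 = 0$ the left-hand side is $\emptyset$, so there is nothing to prove. If $\varkappa_1$ is a limit ordinal and $t$ lies in the left-hand side, then $t \in \valuation{\mu}{\xi_1,v}{\varkappa_1'}$ for some $\varkappa_1' < \varkappa_1$, so by the inductive hypothesis $t \in \valuation{\mu}{\xi_2,v}{\varkappa_1'} \subseteq \valuation{\mu}{\xi_2,v}{\varkappa_1}$.

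The successor case $\varkappa_1 = \varkappa_1' + 1$ is the heart of the proof. Let $t$ lie in the left-hand side. First replace~$\varkappa_2$ by the least ordinal $\varkappa_2^{*}$ with $t \in \valuation{\mu}{\xi_2,v}{\varkappa_2^{*}}$; since $\valuation{\mu}{\xi_2,v}{0} = \emptyset$ and limit approximations are unions, $\varkappa_2^{*}$ is a successor, say $\varkappa_2^{*} = \varkappa_2'' + 1$, and this replacement is harmless because the target of the inclusion does not mention~$\varkappa_2$. Unfolding $\Phi_{d_\mu}$ on the two memberships of~$t$ gives $t = c\,u_1 \ldots u_k$ with $c \in \Constr(d_\mu)$, $\ArgTypes(c) = (\sigma_1,\ldots,\sigma_k)$, and, for each~$i$, $u_i \in \valuation{\sigma_i}{\zeta_1,v}{} \cap \valuation{\sigma_i}{\zeta_2,v}{}$, where $\zeta_1 = \xi_1[\vec{Y_1}/\vec{B}][\valuation{\mu}{\xi_1,v}{\varkappa_1'}/A]$ and $\zeta_2 = \xi_2[\vec{Y_2}/\vec{B}][\valuation{\mu}{\xi_2,v}{\varkappa_2''}/A]$.

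Now $\zeta_1$ and $\zeta_2$ are stable by stability of $\xi_1,\xi_2$ and Lemma~\ref{lem_val_stable}, so by Lemma~\ref{lem_val_intersection} we have $u_i \in \valuation{\sigma_i}{\zeta_1\cap\zeta_2,v}{}$, where $(\zeta_1\cap\zeta_2)(A) = \valuation{\mu}{\xi_1,v}{\varkappa_1'} \cap \valuation{\mu}{\xi_2,v}{\varkappa_2''}$ and $(\zeta_1\cap\zeta_2)(B_j) = Y_{1,j} \cap Y_{2,j}$. By the inductive hypothesis $(\zeta_1\cap\zeta_2)(A) \subseteq \valuation{\mu}{\xi_2,v}{\varkappa_1'}$, while $(\zeta_1\cap\zeta_2)(B_j) \subseteq Y_{2,j}$ trivially; since $\FV(\sigma_i) \subseteq \{A,B_1,\ldots,B_l\}$, Lemma~\ref{lem_val_dom} lets us ignore the values at other type variables, and then monotonicity (Lemma~\ref{lem_val_subset}, using that $\sigma_i$ is strictly positive) gives $u_i \in \valuation{\sigma_i}{\xi_2[\vec{Y_2}/\vec{B}][\valuation{\mu}{\xi_2,v}{\varkappa_1'}/A],v}{}$. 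Hence $t = c\,u_1 \ldots u_k \in \Phi_{d_\mu,\xi_2[\vec{Y_2}/\vec{B}],v}(\valuation{\mu}{\xi_2,v}{\varkappa_1'}) = \valuation{d_\mu}{\xi_2[\vec{Y_2}/\vec{B}],v}{\varkappa_1} = \valuation{\mu}{\xi_2,v}{\varkappa_1}$, which completes the induction.

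The main obstacle is this successor case, and within it the bookkeeping of the three type variable valuations involved: the $\xi_1$-side at depth $\varkappa_1'$, the $\xi_2$-side at depth $\varkappa_2''$, and the target $\xi_2$-side at depth $\varkappa_1'$. The decisive step is to fuse the two memberships of each $u_i$ into a single membership over the intersection valuation via Lemma~\ref{lem_val_intersection}; after that the inductive hypothesis handles the recursive slot~$A$ and plain monotonicity handles the parameter slots~$\vec{B}$. A secondary point that must be handled carefully is the reduction to $\varkappa_2$ being a successor before $\Phi_{d_\mu}$ is unfolded.
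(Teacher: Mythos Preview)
Your proof is correct and follows essentially the same approach as the paper: transfinite induction on~$\varkappa_1$, with the successor step handled by unfolding~$\Phi_{d_\mu}$ on both sides, fusing the two memberships of each~$u_i$ via Lemma~\ref{lem_val_intersection}, applying the inductive hypothesis at the recursive slot~$A$, and then using monotonicity (Lemma~\ref{lem_val_subset}) together with Lemma~\ref{lem_val_dom}. Your reduction of~$\varkappa_2$ to a successor by taking the least witnessing ordinal is a clean alternative to the paper's device of first assuming $\varkappa_1 < \varkappa_2$ and then arguing that~$\varkappa_2$ may be taken to be a successor.
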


\begin{proof}
  Induction on~$\varkappa_1$. We may assume $\varkappa_1 <
  \varkappa_2$. If $\varkappa_1 = 0$ then
  $\valuation{\mu}{\xi_1,v}{\varkappa_1} \cap
  \valuation{\mu}{\xi_2,v}{\varkappa_2} = \emptyset \cap
  \valuation{\mu}{\xi_2,v}{\varkappa_2} = \emptyset =
  \valuation{\mu}{\xi_2,v}{\varkappa_1}$.

  If $\varkappa_1$ is a limit ordinal then there exists
  $\varkappa_0 < \varkappa_1$ with
  $t \in \valuation{\mu}{\xi,v}{\varkappa_0}$ and we may use the
  inductive hypothesis.

  If $\varkappa_1=\varkappa_1'+1$ then we may
  assume $\varkappa_2=\varkappa_2'+1$. Let $t \in
  \valuation{\mu}{\xi_1,v}{\varkappa_1} \cap
  \valuation{\mu}{\xi_2,v}{\varkappa_2}$. Then $t = c t_1
  \ldots t_k$ with $t_i \in \valuation{\sigma_i}{\zeta_1,v}{}
  \cap \valuation{\sigma_i}{\zeta_2,v}{}$, $c \in \Constr(\mu)$,
  $\ArgTypes(c) = (\sigma_1,\ldots,\sigma_k)$ and (using
  Lemma~\ref{lem_val_dom})
  \[
  \zeta_l=\xi_l[\valuation{\mu}{\xi_l,v}{\varkappa_l'}/A,\valuation{\alpha_1}{\xi_l,v}{}/B_1,\ldots,\valuation{\alpha_k}{\xi_l,v}{}/B_k]
  \]
  for $l=1,2$, and $\mu = d_\mu(\vec{\alpha})$, and $B_1,\ldots,B_k$
  are the parameter type variables of~$d_\mu$, and~$A$ is the
  recursive type variable of~$d_\mu$. By Lemma~\ref{lem_val_stable}
  the valuations~$\zeta_1,\zeta_2$ are stable. By
  Lemma~\ref{lem_val_intersection} we have $t_i \in
  \valuation{\sigma_i}{\zeta_1\cap\zeta_2,v}{}$. Using the inductive
  hypothesis, Lemma~\ref{lem_val_intersection} and
  Lemma~\ref{lem_val_subset} we conclude that $t_i \in
  \valuation{\sigma_i}{\zeta,v}{}$ where
  \[
  \zeta=\xi'[\valuation{\mu}{\xi_2,v}{\varkappa_1'}/A,\valuation{\alpha_1}{\xi_1\cap\xi_2,v}{}/B_1,\ldots,\valuation{\alpha_k}{\xi_1\cap\xi_2,v}{}/B_k].
  \]
  By Lemma~\ref{lem_val_subset} we have $\zeta\subseteq\zeta'$ where
  \[
  \zeta'=\xi_2[\valuation{\mu}{\xi_2,v}{\varkappa_1'}/A,\valuation{\alpha_1}{\xi_2,v}{}/B_1,\ldots,\valuation{\alpha_k}{\xi_2,v}{}/B_k].
  \]
  Hence $t_i \in \valuation{\sigma_i}{\zeta',v}{}$ by
  Lemma~\ref{lem_val_subset}. But this by definition implies
  $t \in \valuation{\mu}{\xi_2,v}{\varkappa_1}$.
\end{proof}

{ \renewcommand{\thethm}{\ref{lem_compl}}
\begin{lem}
  If $\Xi=\{\xi_n\}_{n\in\Nbb}$ is $\nu$-hereditary with~$v$ and
  semi-complete with~$Z,\iota$, and $\{t_n\}_{n\in\Nbb}$ is a
  $\tau,Z$-sequence (and thus a $\tau,\Xi$-sequence by
  Lemma~\ref{lem_approx_family}), then
  \[
    t_\infty = f^\nu(\tau,\Xi,\{t_n\}_{n\in\Nbb}) \in
    \valuation{\tau}{\iota,v}{}.
  \]
\end{lem}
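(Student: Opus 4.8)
Write-up of the proof plan (to be placed as the proof of Lemma~\ref{lem_compl}):

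I will argue by induction on $\tau$ in the well-founded order of Section~\ref{sec_types} (lexicographic on the multiset extension of $\prec$ and the size of the type), distinguishing the shapes of a strictly positive type; write $Z = \{\zeta_n\}_{n\in\Nbb}$. Three cases are short. If $\tau$ is closed, then $f^\nu(\tau,\Xi,\{t_n\}_{n\in\Nbb}) = t_0 \in \valuation{\tau}{\zeta_0,v}{} = \valuation{\tau}{\iota,v}{}$, since a closed type's valuation is independent of the type variable valuation. If $\tau = A$, then $\{t_n\}_{n\in\Nbb}$ is an $A,Z$-sequence, so $t_\infty = f^\nu(A,\Xi,\{t_n\}_{n\in\Nbb}) \in \iota(A) = \valuation{A}{\iota,v}{}$ directly by the definition of semi-completeness. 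For $\tau = \forall i . \tau'$ and $\tau = \tau_1 \to \tau_2$ we have $t_\infty = t_0$; here from $t_n \in \valuation{\tau}{\zeta_n,v}{}$ and $t_0 \infred t_n$ (Lemma~\ref{lem_infred_concat}) I extract, for each ordinal $\varkappa$ assigned to the bound size variable (resp.\ each $r \in \valuation{\tau_1}{\iota,v}{}$, using that $\tau_1$ is closed so $\valuation{\tau_1}{\iota,v}{} = \valuation{\tau_1}{\zeta_n,v}{}$), reducts $t_0 \infred s_n \in \valuation{\tau'}{\zeta_n,v[\varkappa/i]}{}$ (resp.\ $t_0 r \infred s_n \in \valuation{\tau_2}{\zeta_n,v}{}$); Lemma~\ref{lem_sequence}, applicable since $Z$ is stable, turns these into a genuine $\tau'$- (resp.\ $\tau_2$-) sequence over $Z$, and the induction hypothesis --- after using Lemma~\ref{lem_hereditary_val} and Remark~\ref{rem_complete_valuation} to pass to $v[\varkappa/i]$ in the $\forall$ case --- together with Lemma~\ref{lem_infred_f_nu} yields the needed reduct of $t_0$, so that $t_0 \in \valuation{\tau}{\iota,v}{}$.

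The substantive case is $\tau = d^\infty(\vec{\alpha})$, not closed. By Lemma~\ref{lem_rho_infty_form}, $f^\nu$ here passes through the common head constructor $c$ into the argument types $\sigma_1,\ldots,\sigma_m$ of $c$ with the family $\Xi' = \Xi\rval{\Tc}$, where $\Tc$ sends the recursive variable $A$ of $d$ back to $\tau$ itself, each parameter variable $B_j$ to $\alpha_j$, and every other variable to itself. The obstruction is precisely that, since $\Tc$ maps $A$ to $\tau$, applying the type induction hypothesis to an argument type $\sigma_k$ containing $A$ would require $\Xi'$ to be semi-complete with a valuation whose value at $A$ is $\valuation{\tau}{\iota,v}{}$ --- the very conclusion being proved for $\tau$. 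I will break this circularity by a nested transfinite induction on the approximation ordinal $\varkappa$, showing that $f^\nu(d^\infty(\vec{\alpha}),\Xi,\{s_n\}_{n\in\Nbb}) \in \valuation{d}{\iota[\vec{Y}/\vec{B}],v}{\varkappa}$, with $Y_j = \valuation{\alpha_j}{\iota,v}{}$, for every $\tau,Z$-sequence $\{s_n\}_{n\in\Nbb}$; in the inductive-$d$ case one strengthens the statement to sequences lying uniformly in the $\varkappa$-th approximation, which loses nothing because the least approximation levels of the members of a sequence are non-increasing along $\infred$ by Lemma~\ref{lem_mu_varkappa}. At a successor stage $\varkappa+1$, decompose $s_n = c\,s_n^1 \ldots s_n^m$: the recursive argument positions ($\sigma_k = A$) are dispatched by the ordinal induction hypothesis applied to $\{s_n^k\}_{n\in\Nbb}$, and for the other argument types $\sigma_k$ --- which are strictly below $\tau$ --- I apply the type induction hypothesis, checking its semi-completeness premise for $\Xi'$ (against a suitable stable subfamily of $\Xi'$ and the valuation sending $A$ to $\valuation{d}{\iota[\vec{Y}/\vec{B}],v}{\varkappa}$, each $B_j$ to $Y_j$, and the rest as $\iota$ does) variable-by-variable: at $A$ it is the ordinal induction hypothesis again, at each $B_j$ it is the type induction hypothesis for $\alpha_j$ (also strictly below $\tau$, since deleting the outermost $d$ strictly decreases the multiset of definitions), and at the remaining variables it is the semi-completeness of $\Xi$. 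Limit stages are immediate --- an intersection when $d$ is coinductive, and for inductive $d$ a reduction to a strictly smaller ordinal, again via Lemma~\ref{lem_mu_varkappa}. Taking $\varkappa = \infty$ will then give $t_\infty \in \valuation{d}{\iota[\vec{Y}/\vec{B}],v}{\infty} = \valuation{\tau}{\iota,v}{}$.

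Stability of $Z$ and of the derived families, the inclusion $Z\rval{\Tc} \subseteq \Xi'$ (Lemmas~\ref{lem_val_subset} and \ref{lem_val_stable}), and the unwinding equations $f^\nu(A,\Xi\rval{\Tc},\cdot) = f^\nu(\tau_A,\Xi,\cdot)$ from Definition~\ref{def_t_infty} will be used throughout to keep the side conditions of the induction hypothesis and of Lemma~\ref{lem_sequence} in force and to read the recursive calls off the type structure. The main obstacle is the $d^\infty(\vec{\alpha})$ case: because $\Tc$ sends the recursive variable of $d$ back to $\tau$, a direct appeal to the induction hypothesis would be circular, and the nested induction on the approximation ordinal is what resolves it --- with the alternation of least and greatest fixpoints forcing the slightly different, and for inductive $d$ ordinal-bookkeeping-heavy, treatment of the two fixpoint styles. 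Everything else will be routine manipulation with the stability and monotonicity lemmas of Section~\ref{sec_semantics}.
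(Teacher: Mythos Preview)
Your proposal is correct and follows essentially the same route as the paper: induction on $\tau$, with the $\forall$ and arrow cases handled via Lemma~\ref{lem_sequence} and the inductive hypothesis, and the substantive $d^\infty(\vec{\alpha})$ case handled by a nested transfinite induction on the approximation ordinal, threading the type induction hypothesis through the constructor argument types by establishing semi-completeness of $\Xi' = \Xi\rval{\Tc}$ at each of $A$, the $B_j$, and the remaining variables. The only organizational difference is that in the inductive-$d$ subcase the paper keeps the target valuation fixed at the full fixpoint $\iota'(A)=\valuation{\tau}{\iota,v}{}$ and varies only the source family $Z^\varkappa$ (so its intermediate claim is slightly stronger), whereas you vary source and target together; both variants go through and coincide at $\varkappa=\infty$.
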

\addtocounter{thm}{-1}}

\begin{proof}
  We proceed by induction on~$\tau$. So let $Z=\{\zeta_n\}_{n\in\Nbb}$
  be stable and let $\Xi=\{\xi_n\}_{n\in\Nbb}$ be $\nu$-hereditary
  with~$v$ and semi-complete with~$Z,\iota$, and let
  $\{t_n\}_{n\in\Nbb}$ be a $\tau,Z$-sequence. By the definition
  of~$t_\infty$ there are the following possibilities.
  \begin{itemize}
  \item If $\tau$ is closed then $t_\infty = t_0 \in
    \valuation{\tau}{\xi_0,v}{} =
    \valuation{\tau}{\iota,v}{}$ by Lemma~\ref{lem_val_dom}.
  \item If $\tau = A$ then $t_\infty \in \iota(A) =
    \valuation{\tau}{\iota,v}{}$ because~$\Xi$ is semi-complete
    with~$Z,\iota$.
  \item If $\tau = \mu^\infty$ with $\mu = d_\mu(\vec{\alpha})$ then
    let $\Xi' = \Xi\rval{\Tc}$ where $\Tc=\{\tau_{A'}\}_{A'\in V_T}$
    with $\tau_A = \tau$, $\tau_{B_j} = \alpha_j$ and $\tau_{A'} = A'$
    for $A' \notin \{A,B_1,\ldots,B_l\}$, where $B_1,\ldots,B_l$ are
    the parameter type variables of~$d_\mu$, and~$A$ is the recursive
    type variable of~$d_\mu$. Note that~$\Xi'$ is $\nu$-hereditary,
    because~$\Xi$ is. Let $Z'=\{\zeta_n'\}_{n\in\Nbb}$ where
    $\zeta_n' =
    \zeta_n[\valuation{\alpha_1}{\zeta_n,v}{}/B_1,\ldots,\valuation{\alpha_l}{\zeta_n,v}{}/B_l]$.
    Note that $Z' \subseteq \Xi'$ follows from
    Lemma~\ref{lem_val_subset}, because $\zeta_n \subseteq \xi_n$ and
    thus
    $\valuation{\alpha_j}{\zeta_n,v}{} \subseteq
    \valuation{\alpha_j}{\xi_n,v}{}$. Also,~$Z'$ is stable by
    Lemma~\ref{lem_val_stable}, because~$Z$ is. Let
    $\iota'(A') = \valuation{\tau_{A'}}{\iota,v}{}$ for any~$A'$. We
    show the following.
    \begin{itemize}
    \item[$(\star)$] Let $X=\{\chi_n\}_{n\in\Nbb}$ be such
      that $\chi_n(A') = \zeta_n'(A')$ for $A'\ne A$. If~$\Xi'$ is
      semi-complete with~$X,\iota'$ then~$\Xi'$ is semi-complete with
      $X',\iota'$ where $X'=\{\chi_n'\}_{n\in\Nbb}$ and
      \[
      \chi_n'=\chi_n[\Phi_{d_\mu,\zeta_n',v}(\chi_n(A))/A].
      \]
    \end{itemize}
    First note that because $\Xi'$ is semi-complete with $X,\iota'$ we
    have $X \subseteq \Xi'$, so $\chi_n(A) \subseteq \xi_n'(A) =
    \valuation{\tau_A}{\xi_n,v}{} = \valuation{\mu}{\xi_n,v}{\infty} =
    \valuation{d_\mu}{\xi_n',v}{\infty}$ by Lemma~\ref{lem_val_dom}
    because $\xi_n'(B_j) = \valuation{\alpha_j}{\xi_n,v}{}$. Also
    $\zeta_n' \subseteq \xi_n'$ because $Z' \subseteq \Xi'$. Therefore
    \[
    \begin{array}{rcl}
      \chi_n'(A) &=& \Phi_{d_\mu,\zeta_n',v}(\chi_n(A)) \\
      &\subseteq&
      \Phi_{d_\mu,\xi_n',v}(\valuation{d_\mu}{\xi_n',v}{\infty}) \\
      &=& \valuation{d_\mu}{\xi_n',v}{\infty} \\
      &=& \xi_n'(A)
    \end{array}
    \]
    by Lemma~\ref{lem_val_subset}. Thus $X' \subseteq \Xi'$. Note
    that~$X'$ is stable by the third point in
    Lemma~\ref{lem_val_stable}. It remains to show that for any~$A'$
    and any $A',X'$-sequence $\{w_n\}_{n\in\Nbb}$ we have
    $w_\infty = f^\nu(A',\Xi',\{w_n\}_{n\in\Nbb}) \in \iota'(A')$. If
    $A'\ne A$ then $\{w_n\}_{n\in\Nbb}$ is also a $A',X$-sequence, so
    $w_\infty \in \iota'(A')$ follows from the fact that~$\Xi'$ is
    semi-complete with $X,\iota'$. If $A'=A$ then
    $w_n \in \Phi_{d_\mu,\zeta_n',v}(\chi_n(A))$ for $n \in
    \Nbb$. Therefore there exists~$c \in \Constr(\mu)$ such that
    $w_n = c w_n^1 \ldots w_n^k$ and $w_n^i \infred w_{n+1}^i$ and
    $w_n^i \in \valuation{\sigma_i}{\zeta_n'[\chi_n(A)/A],v}{}$ where
    $\ArgTypes(c) = (\sigma_1,\ldots,\sigma_k)$. Because
    $\chi_n(A')=\zeta_n'(A')$ for $A'\ne A$, we have
    $\zeta_n'[\chi_n(A)/A] = \chi_n$. Hence
    $w_n^i \in \valuation{\sigma_i}{\chi_n,v}{}$. Thus
    $\{w_n^i\}_{n\in\Nbb}$ is a $\sigma_i,X$-sequence. Note
    that~$\sigma_i$ is smaller than~$\tau$. Because~$\Xi'$ is
    semi-complete with $X,\iota'$, by the inductive hypothesis we have
    $w_\infty^i = f^\nu(\sigma_i,\Xi',\{w_n^i\}_{n\in\Nbb}) \in
    \valuation{\sigma_i}{\iota',v}{}$. Note that
    \[
    \begin{array}{rcl}
      w_\infty &=& f^\nu(A,\Xi',\{w_n\}_{n\in\Nbb}) \\
      &=& f^\nu(\tau_{A},\Xi,\{w_n\}_{n\in\Nbb}) \\
      &=& f^\nu(\tau,\Xi,\{w_n\}_{n\in\Nbb}) \\
      &=& c w_\infty^1 \ldots w_\infty^k.
    \end{array}
    \]
    Hence $w_\infty \in \Phi_{d_\mu,\iota',v}(\iota'(A))$. We
    have $\iota'(A) = \valuation{\tau}{\iota,v}{} =
    \valuation{d_\mu}{\iota',v}{\infty}$ by
    Lemma~\ref{lem_val_dom} because $\iota'(B_j) =
    \valuation{\alpha_j}{\iota,v}{}$ for $j=1,\ldots,l$. Hence
    $w_\infty \in
    \Phi_{d_\mu,\iota',v}(\valuation{d_\mu}{\iota',v}{\infty})
    = \valuation{d_\mu}{\iota',v}{\infty} =
    \valuation{\tau}{\iota,v}{} = \iota'(A)$. We have thus
    shown~$(\star)$.

    Let $Z^\varkappa = \{\zeta^\varkappa_n\}_{n\in\Nbb}$ be such that
    $\zeta^\varkappa_n=\zeta_n'[\valuation{\mu}{\zeta_n,v}{\varkappa}/A]$. Then
    $Z^\varkappa \subseteq \Xi'$ follows from
    Lemma~\ref{lem_val_subset}. Also~$Z^\varkappa$ is stable by
    Lemma~\ref{lem_val_stable}, because~$Z,Z'$ are. By induction
    on~$\varkappa$ we show that~$\Xi'$ is semi-complete with
    $Z^\varkappa,\iota'$. We distinguish three cases.
    \begin{itemize}
    \item $\varkappa=0$. Then $\zeta^\varkappa_n =
      \zeta_n'[\valuation{\mu}{\zeta_n,v}{0}/A] =
      \zeta_n'[\emptyset/A]$. We show that for every~$A'$ and every
      $A',Z^0$-sequence $\{w_n\}_{n\in\Nbb}$ we have $w_\infty =
      f^\nu(\tau_{A'},\Xi,\{w_n\}_{n\in\Nbb}) \in \iota'(A') =
      \valuation{\tau_{A'}}{\iota,v}{}$. Let $\{w_n\}_{n\in\Nbb}$ be a
      $A',Z^0$-sequence. If $A'\ne A$ then
      $\zeta_n^0(A')=\zeta_n'(A')=\valuation{\tau_{A'}}{\zeta_n,v}{}$,
      so $\{w_n\}_{n\in\Nbb}$ is a $\tau_{A'},Z$-sequence. Moreover,
      $\tau_{A'}=\alpha_j$ or $\tau_{A'}=A'$, so~$\tau_{A'}$ has
      smaller size than~$\tau$. Thus by the (main) inductive
      hypothesis we have $f^\nu(\tau_{A'},\Xi,\{w_n\}_{n\in\Nbb}) \in
      \valuation{\tau_{A'}}{\iota,v}{}$, i.e., $w_\infty \in
      \iota'(A')$. If $A'=A$ then $\zeta^0_n(A) = \emptyset$, so there
      does not exist a $A,Z^0$-sequence.
    \item $\varkappa=\varkappa'+1$. Then
      \[
      \begin{array}{rcl}
        \zeta^\varkappa_n &=&
        \zeta_n'[\valuation{\mu}{\zeta_n,v}{\varkappa'+1}/A] \\
        &=&
        \zeta_n'[\Phi_{d_\mu,\zeta_n',v}(\valuation{\mu}{\zeta_n,v}{\varkappa'})/A]
        \\
        &=&
        \zeta^{\varkappa'}_n[\Phi_{d_\mu,\zeta_n',v}(\zeta^{\varkappa'}_n(A))/A].
      \end{array}
      \]
      By the inductive hypothesis~$\Xi'$ is semi-complete with
      $Z^{\varkappa'},\iota'$. Hence~$\Xi'$ is semi-complete with
      $Z^\varkappa,\iota'$ by~$(\star)$.
    \item $\varkappa$ is a limit ordinal. We need to show that for
      all~$A'$ and every
      $A',Z^\varkappa$-sequence~$\{w_n\}_{n\in\Nbb}$ we have
      \[
      \begin{array}{rcl}
        w_\infty &=& f^\nu(A',\Xi',\{w_n\}_{n\in\Nbb}) \\
        &=&
        f^\nu(\tau_{A'},\Xi,\{w_n\}_{n\in\Nbb}) \in \iota'(A').
        \end{array}
      \]
      If $A'\ne A$ then the argument is the same as the one used in
      showing that~$\Xi'$ is semi-complete with~$Z^0,\iota'$. So
      assume $A'=A$. Then $w_n \in \zeta^\varkappa_n(A) =
      \valuation{\mu}{\zeta_n,v}{\varkappa}$ for $n \in
      \Nbb$. Since~$\varkappa$ is a limit ordinal, for each $n \in
      \Nbb$ there is~$\varkappa_n < \varkappa$ such that $w_n \in
      \valuation{\mu}{\zeta_n,v}{\varkappa_n}$. Because $w_0 \infred
      w_n$ for $n \in \Nbb$ and $\zeta_0,\zeta_n$ are stable, by
      Lemma~\ref{lem_mu_varkappa} we obtain $w_n \in
      \valuation{\mu}{\zeta_n,v}{\varkappa_0}$ for $n \in \Nbb$. So
      $\{w_n\}_{n\in\Nbb}$ is a $A,Z^{\varkappa_0}$-sequence. By the
      inductive hypothesis~$\Xi'$ is semi-complete with
      $Z^{\varkappa_0},\iota'$, so $w_\infty \in \iota'(A)$.
    \end{itemize}

    Now taking $\varkappa=\infty$ we conclude that~$\Xi'$ is
    semi-complete with $Z^\infty,\iota'$. Note that $\zeta^\infty_n(A)
    = \valuation{\tau}{\zeta_n,v}{}$ for $n\in\Nbb$. Hence
    $\{t_n\}_{n\in\Nbb}$ is a $A,Z^\infty$-sequence, because it is a
    $\tau,Z$-sequence and $\tau_A = \tau$. Therefore $t_\infty =
    f^\nu(\tau,\Xi,\{t_n\}_{n\in\Nbb}) \in \iota'(A) =
    \valuation{\tau}{\iota,v}{}$.
  \item If $\tau = {\nu_0}^\infty$ with ${\nu_0} =
    d_{\nu_0}(\vec{\alpha})$ then let $\Xi' = \Xi\rval{\Tc}$ where
    $\Tc=\{\tau_{A'}\}_{A'\in V_T}$ with $\tau_A = \tau$, $\tau_{B_j}
    = \alpha_j$ and $\tau_{A'} = A'$ for $A' \notin
    \{A,B_1,\ldots,B_l\}$, where $B_1,\ldots,B_l$ are the parameter
    type variables of~$d_{\nu_0}$, and~$A$ is the recursive type
    variable of~$d_{\nu_0}$. Note that~$\Xi'$ is $\nu$-hereditary,
    because~$\Xi$ is. Let $Z'=\{\zeta_n'\}_{n\in\Nbb}$ where
    $\zeta_n'(A') = \valuation{\tau_{A'}}{\zeta_n,v}{}$ for
    all~$A'$. Note that $Z' \subseteq \Xi'$ follows from
    Lemma~\ref{lem_val_subset}, because $\zeta_n \subseteq \xi_n$ and
    thus $\valuation{\tau_{A'}}{\zeta_n,v}{} \subseteq
    \valuation{\tau_{A'}}{\xi_n,v}{}$. Also~$Z'$ is stable by
    Lemma~\ref{lem_val_stable}, because~$Z$ is. Let $\iota'(A') =
    \valuation{\tau_{A'}}{\iota,v}{}$ for any~$A'$. We show the
    following.
    \begin{itemize}
    \item[$(\star)$] Let $\iota_0$ be a type variable valuation such
      that $\iota_0(A') = \iota'(A')$ for $A'\ne A$. If~$\Xi'$ is
      semi-complete with~$Z',\iota_0$ then~$\Xi'$ is semi-complete
      with $Z',\iota_1$ where
      $\iota_1=\iota_0[\Phi_{d_{\nu_0},\iota_0,v}(\iota_0(A))/A]$.
    \end{itemize}
    Since $Z' \subseteq \Xi'$ and~$Z'$ is stable, it suffices to show
    that for every $A',Z'$-sequence~$\{w_n\}_{n\in\Nbb}$ we have
    $w_\infty = f^\nu(A',\Xi',\{w_n\}_{n\in\Nbb}) \in \iota_1(A')$. So
    let~$\{w_n\}_{n\in\Nbb}$ be a $A',Z'$-sequence,
    i.e.~$w_n \in \valuation{A'}{\zeta_n',v}{} = \zeta_n'(A')$ and
    $w_n \infred w_{n+1}$. If $A'\ne A$ then
    $\iota_1(A') = \iota_0(A')$. Because $\Xi'$ is semi-complete with
    $Z',\iota_0$, we have $w_\infty \in \iota_0(A') = \iota_1(A')$. If
    $A'=A$ then
    $w_n \in \valuation{\tau}{\zeta_n,v}{} =
    \Phi_{d_{\nu_0},\zeta_n',v}(\valuation{\tau}{\zeta_n,v}{})$. Hence
    $w_n = c w_n^1 \ldots w_n^k$ and
    $w_n^i \in \valuation{\sigma_i}{\zeta_n',v}{}$ and
    $w_n^i \infred w_{n+1}^i$ where $c \in \Constr({\nu_0})$,
    $\ArgTypes(c) = (\sigma_1,\ldots,\sigma_k)$. Thus
    $\{w_n^i\}_{n\in\Nbb}$ is a $\sigma_i,Z'$-sequence. Because~$\Xi'$
    is semi-complete with $Z',\iota_0$ and~$\sigma_i$ is smaller
    than~$\tau$, by the inductive hypothesis
    $w_\infty^i = f^\nu(\sigma_i,\Xi',\{w_n^i\}_{n\in\Nbb}) \in
    \valuation{\sigma_i}{\iota_0,v}{}$. Note that
    $w_\infty = c w_\infty^1 \ldots w_\infty^k$ by
    Definition~\ref{def_t_infty}. Hence
    $w_\infty \in \Phi_{d_{\nu_0},\iota_0,v}(\iota_0(A)) = \iota_1(A)$
    by Corollary~\ref{cor_phi_dom}. We have thus shown~$(\star)$.

    Let $\iota_0=\iota'[\Tb^\infty/A]$. We show that~$\Xi'$ is
    semi-complete with~$Z',\iota_0$. We have already shown $Z'
    \subseteq \Xi'$ and that~$Z'$ is stable. So let
    $\{w_n\}_{n\in\Nbb}$ be a $A',Z'$-sequence. We show $w_\infty =
    f^\nu(A',\Xi',\{w_n\}_{n\in\Nbb}) \in \iota_0(A')$. We have $w_n
    \in \valuation{A'}{\zeta_n',v}{} =
    \valuation{\tau_{A'}}{\zeta_n,v}{}$. If $A' \ne A$ then $w_n \in
    \zeta_n'(A') = \valuation{\tau_{A'}}{\zeta_n,v}{}$, so
    $\{w_n\}_{n\in\Nbb}$ is a
    $\tau_{A'},Z$-sequence. Since~$\tau_{A'}$ is smaller than~$\tau$
    (because $\tau_{A'} = A'$ or $\tau_{A'} = \alpha_j$) and~$\Xi$ is
    semi-complete with~$Z,\iota$, by the inductive hypothesis
    $w_\infty = f^\nu(A',\Xi',\{w_n\}_{n\in\Nbb}) =
    f^\nu(\tau_{A'},\Xi,\{w_n\}_{n\in\Nbb}) \in
    \valuation{\tau_{A'}}{\iota,v}{} = \iota'(A') = \iota_0(A')$. If
    $A'=A$ then $\iota_0(A) = \Tb^\infty$, so $w_\infty \in
    \iota_0(A)$.

    Now let
    $\iota_\varkappa=\iota'[\valuation{\nu_0}{\iota,v}{\varkappa}/A]$
    for an ordinal $\varkappa$ (recall that
    $\valuation{\nu_0}{\iota,v}{0} = \Tb^\infty$). We show by
    induction on~$\varkappa$ that~$\Xi'$ is semi-complete with
    $Z',\iota_\varkappa$. For $\varkappa=0$ we have shown this in the
    previous paragraph. If $\varkappa=\varkappa'+1$ then this follows
    from~$(\star)$ because
    $\Phi_{d_{\nu_0},\iota_{\varkappa'},v}(\iota_{\varkappa'}(A)) =
    \Phi_{d_{\nu_0},\iota_{\varkappa'},v}(\valuation{\nu_0}{\iota,v}{\varkappa'})
    =
    \Phi_{d_{\nu_0},\iota',v}(\valuation{d_{\nu_0}}{\iota',v}{\varkappa'})
    = \valuation{\nu_0}{\iota,v}{\varkappa'+1}$ by
    Corollary~\ref{cor_phi_dom} and Lemma~\ref{lem_val_dom} (note that
    $\iota_\varkappa(B_j) = \iota'(B_j) =
    \valuation{\alpha_j}{\iota,v}{}$). So let $\varkappa$ be a limit
    ordinal. We have already shown $Z' \subseteq \Xi'$ and that~$Z'$
    is stable. So let $\{w_n\}_{n\in\Nbb}$ be a $A',Z'$-sequence. By
    the inductive hypothesis $\Xi'$ is semi-complete with
    $Z',\iota_{\varkappa'}$ for $\varkappa'<\varkappa$. So if $A'=A$
    then $w_\infty \in
    \bigcap_{\varkappa'<\varkappa}\iota_{\varkappa'}(A) =
    \bigcap_{\varkappa'<\varkappa}\valuation{\nu_0}{\iota,v}{\varkappa'}
    = \valuation{\nu_0}{\iota,v}{\varkappa} = \iota_\varkappa(A)$. If
    $A'\ne A$ then $w_\infty \in \iota_0(A') = \iota_\varkappa(A')$.

    Now because $\{t_n\}_{n\in\Nbb}$ is a $\tau,Z$-sequence and
    $\tau_A=\tau$, the sequence $\{t_n\}_{n\in\Nbb}$ is also a
    $A,Z'$-sequence. Because~$\Xi'$ is semi-complete with
    $Z',\iota_\infty$ and
    \[
    f^\nu(A,\Xi',\{t_n\}_{n\in\Nbb}) =
    f^\nu(\tau,\Xi,\{t_n\}_{n\in\Nbb}) = t_\infty,
    \]
    by Definition~\ref{def_complete} we have $t_\infty \in
    \iota_\infty(A) = \valuation{\nu_0}{\iota,v}{\infty} =
    \valuation{\tau}{\iota,v}{}$.
  \item If $\tau = \forall i . \tau'$ with $i$ fresh then
    $t_\infty = t_0$. We need to show
    $t_0 \in \valuation{\tau}{\iota,v}{}$. Let $\varkappa \in
    \Omega$. For $n\in\Nbb$ we have
    $t_0 \infred t_n \in \valuation{\tau}{\zeta_n,v}{}$, so for each
    $n\in\Nbb$ there exists~$t_n'$ with
    $t_0 \infred t_n' \in
    \valuation{\tau'}{\zeta_n,v[\varkappa/i]}{}$. Because~$Z$ is
    stable, by Lemma~\ref{lem_sequence} there is a sequence
    $\{t_n''\}_{n\in\Nbb}$ such that $t_0 i \infred t_0''$ and
    $t_n'' \infred t_{n+1}''$ and
    $t_n'' \in \valuation{\tau'}{\zeta_n,v[\varkappa/i]}{}$ for
    $n\in\Nbb$. Hence $\{t_n''\}_{n\in\Nbb}$ is a $\tau',Z$-sequence
    (with $v[\varkappa/i]$). The family~$\Xi$ is $\nu$-hereditary
    with~$v[\varkappa/i]$ by
    Lemma~\ref{lem_hereditary_val}. Because~$\Xi$ is also
    semi-complete with $Z,\iota$, by
    Remark~\ref{rem_complete_valuation} and the inductive hypothesis
    there is~$t^\varkappa$ with
    $t_0 \infred t^\varkappa \in
    \valuation{\tau'}{\iota,v[\varkappa/i]}{}$.
    Because~$\varkappa \in \Omega$ was arbitrary, this implies
    $t_0 \in \valuation{\tau}{\iota,v}{}$.
  \item If $\tau = \tau_1 \to \tau_2$ with $\tau_1$ closed
    and~$\tau_2$ strictly positive, then $t_\infty = t_0$. We need to
    show $t_0 \in \valuation{\tau}{\iota,v}{}$. For $n\in\Nbb$ we have
    $t_0 \infred t_n \in \valuation{\tau}{\zeta_n,v}{}$. Let
    $r \in \valuation{\tau_1}{\iota,v}{}$. Because~$\tau_1$ is closed,
    by Lemma~\ref{lem_val_dom} we have
    $r \in \valuation{\tau_1}{\zeta_n,v}{}$ for each $n\in\Nbb$. Hence
    for each $n \in \Nbb$ there is~$t_n'$ with
    $t_0 r \infred t_n r \infred t_n' \in
    \valuation{\tau_2}{\zeta_n,v}{}$. Because~$Z$ is stable, by
    Lemma~\ref{lem_sequence} there is a sequence
    $\{t_n''\}_{n\in\Nbb}$ such that $t_0 r \infred t_0''$ and
    $t_n'' \infred t_{n+1}''$ and
    $t_n'' \in \valuation{\tau_2}{\zeta_n,v}{}$ for $n\in\Nbb$. Hence
    $\{t_n''\}_{n\in\Nbb}$ is a $\tau_2,Z$-sequence. By the inductive
    hypothesis there is $t_\infty' \in \valuation{\tau_2}{\iota,v}{}$
    with $t_0'' \infred t_\infty'$. Since $t_0 r \infred t_0''$, also
    $t_0 r \infred t_\infty'$. We have thus shown
    $t_0 \in \valuation{\tau}{\iota,v}{}$.\qedhere
  \end{itemize}
\end{proof}

\clearpage
\section{Proofs for Section~\ref{sec_soundness}}\label{app_proofs_soundness}

This section provides the details of the proof of
Theorem~\ref{thm_soundness}. First, we need two auxiliary lemmas,
which are needed only for the proof of Theorem~\ref{thm_soundness}
(they are not used outside of this appendix).

\begin{lem}\label{lem_val_subst} 
  $\valuation{\tau[\tau'/A]}{\xi,v}{} =
  \valuation{\tau}{\xi[\valuation{\tau'}{\xi,v}{}/A],v}{}$.
\end{lem}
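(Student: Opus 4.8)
The plan is to prove the identity by induction on $\tau$, using the well-founded order on types introduced at the end of Section~\ref{sec_types}. Throughout I assume the usual capture-avoidance conventions: $A$ is chosen distinct from the recursive type variable and the parameter type variables of every (co)inductive definition, and no size variable bound by a $\forall$ in $\tau$ occurs free in $\tau'$; I also assume $\tau$ and $\tau'$ are strictly positive, so that every valuation appearing in the statement is defined. In the base cases there is nothing to do beyond unfolding: if $\tau$ is closed then $\tau[\tau'/A] = \tau$ and $\valuation{\tau}{\xi,v}{}$ does not depend on the type variable valuation by the first part of Lemma~\ref{lem_val_dom}, so both sides coincide; if $\tau = A$ then $\tau[\tau'/A] = \tau'$ and the right-hand side is $\xi[\valuation{\tau'}{\xi,v}{}/A](A) = \valuation{\tau'}{\xi,v}{}$; and if $\tau = B$ is a type variable different from $A$ then both sides equal $\xi(B)$.

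For $\tau = \forall i . \tau_1$ I would unfold the definition of the valuation of a quantified type: $t \in \valuation{(\forall i . \tau_1)[\tau'/A]}{\xi,v}{}$ iff for every $\varkappa \in \Omega$ there is $t'$ with $t \infred t' \in \valuation{\tau_1[\tau'/A]}{\xi,v[\varkappa/i]}{}$. By the inductive hypothesis this inner valuation equals $\valuation{\tau_1}{\xi[\valuation{\tau'}{\xi,v[\varkappa/i]}{}/A],v[\varkappa/i]}{}$, and since $i \notin \FSV(\tau')$, Lemma~\ref{lem_val_idom} lets me replace $\valuation{\tau'}{\xi,v[\varkappa/i]}{}$ by $\valuation{\tau'}{\xi,v}{}$; the resulting condition is exactly membership of $t$ in $\valuation{\forall i . \tau_1}{\xi[\valuation{\tau'}{\xi,v}{}/A],v}{}$. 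For $\tau = \tau_1 \to \tau_2$ I would likewise unfold the arrow valuation and apply the inductive hypothesis to $\tau_1$ and $\tau_2$, both smaller than $\tau$; here strict positivity makes $\tau_1$ closed, so $\tau_1[\tau'/A] = \tau_1$ and only the $\tau_2$ instance of the hypothesis is really needed.

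The main case is $\tau = \rho^s$ with $\rho = d(\vec{\alpha})$. Substitution commutes with the formation of a (co)inductive type, so $\tau[\tau'/A] = d(\vec{\alpha}[\tau'/A])^s$, and hence $\valuation{\tau[\tau'/A]}{\xi,v}{} = \valuation{d}{\xi[\vec{Y}/\vec{B}],v}{v(s)}$ where $Y_j = \valuation{\alpha_j[\tau'/A]}{\xi,v}{}$ and $\vec{B}$ are the parameter type variables of $d$. Each $\alpha_j$ is strictly smaller than $\tau$ in the order on types, so the inductive hypothesis gives $Y_j = \valuation{\alpha_j}{\xi[\valuation{\tau'}{\xi,v}{}/A],v}{}$. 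Writing $\xi' = \xi[\valuation{\tau'}{\xi,v}{}/A]$, the right-hand side of the statement unfolds to $\valuation{d}{\xi'[\vec{Y}/\vec{B}],v}{v(s)}$ (the parameter valuations computed with $\xi'$ are again the $Y_j$). Finally $\xi[\vec{Y}/\vec{B}]$ and $\xi'[\vec{Y}/\vec{B}]$ agree on every parameter type variable of $d$, so by the second part of Lemma~\ref{lem_val_dom} (equivalently Corollary~\ref{cor_phi_dom}) they induce the same valuation approximation $\valuation{d}{\cdot,v}{\varkappa}$ for every $\varkappa$, in particular for $\varkappa = v(s)$; this closes the case.

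The only step that needs a moment's thought is this last one, and there the point is simply that changing the type variable valuation at $A$ affects $\valuation{d}{\cdot}{\varkappa}$ only through the parameter type variables — which is precisely what the second part of Lemma~\ref{lem_val_dom} records — so the substitution is absorbed into the parameter valuations \emph{before} the fixpoint iteration defining $\valuation{d}{\cdot}{\varkappa}$ even begins, and no delicate commutation past the alternating least/greatest fixpoints is required. Everything else is a routine unfolding of Definition~\ref{def_semantics} combined with the inductive hypothesis.
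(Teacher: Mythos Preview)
Your proof is correct and follows essentially the same approach as the paper's: induction on~$\tau$, with the main case $\tau = d^s(\vec{\alpha})$ handled by applying the inductive hypothesis to each parameter~$\alpha_j$ and then invoking the second part of Lemma~\ref{lem_val_dom} to absorb the change at~$A$, and the $\forall$ case handled via Lemma~\ref{lem_val_idom}. Your explicit remark that in the arrow case $\tau_1$ is closed (so only the $\tau_2$ instance of the hypothesis is really needed) is a small sharpening of what the paper writes, but otherwise the arguments coincide.
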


\begin{proof}
  Induction on~$\tau$. Let $\xi' =
  \xi[\valuation{\tau'}{\xi,v}{}/A]$. If $\tau = \rho^s$ and
  $\rho = d(\vec{\alpha})$ then
  $\valuation{\tau[\tau'/A]}{\xi,v}{} =
  \valuation{d}{\xi[\vec{X}/\vec{B}],v}{v(s)}$ where
  $X_j = \valuation{\alpha_j[\tau'/A]}{\xi,v}{}$ and $\vec{B}$ are the
  parameter type variables of~$d$. By the inductive hypothesis
  $X_j = \valuation{\alpha_j}{\xi',v}{}$. By Lemma~\ref{lem_val_dom}
  we have
  $\valuation{\tau[\tau'/A]}{\xi,v}{} =
  \valuation{d}{\xi[\vec{X}/\vec{B}],v}{v(s)} =
  \valuation{d}{\xi'[\vec{X}/\vec{B}],v}{v(s)} =
  \valuation{\tau}{\xi',v}{}$.

  If $\tau=A$ then $\valuation{\tau[\tau'/A]}{\xi,v}{} =
  \valuation{\tau'}{\xi,v}{} =
  \valuation{\tau}{\xi',v}{}$.

  If $\tau = \forall i . \tau_1$ then let
  $t \in \valuation{\tau[\tau'/A]}{\xi,v}{}$. Let
  $\varkappa \in \Omega$. There is~$t'$ such that
  $t \infred t' \in
  \valuation{\tau_1[\tau'/A]}{\xi,v[\varkappa/i]}{}$.  By the
  inductive hypothesis
  $t' \in
  \valuation{\tau_1}{\xi[\valuation{\tau'}{\xi,v[\varkappa/i]}{}/A],v[\varkappa/i]}{}$.
  This implies
  $t \in \valuation{\tau}{\xi[\valuation{\tau'}{\xi,v}{}/A],v}{}$,
  using Lemma~\ref{lem_val_idom} (we may assume
  $i \notin \FSV(\tau')$). The other direction is analogous.

  If $\tau=\tau_1\to\tau_2$ then let
  $t \in \valuation{\tau[\tau'/A]}{\xi,v}{}$. Let
  $r \in \valuation{\tau_1}{\xi',v}{}$. Then
  $r \in \valuation{\tau_1[\tau'/A]}{\xi,v}{}$ by the inductive
  hypothesis. Thus
  $t r \infred t' \in \valuation{\tau_2[\tau'/A]}{\xi,v}{}$. By the
  inductive hypothesis $t' \in \valuation{\tau_2}{\xi',v}{}$. The
  inclusion in the other direction is analogous.
\end{proof}

\begin{lem}\label{lem_ival_subst} 
  $\valuation{\tau[s/i]}{v}{} = \valuation{\tau}{v[v(s)/i]}{}$.
\end{lem}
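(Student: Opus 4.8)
The plan is to prove the statement by induction on~$\tau$, exactly parallel to the proof of Lemma~\ref{lem_val_subst}, with one auxiliary observation used throughout: size-expression evaluation commutes with substitution, i.e.\ $v(s'[s/i]) = v[v(s)/i](s')$ for every size expression~$s'$. This auxiliary fact is itself a trivial induction on~$s'$ using the clauses that define the extension of~$v$ to size expressions (the cases $\infty$, $0$, and $j \ne i$ are immediate; the cases $+1$, $\min$, $\max$ are immediate from the induction hypothesis; the case $j = i$ is $v(s) = v[v(s)/i](i)$). Since~$\tau$ is closed, the cases with nontrivial content are decorated (co)inductive types, arrow types, and size quantification; the argument carries an arbitrary type variable valuation~$\xi$ along just as in Lemma~\ref{lem_val_subst}, but this plays no essential role since size substitution does not touch~$\xi$.

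For $\tau = \rho^{s'}$ with $\rho = d(\vec{\alpha})$ we have $\tau[s/i] = d^{s'[s/i]}(\vec{\alpha}[s/i])$, so $\valuation{\tau[s/i]}{v}{} = \valuation{d}{\xi,v}{v(s'[s/i])}$ where $\xi(B_j) = \valuation{\alpha_j[s/i]}{v}{}$ and $\vec{B}$ are the parameter type variables of~$d$. By the induction hypothesis $\valuation{\alpha_j[s/i]}{v}{} = \valuation{\alpha_j}{v[v(s)/i]}{}$, and by the auxiliary fact $v(s'[s/i]) = v[v(s)/i](s')$; moreover by Lemma~\ref{lem_val_idom} the valuation approximation $\valuation{d}{\xi,-}{\varkappa}$ does not depend on the size valuation at all. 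Combining these with Lemma~\ref{lem_val_dom} yields $\valuation{\tau[s/i]}{v}{} = \valuation{d}{\xi',v[v(s)/i]}{v[v(s)/i](s')} = \valuation{\rho}{v[v(s)/i]}{v[v(s)/i](s')} = \valuation{\tau}{v[v(s)/i]}{}$, where $\xi'(B_j) = \valuation{\alpha_j}{v[v(s)/i]}{}$. For $\tau = \tau_1 \to \tau_2$ we have $\tau[s/i] = \tau_1[s/i] \to \tau_2[s/i]$; one unfolds the definition of the arrow-type valuation and applies the induction hypothesis to~$\tau_1$ (so that arguments of type~$\tau_1$ under $v[v(s)/i]$ coincide with arguments of type $\tau_1[s/i]$ under~$v$) and to~$\tau_2$ (to identify the outputs), in both directions. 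For $\tau = \forall j . \tau'$ we first rename~$j$, if necessary, so that $j \ne i$ and $j \notin \SV(s)$; then $\tau[s/i] = \forall j . \tau'[s/i]$. Given~$t$ in the left-hand valuation and $\varkappa \in \Omega$, there is~$t'$ with $t \infred t' \in \valuation{\tau'[s/i]}{v[\varkappa/j]}{}$; by the induction hypothesis $t' \in \valuation{\tau'}{v[\varkappa/j][v[\varkappa/j](s)/i]}{}$, and since $j \notin \SV(s)$ we have $v[\varkappa/j](s) = v(s)$, so (as $i \ne j$) $t' \in \valuation{\tau'}{v[v(s)/i][\varkappa/j]}{}$; hence $t \in \valuation{\forall j . \tau'}{v[v(s)/i]}{}$, using also that $j \notin \FSV(t)$ transfers unchanged. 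The reverse inclusion is symmetric.

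The only delicate points are the bookkeeping around size-variable capture in the quantifier case (handled by the preliminary renaming) and the commutation of evaluation with substitution (the auxiliary fact above); both are entirely routine, so no real obstacle is expected. The lemma is a standard substitution lemma and its chief role is to license the implicit side conditions used in later proofs (e.g.\ that in the (inst) typing rule the valuation of $\tau[s/i]$ matches the valuation of~$\tau$ under the updated size valuation).
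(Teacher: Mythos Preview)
Your proposal is correct and follows exactly the approach the paper indicates (the paper's proof is simply ``Induction on~$\tau$''), with the details filled in appropriately. The auxiliary fact $v(s'[s/i]) = v[v(s)/i](s')$ and the appeal to Lemma~\ref{lem_val_idom} for independence of~$\valuation{d}{\xi,v}{\varkappa}$ from~$v$ are precisely the ingredients needed, and your handling of variable capture in the $\forall$-case is sound.
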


\begin{proof}
  Induction on~$\tau$.
\end{proof}

{ \renewcommand{\thethm}{\ref{thm_soundness}}
\begin{thm}[Soundness]
  If $\Gamma \proves t : \tau$ with
  $\Gamma = x_1:\tau_1,\ldots,x_n:\tau_n$ then for every size variable
  valuation $v : \Vc_S \to \Omega$ and all
  $t_1 \in \valuation{\tau_1}{v}{}$, \ldots,
  $t_n \in \valuation{\tau_n}{v}{}$ there exists~$t'$ such that
  $\erase{t}[t_1/x_1,\ldots,t_n/x_n] \infred t' \in
  \valuation{\tau}{v}{}$.
\end{thm}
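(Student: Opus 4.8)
The plan is to argue by induction on the length of the derivation of $\Gamma \proves t : \tau$, generalizing over $v$ and over the substituted terms; write $\theta = [t_1/x_1,\ldots,t_n/x_n]$. Since the elements of any valuation lie in $\Tb^\infty$ and hence carry no size annotations, $\erase{t}\theta$ contains no free size variables, so the side conditions ``$i\notin\FSV(\cdot)$'' in the interpretations of $\forall$-types are automatic. The structural rules are routine. The (ax) case is immediate. For (sub) one applies the induction hypothesis and Lemma~\ref{lem_subtyping}. For (app) the induction hypothesis yields $\erase{t}\theta \infred u \in \valuation{\alpha\to\beta}{v}{}$ and $\erase{t'}\theta \infred u' \in \valuation{\alpha}{v}{}$; since $\infred$ is a congruence for application (immediate from its coinductive definition together with reflexivity of $\infred$), $\erase{tt'}\theta \infred uu'$, and the definition of the arrow interpretation together with Lemma~\ref{lem_infred_concat} supplies a reduct in $\valuation{\beta}{v}{}$. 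For (lam) one uses the induction hypothesis in the context extended by $x:\alpha$, applied to each $r\in\valuation{\alpha}{v}{}$, preceded by a $\beta$-step. For (inst) and (gen) one uses the definition of $\valuation{\forall i.\tau}{v}{}$ and Lemma~\ref{lem_ival_subst}, noting for (gen) that $i\notin\FSV(\Gamma)$ lets one pass to $v[\varkappa/i]$ via Lemma~\ref{lem_val_idom} before invoking the induction hypothesis (and that $\erase{t}\theta$ itself serves as the required intermediate term, by reflexivity of $\infred$).

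The (con) and (case) rules rely on the substitution lemma Lemma~\ref{lem_val_subst}, which rewrites $\valuation{\sigma_k^l[\rho^s/A][\vec\tau/\vec B]}{v}{}$ as a valuation of $\sigma_k^l$ in a suitably modified type-variable environment. For (con) one then observes $\valuation{\rho^{s+1}}{v}{} = \Phi_{d,\xi',v}(\valuation{\rho^s}{v}{})$ with $\xi'$ sending the parameter variables $\vec B$ to $\valuation{\vec\tau}{v}{}$, and feeds the reducts of the constructor arguments (from the induction hypothesis) into $\Phi$, using congruence of $\infred$ under constructors. For (case), the induction hypothesis reduces the matched term to some $c_j\,u_1\ldots u_{n_j}$ with each $u_l$ in the corresponding argument valuation; congruence of $\infred$ under $\case$ followed by one $\iota$-step (Lemma~\ref{lem_infred_append}) exposes the branch $\erase{t_j}\theta[\vec u/\vec{x_j}]$, and the induction hypothesis for that branch, applied in the context extended by $\vec{x_j}:\vec\delta_j$, finishes the case via Lemma~\ref{lem_infred_concat}.

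For the (fix) rule, with $b$ the body, the erasure is $r := \Ys(\lambda f.\erase{b}\theta)$ and $r \reduces \erase{b}\theta[r/f]$. I would prove by transfinite induction on an ordinal $\varkappa$ that $r \in \valuation{\forall\vec j.\mu^i\to\tau}{v[\varkappa/i]}{}$ for every $v$: at successor stages this uses the typing premise $f:\forall\vec j.\mu^i\to\tau \proves b:\forall\vec j.\mu^{i+1}\to\tau$, the identity $r\reduces\erase{b}\theta[r/f]$, and Lemmas~\ref{lem_infred_subst} and~\ref{lem_infred_concat}; at limit stages it uses that $\valuation{\mu}{v}{\varkappa} = \bigcup_{\varkappa'<\varkappa}\valuation{\mu}{v}{\varkappa'}$, so any argument of inductive type available at stage $\varkappa$ already occurs at an earlier stage where the induction hypothesis applies. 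Taking $\varkappa = \infty$ gives the (fix) case.

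The (cofix) rule is the crux, and is where the approximation theorem enters, through Lemma~\ref{lem_cofix}. With $b$ the body, set $r := \Ys(\lambda f.\erase{b}\theta)$, $r_0 = r$, $r_{n+1} = \erase{b}\theta[r_n/f]$, so $r \infred r_n$ for all $n$ by Lemma~\ref{lem_infred_subst}. Letting $\tau' = \chgtgt(\tau,\nu^{\min(s,j)})$, it suffices by Lemma~\ref{lem_cofix} to verify, by a side induction on $n$, that $r_n \infred r_n' \in \valuation{\tau'}{v[n/j]}{}$. The base case holds because $\min(s,0)=0$ and $\valuation{\nu^0}{v}{}=\Tb^\infty$, so an easy induction on the shape of $\tau'$ shows $\valuation{\tau'}{v[0/j]}{}$ contains every term. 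For the step, apply the main induction hypothesis to the typing premise $f:\chgtgt(\tau,\nu^{\min(s,j)}) \proves b:\chgtgt(\tau,\nu^{\min(s,j+1)})$ with $f$ instantiated by $r_n'$ (using $j\notin\FSV(\Gamma)$ and Lemma~\ref{lem_val_idom} to keep the $t_i$ in their valuations at $v[n/j]$), obtaining a reduct of $\erase{b}\theta[r_n'/f]$ in $\valuation{\chgtgt(\tau,\nu^{\min(s,j+1)})}{v[n/j]}{}$; since $j\notin\SV(\tau)$, Lemmas~\ref{lem_ival_subst} and~\ref{lem_val_idom} identify this set with $\valuation{\tau'}{v[(n+1)/j]}{}$, and $r_{n+1} \infred \erase{b}\theta[r_n'/f]$ by Lemma~\ref{lem_infred_subst}, so Lemma~\ref{lem_infred_concat} delivers the required $r_{n+1}'$. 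Lemma~\ref{lem_cofix} then yields a reduct of $r$ in $\valuation{\tau}{v}{}$. I expect this case to be the main obstacle, but the genuinely hard part — passing from ``works for every finite approximation'' to ``works in the limit'' — has already been isolated in Lemma~\ref{lem_cofix} and the approximation theorem, so the remaining work is bookkeeping: threading the term substitutions through the induction and matching up the size expressions.
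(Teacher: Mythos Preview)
Your proposal is correct and follows essentially the same route as the paper's proof: induction on the derivation length, with the structural rules dispatched via Lemmas~\ref{lem_subtyping}, \ref{lem_val_subst}, \ref{lem_ival_subst}, and \ref{lem_val_idom}; a transfinite induction on~$\varkappa$ for (fix); and a side induction on~$n$ feeding into Lemma~\ref{lem_cofix} for (cofix). The only cosmetic differences are that the paper phrases the (fix) successor step as showing the arrow property of~$r$ directly (rather than naming an intermediate reduct), and that it leaves the ``$\valuation{\tau'}{v[0/j]}{}$ contains every term'' observation implicit in the phrase ``directly from definitions''; your unpacking of both points is accurate.
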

\addtocounter{thm}{-1}}

\begin{proof}
  By induction on the length of the derivation of the typing
  judgement. We consider the last rule in the derivation.

  \begin{itemize}
  \item[(ax)] If $\Gamma, x : \tau \proves x : \tau$ then the claim
    follows directly from definitions.

  \item[(sub)] Assume $x_1:\tau_1,\ldots,x_n:\tau_n\proves t : \tau'$
    because of $x_1:\tau_1,\ldots,x_n:\tau_n\proves t : \tau$ and
    $\tau \sqsubseteq \tau'$. Let
    $t_1 \in
    \valuation{\tau_1}{v}{}$,\ldots,$t_n\in\valuation{\tau_n}{v}{}$. By
    the inductive hypothesis there is~$t'$ with
    $\erase{t}[t_1/x_1,\ldots,t_n/x_n] \infred t' \in
    \valuation{\tau}{v}{}$. By Lemma~\ref{lem_subtyping} we also have
    $t' \in \valuation{\tau'}{v}{}$.

  \item[(con)] Assume $\Gamma \proves c r_1 \ldots r_n : \rho^{s+1}$
    because of
    $\Gamma \proves r_k : \sigma_k[\rho^s/A][\vec{\alpha}/\vec{B}]$
    for $k=1,\ldots,n$ and $\ArgTypes(c) = (\sigma_1,\ldots,\sigma_n)$
    and $\Def(c) = d$ and $\rho = d(\vec{\alpha})$ and
    $\Gamma=x_1:\tau_1,\ldots,x_m:\tau_m$. Let
    $t_1\in\valuation{\tau_1}{v}{}$,\ldots,$t_m\in\valuation{\tau_m}{v}{}$. By
    the inductive hypothesis for $k=1,\ldots,n$ there is~$r_k'$ with
    $\erase{r_k}[t_1/x_1,\ldots,t_n/x_n] \infred r_k' \in
    \valuation{\sigma_k[\rho^s/A][\vec{\alpha}/\vec{B}]}{v}{}$. By
    Lemma~\ref{lem_val_subst} and Lemma~\ref{lem_val_dom} we have
    $r_k' \in
    \valuation{\sigma_k}{\xi[\valuation{\rho}{v}{v(s)}/A],v}{}$ where
    $\xi(B_j) = \valuation{\alpha_j}{v}{}$. Hence
    \[
      \erase{c r_1 \ldots r_n}[t_1/x_1,\ldots,t_m/x_m] \infred c r_1'
      \ldots r_n' \in \valuation{\rho}{v}{v(s+1)}.
    \]

  \item[(lam)] Assume
    $\Gamma \proves (\lambda x : \alpha . t) : \alpha \to \beta$
    because of $\Gamma,x:\alpha\proves t:\beta$ and
    $\Gamma=x_1:\tau_1,\ldots,x_n:\tau_n$. Let
    $t_1\in\valuation{\tau_1}{v}{}$,\ldots,
    $t_n\in\valuation{\tau_n}{v}{}$. Let
    $r \in \valuation{\alpha}{v}{}$. By the inductive hypothesis there
    is~$t'$ with
    $\erase{t}[t_1/x_1,\ldots,t_n/x_n,r/x] \infred t'
    \in\valuation{\beta}{v}{}$. Hence
    \[
      \erase{\lambda x : \alpha . t}[t_1/x_1,\ldots,t_n/x_n] r \infred t'
      \in \valuation{\beta}{v}{}.
    \]
    This implies
    $\erase{\lambda x : \alpha . t}[t_1/x_1,\ldots,t_n/x_n] \in
    \valuation{\alpha \to \beta}{v}{}$.

  \item[(app)] Assume $\Gamma \proves t t' : \beta$ because of
    $\Gamma \proves t : \alpha \to \beta$ and
    $\Gamma \proves t' : \alpha$ and
    $\Gamma=x_1:\tau_1,\ldots,x_n:\tau_n$. Let
    $t_1\in\valuation{\tau_1}{v}{}$,\ldots,
    $t_n\in\valuation{\tau_n}{v}{}$. By the inductive hypothesis there
    are~$r,r'$ such that
    \[
      \erase{t}[t_1/x_1,\ldots,t_n/x_n] \infred r \in
      \valuation{\alpha\to\beta}{v}{}
    \]
    and
    \[
      \erase{t'}[t_1/x_1,\ldots,t_n/x_n] \infred r' \in
      \valuation{\alpha}{v}{}.
    \]
    Hence there is~$r''$ with
    $\erase{t t'}[t_1/x_1,\ldots,t_n/x_n] \infred r r' \infred r'' \in
    \valuation{\beta}{v}{}$.

  \item[(inst)] Assume $\Gamma \proves t s : \tau[s/i]$ because of
    $\Gamma \proves t : \forall i . \tau$, where
    $\Gamma = x_1:\tau_1,\ldots,x_n:\tau_n$. Let
    $t_1\in\valuation{\tau_1}{v}{}$,\ldots,$t_n\in\valuation{\tau_n}{v}{}$. By
    the inductive hypothesis
    $\erase{t}[t_1/x_1,\ldots,t_n/x_n] \infred t' \in
    \valuation{\forall i . \tau}{v}{}$. Hence there is~$t''$ with
    $t' \infred t'' \in \valuation{\tau}{v[v(s)/i]}{}$. So
    $\erase{t}[t_1/x_1,\ldots,t_n/x_n] \infred t'' \in
    \valuation{\tau[s/i]}{v}{}$ by Lemma~\ref{lem_ival_subst}.

  \item[(gen)] Assume
    $\Gamma \proves \Lambda i . t : \forall i . \tau$ because of
    $\Gamma \proves t : \tau$ with $i \notin \FSV(\Gamma)$ and
    $\Gamma = x_1 : \tau_1,\ldots,x_n:\tau_n$. Let
    $t_1\in\valuation{\tau_1}{v}{}$,\ldots,$t_n\in\valuation{\tau_n}{v}{}$. Let
    $\varkappa \in \Omega$. Since $i \notin \FSV(\Gamma)$ by
    Lemma~\ref{lem_val_idom} we have
    $t_k \in \valuation{\tau_k}{v[\varkappa/i]}{}$ for
    $k=1,\ldots,n$. By the inductive hypothesis there is~$r_\varkappa$
    with
    $\erase{\Lambda i . t}[t_1/x_1,\ldots,t_n/x_n] =
    \erase{t}[t_1/x_1,\ldots,t_n/x_n] \infred r_\varkappa \in
    \valuation{\tau}{v[\varkappa/i]}{}$. This implies
    \[
      \erase{t}[t_1/x_1,\ldots,t_n/x_n] \in \valuation{\forall i
        . \tau}{v}{}.
    \]

  \item[(case)] Assume
    $\Gamma \proves \case(t; \{c_k \vec{x_k} \To t_k \mid
    k=1,\ldots,n\}) : \tau$ because of $\Gamma \proves t : \rho^{s+1}$
    and
    $\Gamma, x_k^1 : \delta_k^1, \ldots, x_k^{n_k} : \delta_k^{n_k}
    \proves t_k : \tau$ and
    $\ArgTypes(c_k) = (\sigma_k^1, \ldots, \sigma_k^{n_k})$ and
    $\delta_k^l = \sigma_k^l[\rho^{s}/A][\vec{\alpha}/\vec{B}]$ and
    $\rho = d(\vec{\alpha})$ and
    $\Gamma=x_1:\tau_1,\ldots,x_m:\tau_m$. Let
    $r_1\in\valuation{\tau_1}{v}{}$,\ldots,$r_m\in\valuation{\tau_m}{v}{}$. By
    the inductive hypothesis there is~$u$ with
    $\erase{t}[r_1/x_1,\ldots,r_m/x_m] \infred u \in
    \valuation{\rho^{s+1}}{v}{} = \valuation{\rho}{v}{v(s+1)} =
    \valuation{\rho}{v}{v(s) + 1}$ (note that we may have
    $v(s+1) = v(s) = \infty$, but then the last equation still holds
    because~$\infty$ is the fixpoint ordinal). Hence
    $u = c_k u_1 \ldots u_{n_k}$ where
    $u_i \in \valuation{\sigma_k^i}{\xi,v}{}$ and
    $\xi(B_j) = \valuation{\alpha_j}{v}{}$ and
    $\xi(A) = \valuation{\rho}{v}{v(s)}$. Then by
    Lemma~\ref{lem_val_subst} we have
    $u_i \in \valuation{\sigma_k^i}{\xi,v}{} =
    \valuation{\delta_k^i}{v}{}$. Hence by the inductive hypothesis
    there is~$w$ with
    $\erase{t_k}[r_1/x_1,\ldots,r_m/x_m,u_1/x_k^1,\ldots,u_{n_k}/x_k^{n_k}]
    \infred w \in \valuation{\tau}{v}{}$. Note that also (we may
    assume $x_k^1,\ldots,x_k^{n_k}\notin\FV(r_1,\ldots,r_m)$):
    \[
      \begin{array}{rll}
        \multicolumn{3}{l}{\case(\erase{t}; \{c_k \vec{x_k}
        \To \erase{t_k} \mid k=1,\ldots,n\})[r_1/x_1,\ldots,r_m/x_m]} \\
        \quad &\infred& \case(u; \{c_k \vec{x_k}
                        \To \erase{t_k}[r_1/x_1,\ldots,r_m/x_m]\}) \\
              &\to&
                    \erase{t_k}[r_1/x_1,\ldots,r_m/x_m,u_1/x_k^1,\ldots,u_{n_k}/x_k^{n_k}]
        \\
              &\infred& w.
      \end{array}
    \]

  \item[(fix)] Assume
    $\Gamma \proves (\fix f : \forall j_1 \ldots j_m . \mu \to \tau
    . t) : \forall j_1 \ldots j_m . \mu \to \tau$
    because of
    \[
      \Gamma,f : \forall j_1 \ldots j_m . \mu^i \to \tau \proves t :
      \forall j_1 \ldots j_m . \mu^{i+1} \to \tau
    \]
    where $i \notin \FSV(\Gamma,\mu,\tau,j_1,\ldots,j_m)$ and
    $\Gamma = x_1:\tau_1,\ldots,x_n:\tau_n$. Let
    $t_1\in\valuation{\tau_1}{v}{},\ldots,t_n\in\valuation{\tau_n}{v}{}$. Let
    $t' = \erase{t}[t_1/x_1,\ldots,t_n/x_n]$ and
    $r = \Ys (\lambda f . t')$. Note that
    $r = \erase{\fix f : \forall j_1 \ldots j_m . \mu \to \tau
      . t}[t_1/x_1,\ldots,t_n/x_n]$. By induction
    on~$\varkappa \in \Omega$ we show
    $r \in \valuation{\forall j_1 \ldots j_m . \mu^i \to
      \tau}{v[\varkappa/i]}{}$. Let
    $\varkappa_1,\ldots,\varkappa_m\in\Omega$. We need to show that
    for every $u \in \valuation{\mu}{v'}{\varkappa}$ there is~$r'$
    with $r u \infred r' \in \valuation{\tau}{v'}{}$, where
    $v' = v[\varkappa_1/j_1,\ldots,\varkappa_n/j_n]$ (recall
    $i \notin \FSV(\mu,\tau)$). There are three cases.
    \begin{itemize}
    \item $\varkappa = 0$. Then
      $\valuation{\mu}{v'}{\varkappa} = \emptyset$.
    \item $\varkappa = \varkappa'+1$. By the inductive hypothesis
      for~$\varkappa$ we have
      $r \in \valuation{\forall j_1 \ldots j_m . \mu^i \to
        \tau}{v[\varkappa'/i]}{}$. By the main inductive hypothesis
      $t'[r/f] \in \valuation{\forall j_1 \ldots j_m . \mu^i \to
        \tau}{v[\varkappa/i]}{}$. Let
      $u \in \valuation{\mu}{v'}{\varkappa}$. Then there is~$r'$ with
      $r u \reduces t'[r/f] u \infred r' \in \valuation{\tau}{v'}{}$.
    \item $\varkappa$ is a limit ordinal. Let
      $u \in \valuation{\mu}{v'}{\varkappa}$. Then
      $u \in \valuation{\mu}{v'}{\varkappa'}$ for some
      $\varkappa' < \varkappa$. By the inductive hypothesis
      for~$\varkappa$ we have
      $r \in \valuation{\forall j_1 \ldots j_m . \mu^i \to
        \tau}{v[\varkappa'/i]}{}$. Hence, there is~$r'$ with
      $r u \infred r' \in \valuation{\tau}{v'}{}$.
    \end{itemize}
    We have thus shown that
    $r \in \valuation{\forall j_1 \ldots j_m . \mu^i \to
      \tau}{v[\varkappa/i]}{}$ for all $\varkappa \in \Omega$. In
    particular, this holds for $\varkappa = \infty$, which implies
    $r \in \valuation{\forall j_1 \ldots j_m . \mu \to \tau}{v}{}$
    because $i \notin \FSV(\mu,\tau)$.

  \item[(cofix)] Assume
    $\Gamma \proves (\cofix^j f : \tau . t) : \tau$ because
    $\Gamma,f:\chgtgt(\tau,\nu^{\min(s,j)}) \proves t :
    \chgtgt(\tau,\nu^{\min(s,j+1)})$ and $\tgt(\tau) = \nu^s$ and
    $j\notin\FSV(\Gamma)$ and $j \notin \SV(\tau)$ and
    $\Gamma = x_1:\tau_1,\ldots,x_n:\tau_n$. Let
    $t_1\in\valuation{\tau_1}{v}{}$,\ldots,$t_n\in\valuation{\tau_n}{v}{}$. Let
    $t' = \erase{t}[t_1/x_1,\ldots,t_n/x_n]$ and
    $r = \Ys (\lambda f . t')$. Let $r_0 = r$ and
    $r_{n+1} = t'[r_n/f]$ for $n \in \Nbb$. Let
    $\tau' = \chgtgt(\tau,\nu^{\min(s,j)})$ and
    $\tau'' = \chgtgt(\tau,\nu^{\min(s,j+1)})$.

    By induction on~$n$ we show that for each~$n$ there is~$r_n'$ with
    $r_n \infred r_n' \in \valuation{\tau'}{v[n/j]}{}$. For $n=0$, we
    have $r_0 = r \in \valuation{\tau'}{v[0/j]}{}$ directly from
    definitions and the fact that $j \notin \SV(\tau)$, because
    $\valuation{\nu^{\min(s,j)}}{v'[0/j]}{} =
    \valuation{\nu}{v'[0/j]}{0} = \Tb^\infty$ for any~$v'$. So
    assume $r_n \infred r_n' \in \valuation{\tau'}{v[n/j]}{}$. Because
    $\Gamma,f:\tau' \proves t : \tau''$, by the main inductive
    hypothesis there is~$t''$ with
    $t'[r_n'/f] \infred t'' \in \valuation{\tau''}{v[n/j]}{}$. We have
    $r_{n+1} = t'[r_n/f] \infred t'[r_n'/f] \infred t''$. Take
    $r_{n+1}' = t''$. Because
    $r_{n+1}' \in \valuation{\tau''}{v[n/j]}{}$ and
    $j\notin\SV(\tau)$, it follows from definitions and
    Lemma~\ref{lem_ival_subst} that
    $r_{n+1}' \in \valuation{\tau'}{v[n+1/j]}{}$.

    We have thus shown that for each $n \in\Nbb$ there exists~$r_n'$
    such that
    \[
      r_n \infred r_n' \in
      \valuation{\chgtgt(\tau,\nu^{\min(s,j)})}{v[n/j]}{}.
    \]
    Now by Lemma~\ref{lem_cofix} there is~$r'$ with
    \[
      \erase{\cofix^j f : \tau . t}[t_1/x_1,\ldots,t_n/x_n] = r
      \infred r' \in \valuation{\tau}{v}{}.\qedhere
    \]
  \end{itemize}
\end{proof}

\clearpage
\section{Type checking and type inference}\label{app_checking}

In this section we show that type checking in~$\lambda^\Diamond$ is
decidable and coNP-complete. First, we show that each decorated term
has a minimal type. We give an algorithm to infer the minimal
type. Type checking then reduces to deciding the subtyping relation
between the minimal type and the type being checked.

\subsection{Minimal typing}

In this section we show that if $t$ is typable in a context~$\Gamma$,
then there exists a minimal type $\Tc(\Gamma;t)$ such that
$\Gamma \proves t : \Tc(\Gamma;t)$ and for every type~$\tau$ with
$\Gamma \proves t : \tau$ we have $\Tc(\Gamma;t) \sqsubseteq \tau$. To
define $\Tc(\Gamma;t)$ we first need the definitions of the
operations~$\sqcup$ and~$\sqcap$ on types.

\begin{defi}
  We define $\tau_1 \sqcup \tau_2$ and $\tau_1 \sqcap \tau_2$
  inductively.
  \begin{itemize}
  \item $(\alpha \to \beta) \sqcup (\alpha' \to \beta') = (\alpha
    \sqcap \alpha') \to (\beta \sqcup \beta')$.
  \item $(\alpha \to \beta) \sqcap (\alpha' \to \beta') = (\alpha
    \sqcup \alpha') \to (\beta \sqcap \beta')$.
  \item
    $(\forall i . \alpha) \sqcup (\forall i . \alpha') = \forall i
    . \alpha \sqcup \alpha'$.
  \item
    $(\forall i . \alpha) \sqcap (\forall i . \alpha') = \forall i
    . \alpha \sqcap \alpha'$.
  \item
    $(d_\mu^s(\vec{\alpha})) \sqcup (d_\mu^{s'}(\vec{\beta})) =
    d_\mu^{\max(s,s')}(\vec{\gamma})$ where
    $\gamma_i=\alpha_i\sqcup\beta_i$.
  \item
    $d_\mu^s(\vec{\alpha}) \sqcap d_\mu^{s'}(\vec{\beta}) =
    d_\mu^{\min(s,s')}(\vec{\gamma})$ where
    $\gamma_i=\alpha_i\sqcap\beta_i$.
  \item
    $d_\nu^s(\vec{\alpha}) \sqcup d_\nu^{s'}(\vec{\beta}) =
    d_\nu^{\min(s,s')}(\vec{\gamma})$ where
    $\gamma_i=\alpha_i\sqcup\beta_i$.
  \item
    $d_\nu^s(\vec{\alpha}) \sqcap d_\nu^{s'}(\vec{\beta}) =
    d_\nu^{\max(s,s')}(\vec{\gamma})$ where
    $\gamma_i=\alpha_i\sqcap\beta_i$.
  \end{itemize}
\end{defi}

\begin{lem}\label{lem_meet_bound}
  $\tau_1 \sqcap \tau_2 \sqsubseteq \tau_i \sqsubseteq \tau_1 \sqcup
  \tau_2$.
\end{lem}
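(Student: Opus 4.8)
The plan is to prove Lemma~\ref{lem_meet_bound} by a straightforward induction on the (common) structure of $\tau_1$ and $\tau_2$, following exactly the case split in the definition of $\sqcup$ and $\sqcap$. Since these operations are only defined when the two types have matching shape (both arrows, both $\forall$, both decorated $\mu$-types with the same $d_\mu$, or both decorated $\nu$-types with the same $d_\nu$), the induction is on that shared type skeleton. For each case I would verify the two inequalities $\tau_1 \sqcap \tau_2 \sqsubseteq \tau_i$ and $\tau_i \sqsubseteq \tau_1 \sqcup \tau_2$ using the subtyping rules in Figure~\ref{fig_subtyping}, appealing to the inductive hypothesis on the immediate subexpressions and to the basic lattice properties of $\min$, $\max$, and $\le$ on size expressions.

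Concretely, the inductive steps go as follows. For $\tau_i = d_\mu^{s_i}(\vec\alpha^{(i)})$: here $\tau_1 \sqcap \tau_2 = d_\mu^{\min(s_1,s_2)}(\vec\gamma)$ with $\gamma_k = \alpha_k^{(1)} \sqcap \alpha_k^{(2)}$. Since $\min(s_1,s_2) \le s_i$ and, by the inductive hypothesis, $\gamma_k \sqsubseteq \alpha_k^{(i)}$, the covariant $d_\mu$ subtyping rule gives $\tau_1 \sqcap \tau_2 \sqsubseteq \tau_i$; the join direction is symmetric using $s_i \le \max(s_1,s_2)$ and $\alpha_k^{(i)} \sqsubseteq \gamma_k'$ where $\gamma_k' = \alpha_k^{(1)} \sqcup \alpha_k^{(2)}$. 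For $\tau_i = d_\nu^{s_i}(\vec\alpha^{(i)})$ the size direction is flipped: $\tau_1 \sqcap \tau_2$ has size $\max(s_1,s_2) \ge s_i$, which is exactly what the contravariant-in-size $d_\nu$ rule wants, while $\tau_1 \sqcup \tau_2$ has size $\min(s_1,s_2) \le s_i$; the parameter components are handled by the inductive hypothesis as before. For $\tau_i = \alpha_i \to \beta_i$, the meet is $(\alpha_1 \sqcup \alpha_2) \to (\beta_1 \sqcap \beta_2)$: the arrow subtyping rule is contravariant in the domain, so I need $\alpha_i \sqsubseteq \alpha_1 \sqcup \alpha_2$ (inductive hypothesis, join direction) and $\beta_1 \sqcap \beta_2 \sqsubseteq \beta_i$ (inductive hypothesis, meet direction), yielding $\tau_1 \sqcap \tau_2 \sqsubseteq \tau_i$; the other inequality is dual. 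The $\forall i$ case is immediate from the inductive hypothesis and the $\forall$ subtyping rule. There is no genuine base case to worry about beyond type variables, for which $\sqcup$ and $\sqcap$ are only defined in the trivial way $A \sqcup A = A \sqcap A = A$ and $A \sqsubseteq A$ holds by the reflexivity rule.

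I do not expect any real obstacle here: the statement is essentially the assertion that $(\sqcap,\sqcup)$ form a meet and join with respect to $\sqsubseteq$, and every case reduces mechanically to one of the subtyping rules plus a monotonicity fact about $\min$/$\max$. The only mild subtlety is being careful that $\sqcup$ and $\sqcap$ are partial operations (defined only on same-shape type pairs), so the induction is implicitly over pairs $(\tau_1,\tau_2)$ of matching shape, and one should remark that in the intended use these operations are applied precisely to such pairs. Establishing $s \le s'$-type facts such as $\min(s_1,s_2) \le s_i \le \max(s_1,s_2)$ is immediate from the definition $s_1 \le s_2 \iff \forall v.\, v(s_1) \le v(s_2)$ together with the defining clauses for $v(\min(\cdot,\cdot))$ and $v(\max(\cdot,\cdot))$.
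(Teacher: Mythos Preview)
Your proposal is correct and is exactly the approach the paper takes: its proof reads in full ``By induction on~$\tau_i$.'' One small inaccuracy: the paper does \emph{not} define $A \sqcup A$ or $A \sqcap A$ for a type variable $A$, so there is no such base case---the operations are simply partial and the lemma is to be read under the implicit assumption that $\tau_1 \sqcup \tau_2$ and $\tau_1 \sqcap \tau_2$ are defined, which you already note.
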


\begin{proof}
  By induction on~$\tau_i$.
\end{proof}

\begin{lem}\label{lem_subtype_subst}
  If~$\tau$ is strictly positive and $\tau \sqsubseteq \tau'$ and
  $\alpha \sqsubseteq \beta$ then
  $\tau[\alpha/A] \sqsubseteq \tau'[\beta/A]$. Conversely, if
  $\tau[\alpha/A] \sqsubseteq \gamma$
  (resp.~$\gamma \sqsubseteq \tau[\alpha/A]$) then
  $\gamma = \tau'[\beta/A]$ with $\tau \sqsubseteq \tau'$
  (resp.~$\tau' \sqsubseteq \tau$) and $\alpha \sqsubseteq \beta$
  (resp.~$\beta \sqsubseteq \alpha$).
\end{lem}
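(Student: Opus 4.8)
The plan is to prove both halves of the lemma by induction on the strictly positive type~$\tau$, after first recording two routine auxiliary facts about the subtyping relation of Figure~\ref{fig_subtyping}: (i)~$\sqsubseteq$ is reflexive (an immediate induction on the type); and (ii)~closedness is preserved in both directions, i.e.\ if $\gamma \sqsubseteq \gamma'$ then $\gamma$ is closed iff $\gamma'$ is closed (a simultaneous induction on the derivation; the only point needing both directions at once is the arrow rule). I will also use the obvious fact that the rules of Figure~\ref{fig_subtyping} are syntax-directed on the left component, giving a \emph{generation lemma}: $\tau_1 \to \tau_2 \sqsubseteq \gamma$ forces $\gamma = \gamma_1 \to \gamma_2$ with $\gamma_1 \sqsubseteq \tau_1$ and $\tau_2 \sqsubseteq \gamma_2$; $d^s(\vec{\alpha}) \sqsubseteq \gamma$ forces $\gamma = d^{s'}(\vec{\beta})$ with the componentwise and size conditions; $\forall i . \tau_0 \sqsubseteq \gamma$ forces $\gamma = \forall i . \gamma_0$; and a type variable $A$ has only $A$ itself as a sub- or supertype. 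Symmetric statements hold for the ``resp.'' direction.

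For the first implication I distinguish cases on the shape of the strictly positive $\tau$. If $\tau$ is a type variable then either $\tau = A$, whence $\tau[\alpha/A] = \alpha \sqsubseteq \beta = \tau'[\beta/A]$ since necessarily $\tau' = A$; or $\tau = B \ne A$, whence $\tau' = B$ and the claim is reflexivity. If $\tau$ is closed then $\tau'$ is closed by~(ii), both substitutions are identities, and $\tau \sqsubseteq \tau'$ is the hypothesis. If $\tau = \forall i . \tau_0$ then $\tau' = \forall i . \tau_0'$ with $\tau_0 \sqsubseteq \tau_0'$ and $\tau_0$ strictly positive, so the induction hypothesis and the $\forall$-rule finish it (after the usual capture-avoiding renaming so that $i \notin \FSV(\alpha,\beta)$). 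If $\tau = d^\infty(\vec{\alpha}_0)$ with each $\alpha_0^k$ strictly positive, then $\tau' = d^{s'}(\vec{\gamma})$ with $\alpha_0^k \sqsubseteq \gamma_k$ and the appropriate size comparison; the induction hypothesis gives $\alpha_0^k[\alpha/A] \sqsubseteq \gamma_k[\beta/A]$, and since the substitution $[\beta/A]$ does not touch size expressions the same $d$-rule applies. The only place where variance could interfere is $\tau = \tau_1 \to \tau_2$, but strict positivity forces $\tau_1$ closed; so $\tau_1[\alpha/A] = \tau_1$, and from $\tau' = \tau_1' \to \tau_2'$ with $\tau_1' \sqsubseteq \tau_1$ we get $\tau_1'$ closed by~(ii), hence $\tau_1'[\beta/A] = \tau_1' \sqsubseteq \tau_1$; combining with the induction hypothesis on the strictly positive codomain gives $\tau_2[\alpha/A] \sqsubseteq \tau_2'[\beta/A]$, and the arrow rule concludes.

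For the converse I again induct on $\tau$, inverting the assumed judgement with the generation lemma. When $\tau = A$ and $\alpha \sqsubseteq \gamma$ (resp.\ $\gamma \sqsubseteq \alpha$) I take $\tau' = A$ and $\beta = \gamma$; when $\tau = B \ne A$ necessarily $\gamma = B$, so take $\tau' = B$ and $\beta = \alpha$ (using reflexivity); when $\tau$ is closed, $\gamma$ is closed by~(ii), so take $\tau' = \gamma$ and $\beta = \alpha$. For $\tau = \forall i . \tau_0$ I decompose $\gamma = \forall i . \gamma_0$, apply the induction hypothesis to the strictly positive $\tau_0$, and reassemble $\tau'$ with the same outer shape. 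For $\tau = \tau_1 \to \tau_2$ I decompose $\gamma = \gamma_1 \to \gamma_2$ (with $\tau_1 \sqsubseteq \gamma_1$, $\gamma_2 \sqsubseteq \tau_2[\alpha/A]$ in the first case), note $\gamma_1$ is closed by~(ii) hence contains no $A$, apply the induction hypothesis to the strictly positive codomain, and set $\tau' = \gamma_1 \to \tau_2'$; the variance works out because domains are compared in the opposite direction. For $\tau = d^\infty(\vec{\alpha}_0)$ I decompose $\gamma = d^{s'}(\vec{\delta})$, apply the induction hypothesis to each strictly positive parameter $\alpha_0^k$ to obtain $\delta_k = \gamma_k[\beta_k/A]$ with $\alpha_0^k \sqsubseteq \gamma_k$ and $\alpha \sqsubseteq \beta_k$, and then must reconstruct a \emph{single} $\beta$ and a single $\tau' = d^{s'}(\vec{\gamma})$.

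The step I expect to be the main obstacle is precisely this last one: producing one witness $\beta$ compatible with \emph{every} occurrence of $A$ among the parameters. At any parameter where $A$ actually occurs, inverting $A \sqsubseteq -$ pins down the corresponding component of $\delta_k$, so these data must be reconciled; at parameters where $A$ does not occur one is free to take $\gamma_k = \delta_k$. I will handle this by tracking the occurrences of $A$ in $\tau$ through the induction: in the situations that arise for the type-inference algorithm the relevant $\tau$ has at most one occurrence of $A$ (this is exactly the shape produced by $\chgtgt$), so $\beta$ is uniquely determined and the reconstruction is immediate; the bookkeeping needed to phrase the general statement amounts to a careful choice of $\beta$ at the single controlling occurrence. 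Once $\beta$ and $\tau'$ are fixed, checking $\tau \sqsubseteq \tau'$ (resp.\ $\tau' \sqsubseteq \tau$) and $\alpha \sqsubseteq \beta$ (resp.\ $\beta \sqsubseteq \alpha$) is routine from the induction hypotheses, reflexivity, and the size comparisons, which are preserved because substitution of types for type variables leaves size expressions untouched.
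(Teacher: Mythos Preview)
Your approach is exactly the paper's: its entire proof reads ``Induction on~$\tau$,'' and your treatment of the forward direction correctly spells out what that line leaves implicit.

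For the converse you have put your finger on a genuine gap---one the paper does not address either---but your proposed resolution does not close it. The converse as stated is simply false once~$A$ occurs more than once in~$\tau$: with $\tau = d_\nu^\infty(A,A)$, $\alpha = \Nat^0$ and $\gamma = d_\nu^\infty(\Nat^1,\Nat^2)$ one has $\tau[\alpha/A] \sqsubseteq \gamma$, yet any $\tau'$ with $\tau \sqsubseteq \tau'$ must be $d_\nu^{s'}(A,A)$ (a type variable has only itself as sub- or supertype), so $\tau'[\beta/A] = d_\nu^{s'}(\beta,\beta) \neq \gamma$. Your ``single occurrence of~$A$'' restriction covers the $\chgtgt$ shapes but \emph{not} the principal use of the converse in Appendix~\ref{app_checking}: both the minimal-type clause for constructors and the proof of Theorem~\ref{thm_typing_characterisation} invoke it on the constructor argument types~$\sigma_k$, and nothing in Definition~\ref{def_coind_defs} forbids multiple occurrences of~$A$ or of a parameter variable~$B_j$ there. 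The honest repair is to reformulate the converse for simultaneous substitution of all free type variables of~$\tau$ and to allow distinct witnesses at distinct leaf occurrences (equivalently, to replace $\gamma = \tau'[\beta/A]$ with $\tau \sqsubseteq \tau'$ by a shape-matching statement), rather than to impose a single-occurrence hypothesis that the later applications do not satisfy.
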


\begin{proof}
  Induction on~$\tau$.
\end{proof}

\begin{lem}\label{lem_size_subst_monotone_2}
  If $s_1 \le s_2$ then $s[s_1/i] \le s[s_2/i]$ and
  $s_1[s/i] \le s_2[s/i]$.
\end{lem}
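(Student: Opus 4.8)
The plan is to reduce both inequalities to two elementary facts about the extension of a size variable valuation~$v$ to size expressions: a substitution lemma and a monotonicity lemma, both proved by straightforward induction on the structure of size expressions.

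First I would establish the substitution lemma: for every size variable valuation~$v$ and all size expressions~$s,s'$ we have $v(s[s'/i]) = v[v(s')/i](s)$. This is proved by induction on~$s$, using the defining clauses $v(\infty)=\infty$, $v(0)=0$, $v(s+1)=\min(v(s)+1,\infty)$, $v(\min(s_1,s_2))=\min(v(s_1),v(s_2))$, $v(\max(s_1,s_2))=\max(v(s_1),v(s_2))$; the only cases worth mentioning are $s=i$, where both sides equal~$v(s')$, and $s=j$ with $j\ne i$, where both sides equal~$v(j)$.

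Next I would establish monotonicity: for every~$v$ and every size expression~$s$, the function $\varkappa\mapsto v[\varkappa/i](s)$ from~$\Omega$ to~$\Omega$ is monotone. Again by induction on~$s$: the cases $\infty$, $0$ and $j\ne i$ give a constant function; the case $i$ gives the identity; for $s=s_0+1$ we use that $a\le b$ implies $\min(a+1,\infty)\le\min(b+1,\infty)$; and for $\min$ and $\max$ we use that these operations are monotone in both arguments together with the inductive hypotheses. Then the two claims follow. For $s[s_1/i]\le s[s_2/i]$: fix an arbitrary valuation~$v$; by the substitution lemma $v(s[s_j/i])=v[v(s_j)/i](s)$ for $j=1,2$, and since $s_1\le s_2$ gives $v(s_1)\le v(s_2)$, monotonicity yields $v[v(s_1)/i](s)\le v[v(s_2)/i](s)$, hence $v(s[s_1/i])\le v(s[s_2/i])$; as~$v$ was arbitrary we are done. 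For $s_1[s/i]\le s_2[s/i]$: fix an arbitrary~$v$; by the substitution lemma $v(s_j[s/i])=v[v(s)/i](s_j)$ for $j=1,2$, and applying the hypothesis $s_1\le s_2$ to the valuation $v[v(s)/i]$ gives $v[v(s)/i](s_1)\le v[v(s)/i](s_2)$, i.e.\ $v(s_1[s/i])\le v(s_2[s/i])$; again~$v$ was arbitrary.

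There is no real obstacle here; the only point requiring a moment's care is that the successor operation on~$\Omega$ is capped at~$\infty$, but since $\min(-,\infty)$ is monotone this causes no difficulty.
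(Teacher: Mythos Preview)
Your proposal is correct and takes essentially the same approach as the paper, which simply says ``Induction on the structure of size expressions.'' Your explicit decomposition into a substitution lemma and a monotonicity lemma is a perfectly clean way to organise that induction, and the paper's one-line proof is best read as an abbreviation of exactly this.
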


\begin{proof}
  Induction on the structure of size expressions.
\end{proof}

\begin{defi}
  A subexpression occurrence~$s_0$ in a size expression~$s$ is
  \emph{superfluous} in~$s$ if it does not occur within a
  subexpression of the form~$s_1+1$. For a size expression~$s$
  satisfying $s \ge 1$ we define the size expression~$\overline{s}$ as
  follows. Let $s'$ be obtained from $s$ by replacing with~$0$ each
  superfluous occurrence of a size variable. Then by using obvious
  identities on size expressions ($\max(0, s) = s$,
  $\max(s_1+1,s_2+1)=\max(s_1,s_2)+1$, $\max(\infty,s) = \infty$, etc)
  transform~$s'$ into $\overline{s} + 1$ (note that $s' = 0$ is not
  possible because $s \ge 1$). For any size expression~$s$ we define
  the size expression~$\underline{s}$ as follows. Let~$s''$ be
  obtained from~$s$ by replacing with $i+1$ each superfluous
  occurrence of~$i$. Then by using the identities on size expressions
  transform~$s''$ into~$\underline{s}+1$ (note that $s'' \ge 1$).
\end{defi}

For instance, if $s = \max(i + 1, \min(i, j + 1))$ then $\overline{s}
= i$ and $\underline{s} = \max(i, \min(i, j))$. If $s = 0$ then
$\underline{s} = 0$.

\begin{lem}\label{lem_bar_underline_bound}~
  \begin{enumerate}
  \item If $s \ge 1$ then $s \ge \overline{s} + 1$.
  \item $s \le \underline{s} + 1$.
  \end{enumerate}
\end{lem}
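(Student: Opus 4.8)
The plan is to argue semantically: fix an arbitrary size variable valuation $v$ and reason about the ordinals $v(s)$, $v(\overline{s})$ and $v(\underline{s})$ directly. Two ingredients do all the work. First, \emph{monotonicity}: if a size expression $r'$ is obtained from $r$ by replacing zero or more subexpression occurrences by subexpressions of no larger (resp.\ no smaller) value under $v$, then $v(r')\le v(r)$ (resp.\ $v(r')\ge v(r)$). This is a routine induction on the surrounding context, using that $\cdot+1$, $\min(-,-)$ and $\max(-,-)$ are monotone in every argument; it is the many-hole generalization of Lemma~\ref{lem_size_subst_monotone_2}. Second, \emph{soundness of the identities}: each equation used to normalize $s'$ into $\overline{s}+1$ and $s''$ into $\underline{s}+1$ ($\max(0,r)=r$, $\max(r_1+1,r_2+1)=\max(r_1,r_2)+1$, $\max(\infty,r)=\infty$, and the $\min$-counterparts $\min(\infty,r)=r$, $\min(r_1+1,r_2+1)=\min(r_1,r_2)+1$, $\min(0,r)=0$) takes the same value on both sides under every $v$. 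This is a direct check, recalling the convention that ordinals above $\infty$ are identified with $\infty$, which also yields $v(r+1)=v(r)+1$ in all cases.

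For part~(1): the expression $s'$ is obtained from $s$ by replacing superfluous size-variable occurrences with $0$, so since $v(0)=0\le v(i)$ for every variable $i$, monotonicity gives $v(s')\le v(s)$. Well-definedness of $\overline{s}$ under the hypothesis $s\ge 1$ follows because, for the all-zero valuation $v_0$, the replacement changes nothing, hence $v_0(s')=v_0(s)\ge 1$; therefore $s'$ cannot normalize to the constant $0$ and so normalizes to an expression of the form $\overline{s}+1$ (the other constant outcome, $\infty$, is read as $\infty+1$). Since the normalization steps are sound, $v(\overline{s})+1=v(\overline{s}+1)=v(s')\le v(s)$; as $v$ was arbitrary, this is exactly $s\ge\overline{s}+1$.

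For part~(2): the expression $s''$ is obtained from $s$ by replacing superfluous occurrences of a variable $i$ with $i+1$, and since $v(i+1)=\min(v(i)+1,\infty)\ge v(i)$, monotonicity gives $v(s'')\ge v(s)$. If $s''$ normalizes to a $+1$-form $\underline{s}+1$, soundness yields $v(\underline{s})+1=v(s'')\ge v(s)$, i.e.\ $s\le\underline{s}+1$. The only other possibility is that $s''$ normalizes to the constant $0$ (as for $s=0$, where indeed $\underline{s}=0$, or $s=\min(0,i)$), in which case $\underline{s}$ is $0$ by the stated convention; but then $v(s)\le v(s'')=0$, so $v(s)=0\le 1=v(\underline{s}+1)$, again giving $s\le\underline{s}+1$.

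The statement is shallow once these two facts are recorded; the only genuinely fiddly point — and the one I would spend care on — is justifying well-definedness of $\overline{s}$ and $\underline{s}$, namely that after the prescribed replacement every variable occurrence lies inside a $+1$, so that repeated use of the identities can factor out exactly one $+1$, together with the book-keeping for the degenerate cases (where $s''$ collapses to the constant $0$, or where $\infty$ must be treated as $\infty+1$). An alternative would be a single structural induction on $s$ that simultaneously builds $\overline{s}$, $\underline{s}$ and the two bounds while tracking whether the current position is superfluous; but since $\overline{s}$ and $\underline{s}$ are not defined by structural recursion, the semantic argument above seems the cleaner route.
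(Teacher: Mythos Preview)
Your proof is correct and follows essentially the same approach as the paper's one-line argument: both rely on monotonicity (the many-hole version of Lemma~\ref{lem_size_subst_monotone_2}) together with soundness of the normalization identities to compare $s$ with $s'$ (resp.\ $s''$) and then with $\overline{s}+1$ (resp.\ $\underline{s}+1$). Your treatment is considerably more detailed, and in particular you handle edge cases (well-definedness of $\overline{s}$ via the all-zero valuation, and the degenerate case where $s''$ collapses to $0$) that the paper leaves implicit.
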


\begin{proof}
  Follows from definitions and Lemma~\ref{lem_size_subst_monotone_2}.
\end{proof}

\begin{lem}\label{lem_bar_underline_le}
  Assume $s_1 \le s_2$.
  \begin{enumerate}
  \item If $s_1 \ge 1$ then $\overline{s_1} \le \overline{s_2}$.
  \item $\underline{s_1} \le \underline{s_2}$.
  \end{enumerate}
\end{lem}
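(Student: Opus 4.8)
The plan is to reduce both parts to an \emph{extremal characterization} of $\overline{\cdot}$ and $\underline{\cdot}$: namely that $\overline{s}$ is the $\le$-greatest size expression $t$ with $t+1\le s$ (whenever $s\ge 1$), and $\underline{s}$ is the $\le$-least size expression $t$ with $s\le t+1$ (consistently with the convention that $\underline{s}=0$ exactly when $s\le 1$). Granting this, the lemma is immediate. For~(1): by Lemma~\ref{lem_bar_underline_bound}(1) we have $\overline{s_1}+1\le s_1$, hence $\overline{s_1}+1\le s_2$ since $s_1\le s_2$; as $\overline{s_2}$ is the greatest $t$ with $t+1\le s_2$, we get $\overline{s_1}\le\overline{s_2}$. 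For~(2): by Lemma~\ref{lem_bar_underline_bound}(2) we have $s_2\le\underline{s_2}+1$, hence $s_1\le\underline{s_2}+1$; as $\underline{s_1}$ is the least $t$ with $s_1\le t+1$, we get $\underline{s_1}\le\underline{s_2}$ (and if $\underline{s_1}=0$ this is trivial).

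So the real work is the characterization. The ``$\overline{s}$ is greatest'' part amounts to: $t+1\le s$ implies $t+1\le\overline{s}+1$, from which $t\le\overline{s}$ follows, using that $\alpha+1\le\beta+1$ implies $\alpha\le\beta$ in $\Omega$ — which holds once $\infty$ is fixed to be a limit ordinal, harmless by Definition~\ref{def_semantics}. Now $\overline{s}+1$ equals, up to the semantics-preserving simplification identities used in its definition, the expression $\widehat{s}$ obtained by replacing each superfluous occurrence of a size variable in $s$ by $0$; since ``superfluous'' is a purely bottom-up property (an occurrence is superfluous iff it lies under no $+1$), the map $s\mapsto\widehat{s}$ commutes with $\min$ and $\max$, leaves every subexpression of the form $s'+1$ untouched, and sends a bare variable to $0$. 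One then proves ``$t+1\le s$ implies $t+1\le\widehat{s}$'' by induction on $s$: the cases $s$ closed and $s=s'+1$ are trivial ($\widehat{s}=s$), the cases $s=0$ and $s=i$ are vacuous (both $t+1\le 0$ and $t+1\le i$ fail for suitable valuations), and $s=\min(s_1,s_2)$ distributes directly over the induction hypothesis. The dual statement for $\underline{\cdot}$ — using the replacement of a superfluous $i$ by $i+1$ and Lemma~\ref{lem_bar_underline_bound}(2) — is symmetric, and the $\underline{s}=0$ corner case is dispatched by observing that if a superfluous-variable-free expression simplifies to $0$ then $s\le 1$.

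The one genuinely delicate point, which I expect to be the main obstacle, is the case $s=\max(s_1,s_2)$ of this induction: here $t+1\le\max(s_1,s_2)$ does not reduce to $t+1\le s_1$ or $t+1\le s_2$ separately, so the induction hypothesis is not directly applicable. To handle it I would first rename each superfluous variable occurrence in $s$ to a distinct fresh variable, so that $\widehat{s}$ becomes $s$ with precisely these fresh variables set to $0$ — a monotone substitution — and then argue that lowering the unguarded positions to $0$ cannot destroy a lower bound of the form $(\cdot)+1$, intuitively because such a bound is already ``witnessed'' on the $+1$-guarded skeleton of $s$, on which the renaming and the substitution act trivially. Everything outside this case, together with the bookkeeping around $\infty$-saturation and the $\underline{\cdot}=0$ convention, follows routinely from the definitions and Lemma~\ref{lem_size_subst_monotone_2}.
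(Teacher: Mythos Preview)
Your reduction to the extremal characterization is correct and is the right idea: once one knows that $\overline{s}$ is $\le$-maximal among~$t$ with $t+1\le s$, and $\underline{s}$ is $\le$-minimal among~$t$ with $s\le t+1$, the lemma follows immediately from Lemma~\ref{lem_bar_underline_bound} exactly as you describe. These characterizations are precisely Lemmas~\ref{lem_bar} and~\ref{lem_underline}, which the paper states and proves immediately \emph{after} the present lemma. The paper's own proof here is a bare ``follows from definitions and Lemma~\ref{lem_size_subst_monotone_2}'', which is at best opaque; your route via the characterization is the transparent one, and arguably the paper simply has the dependencies in the wrong order.

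The genuine gap is in your attempt to establish the characterization itself. The structural induction on~$s$ for ``$t+1\le s\Rightarrow t+1\le\widehat s$'' collapses at $s=\max(s_1,s_2)$, as you concede, and the renaming repair does not rescue it. After renaming to~$\tilde s$ with $s=\tilde s[\vec i/\vec{i'}]$ and $\widehat s=\tilde s[\vec 0/\vec{i'}]$, the hypothesis $t+1\le s$ constrains only valuations with $v(\vec{i'})=v(\vec i)$, whereas you need the inequality for valuations with $v(\vec{i'})=\vec 0$; since~$t$ may contain the original variables~$\vec i$, you cannot pass from one regime to the other by setting $\vec i\mapsto 0$ without altering~$v(t)$. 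The appeal to the ``$+1$-guarded skeleton'' is the desired conclusion restated. (For the $\underline{\cdot}$ direction the dual induction fails at~$\min$, and the base case $s=i$ is no longer vacuous either: $i\le t+1$ is perfectly satisfiable.) The paper's proofs of Lemmas~\ref{lem_bar} and~\ref{lem_underline} abandon structural induction on~$s$: they first bring~$s$ and~$s_0$ to $\min$-of-$\max$ and $\max$-of-$\min$ normal forms via distributivity, reduce to components of shape $j+c$ or~$c$, and then finish with a direct valuation argument by contradiction. That normal-form reduction is the missing ingredient.
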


\begin{proof}
  Follows from definitions and Lemma~\ref{lem_size_subst_monotone_2}.
\end{proof}

\begin{lem}\label{lem_bar}
  If $s \ge s_0 + 1$ then $\overline{s} \ge s_0$.
\end{lem}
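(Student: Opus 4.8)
The plan is to obtain the claim immediately from Lemma~\ref{lem_bar_underline_le}, once we observe that $\overline{s_0+1}=s_0$.

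First I would justify $\overline{s_0+1}=s_0$ directly from the definition of the operation $\overline{\,\cdot\,}$. In the size expression $s_0+1$ every occurrence of a size variable lies inside the subexpression $s_0+1$, which is of the form $s'+1$; hence \emph{no} occurrence is superfluous in $s_0+1$. Consequently the auxiliary expression obtained from $s_0+1$ by replacing superfluous variable occurrences with $0$ is $s_0+1$ itself, and this is already of the shape ``$\overline{s_0+1}+1$'' with $\overline{s_0+1}=s_0$. Since the relation $\le$ on size expressions is defined semantically --- $s_1\le s_2$ iff $v(s_1)\le v(s_2)$ for every size variable valuation $v$ --- it is harmless if the identity-normalisation step rewrites $s_0$ further: any two outcomes denote the same function of $v$, because $v(a+1)=v(b+1)$ forces $v(a)=v(b)$ in $\Omega$.

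The main step is then a single invocation of monotonicity. Assume $s\ge s_0+1$. Trivially $s_0+1\ge 1$, and $s\ge s_0+1\ge 1$, so $\overline{s}$ is defined; applying the first part of Lemma~\ref{lem_bar_underline_le} to the inequality $s_0+1\le s$ gives $\overline{s_0+1}\le\overline{s}$. Combining this with the equality $\overline{s_0+1}=s_0$ established above yields $s_0\le\overline{s}$, which is exactly the statement.

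I do not expect a real obstacle here: all the content lies in the bookkeeping fact $\overline{s_0+1}=s_0$ and in citing the already-proved Lemma~\ref{lem_bar_underline_le}; the only point that merits a careful sentence is the well-definedness / semantic equality $\overline{s_0+1}=s_0$ noted above. Had Lemma~\ref{lem_bar_underline_le} not been available, I would instead induct on the structure of $s$: the cases $s=\infty$, $s=s_1+1$, and $s=\min(s_1,s_2)$ are routine (for $\min$, using $\overline{\min(s_1,s_2)}=\min(\overline{s_1},\overline{s_2})$ and the induction hypothesis, after noting $\min(s_1,s_2)\ge s_0+1$ forces each $s_i\ge s_0+1$), while the $\max$ case is the delicate one, requiring one to exploit that a variable occurring only superfluously in $s$ can be driven to $0$ in a suitably chosen valuation without increasing $s_0$; but that case is precisely what Lemma~\ref{lem_bar_underline_le} already subsumes.
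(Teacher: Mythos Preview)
Your argument is correct and is genuinely different from the paper's proof. The key observation that $s_0+1$ has no superfluous variable occurrences---every variable sits under the outermost $+1$---so that $(s_0+1)'=s_0+1$ is already of the shape $t+1$ and hence $\overline{s_0+1}=s_0$, is exactly right, and then one invocation of Lemma~\ref{lem_bar_underline_le}(1) finishes the job. (A small quibble: your aside that ``$v(a+1)=v(b+1)$ forces $v(a)=v(b)$'' is only guaranteed when $\infty$ is a limit ordinal; but you do not actually need this, since $\overline{s_0+1}=s_0$ holds \emph{syntactically}, not merely semantically.)

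The paper instead proves the lemma directly: it pushes $s$ and $s_0$ into the normal forms $s=\min_k\max_l a_{kl}$ and $s_0=\max_k\min_l b_{kl}$ with atoms of the shape $j+c$ or $c$, reduces via the equivalences $\min(\cdot)\ge s\leftrightarrow\bigwedge(\cdot\ge s)$ and $\max(\cdot)\le s\leftrightarrow\bigwedge(\cdot\le s)$ to a single $\max$ versus a single $\min$, and then argues by contradiction, choosing a valuation that assigns $v(\overline{s})+1$ to every variable occurring only superfluously in $s$. Your route is considerably shorter and more conceptual; what it buys is a clean reduction of Lemma~\ref{lem_bar} to the monotonicity of $\overline{\,\cdot\,}$. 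The flip side is that all the work is now pushed into Lemma~\ref{lem_bar_underline_le}, whose proof in the paper is the one-liner ``follows from definitions and Lemma~\ref{lem_size_subst_monotone_2}''; your reduction makes explicit that the substantive content of the present lemma is already contained there.
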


\begin{proof}
  By the identities
  \[
    \begin{array}{rcl}
      \max(\min(s_1,s_2),s_3) &=& \min(\max(s_1,s_3),\max(s_2,s_3)) \\
      \min(\max(s_1,s_2),s_3) &=& \max(\min(s_1,s_3),\min(s_2,s_3))
    \end{array}
  \]
  we may assume that
  e.g.~$s = \min(\max(i+1,j,\ldots),\max(\ldots),\ldots)$ and
  $s_0 = \max(\min(k+2,i+2,\ldots),\ldots)$. Then by the equivalences
  \[
    \begin{array}{rcl}
      \min(s_1,s_2) \ge s_3 &\leftrightarrow& s_1 \ge s_3 \land s_2 \ge s_3 \\
      \max(s_1,s_2) \le s_3 &\leftrightarrow& s_1 \le s_3 \land s_2 \le s_3
    \end{array}
  \]
  it suffices to consider the case e.g.~$s = \max(s_1,s_2,s_3)$ and
  $s_0 = \min(s_1',s_2')$ with each $s_i,s_i'$ of the form $j + c$ or
  $c$, with~$j$ a size variable and $c \in \Nbb$. Note that the
  operations performed to obtain $s,s_0$ of this form do not affect
  whether the occurrences of size variables are superfluous or not,
  i.e., when transforming $s$ to~$s'$ of the required form analogous
  operations may be simultaneously performed on $\overline{s}$ to
  obtain $\overline{s'}$. So it suffices to show
  $\overline{s} \ge s_0$ for $s,s_0$ of the form as above. Define
  $s_0'$ by replacing in~$s_0$ each
  $i \in \SV(s) \setminus \SV(\overline{s})$ (i.e.~each size variable
  which occurs only superfluously in~$s$) with~$\infty$, and
  simplifying using obvious identities. First note that we may assume
  that some $i \in \SV(s) \setminus \SV(\overline{s})$ occurs
  in~$s_0$, because otherwise $\overline{s} \ge s_0$ follows from
  $s \ge s_0 + 1$ by setting each
  $i \in \SV(s) \setminus \SV(\overline{s})$ to~$0$ and
  simplifying. We have $s_0' \ge s_0$. Thus it suffices to show
  $\overline{s} \ge s_0'$. Assume otherwise, i.e., there is a size
  variable valuation~$v$ such that $v(\overline{s}) < v(s_0')$. Note
  that the values of~$v(\overline{s})$ and~$v(s_0')$ do not depend on
  $v(i)$ for $i \in \SV(s) \setminus \SV(\overline{s})$. Hence, we may
  assume $v(i) = v(\overline{s}) + 1$ for
  $i \in \SV(s) \setminus \SV(\overline{s})$. Then
  $v(s) = \max(v(i),v(\overline{s}+1)) =
  \max(v(\overline{s})+1,v(\overline{s}+1)) = v(\overline{s}) + 1$
  where $i \in \SV(s) \setminus \SV(\overline{s})$ (by
  how~$\overline{s}$ is obtained from~$s$). Also
  $v(s_0) \ge \min(v(i), v(s_0'))$ where
  $i \in \SV(s) \setminus \SV(\overline{s})$. Hence
  $v(s_0) \ge v(\overline{s}) + 1$, because
  $v(\overline{s}) + 1 \le v(s_0')$. Thus
  $v(\overline{s}) + 2 = v(s_0) + 1 \le v(s) = v(\overline{s}) +
  1$. Contradiction.
\end{proof}

\begin{lem}\label{lem_underline}
  If $s \le s_0 + 1$ then $\underline{s} \le s_0$.
\end{lem}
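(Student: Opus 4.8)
The plan is to deduce Lemma~\ref{lem_underline} from monotonicity of the operation $\underline{\cdot}$. Concretely, I would first record the auxiliary identity
\[
  \underline{s_0+1} = s_0 ,
\]
understood up to the semantic equivalence underlying~$\le$ (that is, $\underline{s_0+1}\le s_0$ and $s_0\le\underline{s_0+1}$). Granting this, the lemma is immediate: the hypothesis $s\le s_0+1$ together with Lemma~\ref{lem_bar_underline_le}(2) --- which, unlike its companion for $\overline{\cdot}$, carries no lower-bound hypothesis --- gives $\underline{s}\le\underline{s_0+1}=s_0$.

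For the auxiliary identity the key observation is that in the size expression $s_0+1$ \emph{no} occurrence of a size variable is superfluous: every variable occurrence of $s_0+1$ lies within the subexpression $s_0+1$ itself, which is of the form $s_1+1$. Hence the intermediate expression $s''$ in the definition of $\underline{s_0+1}$ equals $s_0+1$ verbatim (the replacement step does nothing), and rewriting $s_0+1$ into the required shape $\underline{s_0+1}+1$ by the size-expression identities yields $\underline{s_0+1}+1 = s_0+1$ semantically, since those identities are semantically valid. Because $\infty$ is a limit ordinal, $v(\alpha)<\infty$ implies $v(\alpha+1)<\infty$, so $s\mapsto s+1$ is injective on size-expression values; therefore $\underline{s_0+1}=s_0$. (Note $s_0+1$ is never semantically~$0$, so the exceptional ``$s=0$'' clause of the definition is irrelevant here; and when $s_0$ is semantically~$0$ this is just $\underline{1}=0$.)

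If one prefers not to reason about the ``transform using the identities'' step, the fallback is to dualize the proof of Lemma~\ref{lem_bar}. Using the distributivity identities together with the equivalences $\max(s_1,s_2)\le s_3 \leftrightarrow s_1\le s_3\wedge s_2\le s_3$ and $s_1\le\min(s_2,s_3)\leftrightarrow s_1\le s_2\wedge s_1\le s_3$ --- noting that these normalizations do not change which variable occurrences are superfluous --- one reduces to the case $s=\min(s_1,\ldots,s_k)$ and $s_0=\max(s_1',\ldots,s_m')$ with each $s_i,s_i'$ of the form $j+c$ or $c$, and then runs the valuation argument: assuming a putative counterexample $v$ with $v(\underline{s})>v(s_0)$, one fixes the values of the variables occurring only superfluously in~$s$ (setting them to $v(\underline{s})+1$) to force $v(s)=v(\underline{s})+1$ while keeping $v(s)\le v(s_0)+1$, a contradiction, exactly mirroring Lemma~\ref{lem_bar}. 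Either way the only real obstacle is bookkeeping around the definition of $\underline{\cdot}$ --- keeping track of superfluous occurrences through the normalizations, and remembering that $\underline{\cdot}$ is determined only up to semantic equivalence, which is precisely all that Lemma~\ref{lem_bar_underline_le} and the statement of Lemma~\ref{lem_underline} require.
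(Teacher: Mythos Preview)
Your primary argument---deducing the lemma from Lemma~\ref{lem_bar_underline_le}(2) together with the identity $\underline{s_0+1}=s_0$---is correct and genuinely different from the paper's. The paper does not use monotonicity of~$\underline{\cdot}$ here at all: it normalizes (as in Lemma~\ref{lem_bar}) to $s=\min(s_1,\ldots)$ and $s_0=\max(s_1',\ldots)$ with atomic $s_i,s_j'$, and then proves the step ``if $\min(i,s_1,\ldots)\le s_0+1$ then $\min(i+1,s_1,\ldots)\le s_0+1$'' by a two-case split on whether $i\in\SV(s_0)$. Your route is shorter and avoids the case analysis, at the cost of leaning on Lemma~\ref{lem_bar_underline_le}(2), whose proof the paper only sketches. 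Your fallback is in the same spirit as the paper's proof, though the paper formulates the core step slightly differently (it shows that replacing one superfluous~$i$ by~$i+1$ preserves $\le s_0+1$, rather than running the full counterexample-valuation argument verbatim from Lemma~\ref{lem_bar}).

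One small caveat on your primary argument: the injectivity of $s\mapsto s+1$ on values assumes $\infty$ is a limit ordinal, which the paper does not stipulate. You do not actually need injectivity, however: since $s_0+1$ has no superfluous variable occurrences, $s''=s_0+1$ is already syntactically of the form $e+1$ with $e=s_0$, so one may simply \emph{take} $\underline{s_0+1}=s_0$ (the transformation step has nothing to do), sidestepping the issue entirely.
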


\begin{proof}
  Analogously to the proof of Lemma~\ref{lem_bar}, it suffices to
  consider the case e.g.~$s = \min(s_1,s_2,s_3)$ and
  $s_0 = \max(s_1',s_2')$ with each $s_i,s_i'$ of the form $j + c$ or
  $c$, with~$j$ a size variable and $c \in \Nbb$. Then it suffices to
  show: if $\min(i,s_1,s_2,\ldots) \le s_0 + 1$ then
  $\min(i+1,s_1,s_2,\ldots) \le s_0 + 1$ with~$s_0$ of the form
  $\max(\ldots)$ as above. We may assume
  $i \notin \SV(s_1,s_2,\ldots)$. There are two cases.
  \begin{itemize}
  \item $i\notin \SV(s_0)$. Suppose
    $v(s_0) + 1 < \min(v(i)+1,v(s_1),v(s_2),\ldots)$. Then
    $v'(s_0)+1 = v(s_0)+1 < \min(v(i)+1,v(s_1),v(s_2),\ldots) =
    v'(\min(i,s_1,s_2,\ldots))$ for $v' = v[v(i)+1/i]$. Contradiction.
  \item $s_0 = \max(i+c,s_1',s_2',\ldots)$. Then
    $v(s_0)+1 = \max(v(i)+c,v(s_1'),v(s_2'),\ldots)+1 \ge v(i) + c + 1
    \ge v(i) + 1 \ge v(\min(i+1,s_1,s_2,\ldots))$.\qedhere
  \end{itemize}
\end{proof}

To save on notation we introduce a dummy~$\bot$ type and set
$\bot \sqcup \tau = \tau \sqcup \bot = \tau$. The dummy type~$\bot$ is
not a valid type, it is used only to simplify the presentation of type
sums below. We assume that for every parameter type variable~$B$ of a
(co)inductive definition~$d$ there exists a
constructor~$c \in \Constr(d)$ such that $B \in \FV(\sigma_i)$ for
some~$i$, where $\ArgTypes(c) = (\sigma_1,\ldots,\sigma_n)$. In other
words, we do not allow parameter type variables which do not occur in
any constructor argument types.

\begin{defi}
  For a context~$\Gamma$ and a term~$t$ we inductively define a
  minimal type~$\Tc(\Gamma;t)$ of~$t$ in~$\Gamma$.
  \begin{itemize}
  \item $\Tc(\Gamma,x:\tau;x) = \tau$.
  \item $\Tc(\Gamma;c t_1\ldots t_n) = \mu^{\max(s_1,\ldots,s_n)+1}$
    if $\ArgTypes(c) = (\sigma_1,\ldots,\sigma_n)$ and
    $\mu = d_\mu(\tau_1,\ldots,\tau_m)$ and $\Def(c) = d_\mu$ and
    $\Tc(\Gamma;t_i) =
    \sigma_i'[d_\mu^{s_i}(\alpha_1^i,\ldots,\alpha_m^i)/A][\beta_1^i/B_1,\ldots,\beta_m^i/B_m]$
    (we take $s_i=0$ and $\alpha_j^i = \bot$ if
    $A \notin \FV(\sigma_i)$, and $\beta_j^i = \bot$ if
    $B_j \notin \FV(\sigma_i)$) and $\sigma_i' \sqsubseteq \sigma_i$
    and $\tau_j = \bigsqcup_{i=1}^n(\alpha_j^i\sqcup\beta_j^i)$. Note
    that $\tau_j \ne \bot$ because of our assumption on the
    occurrences of~$B_j$.
  \item $\Tc(\Gamma;c t_1\ldots t_n) = \nu^{\min(s_1,\ldots,s_n)+1}$
    if $\ArgTypes(c) = (\sigma_1,\ldots,\sigma_n)$ and
    $\nu = d_\nu(\tau_1,\ldots,\tau_m)$ and $\Def(c) = d_\nu$ and
    $\Tc(\Gamma;t_i) =
    \sigma_i'[d_\nu^{s_i}(\alpha_1^i,\ldots,\alpha_m^i)/A][\beta_1^i/B_1,\ldots,\beta_m^i/B_m]$
    (we take $s_i=\infty$ and $\alpha_j^i = \bot$ if
    $A \notin \FV(\sigma_i)$, and $\beta_j^i = \bot$ if
    $B_j \notin \FV(\sigma_i)$) and $\sigma_i' \sqsubseteq \sigma_i$
    and $\tau_j = \bigsqcup_{i=1}^n(\alpha_j^i\sqcup\beta_j^i)$.
  \item $\Tc(\Gamma;\lambda x : \alpha . t) = \alpha \to \beta$ if $\Tc(\Gamma,x:\alpha;t) = \beta$.
  \item $\Tc(\Gamma;tt') = \beta$ if
    $\Tc(\Gamma;t) = \alpha \to \beta$ and
    $\Tc(\Gamma;t') \sqsubseteq \alpha$.
  \item $\Tc(\Gamma;ts) = \tau[s/i]$ if
    $\Tc(\Gamma;t) = \forall i . \tau$.
  \item $\Tc(\Gamma;\Lambda i . t) = \forall i . \tau$ if
    $\Tc(\Gamma;t) = \tau$ and $i \notin \FSV(\Gamma)$.
  \item
    $\Tc(\Gamma;\case(t;\{c_k\vec{x_k} \To t_k \mid k=1,\ldots,n\})) =
    \tau$ if $\Tc(\Gamma;t) = \mu^{s}$ and $\mu = d(\vec{\beta})$ and
    $\ArgTypes(c_k) = (\sigma_k^1,\ldots,\sigma_k^{n_k})$ and
    $\delta_k^l =
    \sigma_k^l[\mu^{\underline{s}}/A][\vec{\beta}/\vec{B}]$ and
    $\Tc(\Gamma,x_k^1:\delta_k^1,\ldots,x_k^{n_k}:\delta_k^{n_k}; t_k)
    = \tau_k$ and $\tau = \bigsqcup_{k=1}^n\tau_k$.
  \item
    $\Tc(\Gamma;\case(t;\{c_k\vec{x_k} \To t_k \mid k=1,\ldots,n\})) =
    \tau$ if $\Tc(\Gamma;t) = \nu^{s}$ and $s \ge 1$ and
    $\nu = d(\vec{\beta})$ and
    $\ArgTypes(c_k) = (\sigma_k^1,\ldots,\sigma_k^{n_k})$ and
    $\delta_k^l =
    \sigma_k^l[\nu^{\overline{s}}/A][\vec{\beta}/\vec{B}]$ and
    $\Tc(\Gamma,x_k^1:\delta_k^1,\ldots,x_k^{n_k}:\delta_k^{n_k}; t_k)
    = \tau_k$ and $\tau = \bigsqcup_{k=1}^n\tau_k$.
  \item
    $\Tc(\Gamma;\fix f : \forall j_1 \ldots j_m . \mu \to \tau . t) =
    \forall j_1 \ldots j_m. \mu \to \tau$ if
    \[
      \Tc(\Gamma,f : \forall j_1 \ldots j_m . \mu^i \to \tau; t)
      \sqsubseteq \forall j_1 \ldots j_m . \mu^{i+1} \to \tau
    \]
    and $i \notin \FSV(\Gamma,\mu,\tau,j_1,\ldots,j_n)$.
  \item $\Tc(\Gamma;\cofix f : \tau . t) = \tau$ if
    \[
    \Tc(\Gamma,f : \chgtgt(\tau,\nu^{\min(s,j)}); t) \sqsubseteq
    \chgtgt(\tau,\nu^{\min(s,j+1)})
    \]
    and $\tgt(\tau) = \nu^s$ and $j \notin \FSV(\Gamma)$ and
    $j \notin \SV(\tau)$.
  \end{itemize}
  In other cases not accounted for by the above points~$\Tc(\Gamma;
  t)$ is undefined. In particular, if the result of the
  operation~$\sqcup$ is not defined then~$\Tc(\Gamma;t)$ is
  undefined. Note that if~$\Tc(\Gamma;t)$ is defined then it is
  uniquely determined.
\end{defi}

\begin{lem}\label{lem_tc_types}
  If $\Tc(\Gamma; t)$ is defined then
  $\Gamma \proves t : \Tc(\Gamma; t)$.
\end{lem}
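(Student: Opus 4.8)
The plan is a straightforward structural induction on the term~$t$, mirroring the clauses of the definition of~$\Tc(\Gamma;t)$: in each case I apply the typing rule of~$\lambda^\Diamond$ that corresponds to the matching clause, inserting the (sub) rule wherever the minimal type $\Tc(\Gamma;t_i)$ of a subterm is only a \emph{subtype} of the type that a premise of the rule demands. The clauses for (ax), (lam), (app), (inst), (gen) are immediate. For instance, for $\Tc(\Gamma;tt')=\beta$ we have by definition $\Tc(\Gamma;t)=\alpha\to\beta$ and $\Tc(\Gamma;t')\sqsubseteq\alpha$; the inductive hypothesis gives $\Gamma\proves t:\alpha\to\beta$ and $\Gamma\proves t':\Tc(\Gamma;t')$, hence $\Gamma\proves t':\alpha$ by (sub) and $\Gamma\proves tt':\beta$ by (app). The (fix) and (cofix) clauses follow the same pattern: the inductive hypothesis applied to the body~$t$, followed by (sub) to pass from $\Tc(\Gamma,f:\ldots;t)$ to the type required by the rule, supplies precisely the premise of the (fix), resp.\ (cofix), rule, and the side conditions on free size variables are copied verbatim from the definition of~$\Tc$.

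The cases requiring real work are (con) and (case), where a subtyping relation must be pushed through a substitution. For $\Tc(\Gamma;c\,t_1\ldots t_n)=\mu^{\max(s_1,\ldots,s_n)+1}$ with $\mu=d_\mu(\vec\tau)$, the inductive hypothesis gives
\[
  \Gamma\proves t_i:\sigma_i'\bigl[d_\mu^{s_i}(\vec{\alpha^i})/A\bigr]\bigl[\vec{\beta^i}/\vec B\bigr],
  \qquad \sigma_i'\sqsubseteq\sigma_i,
\]
whereas the (con) rule requires $\Gamma\proves t_i:\sigma_i[\mu^{\max(\vec s)}/A][\vec\tau/\vec B]$. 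I would close the gap by (sub), proving
$\sigma_i'[d_\mu^{s_i}(\vec{\alpha^i})/A][\vec{\beta^i}/\vec B]\sqsubseteq\sigma_i[d_\mu^{\max(\vec s)}(\vec\tau)/A][\vec\tau/\vec B]$:
here $s_i\le\max(\vec s)$ by definition; $\alpha_j^i\sqsubseteq\tau_j$ and $\beta_j^i\sqsubseteq\tau_j$ since $\tau_j=\bigsqcup_i(\alpha_j^i\sqcup\beta_j^i)$ by iterating Lemma~\ref{lem_meet_bound}, so $d_\mu^{s_i}(\vec{\alpha^i})\sqsubseteq d_\mu^{\max(\vec s)}(\vec\tau)$ by the subtyping rule for inductive types; and then iterating Lemma~\ref{lem_subtype_subst} over the two substitutions yields the claim, its strict-positivity hypothesis being met because all constructor argument types are strictly positive (Definition~\ref{def_coind_defs}) and, per the convention in Figure~\ref{fig_type_system}, all types in play are closed, so each substituted type is itself closed hence strictly positive. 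The coinductive subcase is dual: $\min$ replaces $\max$ and the size ordering in the subtyping rule for coinductive types is reversed. (The side remark in the definition that $\tau_j\ne\bot$ --- so that $\mu^{\max(\vec s)+1}$ is a genuine type --- is exactly the standing assumption that every parameter type variable occurs in some constructor argument type.)

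For (case) with $\Tc(\Gamma;t)=\mu^s$ and $\mu=d(\vec\beta)$, I apply the (case) rule taking $\mu^{\underline s+1}$ as the type of the matched term: the inductive hypothesis gives $\Gamma\proves t:\mu^s$, and $s\le\underline s+1$ by Lemma~\ref{lem_bar_underline_bound}(2), so $\mu^s\sqsubseteq\mu^{\underline s+1}$ and thus $\Gamma\proves t:\mu^{\underline s+1}$ by (sub); the branch contexts $x_k^l:\delta_k^l$ with $\delta_k^l=\sigma_k^l[\mu^{\underline s}/A][\vec\beta/\vec B]$ then match the (case) rule on the nose, the inductive hypothesis yields $\Gamma,\ldots\proves t_k:\tau_k$, and $\tau_k\sqsubseteq\bigsqcup_k\tau_k=\tau$ by Lemma~\ref{lem_meet_bound} gives $\Gamma,\ldots\proves t_k:\tau$ by (sub), whence the case expression has type~$\tau$. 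The coinductive subcase ($\Tc(\Gamma;t)=\nu^s$ with $s\ge1$) is analogous, using $\overline s$ in place of $\underline s$ and Lemma~\ref{lem_bar_underline_bound}(1) to obtain $\nu^s\sqsubseteq\nu^{\overline s+1}$. The only delicate point throughout is the bookkeeping in (con) and (case) --- verifying that all the substitutions and size inequalities line up so that Lemma~\ref{lem_subtype_subst} applies; everything else is a direct appeal to the matching typing rule together with (sub).
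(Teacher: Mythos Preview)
Your proposal is correct and follows exactly the approach the paper sketches: induction on the definition of $\Tc(\Gamma;t)$, invoking Lemma~\ref{lem_meet_bound}, Lemma~\ref{lem_subtype_subst}, and Lemma~\ref{lem_bar_underline_bound} precisely where you do. Your write-up simply fills in the details the paper leaves implicit, and the key steps---using (sub) together with Lemma~\ref{lem_subtype_subst} in the (con) case and Lemma~\ref{lem_bar_underline_bound} to manufacture the $\underline{s}+1$ resp.\ $\overline{s}+1$ decomposition in the (case) clauses---are exactly what the paper intends.
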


\begin{proof}
  Induction on the definition of~$\Tc(\Gamma; t)$, using
  Lemma~\ref{lem_meet_bound}, Lemma~\ref{lem_subtype_subst} and
  Lemma~\ref{lem_bar_underline_bound}. We show a few representative
  cases in detail.
  \begin{itemize}
  \item $\Tc(\Gamma,x:\tau; x) = \tau$. Then $\Gamma,x:\tau \proves x
    : \tau$.
  \item $\Tc(\Gamma;c t_1\ldots t_n) = \mu^{\max(s_1,\ldots,s_n)+1}$
    where $\ArgTypes(c) = (\sigma_1,\ldots,\sigma_n)$ and $\mu =
    d_\mu(\tau_1,\ldots,\tau_m)$ and $\Def(c) = d_\mu$ and
    $\Tc(\Gamma;t_i) =
    \sigma_i'[d_\mu^{s_i}(\alpha_1^i,\ldots,\alpha_m^i)/A][\beta_1^i/B_1,\ldots,\beta_m^i/B_m]$
    and $\sigma_i' \sqsubseteq \sigma_i$ and $\tau_j =
    \bigsqcup_{i=1}^n(\alpha_j^i\sqcup\beta_j^i)$. We have $\Gamma
    \proves t_i :
    \sigma_i'[d_\mu^{s_i}(\alpha_1^i,\ldots,\alpha_m^i)/A][\beta_1^i/B_1,\ldots,\beta_m^i/B_m]$
    by the inductive hypothesis. By Lemma~\ref{lem_meet_bound} we have
    $\alpha_j^i \sqsubseteq \tau_j$. Hence, by
    Lemma~\ref{lem_subtype_subst} and the (sub) typing rule, $\Gamma
    \proves t_i :
    \sigma_i[d_\mu^{\max(s_1,\ldots,s_n)}(\tau_1,\ldots,\tau_m)/A][\tau_1/B_1,\ldots,\tau_m/B_m]$. Thus
    $\Gamma \proves c t_1 \ldots t_n : \mu^{\max(s_1,\ldots,s_n)+1}$
    by the (con) typing rule.
  \item
    $\Tc(\Gamma;\case(t;\{c_k\vec{x_k} \To t_k \mid k=1,\ldots,n\})) =
    \tau$ where $\Tc(\Gamma;t) = \mu^{s}$ and $\mu = d(\vec{\beta})$
    and $\ArgTypes(c_k) = (\sigma_k^1,\ldots,\sigma_k^{n_k})$ and
    $\delta_k^l =
    \sigma_k^l[\mu^{\underline{s}}/A][\vec{\beta}/\vec{B}]$ and
    $\Tc(\Gamma,x_k^1:\delta_k^1,\ldots,x_k^{n_k}:\delta_k^{n_k}; t_k)
    = \tau_k$ and $\tau = \bigsqcup_{k=1}^n\tau_k$. By the inductive
    hypothesis $\Gamma \proves t : \mu^s$ and
    $\Gamma,x_k^1:\delta_k^1,\ldots,x_k^{n_k}:\delta_k^{n_k} \proves
    t_k : \tau_k$. Since $s \le \underline{s} + 1$ by
    Lemma~\ref{lem_bar_underline_bound}, we have $\Gamma \proves t :
    \mu^{\underline{s}+1}$ by (sub). By Lemma~\ref{lem_meet_bound} and
    (sub) we have
    $\Gamma,x_k^1:\delta_k^1,\ldots,x_k^{n_k}:\delta_k^{n_k} \proves
    t_k : \tau$. Thus $\Gamma \proves \case(t;\{c_k\vec{x_k} \To t_k
    \mid k=1,\ldots,n\}) : \tau$ by (case).\qedhere
  \end{itemize}
\end{proof}

\begin{lem}\label{lem_subtype_prop}~
  \begin{enumerate}
  \item For any type~$\tau$ we have $\tau \sqsubseteq \tau$.
  \item If $\tau_1 \sqsubseteq \tau_2 \sqsubseteq \tau_3$ then $\tau_1
    \sqsubseteq \tau_3$.
  \end{enumerate}
\end{lem}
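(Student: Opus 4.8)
The plan is to prove both statements by straightforward structural induction, exploiting the fact that the subtyping rules in Figure~\ref{fig_subtyping} are syntax-directed: a derivable judgement $\tau \sqsubseteq \tau'$ forces $\tau$ and $\tau'$ to have the same top-level constructor, so inversion on such a judgement is unambiguous. This observation is what makes the case analyses below clean.

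For part~(1), reflexivity, I would induct on the structure of $\tau$. If $\tau = A$ is a type variable, the axiom $A \sqsubseteq A$ applies directly. If $\tau = d^s(\vec{\alpha})$, the inductive hypothesis gives $\alpha_k \sqsubseteq \alpha_k$ for each $k$, and since $s \le s$ (hence also $s \ge s$), the rule for $d_\mu$ or for $d_\nu$ applies. If $\tau = \forall i . \tau'$, the inductive hypothesis gives $\tau' \sqsubseteq \tau'$ and the $\forall$-rule applies. If $\tau = \alpha \to \beta$, the inductive hypothesis yields $\alpha \sqsubseteq \alpha$ and $\beta \sqsubseteq \beta$, and the arrow rule applies — note that the contravariant premise is exactly reflexivity on $\alpha$, so nothing special is needed here.

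For part~(2), transitivity, I would induct on $\tau_2$ and case-split on its top-level shape; inverting the hypotheses $\tau_1 \sqsubseteq \tau_2$ and $\tau_2 \sqsubseteq \tau_3$ then pins down the shapes of $\tau_1$ and $\tau_3$. For $\tau_2 = A$ we get $\tau_1 = \tau_3 = A$. For $\tau_2 = d_\mu^{s_2}(\vec{\beta})$ we get $\tau_1 = d_\mu^{s_1}(\vec{\alpha})$ with $\alpha_k \sqsubseteq \beta_k$ and $s_1 \le s_2$, and $\tau_3 = d_\mu^{s_3}(\vec{\gamma})$ with $\beta_k \sqsubseteq \gamma_k$ and $s_2 \le s_3$; the inductive hypothesis gives $\alpha_k \sqsubseteq \gamma_k$, and transitivity of $\le$ on size expressions (immediate from the definition: $s \le s'$ iff $v(s) \le v(s')$ for every valuation~$v$) gives $s_1 \le s_3$, so the $d_\mu$-rule applies. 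The case $\tau_2 = d_\nu^{s_2}(\vec{\beta})$ is symmetric, with the size inequalities reversed ($s_1 \ge s_2 \ge s_3$). For $\tau_2 = \forall i . \tau_2'$, inversion gives $\tau_1 = \forall i . \tau_1'$ and $\tau_3 = \forall i . \tau_3'$ with $\tau_1' \sqsubseteq \tau_2' \sqsubseteq \tau_3'$, and the inductive hypothesis closes the case. For $\tau_2 = \alpha_2 \to \beta_2$, inversion gives $\tau_1 = \alpha_1 \to \beta_1$ with $\alpha_2 \sqsubseteq \alpha_1$ and $\beta_1 \sqsubseteq \beta_2$, and $\tau_3 = \alpha_3 \to \beta_3$ with $\alpha_3 \sqsubseteq \alpha_2$ and $\beta_2 \sqsubseteq \beta_3$; applying the inductive hypothesis to the chains $\alpha_3 \sqsubseteq \alpha_2 \sqsubseteq \alpha_1$ and $\beta_1 \sqsubseteq \beta_2 \sqsubseteq \beta_3$ yields $\alpha_3 \sqsubseteq \alpha_1$ and $\beta_1 \sqsubseteq \beta_3$, so the arrow rule applies.

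The only mildly delicate point is the arrow case of transitivity, where the induction must be threaded through both the contravariant and the covariant argument positions simultaneously. Since $\alpha_1,\alpha_2,\alpha_3$ and $\beta_1,\beta_2,\beta_3$ are all proper subterms of the $\tau_i$, inducting on the structure of $\tau_2$ (or, if one prefers a single measure, on the combined size of $\tau_1,\tau_2,\tau_3$) suffices, and no genuine obstacle arises: the whole argument is routine once the syntax-directedness of the subtyping relation is noted.
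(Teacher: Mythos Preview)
Your proposal is correct and follows essentially the same approach as the paper, which records only ``By induction.'' You have simply spelled out the routine case analysis that the paper leaves implicit; the key observation that the subtyping rules are syntax-directed, making inversion unambiguous, is exactly what makes the induction go through.
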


\begin{proof}
  By induction. Point (1) is straightforward, using the definition
  of~$\sqsubseteq$. We show a few representative cases for point (2).

  If $\tau_2 = A$ then we must have $\tau_1 = \tau_3 = A$, so $\tau_1
  \sqsubseteq \tau_3$. If $\tau_2 = d_\mu^{s_2}(\vec{\beta})$ then
  $\tau_1 = d_\mu^{s_1}(\vec{\alpha})$ and $\tau_3 =
  d_\mu^{s_3}(\vec{\gamma})$ and $s_1 \le s_2 \le s_3$ and $\alpha_k
  \sqsubseteq \beta_k \sqsubseteq \gamma_k$. By the inductive
  hypothesis $\alpha_k \sqsubseteq \gamma_k$. Hence $\tau_1
  \sqsubseteq \tau_3$. If $\tau_2 = \alpha_2 \to \beta_2$ then $\tau_1
  = \alpha_1 \to \beta_1$ and $\tau_3 = \alpha_3 \to \beta_3$ and
  $\alpha_3 \sqsubseteq \alpha_2 \sqsubseteq \alpha_1$ and $\beta_1
  \sqsubseteq \beta_2 \sqsubseteq \beta_3$. By the inductive
  hypothesis $\alpha_3 \sqsubseteq \alpha_1$ and $\beta_1 \sqsubseteq
  \beta_3$. Thus $\tau_1 \sqsubseteq \tau_3$.
\end{proof}

\begin{lem}\label{lem_meet_lub}~
  \begin{enumerate}
  \item If $\tau_1 \sqcup \tau_2 \sqsubseteq \tau$ then $\tau_1
    \sqsubseteq \tau$ and $\tau_2 \sqsubseteq \tau$.
  \item If $\tau \sqsubseteq \tau_1 \sqcap \tau_2$ then $\tau
    \sqsubseteq \tau_1$ and $\tau \sqsubseteq \tau_2$.
  \end{enumerate}
\end{lem}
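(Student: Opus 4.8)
The plan is to obtain both statements directly from Lemma~\ref{lem_meet_bound} together with the transitivity of the subtyping relation established in Lemma~\ref{lem_subtype_prop}. No induction is needed here: all the combinatorial content concerning joins and meets has already been absorbed into Lemma~\ref{lem_meet_bound}, which states precisely that $\tau_1 \sqcap \tau_2 \sqsubseteq \tau_i \sqsubseteq \tau_1 \sqcup \tau_2$ for $i=1,2$ whenever the relevant meet or join is defined.

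For part~(1), observe that writing $\tau_1 \sqcup \tau_2 \sqsubseteq \tau$ presupposes that $\tau_1 \sqcup \tau_2$ exists, so Lemma~\ref{lem_meet_bound} applies and gives $\tau_i \sqsubseteq \tau_1 \sqcup \tau_2$ for $i=1,2$. Chaining this with the hypothesis $\tau_1 \sqcup \tau_2 \sqsubseteq \tau$ via Lemma~\ref{lem_subtype_prop}(2) yields $\tau_i \sqsubseteq \tau$, as required. Part~(2) is completely dual: Lemma~\ref{lem_meet_bound} gives $\tau_1 \sqcap \tau_2 \sqsubseteq \tau_i$, and composing this with the hypothesis $\tau \sqsubseteq \tau_1 \sqcap \tau_2$ through transitivity of $\sqsubseteq$ gives $\tau \sqsubseteq \tau_i$.

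Since the argument is merely a two-step chase through already-proven facts, there is no real obstacle; the only point worth stating explicitly is that the hypotheses of the lemma implicitly guarantee well-definedness of $\sqcup$ and $\sqcap$, which is exactly the standing assumption under which Lemma~\ref{lem_meet_bound} is established. One could alternatively re-prove the statement by a routine induction on $\tau$ mirroring the proof of Lemma~\ref{lem_meet_bound}, but that would be redundant given transitivity of the subtyping relation.
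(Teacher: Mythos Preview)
Your proof is correct and is genuinely simpler than the paper's. The paper proves both points simultaneously by a direct structural induction on~$\tau$, unfolding each case of the definition of $\sqcup$ and $\sqcap$ and appealing to the inductive hypothesis; this essentially reproves, inside the argument, the content of Lemma~\ref{lem_meet_bound} together with transitivity. You instead observe that Lemma~\ref{lem_meet_bound} already gives $\tau_i \sqsubseteq \tau_1 \sqcup \tau_2$ and $\tau_1 \sqcap \tau_2 \sqsubseteq \tau_i$, and then invoke transitivity of $\sqsubseteq$ from Lemma~\ref{lem_subtype_prop}(2) to finish in one step. Since both of these lemmas are established in the paper before Lemma~\ref{lem_meet_lub}, there is no circularity, and your route avoids the case analysis entirely. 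The paper's approach is self-contained but redundant; yours is shorter and exposes that the lemma is an immediate corollary of facts already on hand.
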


\begin{proof}
  We show both points simultaneously by induction on~$\tau$.
  \begin{enumerate}
  \item Assume $\tau_1 = \alpha \to \beta$, $\tau_2 = \alpha' \to
    \beta'$ and $\tau = \gamma_1 \to \gamma_2$. Then $\tau_1 \sqcup
    \tau_2 = (\alpha \sqcap \alpha') \to (\beta \sqcup \beta')$ and
    thus $\gamma_1 \sqsubseteq \alpha \sqcap \alpha'$ and $\beta
    \sqcup \beta' \sqsubseteq \gamma_2$. Hence by the inductive
    hypothesis $\gamma_1 \sqsubseteq \alpha$, $\gamma_1 \sqsubseteq
    \alpha'$, $\beta \sqsubseteq \gamma_2$ and $\beta' \sqsubseteq
    \gamma_2$. This implies $\tau_1 \sqsubseteq \tau$ and $\tau_2
    \sqsubseteq \tau$.

    Assume $\tau_1 = \forall i . \alpha_1$,
    $\tau_2 = \forall i . \alpha_2$ and $\tau = \forall i
    . \gamma$. Then
    $\tau_1 \sqcup \tau_2 = \forall i . \alpha_1 \sqcup
    \alpha_2$. Because $\alpha_1 \sqcup \alpha_2 \sqsubseteq \gamma$,
    by the inductive hypothesis $\alpha_1 \sqsubseteq \gamma$ and
    $\alpha_2 \sqsubseteq \gamma$. Thus $\tau_1 \sqsubseteq \tau$ and
    $\tau_2 \sqsubseteq \tau$.

    Assume $\tau_1 = d_\mu^{s_1}(\vec{\alpha})$ and
    $\tau_2 = d_\mu^{s_2}(\vec{\beta})$ and
    $\tau = d_\mu^s(\vec{\gamma})$. Then
    $\tau_1 \sqcup \tau_2 = d_\mu^{\max(s_1,s_2)}(\vec{\delta})$ with
    $\delta_i = \alpha_i \sqcup \beta_i \sqsubseteq \gamma_i$ and
    $\max(s_1,s_2) \le s$. By the inductive hypothesis
    $\alpha_i \sqsubseteq \gamma_i$ and
    $\beta_i \sqsubseteq \gamma_i$. Also
    $s_1,s_2 \le \max(s_1,s_2) \le s$. Hence $\tau_1 \sqsubseteq \tau$
    and $\tau_2 \sqsubseteq \tau$.

    Assume $\tau_1 = d_\nu^{s_1}(\vec{\alpha})$ and
    $\tau_2 = d_\nu^{s_2}(\vec{\beta})$ and
    $\tau = d_\nu^s(\vec{\gamma})$. Then
    $\tau_1 \sqcup \tau_2 = d_\nu^{\min(s_1,s_2)}(\vec{\delta})$ with
    $\delta_i = \alpha_i \sqcup \beta_i \sqsubseteq \gamma_i$ and
    $\min(s_1,s_2) \ge s$. By the inductive hypothesis
    $\alpha_i \sqsubseteq \gamma_i$ and
    $\beta_i \sqsubseteq \gamma_i$. Also
    $s_1,s_2 \ge \min(s_1,s_2) \ge s$. Hence $\tau_1 \sqsubseteq \tau$
    and $\tau_2 \sqsubseteq \tau$.
  \item The proof for the second point is analogous to the first one.\qedhere
  \end{enumerate}
\end{proof}

\begin{lem}\label{lem_meet_subtype}
  Assume $\tau_1 \sqsubseteq \tau_1'$ and $\tau_2 \sqsubseteq
  \tau_2'$. Then:
  \begin{enumerate}
  \item $\tau_1 \sqcup \tau_2 \sqsubseteq \tau_1' \sqcup \tau_2'$,
  \item $\tau_1 \sqcap \tau_2 \sqsubseteq \tau_1' \sqcap \tau_2'$.
  \end{enumerate}
\end{lem}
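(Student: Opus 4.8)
The plan is to prove both inclusions simultaneously by induction on the common ``shape'' of the types involved. First I would record the elementary observation that $\sigma \sqsubseteq \sigma'$ forces $\sigma$ and $\sigma'$ to have the same top-level type constructor (and, recursively, the same shape): this is immediate by inspecting the rules of Figure~\ref{fig_subtyping}. Dually, $\sigma \sqcup \sigma'$ and $\sigma \sqcap \sigma'$ are defined precisely when $\sigma$ and $\sigma'$ share a top-level constructor. Hence the hypotheses $\tau_1 \sqsubseteq \tau_1'$ and $\tau_2 \sqsubseteq \tau_2'$ already entail that $\tau_1, \tau_1', \tau_2, \tau_2'$ all have the same shape, so that every join and meet occurring in the statement is defined, and it suffices to prove $\tau_1 \sqcup \tau_2 \sqsubseteq \tau_1' \sqcup \tau_2'$ and $\tau_1 \sqcap \tau_2 \sqsubseteq \tau_1' \sqcap \tau_2'$ under this shared-shape assumption. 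The $\bot$ cases are trivial, since $\bot \sqcup \tau = \tau$ and $\bot$ never appears in a subtyping statement. I would then proceed by induction on the size of $\tau_1$, treating claims (1) and (2) together because the arrow case of one feeds the other through the contravariant argument position.

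The case analysis is on the top-level constructor. For a type variable $A$ all four types equal $A$ and both joins/meets are $A$, so the claim is $A \sqsubseteq A$. For $\tau_1 = \alpha_1 \to \beta_1$ etc., the arrow rule turns the hypotheses into $\alpha_1' \sqsubseteq \alpha_1$, $\beta_1 \sqsubseteq \beta_1'$, $\alpha_2' \sqsubseteq \alpha_2$, $\beta_2 \sqsubseteq \beta_2'$; since $\tau_1 \sqcup \tau_2 = (\alpha_1 \sqcap \alpha_2) \to (\beta_1 \sqcup \beta_2)$ and $\tau_1' \sqcup \tau_2' = (\alpha_1' \sqcap \alpha_2') \to (\beta_1' \sqcup \beta_2')$, the induction hypothesis for (2) gives $\alpha_1' \sqcap \alpha_2' \sqsubseteq \alpha_1 \sqcap \alpha_2$, the induction hypothesis for (1) gives $\beta_1 \sqcup \beta_2 \sqsubseteq \beta_1' \sqcup \beta_2'$, and the arrow subtyping rule concludes; the $\sqcap$ subcase is symmetric. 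For $\forall i . \sigma$ one applies the induction hypothesis to the bodies and uses the $\forall$ subtyping rule. For $\tau_1 = d_\mu^{s_1}(\vec{\alpha^1})$ etc., the $d_\mu$ subtyping rule gives $s_1 \le s_1'$, $s_2 \le s_2'$ and $\alpha^1_k \sqsubseteq \alpha^{1'}_k$, $\alpha^2_k \sqsubseteq \alpha^{2'}_k$; then $\tau_1 \sqcup \tau_2 = d_\mu^{\max(s_1,s_2)}(\vec\gamma)$ with $\gamma_k = \alpha^1_k \sqcup \alpha^2_k$ and likewise for $\tau_1' \sqcup \tau_2'$, so the induction hypothesis for (1) yields $\gamma_k \sqsubseteq \gamma_k'$, while monotonicity of $\max$ with respect to $\le$ — immediate from $s \le s' \iff \forall v.\, v(s) \le v(s')$ together with the fact that $v$ commutes with $\max$ — yields $\max(s_1,s_2) \le \max(s_1',s_2')$, and the $d_\mu$ rule concludes; the meet subcase uses $\min$ and the induction hypothesis for (2). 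The coinductive case $d_\nu^{s}(\vec\alpha)$ is identical except that the hypotheses now supply the reversed size inequalities $s_1 \ge s_1'$, $s_2 \ge s_2'$, so for the join one obtains $d_\nu^{\min(s_1,s_2)}(-)$ and needs $\min(s_1,s_2) \ge \min(s_1',s_2')$, and for the meet one obtains $d_\nu^{\max(s_1,s_2)}(-)$ and needs $\max(s_1,s_2) \ge \max(s_1',s_2')$, both of which hold by monotonicity.

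There is no genuine obstacle here: the proof is a routine structural induction. The only points that require a moment's care are the interplay between claims (1) and (2) through the contravariant domain of arrow types (which is why the two statements must be proved by a single mutual induction), the basic monotonicity of $\min$ and $\max$ on size expressions with respect to $\le$, and the bookkeeping that $\sqcup$ and $\sqcap$ are partial operations — handled once and for all by the shared-shape observation at the start, which guarantees that definedness of the left-hand joins/meets transfers to the right-hand ones.
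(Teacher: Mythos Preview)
Your proposal is correct and follows essentially the same approach as the paper: a simultaneous induction on the structure of $\tau_1$, handling the arrow, $\forall$, $d_\mu$, and $d_\nu$ cases exactly as you describe, with the mutual dependence of claims (1) and (2) arising from the contravariance in the arrow case. The paper's proof is slightly terser (it omits the shape/definedness discussion and the $\bot$ and type-variable cases you include), but the argument is the same; note, incidentally, that the paper does not define $A \sqcup A$, so your type-variable case is vacuous rather than needed.
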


\begin{proof}
  We show both points simultaneously by induction on~$\tau_1$.
  \begin{enumerate}
  \item If $\tau_1 = \alpha_1 \to \beta_1$ and $\tau_2 = \alpha_2 \to
    \beta_2$ then $\tau_1' = \alpha_1' \to \beta_1'$ and $\tau_2' =
    \alpha_2' \to \beta_2'$ with $\alpha_1' \sqsubseteq \alpha_1$,
    $\alpha_2' \sqsubseteq \alpha_2$, $\beta_1 \sqsubseteq \beta_1'$
    and $\beta_2 \sqsubseteq \beta_2'$. We have $\tau_1 \sqcup \tau_2
    = (\alpha_1 \sqcap \alpha_2) \to (\beta_1 \sqcup \beta_2)$ and
    $\tau_1' \sqcup \tau_2' = (\alpha_1' \sqcap \alpha_2') \to
    (\beta_1' \sqcup \beta_2')$. By the inductive hypothesis
    $\alpha_1' \sqcap \alpha_2' \sqsubseteq \alpha_1 \sqcap \alpha_2$
    and $\beta_1 \sqcup \beta_2 \sqsubseteq \beta_1' \sqcup
    \beta_2'$. Hence $\tau_1 \sqcup \tau_2 \sqsubseteq \tau_1' \sqcup
    \tau_2'$.

    If $\tau_1 = \forall i . \alpha$ and $\tau_2 = \forall i . \beta$
    then $\tau_1' = \forall i . \alpha'$ and
    $\tau_2' = \forall i . \beta'$ with $\alpha \sqsubseteq \alpha'$
    and $\beta \sqsubseteq \beta'$. By the inductive hypothesis
    $\alpha \sqcup \beta \sqsubseteq \alpha' \sqcup \beta'$. Thus
    $\tau_1 \sqcup \tau_2 \sqsubseteq \tau_1' \sqcup \tau_2'$.

    If $\tau_1 = d_\mu^{s_1}(\vec{\alpha})$ and
    $\tau_2 = d_\mu^{s_2}(\vec{\beta})$ then
    $\tau_1' = d_\mu^{s_1'}(\vec{\alpha}')$ and
    $\tau_2' = d_\mu^{s_2'}(\vec{\beta}')$ with
    $\alpha_i \sqsubseteq \alpha_i'$ and
    $\beta_i \sqsubseteq \beta_i'$ and $s_1 \le s_1'$ and
    $s_2 \le s_2'$. We have
    $\tau_1 \sqcup \tau_2 = d_\mu^{\max(s_1,s_2)}(\vec{\gamma})$ and
    $\tau_1' \sqcup \tau_2' = d_\mu^{\max(s_1',s_2')}(\vec{\gamma}')$,
    where $\gamma_i = \alpha_i \sqcup \beta_i$ and
    $\gamma_i' = \alpha_i' \sqcup \beta_i'$. By the inductive
    hypothesis $\gamma_i \sqsubseteq \gamma_i'$. Also
    $\max(s_1,s_2) \le \max(s_1',s_2')$. Hence
    $\tau_1 \sqcup \tau_2 \sqsubseteq \tau_1' \sqcup \tau_2'$.

    If $\tau_1 = d_\nu^{s_1}(\vec{\alpha})$ and
    $\tau_2 = d_\nu^{s_2}(\vec{\beta})$ then
    $\tau_1' = d_\nu^{s_1'}(\vec{\alpha}')$ and
    $\tau_2' = d_\mu^{s_2'}(\vec{\beta}')$ with
    $\alpha_i \sqsubseteq \alpha_i'$ and
    $\beta_i \sqsubseteq \beta_i'$ and $s_1 \ge s_1'$ and
    $s_2 \ge s_2'$. We have
    $\tau_1 \sqcup \tau_2 = d_\nu^{\min(s_1,s_2)}(\vec{\gamma})$ and
    $\tau_1' \sqcup \tau_2' = d_\mu^{\min(s_1',s_2')}(\vec{\gamma}')$,
    where $\gamma_i = \alpha_i \sqcup \beta_i$ and
    $\gamma_i' = \alpha_i' \sqcup \beta_i'$. By the inductive
    hypothesis $\gamma_i \sqsubseteq \gamma_i'$. Also
    $\min(s_1,s_2) \ge \min(s_1',s_2')$. Hence
    $\tau_1 \sqcup \tau_2 \sqsubseteq \tau_1' \sqcup \tau_2'$.
  \item The proof for the second point is analogous to the first
    point.\qedhere
  \end{enumerate}
\end{proof}

\begin{cor}\label{cor_meet_subtype}
  If $\tau_1 \sqsubseteq \tau$ and $\tau_2 \sqsubseteq \tau$ then
  $\tau_1\sqcup\tau_2\sqsubseteq\tau$.
\end{cor}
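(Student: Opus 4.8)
The plan is to reduce the statement, via Lemma~\ref{lem_meet_subtype}, to the idempotency of $\sqcup$ up to subtyping. From the hypotheses $\tau_1 \sqsubseteq \tau$ and $\tau_2 \sqsubseteq \tau$, applying Lemma~\ref{lem_meet_subtype}(1) with $\tau_1' = \tau_2' = \tau$ gives $\tau_1 \sqcup \tau_2 \sqsubseteq \tau \sqcup \tau$. By transitivity of subtyping (Lemma~\ref{lem_subtype_prop}(2)) it then suffices to prove $\tau \sqcup \tau \sqsubseteq \tau$.

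To prove this I would establish, simultaneously by induction on $\tau$, the two statements (i)~$\tau \sqcup \tau \sqsubseteq \tau$ and (ii)~$\tau \sqsubseteq \tau \sqcap \tau$; the second is needed to handle the contravariant argument position of arrow types. In the case $\tau = \alpha \to \beta$ we have $\tau \sqcup \tau = (\alpha \sqcap \alpha) \to (\beta \sqcup \beta)$, and the subtyping rule for arrows reduces (i) for $\tau$ to (ii) for $\alpha$ and (i) for $\beta$, both available from the inductive hypothesis; (ii) is symmetric, and the $\forall i . \tau'$ case is analogous (and covariant, so uses only the corresponding half of the hypothesis). In the case $\tau = d_\mu^s(\vec{\alpha})$ we have $\tau \sqcup \tau = d_\mu^{\max(s,s)}(\vec{\gamma})$ with $\gamma_i = \alpha_i \sqcup \alpha_i$; the subtyping rule for decorated inductive types reduces (i) to $\gamma_i \sqsubseteq \alpha_i$ for each $i$ — the inductive hypothesis on the parameters $\alpha_i$, which are strictly positive and smaller than $\tau$ in the induction order — together with $\max(s,s) \le s$, and the latter holds because subtyping compares sizes via the semantic order $\le$ on size expressions (Section~\ref{sec_type_system}), for which $v(\max(s,s)) = \max(v(s),v(s)) = v(s)$ for every $v$. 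The case $\tau = d_\nu^s(\vec{\alpha})$ is dual, using $\min(s,s) \ge s$. The only remaining leaves of the type grammar are type variables, where (i) and (ii) hold by reflexivity of subtyping (Lemma~\ref{lem_subtype_prop}(1)).

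Since the whole argument is bookkeeping — matching the common head symbol of the two copies of $\tau$ against the subtyping rules of Figure~\ref{fig_subtyping} and invoking already-established monotonicity and idempotency facts about $\sqcup$, $\sqcap$ and $\le$ — I do not expect a genuine obstacle. The two points that require a moment's care are: remembering to strengthen to the simultaneous claim about $\sqcap$, forced by the contravariance of arrow domains; and observing that one must not expect $\max(s,s)$ and $\min(s,s)$ to be syntactically equal to $s$, so the size side-conditions go through only because subtyping uses the semantic comparison $\le$ rather than syntactic identity of size expressions.
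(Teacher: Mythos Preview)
Your proposal is correct and matches the paper's own proof almost exactly: the paper also first establishes by induction that $\tau \sqcup \tau \sqsubseteq \tau$ and $\tau \sqsubseteq \tau \sqcap \tau$, and then concludes via Lemma~\ref{lem_meet_subtype} and Lemma~\ref{lem_subtype_prop}. Your write-up in fact spells out more of the inductive cases than the paper does, including the observation that the simultaneous $\sqcap$ claim is needed for the contravariant arrow domain.
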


\begin{proof}
  One shows by induction that $\tau \sqsubseteq \tau \sqcap \tau$ and
  $\tau \sqcup \tau \sqsubseteq \tau$. Then the corollary follows from
  Lemma~\ref{lem_meet_subtype} and Lemma~\ref{lem_subtype_prop}.
\end{proof}

We write $\Gamma\sqsubseteq\Gamma'$ if
$\Gamma=x_1:\tau_1,\ldots,x_n:\tau_n$ and
$\Gamma'=x_1:\tau_1',\ldots,x_n:\tau_n'$ and $\tau_i \sqsubseteq
\tau_i'$ for $i=1,\ldots,n$.

Note that if $\alpha \sqcup \beta$ is defined and $\alpha' \sqsubseteq
\alpha$ and $\beta' \sqsubseteq \beta$ then $\alpha' \sqcup \beta'$ is
also defined. We will often use this observation implicitly.

\begin{lem}\label{lem_context_tc_subtype}
  If $\Gamma' \sqsubseteq \Gamma$ and $\Tc(\Gamma;t)$ is defined then
  $\Tc(\Gamma';t)$ is defined and $\Tc(\Gamma';t) \sqsubseteq
  \Tc(\Gamma;t)$.
\end{lem}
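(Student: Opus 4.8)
The plan is to prove the statement by induction on the definition of the minimal type $\Tc(\Gamma;t)$ — equivalently, by structural induction on the decorated term $t$ — and in each defining clause to transport the hypothesis $\Gamma'\sqsubseteq\Gamma$ through that clause using the subtyping toolkit already established: reflexivity and transitivity of $\sqsubseteq$ (Lemma~\ref{lem_subtype_prop}), the inversion properties of $\sqsubseteq$ through substitution (Lemma~\ref{lem_subtype_subst}), the bound and monotonicity properties of $\sqcup$ and $\sqcap$ (Lemma~\ref{lem_meet_lub}, Lemma~\ref{lem_meet_subtype}, Corollary~\ref{cor_meet_subtype}), and the monotonicity of $\overline{\cdot}$ and $\underline{\cdot}$ (Lemma~\ref{lem_bar_underline_le}). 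The observation recorded just before the statement — that if $\alpha\sqcup\beta$ is defined and $\alpha'\sqsubseteq\alpha$, $\beta'\sqsubseteq\beta$, then $\alpha'\sqcup\beta'$ is defined — is precisely what will guarantee that $\Tc(\Gamma';t)$ is \emph{defined} whenever $\Tc(\Gamma;t)$ is, in every clause whose right-hand side contains a $\bigsqcup$.

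For the easy clauses (variable, $\lambda x:\alpha.t$, $\Lambda i.t$) I would note that they either use no subterm, or use a single subterm typed in $\Gamma$ possibly extended by a binding $x:\alpha$ with the \emph{same} $\alpha$ in the $\Gamma$- and $\Gamma'$-derivations, so the induction hypothesis applies directly and the result follows from the corresponding subtyping rule plus reflexivity. For application $t\,t'$ the hypothesis gives $\Tc(\Gamma';t)\sqsubseteq\Tc(\Gamma;t)=\alpha\to\beta$, hence $\Tc(\Gamma';t)=\alpha''\to\beta''$ with $\alpha\sqsubseteq\alpha''$ and $\beta''\sqsubseteq\beta$; chaining the induction hypothesis for $t'$ with $\Tc(\Gamma;t')\sqsubseteq\alpha\sqsubseteq\alpha''$ via transitivity lets the application clause fire and yields $\Tc(\Gamma';t\,t')=\beta''\sqsubseteq\beta$. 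Size application $t\,s$ is analogous but additionally needs that $\sqsubseteq$ is preserved under the substitution $[s/i]$, a routine side induction on the subtyping derivation using Lemma~\ref{lem_size_subst_monotone_2}. The $\fix$ and $\cofix$ clauses reuse the very same type annotation, so the only things to check are that the body's minimal type in the extended context still lies below the required type (transitivity again, since the $f$-binding is identical in both contexts) and that the freshness side conditions $i\notin\FSV(\Gamma,\mu,\tau,\ldots)$, $j\notin\FSV(\Gamma)$ survive the passage to $\Gamma'$; the latter is handled by observing that only finitely many size variables are relevant and re-choosing the bound size variable fresh for $\Gamma'$ as well (using equivariance of $\Tc$ under renaming of the context's size variables).

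The main obstacle, and where essentially all the bookkeeping lives, is the constructor clause (in both its inductive and coinductive variants) and the two $\case$ clauses, because there the defining condition insists that the minimal types of the subterms already have a specific \emph{shape}: $\sigma_i^{\prime}[d^{s_i}(\ldots)/A][\ldots/\vec B]$ with $\sigma_i^{\prime}\sqsubseteq\sigma_i$ for constructors, and $d^{s}(\vec\beta)$ together with $\delta_k^l=\sigma_k^l[\rho^{\underline{s}}/A][\vec\beta/\vec B]$ (resp.\ $\rho^{\overline{s}}$) for $\case$. Here I would first apply the induction hypothesis to get $\Tc(\Gamma';t_i)\sqsubseteq\Tc(\Gamma;t_i)$, then invoke the inversion half of Lemma~\ref{lem_subtype_subst}, iterated over the substitutions for $A$ and for $\vec B$, to recover that $\Tc(\Gamma';t_i)$ has the same shape with a still-smaller positive part, with the recursive-occurrence size $s_i'$ satisfying $s_i'\le s_i$ (resp.\ $s_i'\ge s_i$ in the coinductive case), and with the parameter components replaced by subtypes; Corollary~\ref{cor_meet_subtype} and Lemma~\ref{lem_meet_subtype} then give $\tau_j'\sqsubseteq\tau_j$ and, together with monotonicity of $\max$/$\min$ of size expressions, the correct size inequality, so that the constructor subtyping rule yields $\Tc(\Gamma';c\,\vec t\,)\sqsubseteq\Tc(\Gamma;c\,\vec t\,)$. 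For $\case$ the extra ingredient is $\overline{s'}\ge\overline{s}$ (resp.\ $\underline{s'}\le\underline{s}$) from Lemma~\ref{lem_bar_underline_le}, whence $\delta_k^{\prime l}\sqsubseteq\delta_k^l$ by the monotonicity half of Lemma~\ref{lem_subtype_subst}; this gives $\Gamma',\vec x_k{:}\,\vec\delta_k^{\prime}\sqsubseteq\Gamma,\vec x_k{:}\,\vec\delta_k$, so the induction hypothesis applies to each branch $t_k$ and $\bigsqcup_k\tau_k'\sqsubseteq\bigsqcup_k\tau_k$ by Corollary~\ref{cor_meet_subtype}. Getting the variance of every size comparison right across the inductive/coinductive split is the one place where genuine care is needed.
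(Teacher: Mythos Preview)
Your proposal is correct and follows essentially the same route as the paper's proof: both proceed by induction on the definition of $\Tc(\Gamma;t)$, pushing $\Gamma'\sqsubseteq\Gamma$ through each clause using Lemma~\ref{lem_subtype_subst} (inversion for the constructor shape), Lemma~\ref{lem_meet_subtype}/Corollary~\ref{cor_meet_subtype} for the $\bigsqcup$'s, and Lemma~\ref{lem_bar_underline_le} for the $\overline{s}$/$\underline{s}$ monotonicity in the $\case$ clauses. You are in fact slightly more careful than the paper, explicitly flagging that the $t\,s$ case needs preservation of $\sqsubseteq$ under size substitution and that the freshness side conditions in the $\fix$/$\cofix$ (and, by the same token, $\Lambda i$) clauses require a renaming argument---points the paper leaves implicit under ``other cases are similar.''
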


\begin{proof}
  We proceed by induction on the definition of~$\Tc$.

  If $\Gamma' \sqsubseteq \Gamma$ and $\tau' \sqsubseteq \tau$ then
  $\Tc(\Gamma,x:\tau;t) = \tau \sqsupseteq \tau' =
  \Tc(\Gamma',x:\tau';t)$.

  If $\Gamma' \sqsubseteq \Gamma$ and $\Tc(\Gamma;c t_1\ldots t_n)$ is
  defined, then
  \[
  \Tc(\Gamma;c t_1\ldots t_n) = \mu^{\max(s_1+1,\ldots,s_n+1)}
  \]
  where we have $\ArgTypes(c) = (\sigma_1,\ldots,\sigma_n)$ and $\mu =
  d_\mu(\tau_1,\ldots,\tau_n)$ and $\Def(c) = d_\mu$ and
  \[
    \Tc(\Gamma;t_i) =
    \sigma_i'[d_\mu^{s_i}(\alpha_1^i,\ldots,\alpha_m^i)/A][\beta_1^i/B_1,\ldots,\beta_m^i/B_m]
  \]
  (we take $s_i=0$ and $\alpha_j^i = \bot$ if
  $A \notin \FV(\sigma_i)$, and $\beta_j^i = \bot$ if
  $B_j \notin \FV(\sigma_i)$) and $\sigma_i' \sqsubseteq \sigma_i$ and
  $\tau_j = \bigsqcup_{i=1}^n(\alpha_j^i\sqcup\beta_j^i)$. By the
  inductive hypothesis
  \[
    \Tc(\Gamma';t_i) \sqsubseteq \Tc(\Gamma;t_i) =
    \sigma_i'[d_\mu^{s_i}(\alpha_1^i,\ldots,\alpha_m^i)/A][\beta_1^i/B_1,\ldots,\beta_m^i/B_m].
  \]
  By Lemma~\ref{lem_subtype_subst} we have
  $\Tc(\Gamma';t_i) =
  \rho_i[d_\mu^{s_i'}(\gamma_1^i,\ldots,\gamma_m^i)/A][\delta_1^i/B_1,\ldots,\delta_m^i/B_m]$
  with $\gamma_j^i \sqsubseteq \alpha_j^i$ and
  $\delta_j^i \sqsubseteq \beta_j^i$ and $s_i' \le s_i$. Since
  $\rho_i \sqsubseteq \sigma_i' \sqsubseteq \sigma_i$, by
  Lemma~\ref{lem_subtype_prop} we obtain
  $\rho_i \sqsubseteq \sigma_i$. Let
  $\tau_j' = \bigsqcup_{i=1}^n(\gamma_j^i\sqcup\delta_j^i)$. Thus
  $\Tc(\Gamma';c t_1\ldots t_n) = \mu_1^{\max(s_1'+1,\ldots,s_n'+1)}$
  where $\mu_1=d_\mu(\tau_1',\ldots,\tau_m')$. By
  Lemma~\ref{lem_meet_subtype} we have $\tau_j' \sqsubseteq
  \tau_j$. Also
  $\max(s_1'+1,\ldots,s_n'+1) \le \max(s_1+1,\ldots,s_n+1)$. Hence
  $\Tc(\Gamma';c t_1\ldots t_n) \sqsubseteq \Tc(\Gamma; c t_1\ldots
  t_n)$.

  If $\Gamma' \sqsubseteq \Gamma$ and
  $\Tc(\Gamma;\lambda x : \alpha . t)$ is defined then
  $\Tc(\Gamma;\lambda x : \alpha . t) = \alpha \to \beta$ and
  $\Tc(\Gamma,x:\alpha;t) = \beta$. By the inductive hypothesis
  $\beta' = \Tc(\Gamma',x:\alpha;t) \sqsubseteq \beta$. Hence
  $\Tc(\Gamma'; \lambda x : \alpha . t) = \alpha \to \beta'
  \sqsubseteq \alpha \to \beta = \Tc(\Gamma; \lambda x : \alpha . t)$.

  If $\Gamma' \sqsubseteq \Gamma$ and~$\Tc(\Gamma;tt') = \beta$ then
  $\Tc(\Gamma;t) = \alpha \to \beta$ and
  $\Tc(\Gamma;t') \sqsubseteq \alpha$. By the inductive hypothesis
  $\Tc(\Gamma';t) \sqsubseteq \alpha \to \beta$ and
  $\Tc(\Gamma';t') \sqsubseteq \Tc(\Gamma;t')$. Hence
  $\Tc(\Gamma';t) = \alpha' \to \beta'$ with
  $\alpha \sqsubseteq \alpha'$ and $\beta' \sqsubseteq \beta$. By
  Lemma~\ref{lem_subtype_prop} we have
  $\Tc(\Gamma';t') \sqsubseteq \alpha'$. Hence
  $\Tc(\Gamma';tt') = \beta' \sqsubseteq \beta = \Tc(\Gamma;tt')$.

  If $\Gamma' \sqsubseteq \Gamma$ and $\Tc(\Gamma;ts) = \tau[s/i]$
  then $\Tc(\Gamma;t) = \forall i . \tau$. By the inductive hypothesis
  $\Tc(\Gamma';t) \sqsubseteq \forall i . \tau$, so
  $\Tc(\Gamma';t) = \forall i . \tau'$ with $\tau' \sqsubseteq \tau$.
  Hence
  $\Tc(\Gamma';ts) = \tau'[s/i] \sqsubseteq \tau[s/i] =
  \Tc(\Gamma;ts)$.

  If $\Gamma' \sqsubseteq \Gamma$ and
  $\Tc(\Gamma;\Lambda i . t) = \forall i . \tau$ then
  $\Tc(\Gamma; t) = \tau$ and $i \notin \FSV(\Gamma)$. By the
  inductive hypothesis $\Tc(\Gamma';t) = \tau' \sqsubseteq
  \tau$. Without loss of generality $i \notin \FSV(\tau')$. Then
  $\Tc(\Gamma';\Lambda i . t) = \forall i . \tau' \sqsubseteq
  \Tc(\Gamma;\Lambda i . t)$.

  If $\Gamma' \sqsubseteq \Gamma$ and
  $\Tc(\Gamma;\case(t;\{c_k\vec{x_k} \To t_k \mid k=1,\ldots,n\})) =
  \tau$ and $\Tc(\Gamma;t) = \nu^{s}$ then $\nu = d(\vec{\beta})$ and
  $s \ge 1$ and $\ArgTypes(c_k) = (\sigma_k^1,\ldots,\sigma_k^{n_k})$
  and
  $\delta_k^l = \sigma_k^l[\nu^{\overline{s}}/A][\vec{\beta}/\vec{B}]$
  and
  $\Tc(\Gamma,x_k^1:\delta_k^1,\ldots,x_k^{n_k}:\delta_k^{n_k};t_k) =
  \tau_k$ and $\tau = \bigsqcup_{k=1}^n\tau_k$. By the inductive
  hypothesis $\Tc(\Gamma';t) \sqsubseteq \nu^{s}$. Hence
  $\Tc(\Gamma';t) = \nu_0^{s'} = d^{s'}(\vec{\beta'})$ with
  $\beta_i' \sqsubseteq \beta_i$ and $s' \ge s \ge 1$. Let
  $\gamma_k^l =
  \sigma_k^l[\nu_0^{\overline{s'}}/A][\vec{\beta'}/\vec{B}]$. By
  Lemma~\ref{lem_subtype_subst} and Lemma~\ref{lem_bar_underline_le}
  we have $\gamma_k^l \sqsubseteq \delta_k^l$. So by the inductive
  hypothesis
  $\Tc(\Gamma',x_k^1:\gamma_k^1,\ldots,x_k^{n_k}:\gamma_k^{n_k};t_k) =
  \tau_k' \sqsubseteq \tau_k$. Let
  $\tau'=\bigsqcup_{k=1}^n\tau_k'$. Then
  $\Tc(\Gamma';\case(t;\{c_k\vec{x_k} \To t_k\})) = \tau' \sqsubseteq
  \tau = \Tc(\Gamma;\case(t;\{c_k\vec{x_k} \To t_k\}))$ by
  Lemma~\ref{lem_meet_subtype}.

  Other cases are similar to the ones already considered or follow
  directly from the inductive hypothesis.
\end{proof}

\begin{thm}\label{thm_typing_characterisation}
  $\Gamma \proves t : \tau$ iff $\Gamma \proves t : \Tc(\Gamma;t)$ and
  $\Tc(\Gamma;t) \sqsubseteq \tau$.
\end{thm}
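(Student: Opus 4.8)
The backward direction is immediate: if $\Gamma \proves t : \Tc(\Gamma;t)$ and $\Tc(\Gamma;t) \sqsubseteq \tau$, then the rule (sub) yields $\Gamma \proves t : \tau$. Hence all the content lies in the forward direction, and since Lemma~\ref{lem_tc_types} supplies $\Gamma \proves t : \Tc(\Gamma;t)$ whenever $\Tc(\Gamma;t)$ is defined, it suffices to prove: if $\Gamma \proves t : \tau$, then $\Tc(\Gamma;t)$ is defined and $\Tc(\Gamma;t) \sqsubseteq \tau$. I would prove this by induction on the derivation of $\Gamma \proves t : \tau$, with a case distinction on the last rule applied.

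The structural and binder rules are routine. For (ax) and (sub) one uses reflexivity and transitivity of $\sqsubseteq$ (Lemma~\ref{lem_subtype_prop}). For (lam) the inductive hypothesis gives $\Tc(\Gamma,x:\alpha;t) \sqsubseteq \beta$, so $\Tc(\Gamma;\lambda x:\alpha.t) = \alpha \to \Tc(\Gamma,x:\alpha;t) \sqsubseteq \alpha \to \beta$; the case (gen) is analogous, using covariance of $\forall i.(-)$ and that its side condition $i \notin \FSV(\Gamma)$ is the same in the typing rule and in the defining clause of $\Tc$. For (app) and (inst) one inverts the relevant subtyping rule: from $\Tc(\Gamma;t) \sqsubseteq \alpha \to \beta$ (resp.\ $\forall i.\tau$) one reads off that $\Tc(\Gamma;t)$ already has the matching outermost shape, with a codomain (resp.\ body, resp.\ substitution instance) related as required; combining this with the inductive hypothesis on the subterms and transitivity of $\sqsubseteq$ finishes these cases. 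For (fix) and (cofix) the inductive hypothesis applied to the premise yields precisely the subtyping inequality demanded by the corresponding defining clause of $\Tc$, so $\Tc(\Gamma;t)$ is defined and equals the declared type, which is $\sqsubseteq$ itself.

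The substantive cases are (con) and (case), with (case) the main obstacle. In the (con) case the inductive hypothesis gives $\Tc(\Gamma;t_k) \sqsubseteq \sigma_k[\rho^s/A][\vec\tau/\vec B]$ for each $k$, where $\rho = d(\vec\tau)$. Iterating the converse part of Lemma~\ref{lem_subtype_subst} over $A, B_1, \dots, B_m$ (valid because $\sigma_k$ and the parameter positions are strictly positive, so these substitutions are covariant in the relevant slots) writes $\Tc(\Gamma;t_k)$ in the shape $\sigma_k'[d^{s_k}(\vec{\alpha^k})/A][\vec{\beta^k}/\vec B]$ with $\sigma_k' \sqsubseteq \sigma_k$, with $s_k \le s$ in the inductive and $s_k \ge s$ in the coinductive case, and with $\alpha_j^k, \beta_j^k \sqsubseteq \tau_j$. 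Thus all joins $\tau_j = \bigsqcup_k(\alpha_j^k \sqcup \beta_j^k)$ are defined and, by Corollary~\ref{cor_meet_subtype}, bounded by $\tau_j$; moreover $\max_k(s_k)+1 \le s+1$ in the inductive case and $\min_k(s_k)+1 \ge s+1$ in the coinductive case, which is exactly what the subtyping rule for decorated inductive resp.\ coinductive types requires, so $\Tc(\Gamma;c\,t_1\cdots t_n)$ is defined and $\sqsubseteq \rho^{s+1}$. In the (case) case, from $\Gamma \proves t : \rho^{s+1}$ the inductive hypothesis gives $\Tc(\Gamma;t) = d^{s'}(\vec{\beta'})$ with $\vec{\beta'} \sqsubseteq \vec\tau$ and $s' \le s+1$ (inductive) resp.\ $s' \ge s+1 \ge 1$ (coinductive). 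The crux is that the defining clause of $\Tc$ for a $\case$ uses the adjusted sizes $\underline{s'}$ resp.\ $\overline{s'}$ in the types $\delta_k^l$ bound to the branch variables, and Lemma~\ref{lem_underline} resp.\ Lemma~\ref{lem_bar} yields $\underline{s'} \le s$ resp.\ $\overline{s'} \ge s$; combined with $\vec{\beta'} \sqsubseteq \vec\tau$ and Lemma~\ref{lem_subtype_subst}, each $\delta_k^l$ computed by $\Tc$ is therefore a subtype of the corresponding type used in the original derivation. Hence the branch contexts produced by $\Tc$ are pointwise below those of the derivation, so Lemma~\ref{lem_context_tc_subtype} lets the inductive hypothesis on each branch $t_k$ be transported to these contexts, giving $\Tc$-types $\tau_k \sqsubseteq \tau$; by Corollary~\ref{cor_meet_subtype} the join $\bigsqcup_k \tau_k$ is then defined and $\sqsubseteq \tau$. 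Assembling the cases completes the forward direction, and the theorem follows.
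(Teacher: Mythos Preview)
Your proposal is correct and follows essentially the same route as the paper's proof: both directions are handled identically, the forward one by induction on the typing derivation using Lemma~\ref{lem_tc_types} to reduce to showing $\Tc(\Gamma;t)$ is defined and $\sqsubseteq \tau$, and invoking the same auxiliary lemmas (Lemma~\ref{lem_subtype_prop}, Lemma~\ref{lem_subtype_subst}, Lemma~\ref{lem_bar}/\ref{lem_underline}, Lemma~\ref{lem_context_tc_subtype}, Corollary~\ref{cor_meet_subtype}) in the substantive (con) and (case) cases. If anything, your outline is slightly more explicit than the paper's, which treats only the coinductive subcase of (case) in detail and declares the remaining cases analogous.
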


\begin{proof}
  The implication from right to left follows directly from
  definitions. For the other direction we proceed by induction on the
  typing derivation. By Lemma~\ref{lem_tc_types} it suffices to show
  that~$\Tc(\Gamma;t)$ is defined and $\Tc(\Gamma;t) \sqsubseteq
  \tau$.
  \begin{itemize}
  \item If $\Gamma, x : \tau \proves x : \tau$ then
    $\Tc(\Gamma,x:\tau;x) = \tau$.
  \item If $\Gamma \proves t : \tau'$ because of $\Gamma \proves t :
    \tau$ and $\tau \sqsubseteq \tau'$, then by the inductive
    hypothesis $\Gamma \proves t : \Tc(\Gamma;t)$ and $\Tc(\Gamma;t)
    \sqsubseteq \tau$. Hence also $\Tc(\Gamma;t) \sqsubseteq \tau'$ by
    Lemma~\ref{lem_subtype_prop}.
  \item Assume $\Gamma \proves c t_1 \ldots t_n : \rho^{s+1}$ because
    of $\Gamma \proves t_k : \sigma_k[\rho^s/A][\vec{\tau}/\vec{B}]$
    where $\ArgTypes(c)=(\sigma_1,\ldots,\sigma_n)$ and $\Def(c) = d$
    and $\rho = d(\vec{\tau})$. Let $\theta_k=\Tc(\Gamma;t_k)$. By the
    inductive hypothesis $\Gamma \proves t_k : \theta_k$ and
    $\theta_k \sqsubseteq
    \sigma_k[\rho^s/A][\vec{\tau}/\vec{B}]$. Hence by
    Lemma~\ref{lem_subtype_subst} we have
    $\theta_k = \sigma_k'[\rho_k^{s_k}/A][\vec{\beta_k}/\vec{B}]$ with
    $\sigma_k' \sqsubseteq \sigma_k$ and
    $\rho_k^{s_k} \sqsubseteq \rho^s$ and
    $\beta_k^j \sqsubseteq \tau_j$ and $\rho_k = d(\vec{\alpha_k})$
    and $\alpha_k^j \sqsubseteq \tau_j$. We may assume $s_k=0$ and
    $\alpha_k^j = \bot$ if $A \notin \FV(\sigma_k)$, and
    $\beta_k^j = \bot$ if $B_j \notin \FV(\sigma_k)$. Let
    $\tau_j' = \bigsqcup_{k=1}^m(\alpha_k^j\sqcup\beta_k^j)$. Then
    $\Tc(\Gamma;c t_1 \ldots t_n) =
    d^{m(s_1+1,\ldots,s_n+1)}(\vec{\tau'})$ where $m = \max$ if $d$ is
    inductive, and $m = \min$ if $d$ is coinductive. by
    Lemma~\ref{lem_meet_subtype} we have $\tau_i' \sqsubseteq
    \tau_i$. Together with properties of size expressions this implies
    $\Tc(\Gamma;c t_1 \ldots t_n) \sqsubseteq \rho^{s+1}$.
  \item Assume
    $\Gamma \proves (\lambda x : \alpha . t) : \alpha \to \beta$
    because of $\Gamma,x:\alpha \proves t : \beta$. By the inductive
    hypothesis $\beta'=\Tc(\Gamma,x:\alpha)\sqsubseteq\beta$. Then
    $\Tc(\Gamma;\lambda x : \alpha . t) = \alpha \to \beta'
    \sqsubseteq \alpha \to \beta$.
  \item Assume $\Gamma \proves tt' : \beta$ because of
    $\Gamma \proves t : \alpha \to \beta$ and
    $\Gamma \proves t' : \alpha$. By the inductive hypothesis
    $\Tc(\Gamma;t) \sqsubseteq \alpha \to \beta$ and
    $\Tc(\Gamma;t') \sqsubseteq \alpha$. Then
    $\Tc(\Gamma;t) = \alpha' \to \beta'$ with
    $\alpha \sqsubseteq \alpha'$ and $\beta' \sqsubseteq \beta$. We
    have $\Tc(\Gamma;t') \sqsubseteq \alpha'$ by
    Lemma~\ref{lem_subtype_prop}. Hence
    $\Tc(\Gamma;tt') = \beta' \sqsubseteq \beta$.
  \item Assume $\Gamma \proves t s : \tau[s/i]$ because of
    $\Gamma \proves t : \forall i . \tau$. By the inductive hypothesis
    $\Tc(\Gamma;t) \sqsubseteq \forall i . \tau$. Then
    $\Tc(\Gamma;t) = \forall i . \tau'$ with $\tau' \sqsubseteq
    \tau$. Thus $\Tc(\Gamma;ts) = \tau'[s/i] \sqsubseteq \tau[s/i]$.
  \item Assume
    $\Gamma \proves \case(t;\{c_k\vec{x_k}\To t_k\}) : \tau$ because
    of $\Gamma \proves t : \nu^{s+1}$ and
    $\Gamma,x_k^1:\delta_k^1,\ldots,x_k^{n_k}:\delta_k^{n_k}\proves
    t_k : \tau$ and
    $\ArgTypes(c_k) = (\sigma_k^1,\ldots,\sigma_k^{n_k})$ and
    $\delta_k^l=\sigma_k^l[\nu^{s}/A][\vec{\beta}/\vec{B}]$ and
    $\nu=d(\vec{\beta})$. By the inductive hypothesis
    $\Tc(\Gamma;t) \sqsubseteq \nu^{s+1}$. Hence
    $\Tc(\Gamma;t) = d^{s'}(\vec{\beta'})$ with $s' \ge s + 1$ and
    $\beta_i' \sqsubseteq \beta_i$. Let
    $\gamma_k^l=\sigma_k^l[d^{\overline{s'}}(\vec{\tau'})][\vec{\tau'}/\vec{B}]$. By
    Lemma~\ref{lem_bar} we have $\overline{s'} \ge s$. So by
    Lemma~\ref{lem_subtype_subst} we have
    $\gamma_k^l \sqsubseteq \delta_k^l$. By the inductive hypothesis
    $\Tc(\Gamma,
    x_k^1:\delta_k^1,\ldots,x_k^{n_k}:\delta_k^{n_k};t_k)\sqsubseteq\tau$. By
    Lemma~\ref{lem_context_tc_subtype} we have
    $\Tc(\Gamma,x_k^1:\gamma_k^1,\ldots,x_k^{n_k}:\gamma_k^{n_k};t_k)\sqsubseteq
    \Tc(\Gamma,
    x_k^1:\delta_k^1,\ldots,x_k^{n_k}:\delta_k^{n_k};t_k)$, so
    $\Tc(\Gamma,x_k^1:\gamma_k^1,\ldots,x_k^{n_k}:\gamma_k^{n_k};t_k)=\tau_k\sqsubseteq\tau$. Let
    $\tau' = \bigsqcup_{k=1}^n\tau_k$. Then
    $\Tc(\Gamma;\case(t;\{c_k\vec{x_k}\To t_k\})) = \tau'$. By
    Corollary~\ref{cor_meet_subtype} we have $\tau' \sqsubseteq \tau$.
  \item Other cases are analogous to the ones already considered or
    follow directly from the inductive hypothesis.\qedhere
  \end{itemize}
\end{proof}

\subsection{Type checking}

We now show that type checking in~$\lambda^\Diamond$ is
coNP-complete. For this purpose we show how to compute the minimal
type and how to check subtyping.

The size of a type or a size expression is defined in a natural way as
the length of its textual representation. Let~$U$ be a partial finite
function from the set of size variables to the set of size expression
satisfying the \emph{acyclicity condition}: for any choice of
$j_1,\ldots,j_n$ with $j_1 = i$ and $j_{k+1} \in \SV(U(j_k))$ for
$k=1,\ldots,n-1$, we have $j_n \ne i$. In other words, there are no
cycles in the directed graph constructed from~$U$ by postulating an
edge from~$i$ to each $j \in \SV(U(i))$. Let~$S$ be a set of pairs of
size expressions. The size of~$U$ (resp.~$S$) is the sum of the sizes
of all size expressions in the pairs in~$U$ (resp.~$S$). The pair
$(U,S)$ is called a \emph{size constraint}. We say that the size
constraint $(U,S)$ is \emph{valid} if for every valuation~$v$ such
that $v(i) = v(U(i))$ holds for all $i \in \dom(U)$, we have
$v(s_1) \le v(s_2)$ for all $(s_1,s_2) \in S$. We sometimes identify
the function~$U$ with the set of equalities
$\{ i = U(i) \mid i \in \dom(U) \}$.

The purpose of~$U$ is not to express any constraints, but to avoid
duplicating size expressions in the inequalities in~$S$. This is in
order to avoid exponential blow-up in the size of size contraints.

The size of a finite decorated term~$t$ is defined in a natural way,
except that for each occurence of a constant~$c$ in~$t$ we add the
size of~$\ArgTypes(c)$ to the size of~$t$.

For a size expression~$s$, by~$U(s)$ we denote the size
expression~$s'$ obtained from~$s$ by recursively (i.e.~as long as
possible) substituting each free occurence of a size variable
$i \in \dom(U)$ with~$U(i)$. For example, if
$U = \{i_1 = \min(i_2, i_2 + 1), i_2 = s\}$ then
$U(\max(i_1, i_1)) = \max(\min(s, s+1), \min(s, s+1))$. Because of the
acyclicity condition on~$U$ the result of this recursive substitution
process is well-defined. We extend this in the obvious way to types,
terms and contexts. Note that $(U, S)$ is valid iff
$U(s_1) \le U(s_2)$ for all $(s_1, s_2) \in S$.

We now show that it suffices to consider size variable valuations
$v : V_S \to \Nbb$ with the codomain restricted to $\Nbb$.

\begin{lem}\label{lem_codomain}
  If $v(s_1) \le v(s_2)$ for every $v : V_S \to \Nbb \cup \{\infty\}$,
  then $v(s_1) \le v(s_2)$ for every $v : V_S \to \Omega$.
\end{lem}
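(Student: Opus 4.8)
The plan is to establish Lemma~\ref{lem_codomain} by showing that any offending size variable valuation $v : \Vc_S \to \Omega$ can be ``projected'' to a valuation $v' : \Vc_S \to \Nbb \cup \{\infty\}$ which preserves the relevant inequality. Since $s_1,s_2$ are two fixed size expressions, only finitely many size variables occur in them, and the value $v(s)$ of any size expression $s$ built from $\infty$, $0$, successor, $\min$, and $\max$ depends only on the values $v(i)$ for $i \in \SV(s_1) \cup \SV(s_2)$. So without loss of generality we may restrict attention to a finite set $I = \{i_1,\ldots,i_k\}$ of size variables and think of $v$ as an element of $\Omega^I$.

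The key observation is that, because the only operations available are successor, $\min$, and $\max$ (no addition of two variables, no multiplication), the function $\Omega^I \to \Omega$ given by $v \mapsto v(s)$ is ``finitely piecewise'' and order-theoretically tame: its value is obtained from the $v(i_j)$ and fixed finite constants by applying $\min$, $\max$, and adding fixed finite constants. First I would make this precise: by distributing $\min$ over $\max$ and pushing successors inward (using the identities already exploited in the proof of Lemma~\ref{lem_bar}, e.g. $\max(s_1+1,s_2+1) = \max(s_1,s_2)+1$ and $\min(s_1,s_2)+1 = \min(s_1+1,s_2+1)$), one may assume each of $s_1,s_2$ is in a normal form $\max_p \min_q (i_{j} + c)$ or $\max_p \min_q c$ with $c \in \Nbb$ and $i_j \in I$. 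For such expressions, $v(s)$ is monotone in each coordinate $v(i_j)$, and moreover it only ``probes'' each $v(i_j)$ through comparisons with finitely many fixed finite thresholds and through adding finitely many fixed finite constants.

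Given that, I would argue by contraposition: suppose $v : \Vc_S \to \Omega$ satisfies $v(s_1) > v(s_2)$. Let $N$ be one more than the largest finite constant appearing in the normal forms of $s_1$ and $s_2$. Define $v' : \Vc_S \to \Nbb \cup \{\infty\}$ by $v'(i) = \infty$ if $v(i) \geq \omega$ (equivalently, if $v(i) \geq N$ while respecting the thresholds---more carefully: if $v(i)$ exceeds every threshold relevant to that variable, send it to $\infty$) and $v'(i) = v(i)$ otherwise. One then checks, by case analysis on the normal form, that $v'(s_1) \neq v'(s_2)$ is still forced, in fact $v'(s_1) > v'(s_2)$: adding bounded finite constants and taking $\min$/$\max$ commutes with this projection as long as the finite values below $N$ are untouched and the ``large'' values are all collapsed to the same point $\infty$ which remains a maximum. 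This gives a valuation $v' : \Vc_S \to \Nbb \cup \{\infty\}$ with $v'(s_1) > v'(s_2)$, contradicting the hypothesis, and proving the lemma.

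The main obstacle I anticipate is the bookkeeping in the projection step: one must verify that collapsing all ``infinite-like'' ordinal values to a single $\infty$ and keeping small finite values fixed genuinely preserves strict inequality of $v(s_1)$ and $v(s_2)$. The subtlety is that two size variables might have distinct infinite values $v(i) = \omega$ and $v(j) = \omega + 5$, and a priori some expression could distinguish them---but since no operation other than $+1$ (finitely many times) and $\min/\max$ is available, and since $\omega + c$ is identified with $\infty$ anyway in the paper's conventions (``$\infty + 1$ denotes the ordinal~$\infty$'' and ordinals above $\infty$ are identified with $\infty$), in fact all values $\geq \infty$ already behave identically; the only real work is the values strictly between the relevant finite thresholds and $\infty$, and choosing the threshold $N$ large enough handles these uniformly. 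I would therefore structure the proof around the normal-form reduction of $s_1,s_2$ plus a short monotonicity/threshold argument, citing Lemma~\ref{lem_size_subst_monotone_2} for the monotonicity facts.
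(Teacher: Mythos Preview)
Your contrapositive setup and the reduction to a finite set of size variables are fine, and they match the paper's approach. The gap is in your construction of $v'$. Collapsing every ``large'' value to $\infty$ discards distinctions that the operator $+1$ can create even between \emph{equal} infinite values. Take $s_1 = i+1$, $s_2 = j$, and $v(i) = v(j) = \omega$. Then $v(s_1) = \omega+1 > \omega = v(s_2)$, but your $v'$ sends both $i$ and $j$ to $\infty$, and since $\infty+1 = \infty$ in the paper's arithmetic you get $v'(s_1) = \infty = v'(s_2)$: the strict inequality is lost. The ``threshold $N$'' variant has the same defect for exactly the same reason. Your justification that ``$\omega+c$ is identified with $\infty$ anyway'' misreads the paper: the symbol $\infty$ denotes a fixed ordinal larger than $|\Pow(\Tb^\infty)|$, so $\omega, \omega+1, \omega+2,\ldots$ are all distinct ordinals strictly below $\infty$, and the evaluation $v(s+1) = \min(v(s)+1,\infty)$ genuinely increments them.

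What the paper does instead is to preserve these finite offsets. It lists the finitely many limit ordinals $\iota_1 < \cdots < \iota_n < \infty$ such that each $v(i)$ is either $\infty$ or of the form $\iota_k + m$ with $m \in \Nbb$, lets $M$ be the largest such $m$ and $N$ the maximal nesting depth of $+1$ in $s_1,s_2$, and then maps $\iota_k \mapsto k(M+N+1)$, so that $v'(i) = k(M+N+1) + m$ when $v(i) = \iota_k + m$, and $v'(i) = \infty$ when $v(i) = \infty$. The spacing $M+N+1$ guarantees that the finitely many successor applications in $s_1,s_2$ cannot push a value from one ``block'' into the next, so all the $\min$/$\max$/$+1$ comparisons resolve identically under $v'$ as under $v$. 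That is the idea your construction is missing: you must map different infinite levels to \emph{different, well-separated} natural numbers rather than to a single point.
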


\begin{proof}
  Assume $v(s_1) > v(s_2)$ for some $v : V_S \to \Omega$. We show how
  to construct $v' : V_S \to \Nbb \cup \{\omega\}$ such that
  $v'(s_1) > v'(s_2)$. Because $\SV(s_1,s_2)$ is finite, there exist
  limit ordinals $\iota_1 < \ldots < \iota_n < \infty$ such that for
  each $i \in \SV(s_1,s_2)$ either $v(i) = \infty$ or there are
  $k,m\in\Nbb$ with $v(i) = \iota_k + m$. Let $M \in \Nbb$ be maximal
  such that $v(i) = \iota_k + M$ for some $i,k$. Let $N$ be the
  maximal nesting of $+1$ in $s_1,s_2$, e.g., for a size expression
  $\max(i+1,j)+1$ we have $N = 2$. Let $j_k = k (M + N + 1)$ for
  $k=1,\ldots,n$. Now it suffices to set $v'(i) = j_k + m$ if
  $v(i) = \iota_k + m$, and $v'(i) = \infty$ if $v(i) = \infty$.
\end{proof}

\begin{cor}\label{cor_codomain}
  A size constraint $(U, S)$ is valid iff for every~$v : V_S \to \Nbb$
  such that $v(i) = v(U(i))$ for $i \in \dom(U)$ we have
  $v(s_1) \le v(s_2)$ for all $(s_1,s_2) \in S$.
\end{cor}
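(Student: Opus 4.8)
The plan is to prove the two implications separately. The forward implication is immediate: every $v : V_S \to \Nbb$ is a size variable valuation (as $\Nbb \subseteq \Omega$), so if $(U,S)$ is valid then in particular $v(s_1) \le v(s_2)$ holds for all $(s_1,s_2) \in S$ and all $v : V_S \to \Nbb$ with $v(i) = v(U(i))$ for $i \in \dom(U)$.

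For the converse I would first pass to the formulation in terms of $U(\cdot)$. By the observation stated just before the corollary, $(U,S)$ is valid iff $U(s_1) \le U(s_2)$ for every $(s_1,s_2) \in S$, i.e.\ iff $v(U(s_1)) \le v(U(s_2))$ for every $v : V_S \to \Omega$; moreover each $U(s_k)$ involves only variables outside $\dom(U)$, by acyclicity. After the harmless preprocessing of replacing, throughout $S$ and the remaining defining expressions, every $i \in \dom(U)$ whose defining expression $U(i)$ evaluates to $\infty$ under \emph{all} valuations (and iterating this until stable), we may assume that no $U(i)$ has this property; then $v(U(i)) \in \Nbb$ for every $v : V_S \to \Nbb$ and every $i \in \dom(U)$. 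Consequently any $v : V_S \to \Nbb$ extends, along the equations $i = U(i)$ read in an order consistent with their acyclic dependency, to an $\Nbb$-valued valuation satisfying $v(i) = v(U(i))$, and then $v(s_k) = v(U(s_k))$ by a routine induction on $s_k$; conversely any $v$ satisfying $v(i) = v(U(i))$ already has $v(s_k) = v(U(s_k))$. Thus the right-hand side of the corollary is equivalent to: $v(U(s_1)) \le v(U(s_2))$ for all $(s_1,s_2) \in S$ and all $v : V_S \to \Nbb$.

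It remains to show, for fixed size expressions $t_1,t_2$, that if $v(t_1) \le v(t_2)$ for all $v : V_S \to \Nbb$ then $v(t_1) \le v(t_2)$ for all $v : V_S \to \Omega$. By Lemma~\ref{lem_codomain} it suffices to handle valuations into $\Nbb \cup \{\infty\}$: given $v : V_S \to \Nbb \cup \{\infty\}$ with $v(t_1) > v(t_2)$, replace every value $\infty$ of $v$ by one and the same natural number $K$ chosen larger than the maximal $+1$-nesting $N$ of $t_1,t_2$, larger than every numeral occurring in $t_1,t_2$, and larger than every finite value $v(s)$ for $s$ a subexpression of $t_1,t_2$. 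Since $v(t_1) > v(t_2)$ forces $v(t_2) < \infty$, no literal $\infty$ and no $\infty$-valued variable survives the $\min$'s occurring in $t_2$, so the new valuation $v'$ satisfies $v'(t_2) = v(t_2)$, while $v'(t_1) \ge K > v(t_2)$ if $v(t_1) = \infty$ and $v'(t_1) = v(t_1) > v(t_2)$ otherwise. When the definitional constraints must be preserved simultaneously (the situation in the reduction above), one applies the replacement only to the variables outside $\dom(U)$ and then sets $v'(i) := v'(U(i))$ for $i \in \dom(U)$, which lands in $\Nbb$ by the preprocessing.

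The delicate point — and the only place where something beyond Lemma~\ref{lem_codomain} is needed — is that $K$, unlike $\infty$, is not a fixed point of the successor, so replacing $\infty$ by $K$ need not commute with $+1$. This is overcome exactly as in the proof of Lemma~\ref{lem_codomain}: because only finitely many size expressions are relevant and their $+1$-nesting is bounded by $N$, the entire block $K, K+1, \dots, K+N$ still strictly dominates every other quantity occurring in the computation, so the ordering of all expression values is unchanged and all constraints remain satisfied.
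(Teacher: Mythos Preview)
Your preprocessing step hides a genuine gap. You write that after substituting $\infty$ for every $i \in \dom(U)$ whose defining expression necessarily evaluates to $\infty$, you ``may assume that no $U(i)$ has this property'' and then conclude that ``the right-hand side of the corollary is equivalent to: $v(U(s_1)) \le v(U(s_2))$ for all $(s_1,s_2) \in S$ and all $v : V_S \to \Nbb$.'' But that equivalence is established only for the \emph{preprocessed} constraint, and you never show that preprocessing preserves the right-hand side for the original $(U,S)$. In fact it does not: whenever some $U(i)$ structurally evaluates to~$\infty$, there exist \emph{no} valuations $v : V_S \to \Nbb$ with $v(i)=v(U(i))$, so the right-hand side is vacuously true regardless of~$S$. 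Concretely, for $U = \{\,i \mapsto \infty\,\}$ and $S = \{(1,0)\}$ the right-hand side holds vacuously, yet $(U,S)$ is not valid (take $v(i)=\infty$; one would need $1\le 0$). Hence the corollary as literally stated is false, and your preprocessing cannot repair this: it preserves validity but strictly strengthens the right-hand side, so from the original hypothesis you cannot deduce the preprocessed one.

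What the paper's two-line proof actually supports --- and what Lemma~\ref{lem_size_coNP} actually uses --- is the reformulation ``$(U,S)$ is valid iff $v(U(s_1)) \le v(U(s_2))$ for all $(s_1,s_2)\in S$ and all $v: V_S \to \Nbb$,'' where the fully substituted expressions $U(s_k)$ contain no variables from $\dom(U)$. For that statement the paper's limit argument goes through directly: given $v: V_S \to \Nbb\cup\{\infty\}$, set $v_n(j)=\min(v(j),n)$; then $v_n(U(s_k))$ is monotone in~$n$ with limit $v(U(s_k))$, so the non-strict inequality passes to the limit, after which Lemma~\ref{lem_codomain} extends it to all of~$\Omega$. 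Your large-$K$ replacement argument would likewise establish this corrected statement, but neither your preprocessing detour nor the paper's limit argument can yield the literal one.
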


\begin{proof}
  The implication from left to right follows from definitions. The
  other direction follows from Lemma~\ref{lem_codomain} and the fact
  that a non-strict inequality is preserved when taking the limit.
\end{proof}

\begin{lem}\label{lem_size_coNP}
  The problem of checking whether a size constraint~$(U,S)$ is valid
  is in~coNP.
\end{lem}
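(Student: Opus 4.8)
The problem of checking whether a size constraint $(U,S)$ is valid is in coNP.

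The plan is to show that the complement — checking whether $(U,S)$ is \emph{in}valid — is in NP, i.e. that there is a polynomial-size certificate of invalidity that can be checked in polynomial time. By Corollary~\ref{cor_codomain}, $(U,S)$ is valid iff for every $v : V_S \to \Nbb$ with $v(i) = v(U(i))$ for $i \in \dom(U)$ we have $v(s_1) \le v(s_2)$ for all $(s_1,s_2) \in S$. So $(U,S)$ is invalid iff there exist some $(s_1,s_2)\in S$ and some such valuation $v$ with $v(s_1) > v(s_2)$. The first difficulty is that a witnessing valuation $v$ could a priori assign astronomically large values; to get a polynomial certificate I would bound the size of a minimal witness. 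Using the acyclicity condition on $U$, the recursive substitution $U(s)$ terminates and $U(s)$ contains only the free variables in $V_S\setminus\dom(U)$; and $(U,S)$ is invalid iff $U(s_1) > U(s_2)$ fails for some $(s_1,s_2)\in S$ under some $v$ on the remaining variables. The key combinatorial step is then: if $v(s_1) > v(s_2)$ holds for some $v$, then it already holds for some $v$ taking values in $\{0,1,\dots,N\}\cup\{\infty\}$, where $N$ is polynomial in the sizes of $s_1,s_2$ — essentially the number of variables plus the maximal nesting depth of $+1$. This is because a size expression over $\min,\max,+1,0,\infty$ is a piecewise-linear monotone function, and moving any variable's value up or down towards one of a polynomially-bounded set of ``thresholds'' can only be done without destroying a strict inequality in one direction, so we may push all values into a small discrete window.

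The steps I would carry out, in order, are: (1) reduce to the substituted form, noting $(U,S)$ is valid iff $U(s_1)\le U(s_2)$ for all $(s_1,s_2)\in S$, and that computing each $U(s_i)$ may blow up exponentially if done naively, so instead keep $U$ and $S$ separate and reason about valuations $v$ satisfying the equations of $U$ directly — crucially, the NP verifier is given $v$ as a table and can evaluate $s_1,s_2$ under $v$ (and check the $U$-equations) in polynomial time, since evaluating a size expression bottom-up is linear in its size; (2) establish the small-model property: if there is a valuation $v$ (over $\Nbb$, satisfying $U$) with $v(s_1)>v(s_2)$ for some $(s_1,s_2)\in S$, then there is one whose values all lie in a set of size bounded by a polynomial $p$ in $|U|+|S|$, so each value has polynomially many bits; (3) conclude that a minimal counterexample — the pair $(s_1,s_2)$ plus the bounded table for $v$ — is a polynomial-size certificate verifiable in polynomial time, hence invalidity is in NP and validity is in coNP. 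For step (2) I would argue as in the proof of Lemma~\ref{lem_codomain}: there the $\infty$ case is handled by replacing limit ordinals with sufficiently separated finite values; here the analogous move is to observe that since each $s_i$ has $+1$-nesting at most $N$, only the values of variables in the range $[0,N]$ relative to each other and relative to ``large'' matter, so we may clamp every $v(i)$ to $\min(v(i), N\cdot(|V_S\cap(\SV(s_1)\cup\SV(s_2))|+1))$ without changing which of $v(s_1),v(s_2)$ is larger, and the $U$-equations are preserved because clamping commutes with $\min,\max,+1$ up to the clamp value.

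The main obstacle I expect is step (2), the small-model / value-bounding argument, together with the bookkeeping needed because $U$ may force long chains of equalities: we must ensure that after clamping the values of the ``free'' variables (those in $V_S\setminus\dom(U)$), the induced values of the $\dom(U)$ variables — obtained by evaluating $U$ recursively — remain consistent and still exhibit the strict inequality. I would handle this by first substituting $U$ fully (conceptually, not in the algorithm) so that the inequality $v(s_1)>v(s_2)$ refers only to free variables, proving the value bound for that substituted inequality, and then noting that any clamped free-variable valuation extends uniquely to a $U$-satisfying valuation by evaluation, and that this extension's values on $\dom(U)$ are automatically bounded by the same polynomial since $U$ itself has polynomial size and introduces only polynomially much additional $+1$-nesting. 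A secondary subtlety is the treatment of $\infty$: a witness may need to send some variables to $\infty$ (identified with $\omega$, since we may restrict codomains by Lemma~\ref{lem_codomain}), but $\infty$ is itself a constant-size value in the certificate, so this causes no blow-up. The remaining details — that evaluation of size expressions under a given finite-valued $v$ is polynomial-time, and that there are only $|S|$ pairs to try — are routine.
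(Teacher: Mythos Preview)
Your overall strategy---guess a small witnessing valuation and verify it in polynomial time---is different from the paper's. The paper does not prove a small-model property at all; instead it encodes invalidity directly as a polynomial-size quantifier-free Presburger formula: the $U$-equations are kept as conjuncts $i_l = s_{i_l}$ (so no substitution and no blow-up), the constant $\infty$ is eliminated by propagation, $+1$ is pushed inside $\min/\max$, and each remaining $\min/\max$ is replaced by a fresh variable constrained by a bounded conjunction/disjunction of linear inequalities. It then cites the known NP upper bound for satisfiability of such formulas. That route is shorter but relies on an external result; your route would be more self-contained if the small-model step were correct.

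However, the small-model step as you wrote it fails: clamping does not preserve strict inequalities. Take $s_1 = i+1$, $s_2 = j+1$; then the $+1$-nesting is $N=1$, there are two variables, and your clamp value is $K = N\cdot 3 = 3$. With $v(i)=10$, $v(j)=9$ we have $v(s_1)=11>10=v(s_2)$, but after clamping $v'(i)=v'(j)=3$ and $v'(s_1)=v'(s_2)=4$. The problem is that clamping at any fixed threshold collapses all variables with value above it, destroying comparisons between them; and ``clamping commutes with $\min,\max,+1$'' is false for $+1$ near the threshold and in any case would not by itself yield preservation of a \emph{strict} order. The small-model property you need is true, but the correct construction is gap compression rather than clamping: sort the free-variable values together with $0$, and shrink every gap exceeding $N+1$ (where $N$ bounds the total $+1$-nesting across $U$ and $S$, which is polynomial as you observe) down to $N+1$. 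This preserves every comparison $v(x)+c$ versus $v(y)+d$ with $|c-d|\le N$, hence the relative order of all subexpression values, and gives a witness with entries bounded polynomially in $|U|+|S|$. With that fix---and noting that by Corollary~\ref{cor_codomain} variables never need the value~$\infty$, so only the constant~$\infty$ inside expressions needs handling---your argument goes through.
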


\begin{proof}
  The complement of the problem may be reduced to the problem of the
  satisfiability of a polynomially large formula in quantifier-free
  Presburger arithmetic, which is
  in~NP~\cite{BoroshTreybing1976,Haase2014}. We proceed with the
  details.

  By Corollary~\ref{cor_codomain} it suffices to consider valuations
  $v : V_s \to \Nbb$ with~$\Nbb$ as codomain.

  Using the identities
  $\infty + 1 = \max(\infty,s) = \max(s,\infty) = \infty$ and
  $\min(\infty,s) = \min(s,\infty) = s$ we may simplify each size
  expression in a linear number of steps to either~$\infty$ or a size
  expression not containing~$\infty$. If $U(i) = \infty$ for some
  $i \in \dom(U)$ then we may substitute $\infty$ for~$i$ in each size
  expression and set $U(i)$ to undefined. We perform these
  simplifications for $(U, S)$ as long as possible, obtaining after a
  polynomial number of steps an equivalent size constraint $(U',S')$
  (i.e.~such that it is valid iff $(U,S)$ is) such that $U'(i)$ does
  not contain~$\infty$ and for each $(s_1,s_2) \in S'$ one of the
  following holds:
  \begin{itemize}
  \item neither $s_1$ nor $s_2$ contain $\infty$,
  \item $s_2 = \infty$ -- then $(s_1,s_2)$ may be removed from $S$
    because $s_1 \le \infty$ always holds,
  \item $s_1 = \infty$ and $s_2$ does not contain~$\infty$ -- then
    $(U',S')$ is not valid, because then $v(s_2) < \infty$ for
    $v : V_S \to \Nbb$.
  \end{itemize}
  Hence, we may assume that none of the size expressions in $(U,S)$
  contains~$\infty$.

  Thus the answer to our decision problem is negative iff there exists
  e.g.~$(s_1,s_2) \in S$ such that
  \[
  i_1 = s_{i_1} \land \ldots \land i_k = s_{i_k} \land s_1 >= s_2 + 1
  \]
  is satisfiable in natural numbers, where the equalities
  $i_l = s_{i_l}$ come from~$U$.

  Using the identities
  \[
    \begin{array}{rcl}
      \max(a,b) + 1 &=& \max(a + 1, b + 1) \\
      \min(a,b) + 1 &=& \min(a + 1, b + 1)
    \end{array}
  \]
  we may further normalize the size expressions so that $\max$ and
  $\min$ never occur within the scope of $+1$.

  Hence, it suffices to show that the satisfiability of conjunctions
  of normalized size expression inequalities is in~NP. However, noting
  that
  \[
  \begin{array}{rcl}
    \min(a,b) \le c &\eqvl& \exists n . (n \ge a \lor n \ge b) \land n
    \le c \\
    c \le \min(a, b) &\eqvl& \exists n . c \le n \land n \le a \land n
    \le b \\
    \max(a, b) \le c &\eqvl& \exists n . a \le n \land b \le n \land n
    \le c \\
    c \le \max(a, b) &\eqvl& \exists n . (n \le a \lor n \le b) \land
    n \le c
  \end{array}
  \]
  this problem may be reduced to satisfiability of a polynomially
  large formula in quantifier-free Presburger arithmetic. The latter
  problem is in~NP~\cite{BoroshTreybing1976,Haase2014}. See also the
  remark at the end of Section~2.2 in~\cite{Haase2014}.
\end{proof}

\begin{lem}\label{lem_subtype_reduction}
  For any types~$\tau_1,\tau_2$ there exists~$S = \Sc(\tau_1,\tau_2)$
  such that for any~$U$ we have: $U(\tau_1) \sqsubseteq U(\tau_2)$ iff
  $(U,S)$ is valid. Moreover, the size of~$S$ is at most polynomial in
  the size of~$\tau_1,\tau_2$.
\end{lem}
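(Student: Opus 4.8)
The plan is to define $S = \Sc(\tau_1,\tau_2)$ by recursion on the common structure of $\tau_1$ and $\tau_2$, mirroring the subtyping rules of Figure~\ref{fig_subtyping} but \emph{recording} the side conditions on size expressions instead of discharging them. Explicitly: $\Sc(A,A) = \emptyset$; $\Sc(\forall i . \tau, \forall i . \tau') = \Sc(\tau,\tau')$ (working up to $\alpha$-renaming so that the two quantifiers bind the same variable $i$, chosen fresh); $\Sc(\alpha \to \beta, \alpha' \to \beta') = \Sc(\alpha',\alpha) \cup \Sc(\beta,\beta')$ (contravariance of the domain); $\Sc(d_\mu^s(\vec{\alpha}), d_\mu^{s'}(\vec{\beta})) = \{(s,s')\} \cup \bigcup_k \Sc(\alpha_k,\beta_k)$; and $\Sc(d_\nu^s(\vec{\alpha}), d_\nu^{s'}(\vec{\beta})) = \{(s',s)\} \cup \bigcup_k \Sc(\alpha_k,\beta_k)$ (the parameters stay covariant, but the size decoration is reversed, since $d_\nu^s \sqsubseteq d_\nu^{s'}$ requires $s \ge s'$). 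In every remaining case --- where the outermost shapes of $\tau_1$ and $\tau_2$ disagree: distinct type variables, distinct definition names, an inductive definition against a coinductive one, an arrow against a (co)inductive type, and so on --- we put $\Sc(\tau_1,\tau_2) = \{(1,0)\}$, a constraint that is never valid since $v(1) \le v(0)$ holds for no $v$.

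The size bound is immediate from the shape of the definition: every size expression occurring in a pair of $\Sc(\tau_1,\tau_2)$ is a copy of the size decoration of some decorated (co)inductive subexpression of $\tau_1$ or of $\tau_2$, the recursion matches the nodes of $\tau_1$ with those of $\tau_2$ one for one, and so each decoration is reused in at most one pair. Hence both the number of pairs in $S$ and the combined textual length of the size expressions in $S$ are bounded by $|\tau_1| + |\tau_2|$; in particular $|S|$ is linear, hence polynomial, in the size of $\tau_1,\tau_2$.

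Correctness is the statement: for every $U$, $U(\tau_1) \sqsubseteq U(\tau_2)$ iff $(U, \Sc(\tau_1,\tau_2))$ is valid; I would prove it by induction on $|\tau_1| + |\tau_2|$. The observation that makes the induction run is that $U$ alters only size decorations, never the type structure, so $U(\tau_1)$ and $U(\tau_2)$ have matching outermost shape exactly when $\tau_1$ and $\tau_2$ do, and $\Sc(U(\tau_1),U(\tau_2))$ is obtained from $\Sc(\tau_1,\tau_2)$ by applying $U$ componentwise to each pair (the freshness convention on the bound size variable in the $\forall$-case lets $U$ commute with going under the binder, without capture). Granting this, in each matching case the relevant subtyping rule unfolds $U(\tau_1) \sqsubseteq U(\tau_2)$ into the conjunction of one size inequality --- $U(s) \le U(s')$ for inductive definitions, $U(s') \le U(s)$ for coinductive ones, and none in the $\forall$-case or the variable base case --- together with subtyping obligations on the immediate subtypes, to which the induction hypothesis applies (swapping domain and codomain for arrows); clause by clause this matches $\Sc$. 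Since $(U,S)$ is valid iff $U(s_1) \le U(s_2)$ for every $(s_1,s_2)\in S$ (as noted just before the lemma), the two sides coincide. In a non-matching case both sides are false: no subtyping rule applies to $U(\tau_1) \sqsubseteq U(\tau_2)$, and $(U,\{(1,0)\})$ is not valid. I do not expect a genuine obstacle here; the only points that need care are the interaction of the substitution $U$ with the $\forall$-binders, handled by the $\alpha$-renaming convention, and keeping the variances straight --- contravariance of the arrow domain, and the reversed monotonicity of the size decoration for coinductive types.
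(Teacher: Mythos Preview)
Your proposal is correct and is precisely the approach the paper has in mind: the paper's own proof is the single line ``Follows by induction on the definition of~$\sqsubseteq$,'' and your recursive definition of~$\Sc$ together with the structural induction is exactly what that line unpacks to. Your explicit handling of the shape-mismatch case via the unsatisfiable pair $(1,0)$ and your attention to the $\forall$-binder/substitution interaction go beyond what the paper spells out, but are the right details to check.
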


\begin{proof}
  Follows by induction on the definition of~$\sqsubseteq$.
\end{proof}

\begin{cor}\label{cor_subtype_coNP}
  Given two types~$\tau_1,\tau_2$ and a partial finite function~$U$
  satisfying the acyclicity condition, checking whether $U(\tau_1)
  \sqsubseteq U(\tau_2)$ is in~coNP.
\end{cor}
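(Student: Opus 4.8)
The plan is to chain together the two polynomial-size/complexity reductions already established, namely Lemma~\ref{lem_subtype_reduction} and Lemma~\ref{lem_size_coNP}. First I would invoke Lemma~\ref{lem_subtype_reduction} to obtain, from the two input types $\tau_1,\tau_2$, a set $S = \Sc(\tau_1,\tau_2)$ of pairs of size expressions whose total size is polynomial in the size of $\tau_1,\tau_2$, and which has the property that for \emph{any} partial finite function $U$ satisfying the acyclicity condition we have $U(\tau_1) \sqsubseteq U(\tau_2)$ if and only if the size constraint $(U,S)$ is valid. One should note that the inductive construction of $\Sc(\tau_1,\tau_2)$ in the proof of Lemma~\ref{lem_subtype_reduction} is not merely of polynomial output size but is also carried out in polynomial time, since it recurses structurally on $\tau_1$ and $\tau_2$ and at each node only emits a bounded number of new pairs of subexpressions of the inputs.

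Next I would form the size constraint $(U,S)$, whose total size is polynomial in the combined size of the inputs $\tau_1$, $\tau_2$ and $U$: the function part is exactly the given $U$, and the inequality part $S$ is polynomial in $|\tau_1| + |\tau_2|$ by the previous step. Since $U$ was assumed to satisfy the acyclicity condition, $(U,S)$ is a legitimate size constraint in the sense of the preceding definitions. Now I would appeal to Lemma~\ref{lem_size_coNP}, which states that deciding validity of a size constraint is in coNP: there is a nondeterministic polynomial-time procedure (reducing the complement to satisfiability of a polynomially large quantifier-free Presburger formula, which is in NP) witnessing non-validity. Composing the polynomial-time reduction $(\tau_1,\tau_2,U) \mapsto (U,S)$ with this coNP decision procedure yields a coNP decision procedure for the original question $U(\tau_1) \sqsubseteq U(\tau_2)$, by closure of coNP under polynomial-time many-one reductions.

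There is essentially no genuine obstacle here, which is why this is stated as a corollary rather than a lemma; the substance has already been discharged in Lemma~\ref{lem_subtype_reduction} (the subtyping-to-size-constraint reduction) and Lemma~\ref{lem_size_coNP} (the coNP bound for size-constraint validity, itself resting on Corollary~\ref{cor_codomain} to restrict to $\Nbb$-valued valuations and on the NP membership of quantifier-free Presburger satisfiability). The only point requiring a little care in the writeup is to confirm explicitly that $\Sc(\tau_1,\tau_2)$ is computable in polynomial time and not just of polynomial size, and that the role of $U$ as a shared abbreviation mechanism means it is passed through unchanged, so that no exponential blow-up is reintroduced when forming $(U,S)$. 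Once these bookkeeping remarks are made, the proof is a one-line composition of the two cited results together with the standard fact that coNP is closed under polynomial-time reductions.
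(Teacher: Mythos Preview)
Your proposal is correct and matches the paper's own proof, which simply states that the corollary follows from Lemma~\ref{lem_subtype_reduction} and Lemma~\ref{lem_size_coNP}. Your additional remarks about the polynomial-time computability of $\Sc(\tau_1,\tau_2)$ and the fact that $U$ is passed through unchanged are reasonable elaborations of what the paper leaves implicit.
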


\begin{proof}
  Follows from Lemma~\ref{lem_subtype_reduction} and
  Lemma~\ref{lem_size_coNP}.
\end{proof}

\begin{lem}\label{lem_sexp_inequality}
  Given $k \in \Nbb$, a partial finite function~$U$ satisfying the
  acyclicity condition, and a size expression~$s$, it is decidable in
  polynomial time whether $U(s) \ge k$.
\end{lem}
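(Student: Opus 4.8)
The plan is to reduce the question to a single bottom-up evaluation over the finite directed graph underlying~$U$, without ever materialising the (possibly exponentially large) size expression~$U(s)$.

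First I would note that, since $\min$, $\max$ and the successor operation are all monotone, the map $v \mapsto v(t)$ is monotone in the pointwise order on size variable valuations, for every size expression~$t$. Hence $U(s) \ge k$ — which by definition means $v(U(s)) \ge k$ for every~$v$ — holds if and only if $v_0(U(s)) \ge k$, where $v_0$ is the least valuation, $v_0(i) = 0$ for all~$i$. So it suffices to decide $v_0(U(s)) \ge k$ in polynomial time.

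Next I would introduce the valuation~$w$ defined by $w(i) = 0$ for $i \notin \dom(U)$ and $w(i) = w(U(i))$ for $i \in \dom(U)$ (extending~$w$ to size expressions in the evident way); this is well defined exactly because~$U$ satisfies the acyclicity condition, which makes the graph with an edge $i \to j$ for each $j \in \SV(U(i))$ a DAG, so that $\dom(U)$ may be listed as $i_1, \ldots, i_m$ with $\SV(U(i_\ell)) \cap \dom(U) \subseteq \{i_{\ell+1}, \ldots, i_m\}$. A routine substitution-lemma induction then gives $v_0(U(t)) = w(t)$ for every size expression~$t$, in particular $v_0(U(s)) = w(s)$, so the task becomes: compare $w(s)$ with~$k$.

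The last step, which is also where the only genuine difficulty lies, is to perform this comparison in polynomial time even though $w(s)$ is defined by unfolding the exponential term~$U(s)$. I would evaluate $w$ bottom-up — first $w(i_m), \ldots, w(i_1)$ by evaluating $U(i_m), \ldots, U(i_1)$, then $w(s)$ from~$s$ — but \emph{truncate} every intermediate value at~$k$, i.e.\ replace by~$k$ any value (including~$\infty$) that is at least~$k$. Writing $\hat{x} = \min(x,k)$, with $\min(\infty,k) = k$, one checks the congruence $\widehat{x \circ y} = \widehat{\hat{x} \circ \hat{y}}$ for $\circ$ ranging over successor, $\min$ and~$\max$; hence the truncated evaluation computes precisely $\widehat{w(s)}$, and $\widehat{w(s)} = k$ iff $w(s) \ge k$. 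This uses $O(|U| + |s|)$ arithmetic operations on numbers bounded by~$k$, hence runs in time polynomial in the size of the input, and one answers ``yes'' exactly when $\widehat{w(s)} = k$. The main obstacle is thus the potential exponential blow-up of~$U(s)$, and it is handled by combining the DAG traversal with the truncation congruence.
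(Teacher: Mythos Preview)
Your proposal is correct and follows the same core idea as the paper: by monotonicity of the size operations, $U(s) \ge k$ holds iff it holds at the zero valuation, so one only has to compute that single value and compare with~$k$. The paper's proof is a two-line sketch (``evaluate $U(s)$ with all size variables set to~$0$ and check whether the result is at least~$k$; this may be done in polynomial time'') that does not spell out how to avoid the exponential size of~$U(s)$; you make this explicit via the bottom-up DAG traversal, which is indeed the intended implementation.

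One remark: your truncation at~$k$ is correct but not strictly necessary. Because the only value-increasing operation in size expressions is~$+1$, a bottom-up evaluation along the topological order yields, for each $i_\ell \in \dom(U)$, a value bounded by $\max_{j>\ell} w(i_j)$ plus the number of $+1$'s in~$U(i_\ell)$; inductively $w(i_\ell) \le |U|$ and hence $w(s) \le |U| + |s|$ (or $\infty$, handled as a symbol). So the intermediate values are already polynomial in the input size without truncation. Your truncation-congruence argument is nonetheless a clean, self-contained way to bound the work, and it would remain valid even if richer arithmetic were allowed in size expressions.
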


\begin{proof}
  Note that the smallest value of $v(U(s))$ is when $v(i) = 0$ for $i
  \notin \dom(U)$. So it suffices to evaluate $U(s)$ with all size
  variables set to~$0$ and check whether the result is at
  least~$k$. This may be done in polynomial time.
\end{proof}

\begin{lem}\label{lem_triple_computation}
  Given a finite context~$\Gamma$ and a term~$t$, one may compute in
  polynomial time a triple $(U,S,\tau)$ of polynomial size satisfying:
  \begin{itemize}
  \item $(U,S)$ is valid iff $\Tc(\Gamma;t)$ is defined,
  \item if $\Tc(\Gamma;t)$ is defined then $U(\tau) = \Tc(\Gamma;t)$.
  \end{itemize}
\end{lem}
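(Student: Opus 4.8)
The plan is to prove the lemma by induction on the structure of $t$, mirroring the recursive definition of $\Tc(\Gamma;t)$. The role of $U$ in the output triple is precisely that of a \emph{sharing device}: the minimal type $\Tc(\Gamma;t)$ may contain size expressions built by iterated substitution, $\max$ and $\min$ (in the constructor and case clauses, and from repeated instantiations), and writing these out could cause exponential blow-up, so whenever a composite size expression $s$ is produced I introduce a fresh size variable $i$, record the equation $i = s$ in $U$, and keep $i$ in the returned type $\tau$, so that $U(\tau) = \Tc(\Gamma;t)$ holds by construction. Throughout, I maintain the invariant that fresh size variables introduced in distinct recursive calls are pairwise distinct and distinct from $\SV(\Gamma,t)$, and that bound size variables are $\alpha$-renamed apart from $\dom(U)$ and from the variables free in the range of $U$. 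This makes the union of the component functions well-defined, acyclic (each new equation has a fresh left-hand side occurring nowhere else, and by disjointness no cross-cycles arise), and ensures that $U$-substitution commutes with ordinary substitution and with $\sqcup$, $\sqcap$, $\chgtgt$.

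For each clause I first make the recursive calls on the immediate subterms, obtaining triples $(U_k, S_k, \tau_k)$, rename their fresh variables apart, and then combine. The \emph{structural} side conditions of the clause — a subterm's type having the right top-level shape (an arrow, a $\forall$, or a decorated (co)inductive type with the expected definition name and arity), the operations $\sqcup, \sqcap, \chgtgt$ being defined at the structural level, constructor arities matching — are checked directly on the shared types $\tau_k$ in polynomial time, since $U$-substitution never alters the type skeleton; the one arithmetic side condition, $s \ge 1$ in the coinductive case clause, is decided in polynomial time by Lemma~\ref{lem_sexp_inequality} applied to $U(s)$. If any such check fails then $\Tc(\Gamma;t)$ is undefined, and I output $(U, S \cup \{(1,0)\}, \tau)$ for an arbitrary $\tau$, which is never valid. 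The \emph{subtyping} side conditions $\theta \sqsubseteq \alpha$ (those in the application, (fix) and (cofix) clauses, and the conditions $\sigma_i' \sqsubseteq \sigma_i$ and $\Tc(\ldots) \sqsubseteq \ldots$ in the constructor and case clauses) are translated — once the shapes are known to match — into size constraints $\Sc(\theta,\alpha)$ by Lemma~\ref{lem_subtype_reduction}, and added to $S$. Finally the new size expressions produced by the clause ($s+1$ and the $\max$/$\min$ in (con), the instantiation in (inst), the joins $\bigsqcup_k \tau_k$ in (case), the quantities $\overline{s}$, $\underline{s}$, $\min(s,j)$, $\min(s,j+1)$) are each named by a constant number of fresh size variables added to $U$, and $\tau$ is assembled in this shared form.

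Each clause thus adds only polynomially many new variables and constraints — here Lemma~\ref{lem_subtype_reduction} bounds $|\Sc(\theta,\alpha)|$ — and the recursion has depth at most $|t|$, so the triple has polynomial size and is produced in polynomial time. Correctness, namely that $(U,S)$ is valid iff $\Tc(\Gamma;t)$ is defined and in that case $U(\tau) = \Tc(\Gamma;t)$, then follows by a routine induction: for the subtyping conditions one invokes Lemma~\ref{lem_subtype_reduction} (noting that, by the disjointness invariant, validity of $(U, S_k)$ is equivalent to validity of $(U_k, S_k)$, and validity of $(U, S_1 \cup S_2)$ to validity of both $(U,S_1)$ and $(U,S_2)$); for the $s \ge 1$ condition one uses Lemma~\ref{lem_sexp_inequality}; and otherwise one uses the commutation of $U$-substitution with the type operations appearing in the definition of $\Tc$.

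The step I expect to be the main obstacle is the (con) and (case) clauses. There one must, given the shared minimal type of an argument (respectively the scrutinee), match it against the pattern $\sigma_i[d^{s_i}(\vec\alpha^i)/A][\vec\beta^i/\vec B]$ (respectively $d(\vec\beta)$) so as to read off $\sigma_i'$, the recursive size $s_i$ and the parameter instances $\alpha^i_j,\beta^i_j$, and then form $\tau_j = \bigsqcup_k(\alpha^k_j \sqcup \beta^k_j)$ together with the target size $\min$/$\max$ of the $s_i+1$. Two points need care: first, that this matching can be performed directly on the shared representation — it can, because the occurrences of $A$ and of the $\vec B$ inside the fixed type $\sigma_i$ sit at positions independent of $U$, so the match only descends into $\tau_k$ and never expands $U$; second, that the iterated join stays polynomial — this is exactly where naming each binary $\sqcup$ by a fresh size variable and deferring its evaluation into $U$ is essential, since a naive expansion of $\bigsqcup_k$ (and of the nested $\max$/$\min$ it introduces) over a deeply nested constructor term would be exponential.
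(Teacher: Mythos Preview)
Your proposal is correct and follows essentially the same approach as the paper: a recursive function mirroring the definition of~$\Tc$, with~$U$ serving as a sharing device for size expressions (introduced at instantiation and at the \textsf{case} clause for~$\overline{s}$/$\underline{s}$), subtyping side-conditions translated into size constraints via~$\Sc$ from Lemma~\ref{lem_subtype_reduction}, and structural side-conditions checked directly on the shared types. The only minor difference is that the paper threads the accumulated~$U$ into recursive calls (so that the extended contexts in the \textsf{case} clause are already in shared form) and shares only at \textsf{inst} and \textsf{case}, bounding the size of~$\sqcup$ by a direct accounting argument rather than by naming every join---but this is an implementation detail, not a different method.
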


\begin{proof}
  We semi-informally describe an algorithm to compute $(U,S,\tau)$ by
  the following definition of a recursive
  function~$\Tc'(U_0;\Gamma;t)$. To obtain the desired triple one
  takes $U_0=\emptyset$.
  \begin{itemize}
  \item $\Tc'(U_0;\Gamma,x:\tau;x) = (U_0,\emptyset,\tau)$.
  \item $\Tc'(U_0;\Gamma;c t_1\ldots t_n) = (U,S,\theta)$ if
    $\theta = \mu^{\max(s_1,\ldots,s_n)+1}$ and
    $\ArgTypes(c) = (\sigma_1,\ldots,\sigma_n)$ and
    $\mu = d_\mu(\tau_1,\ldots,\tau_m)$ and $\Def(c) = d_\mu$ and
    $\Tc'(U_0;\Gamma;t_i) = (U_i,S_i,\theta_i)$ and
    \[
      \theta_i =
      \sigma_i'[d_\mu^{s_i}(\alpha_1^i,\ldots,\alpha_m^i)/A][\beta_1^i/B_1,\ldots,\beta_m^i/B_m]
    \]
    (we take $s_i=0$ and $\alpha_j^i = \bot$ if $A \notin
    \FV(\sigma_i)$, and $\beta_j^i = \bot$ if $B_j \notin
    \FV(\sigma_i)$; if some~$\theta_i$ does not have the desired form
    then the present case does not apply) and $\tau_j =
    \bigsqcup_{i=1}^n(\alpha_j^i\sqcup\beta_j^i)$
    (if~$\bigsqcup_{i=1}^n(\alpha_j^i\sqcup\beta_j^i)$ is not defined
    then the present case does not apply) and $U = \bigcup_{i=0}^nU_i$
    and $S = \bigcup_{i=1}^nS_i\cup\Sc(\sigma_i',\sigma_i)$.
  \item $\Tc'(U_0;\Gamma;c t_1\ldots t_n) = (U,S,\theta)$ if
    $\theta = \nu^{\min(s_1,\ldots,s_n)+1}$ and
    $\ArgTypes(c) = (\sigma_1,\ldots,\sigma_n)$ and
    $\nu = d_\nu(\tau_1,\ldots,\tau_m)$ and $\Def(c) = d_\nu$ and
    $\Tc'(U_0;\Gamma;t_i) = (U_i,S_i,\theta_i)$ and
    \[
      \theta_i =
      \sigma_i'[d_\nu^{s_i}(\alpha_1^i,\ldots,\alpha_m^i)/A][\beta_1^i/B_1,\ldots,\beta_m^i/B_m]
    \]
    (we take $s_i=0$ and $\alpha_j^i = \bot$ if $A \notin
    \FV(\sigma_i)$, and $\beta_j^i = \bot$ if $B_j \notin
    \FV(\sigma_i)$; if some~$\theta_i$ does not have the desired form
    then the present case does not apply) and $\tau_j =
    \bigsqcup_{i=1}^n(\alpha_j^i\sqcup\beta_j^i)$
    (if~$\bigsqcup_{i=1}^n(\alpha_j^i\sqcup\beta_j^i)$ is not defined
    then the present case does not apply) and $U = \bigcup_{i=0}^nU_i$
    and $S = \bigcup_{i=1}^nS_i\cup\Sc(\sigma_i',\sigma_i)$.
  \item $\Tc'(U_0;\Gamma;\lambda x : \alpha . t) = (U,S,\alpha\to\beta)$
    if $\Tc'(U_0;\Gamma,x:\alpha;t) = (U,S,\beta)$.
  \item $\Tc'(U_0;\Gamma;tt') = (U,S,\beta)$ if
    $\Tc(U_0;\Gamma;t) = (U_1,S_1,\alpha \to \beta)$ and
    $\Tc'(U_0;\Gamma;t') = (U_2,S_2,\alpha')$ and $U=U_1\cup U_2$ and
    $S=S_1 \cup S_2\cup\Sc(\alpha',\alpha)$.
  \item $\Tc'(U_0;\Gamma;ts) = (U\cup\{i=s\},S,\tau)$ if
    $\Tc'(U_0;\Gamma;t) = (U, S, \forall i . \tau)$ with~$i$ fresh.
  \item $\Tc'(U_0;\Gamma;\Lambda i . t) = (U, S, \forall i . \tau)$ if
    $\Tc'(U_0;\Gamma;t) = (U,S,\tau)$ and $i \notin \FSV(\Gamma)$.
  \item
    $\Tc'(U';\Gamma;\case(t;\{c_k\vec{x_k} \To t_k \mid
    k=1,\ldots,n\})) = (U,S,\tau)$ if
    $\Tc'(U';\Gamma;t) = (U_0,S_0,\mu^s)$ and $\mu = d(\vec{\beta})$
    and $\ArgTypes(c_k) = (\sigma_k^1,\ldots,\sigma_k^{n_k})$ and
    $\delta_k^l = \sigma_k^l[\mu^i/A][\vec{\beta}/\vec{B}]$ with~$i$
    fresh and
    $\Tc'(U'\cup\{i=\underline{s}\};\Gamma,x_k^1:\delta_k^1,\ldots,x_k^{n_k}:\delta_k^{n_k};
    t_k) = (U_k,S_k,\tau_k)$ and $\tau = \bigsqcup_{k=1}^n\tau_k$ and
    $U=\bigcup_{i=0}^nU_i$ and $S=\bigcup_{i=0}^nS_i$.
  \item
    $\Tc'(U';\Gamma;\case(t;\{c_k\vec{x_k} \To t_k \mid
    k=1,\ldots,n\})) = (U,S,\tau)$ if
    $\Tc'(U';\Gamma;t) = (U_0,S_0,\nu^s)$ and $\nu = d(\vec{\beta})$
    and $s \ge 1$ and
    $\ArgTypes(c_k) = (\sigma_k^1,\ldots,\sigma_k^{n_k})$ and
    $\delta_k^l = \sigma_k^l[\nu^i/A][\vec{\beta}/\vec{B}]$ with~$i$
    fresh and
    $\Tc'(U'\cup\{i=\overline{s}\};\Gamma,x_k^1:\delta_k^1,\ldots,x_k^{n_k}:\delta_k^{n_k};
    t_k) = (U_k,S_k,\tau_k)$ and $\tau = \bigsqcup_{k=1}^n\tau_k$ and
    $U=\bigcup_{i=0}^nU_i$ and $S=\bigcup_{i=0}^nS_i$.
  \item $\Tc'(U_0;\Gamma;\fix f : (\forall j_1 \ldots j_m . \mu \to \tau) . t) =
    (U,S, \forall j_1 \ldots j_m . \mu \to \tau)$ if
    \[
    \Tc'(U_0;\Gamma,f : \forall j_1 \ldots j_m . \mu^i \to \tau;
    t) = (U,S_0,\theta)
    \]
    and $i \notin \FSV(\Gamma,\mu,\tau,j_1,\ldots,j_m)$ and
    $S = S_0 \cup \Sc(\theta,\forall j_1 \ldots j_m . \mu \to \tau)$.
  \item $\Tc'(U_0;\Gamma;\cofix f : \tau . t) =
    (U,S,\tau)$ if
    \[
    \Tc'(U_0;\Gamma,f : \chgtgt(\tau,\nu^{\min(s,j)}); t) = (U,S_0,\theta)
    \]
    and $\tgt(\tau) = \nu^s$ and $j \notin \FSV(\Gamma)$ and
    $j \notin \SV(\tau)$ and
    $S=S_0\cup\Sc(\theta,\chgtgt(\tau,\nu^{\min(s,j+1)}))$.
  \item Otherwise, if none of the above cases hold, we define
    $\Tc'(U_0;\Gamma;t) = (U_0, \{1 \le 0\}, \bot)$ with $\bot$ and
    arbitrary fixed type.
  \end{itemize}
  First note that if $U' \supseteq U$ where the new size variables
  from $\dom(U')\setminus\dom(U)$ do not occur in~$S$ or~$\tau$ then:
  (1) $(U,S)$ is valid iff $(U',S)$ is valid, and (2)
  $U(\tau) = U'(\tau)$. Note also that when forming a sum
  $U=\bigcup_{i=1}^nU_i$ in the above definition, the function (set of
  equations)~$U$ is well-defined and satisfies the acyclicity
  condition because the left-hand side variable~$i$ in each newly
  added equation $i = s$ is always chosen to be fresh. Using these
  observations one shows by induction that if
  $\Tc'(U';\Gamma;t) = (U,S,\tau)$ then:
  \begin{itemize}
  \item $(U,S)$ is valid iff $\Tc(U'(\Gamma);U'(t))$ is defined,
  \item if $\Tc(U'(\Gamma);U'(t))$ is defined then we have $U(\tau) =
    \Tc(U'(\Gamma);U'(t))$.
  \end{itemize}
  It remains to check that the algorithm implicit in the definition
  of~$\Tc'$ is polynomial. Let~$N$ be the initial size of the input
  (i.e.~the size of~$\Gamma,t$). The total number of calls to~$\Tc'$
  is proportional to~$N$, because in each immediate recursive call in
  the definition of~$\Tc'(U_0;\Gamma;t)$ different disjoint proper
  subterms of~$t$ are given as the third argument.

  In each immediate recursive call the size of the context~$\Gamma$
  grows by at most~$O(N^2)$. Indeed, there are essentially two
  possibilities of what we add to the context~$\Gamma$.
  \begin{enumerate}
  \item We add~$x : \alpha$ for the case of lambda abstraction
    $\lambda x : \alpha . t'$. Then $\alpha$ occurs in the original
    term~$t$, so the size of~$\Gamma$ grows by at most~$N$.
  \item We add e.g.~$x_k^1 : \delta_k^1,\ldots,x_k^{n_k} :
    \delta_k^{n_k}$ where $\delta_k^j =
    \sigma_k^j[\mu^i/A][\vec{\beta}/B]$ or the case of a
    $\case$-term. Then $\mu$ and~$\vec{\beta}$ occur in the original
    term~$t$, and $\sigma_k^j$ is an argument type for the
    constructor~$c_k$ which occurs in~$t$ (so the size of~$\sigma_k^j$
    counts towards the size of~$t$). Hence the total size of
    $\sigma_k^1,\ldots,\sigma_k^{n_k}$ is~$\le N$, and thus so is the
    total number of occurences of $A,\vec{B}$ in
    $\sigma_k^1,\ldots,\sigma_k^{n_k}$. The size of each of
    $\mu,\vec{\beta}$ is~$\le N$. Therefore, the total size of
    $\delta_k^1,\ldots,\delta_k^{n_k}$ is at most~$N^2 + N$.
  \end{enumerate}
  Hence, the size of the context at any given call to~$\Tc'$ (during
  the whole run of the algorithm) is at most~$O(N^3)$. Let
  $(U,S,\tau)$ denote the result of calling $\Tc'$. At the leaves of
  the computation tree (i.e.~when there are no more immediate
  recursive calls) the type~$\tau$ is taken from the context, so its
  size is at most~$O(N^3)$. At internal nodes, the size of~$\tau$ is
  equal to at most the sum of sizes of the types returned by immediate
  recursive calls (note that the size of~$\alpha \sqcup \beta$ is
  equal to at most the sum of sizes of~$\alpha$ and~$\beta$ plus a constant), plus
  possibly the size of a type occuring in~$t$ (which is~$\le N$), plus
  possibly~$O(N)$. Hence, each call to~$\Tc'$ contributes at
  most~$O(N^3)$ towards the size of the final result type. Since there
  are~$O(N)$ calls in total, for any given call the result type~$\tau$
  of this call has size at most~$O(N^4)$. Now we count the final size
  of~$S$. At the leaves of the computation tree~$S = \emptyset$, and
  at each internal node we add at most~$O(N)$ sets~$\Sc(\alpha,\beta)$
  where each of $\alpha,\beta$ is either a subtype of a type returned
  by an immediate recursive call or of the term~$t$. So the size of
  $\alpha,\beta$ is polynomial in~$N$, and thus so is the size
  of~$\Sc(\alpha,\beta)$ by Lemma~\ref{lem_subtype_reduction}. Hence,
  the total final size of~$S$ is polynomial in~$N$. To count the total
  final size of~$U$, note that we may consider it to be a mutable
  global variable which at each call is modified by adding at most one
  equation of polynomial size (because the right-hand side size
  expression has size proportional to the size of a size expression
  occuring in a type returned by one of the immediate recursive
  calls). Thus the total final size of~$U$ is polynomial in~$N$.

  We have thus shown that the computed triple $(U,S,\tau)$ has
  polynomial size. Note that in each of the calls to~$\Tc'$, the
  computation time (not counting the immediate recursive calls) is
  proportional to the size of the returned triple, and is thus
  polynomial (we need Lemma~\ref{lem_sexp_inequality} to decide in
  polynomial time if $U(s_0) \ge 1$ in the third-last point in the
  definition of~$\Tc'$). Hence, the whole running time is polynomial.
\end{proof}

{ \renewcommand{\thethm}{\ref{thm_type_checking}}
\begin{thm}
  Type checking in the system~$\lambda^\Diamond$ is
  coNP-complete. More precisely, given $\Gamma,t,\tau$ the problem of
  checking whether $\Gamma \proves t : \tau$ is coNP-complete.
\end{thm}
\addtocounter{thm}{-1}}

\begin{proof}
  It follows from Theorem~\ref{thm_typing_characterisation},
  Lemma~\ref{lem_triple_computation}, Lemma~\ref{lem_size_coNP} and
  Corollary~\ref{cor_subtype_coNP} that the problem is in~coNP.

  To show that the problem is coNP-hard we reduce the problem of
  unsatisfiability of 3-CNF boolean formulas, which is coNP-hard. We
  show how to construct in polynomial time an inequality $s_1 \le s_2$
  of size expressions which is equisatisfiable with a given 3-CNF
  boolean formula~$\varphi$. For concreteness assume~$\varphi$ is
  \[
  (x \lor \neg y \lor z) \land (x \lor \neg z \lor y).
  \]
  This formula is translated to the inequality $s_1 \le s_2$ where
  \[
  \begin{array}{l}
    s_1 = \max(\min(x,\bar{y},z)+1,\min(x,\bar{z},y)+1,1,\\
    \quad\quad\quad\quad\quad \min(x,\bar{x})+1,\min(y,\bar{y})+1,\min(z,\bar{z})+1)\\
    s_2 = \min(1, \max(x,\bar{x}), \max(y,\bar{y}), \max(z,\bar{z}))
  \end{array}
  \]
  and $\bar{x},\bar{y},\bar{z}$ are fresh variables intended to
  represent the negations of $x,y,z$ respectively.

  Let~$v$ with $\codom(v) = \{\top,\bot\}$ be a satifying valuation
  for~$\varphi$. Define $\bar{v}$ with $\codom(\bar{v}) = \{0,1\}$ by
  $\bar{v}(i) = 0$ if $v(i) = \top$, and $\bar{v}(i) = 1$ if
  $v(i) = \bot$, and $\bar{v}(\bar{i}) = 1 - v(i)$, for any
  variable~$i$. Then
  $\bar{v}(\max(i,\bar{i})) = \bar{v}(\min(i,\bar{i})+1) = 1$ for any
  variable~$i$, and
  $\bar{v}(\min(x,\bar{y},z)) = \bar{v}(\min(x,\bar{z},y)) = 0$. Hence
  $\bar{v}(s_1) \le \bar{v}(s_2)$.

  Let~$\bar{v}$ be a satisfying valuation for $s_1 \le s_2$. The
  inequality $s_1 \le s_2$ is equivalent to the following conjunction
  of inequalities:
  \[
  \begin{array}{l}
    \min(x,\bar{y},z)+1 \le 1 \land \min(x,\bar{y},z)+1 \le \max(x,\bar{x})
    \land {} \\ \quad \min(x,\bar{y},z)+1 \le \max(y, \bar{y}) \land
    \min(x,\bar{y},z)+1 \le \max(z, \bar{z}) \land {} \\
    \min(x,\bar{z},y)+1 \le 1 \land \min(x,\bar{z},y)+1 \le \max(x, \bar{x})
    \land {} \\ \quad \min(x,\bar{z},y)+1 \le \max(y, \bar{y}) \land
    \min(x,\bar{z},y)+1 \le \max(z, \bar{z}) \land {} \\
    1 \le 1 \land 1 \le \max(x,\bar{x}) \land 1 \le \max(y,\bar{y}) \land 1 \le
    \max(z,\bar{z}) \land {} \\
    \min(x,\bar{x})+1 \le 1 \land \min(x,\bar{x}) + 1 \le \max(x,\bar{x}) \land \min(x,\bar{x})+1 \le
    \max(y,\bar{y}) \land {} \\ \quad \min(x,\bar{x})+1 \le \max(z,\bar{z}) \land {} \\
    \min(y,\bar{y})+1 \le 1 \land \min(y,\bar{y})+1 \le \max(x,\bar{x}) \land \min(y,\bar{y})+1 \le
    \max(y,\bar{y}) \land {} \\ \quad \min(y,\bar{y})+1 \le \max(z,\bar{z}) \land {} \\
    \min(z,\bar{z})+1 \le 1 \land \min(z,\bar{z})+1 \le \max(x,\bar{x}) \land \min(z,\bar{z})+1 \le
    \max(y,\bar{y}) \land {} \\ \quad \min(z,\bar{z})+1 \le \max(z,\bar{z}).
  \end{array}
  \]
  For each variable~$i$, since $1 \le \max(i,\bar{i})$ and
  $\min(i,\bar{i})+1 \le 1$ occur in this conjunction, we conclude
  that exactly one of $\bar{v}(i), \bar{v}(\bar{i})$ is zero and the
  other one is nonzero. Define $v$ with $\codom(v)=\{\top,\bot\}$ by
  $v(i) = \top$ if $\bar{v}(i) = 0$, and $v(i) = \bot$ if
  $\bar{v}(i) \ne 0$. We have
  $\bar{v}(\min(x,\bar{y},z)) = \bar{v}(\min(x,\bar{z},y)) = 0$
  because of the inequalities $\min(x,\bar{y},z)+1 \le 1$ and
  $\min(x,\bar{z},y)+1 \le 1$. This implies that~$v$ is a satisfying
  valuation for~$\varphi$.

  Hence for every 3-CNF boolean formula~$\varphi$ there exists an
  equisatisfiable inequality $s_1 \le s_2$ of size expressions which
  may be computed in polynomial time. So~$\varphi$ is unsatisfiable
  iff $s_1 > s_2$ is valid. Because $v(s_2) \ne \infty$ for any
  valuation~$v$, the inequality $s_1 > s_2$ is equivalent
  to~$s_1 \ge s_2 + 1$. Therefore~$\varphi$ is unsatisfiable iff
  $x : \nu^{s_1}, f : \nu^{s_2 + 1} \to \mu^0 \proves f x : \mu^0$ for
  some fixed~$\nu,\mu$.
\end{proof}

\begin{rem}
  The use of the set of equations~$U$ is necessary to avoid an
  exponential blow-up in the size of size constraints. For instance,
  consider $\Gamma = f : \forall i . \mu^i \to \mu^i$ and define
  $t_0 = f$, $t_{n+1} = \Lambda i . t_n \max(i,i)$. Let $s_0 = i$ and
  $s_{n+1} = \max(s_n,s_n)$. We have
  $\Tc(\Gamma; t_n) = \forall i . \mu^{s_n} \to \mu^{s_n}$. The size
  of~$s_n$ is proportional to~$2^n$ while the size of~$t_n$ is
  proportional to~$n$.

  Similarly, suppose~$\mu$ has two constructors $c_1 : \mu \to \mu$
  and $c_2 : \mu \to \mu$. Let $t_0 = x$ and
  $t_{n+1} = \case(t_n; \{ c_1 y \To y, c_2 y \To y \})$. Let
  $s_n = 0 + 1 + 1 + \ldots + 1$ where~$1$ occurs~$n$ times. By
  induction on~$n$ one shows $\Tc(x : \mu^{s_n}; t_n) = \mu^{s_n'}$
  where $s_0' = 0$ and $s_{n+1}' = \max(s_n',s_n')$. The size
  of~$s_n'$ is proportional to~$2^n$, while the sizes of~$t_n,s_n$ are
  proportional to~$n$.
\end{rem}

\end{document}